\newtheorem{definition}{Definition}
\newtheorem{proposition}{Proposition}
\newtheorem{theorem}{Theorem}
\begin{document}
 


\chapter*{Abstract}

\begingroup
\centering
{\textcolor{parttitlecolor}{\large {Production of Categorical Data Verifying Differential Privacy: Conception and Applications to Machine Learning}}}
\vspace{10mm}

{\textcolor{parttitlecolor}{\large {Héber Hwang Arcolezi}}}\\
{\textcolor{parttitlecolor}{\large {University Bourgogne Franche Comté, 2022}}}

\vspace{10mm}

{\hspace{22mm} \textcolor{parttitlecolor}{\large {Supervisors: Jean-François Couchot, Bechara Al Bouna, and Xiaokui Xiao}}}
\endgroup

\vspace{5mm}

Private and public organizations regularly collect and analyze digitalized data about their associates, volunteers, clients, etc. However, because most personal data are sensitive, there is a key challenge in designing privacy-preserving systems to comply with data privacy laws, e.g., the General Data Protection Regulation. To tackle privacy concerns, research communities have proposed different methods to preserve privacy, with Differential privacy (DP) standing out as a formal definition that allows quantifying the privacy-utility trade-off. Besides, with the local DP (LDP) model, users can sanitize their data locally before transmitting it to the server. 

The objective of this thesis is thus two-fold: \textbf{O$_1$}) To improve the utility and privacy in multiple frequency estimates under LDP guarantees, which is fundamental to \textit{statistical learning}. And \textbf{O$_2$}) To assess the privacy-utility trade-off of machine learning (ML) models trained over differentially private data.

For \textbf{O$_1$}, we first tackled the problem from two ``\textit{multiple}" perspectives, i.e., multiple attributes and multiple collections throughout time (longitudinal studies), while focusing on \textbf{utility}. Secondly, we focused our attention on the multiple attributes aspect only, in which we proposed a solution focusing on \textbf{privacy} while preserving utility. In both cases, we demonstrate through analytical and experimental validations the advantages of our proposed solutions over state-of-the-art LDP protocols. 

For \textbf{O$_2$}, we empirically evaluated ML-based solutions designed to solve real-world problems while ensuring DP guarantees. Indeed, we mainly used the \textit{input data perturbation} setting from the privacy-preserving ML literature. This is the situation in which the whole dataset is \textit{sanitized} independently (i.e., row-by-row) and, thus, we implemented LDP algorithms from the perspective of the centralized data owner. In all cases, we concluded that differentially private ML models achieve nearly the same utility metrics as non-private ones.

\textbf{KEYWORDS:} Differential privacy, Local differential privacy, Categorical data, Machine learning.

\chapter*{Résumé}

\begingroup
\centering
{\textcolor{parttitlecolor}{\large {Production de Données Catégorielles Respectant la Confidentialité Différentielle : Conception et Applications au Apprentissage Automatique}}}
\vspace{10mm}

{\textcolor{parttitlecolor}{\large {Héber Hwang Arcolezi}}}\\
{\textcolor{parttitlecolor}{\large {Université Bourgogne Franche Comté, 2022}}}

\vspace{10mm}

{\hspace{22mm} \textcolor{parttitlecolor}{\large {Encadrants: Jean-François Couchot, Bechara Al Bouna, et Xiaokui Xiao}}}
\endgroup

\vspace{5mm}

Les organisations privées et publiques collectent et analysent régulièrement des données numérisées sur leurs associés, volontaires, clients, etc. Cependant, comme la plupart des données personnelles sont sensibles, la conception de systèmes préservant la vie privée pour se conformer aux lois sur la confidentialité des données, par exemple le règlement général sur la protection des données, constitue un défi important. Pour résoudre les problèmes de confidentialité, les communautés de chercheurs ont proposé différentes méthodes de préservation de la confidentialité, la confidentialité différentielle (DP) se distinguant comme une définition formelle qui permet de quantifier le compromis entre confidentialité et utilité. En outre, avec le modèle de confidentialité différentielle locale (LDP), les utilisateurs peuvent sanitisé leurs données localement avant de les transmettre au serveur. 

L'objectif de cette thèse est donc double : \textbf{O$_1$}) Améliorer l'utilité et la confidentialité des estimations de fréquences multiples sous garanties LDP, ce qui est fondamental pour \textit{l'apprentissage statistique}. Et \textbf{O$_2$}) Évaluer le compromis vie privée-utilité des modèles d'apprentissage machine (ML) entraînés sur des données différentiellement privées.

Pour \textbf{O$_1$}, nous avons premièrement abordé le problème sous deux angles ``\textit{multiple}'', c'est-à-dire des attributs multiples et des collections multiples dans le temps (études longitudinales), tout en nous concentrant sur \textbf{utilité}. Deuxièmement, nous avons concentré notre attention sur l'aspect des attributs multiples uniquement, dans lequel nous avons proposé une solution axée sur la \textbf{confidentialité} tout en préservant l'utilité. Dans les deux cas, nous démontrons par des validations analytiques et expérimentales les avantages de nos solutions proposées par rapport aux protocoles LDP de pointe. 

Pour \textbf{O$_2$}, nous avons évalué empiriquement des solutions basées sur les ML conçues pour résoudre des problèmes du monde réel tout en assurant des garanties de DP. En effet, nous avons principalement utilisé le cadre \textit{perturbation des données d'entrée} de la littérature sur les ML préservant la confidentialité. Il s'agit de la situation dans laquelle l'ensemble des données est \textit{sanitisé} indépendamment (c'est-à-dire ligne par ligne) et, par conséquent, nous avons mis en œuvre des algorithmes LDP du point de vue du propriétaire centralisé des données. Dans tous les cas, nous avons conclu que les modèles ML différentiellement privés atteignent presque les mêmes mesures d'utilité que les modèles non privés.

\textbf{Mots clés:} Confidentialité différentielle, Confidentialité différentielle locale, Données catégorielles, Apprentissage automatique.

\chapter*{Acknowledgements}

Primarily, I would like to express my greatest thanks to my supervisor, Professor Jean-François Couchot, for his support, leadership, and encouragement during my Ph.D. study. I am very fortunate to have had him as my supervisor and for being led toward a topic I am very passionate about. I am truly grateful for his personality as an advisor as Jean-François really cares about his students, in both academic and personal subjects, which I wish for any Ph.D. student to have. I also thank my co-supervisors Bechara Al Bouna and Xiaokui Xiao for their collaboration and support throughout this dissertation.

I would also like to thank Professors Benjamin Nguyen, Mathieu Cunche, Stéphane Chrétien, and Mário S. Alvim, who kindly accepted to be part of my dissertation jury and for their valuable suggestions on research perspectives. 

Thanks also to Denis Renaud, who leads the Orange Application for Business team in Belfort, for his continued collaboration and helpful feedback. I also thank Commandant Guillaume Royer-Fey and Capitaine Céline Chevallier from the Fire Department of Doubs and Professor Christophe Guyeux, who helped me a lot through fruitful collaboration and a lot of feedback. 

I also thank Professor Sébastien Gambs, who kindly mentored me during my research visit at the Université du Québec à Montréal, and for the opportunity to continue collaborating. I learned a lot from him and gained valuable experiences, which are important for my career as a researcher.

I am very, very grateful to Selene Cerna, a special person to me, for the many joyful moments, constant support, and for taking care of me all these years. Selene has supported me since my master's degree and was significant in my growth as a young researcher. I admire Selene for her great willingness to help and share with others, and I am fortunate to be one of those people. I learned a lot with her, both technically and through extensive discussion on research subjects, which essentially helped me during this Ph.D. study.

I also thank Zhì Háo Chen who gave me a lot of guidance through many bureaucratic processes to establish me as a foreign doctoral student in France.

Last but not least, my beloved grandparents, parents, and siblings, my biggest thank to each of you who have supported and cared for me throughout my life. From each of you, a different kind of love has been shown over the years, and I gladly consider and return all the love I can offer you all.

\tableofcontents

\mainmatter 
\makeatletter
\newcommand{\tocfill}{\cleaders\hbox{$\m@th \mkern\@dotsep mu . \mkern\@dotsep mu$}\hfill}
\makeatother
\newcommand{\abbrlabel}[1]{\makebox[3cm][l]{\textbf{#1}\ \tocfill}}
\newenvironment{abbreviations}{\begin{list}{}{\renewcommand{\makelabel}{\abbrlabel}%
\setlength{\itemsep}{0pt}}}{\end{list}}

\chapter*{List of abbreviations}
\markboth{List of abbreviations}{}

\begin{abbreviations}

\item[ACC] Accuracy
\item[ADP] Adaptive
\item[CDRs] Call Detail Records
\item[CNIL] Commission Nationale de l'Informatique et des Libertés
\item[COVID-19] Coronavirus Disease 2019
\item[DP] Differential privacy
\item[EMS] Emergency medical services
\item[FIMU] Festival International de Musiques Universitaires
\item[GDPR] General Data Protection Regulation
\item[GRR] Generalized Randomized Response
\item[LDP] Local Differential Privacy
\item[LP] Linear Program
\item[MF1] Macro F1-Score
\item[ML] Machine Learning
\item[MNO] Mobile Network Operator
\item[MSE] Mean Squared Error
\item[MS-FIMU] Mobility Scenario FIMU
\item[OBS] Orange Business Services
\item[OUE] Optimized Unary Encoding
\item[QID] Quasi-Identifier
\item[RMSE] Root Mean Square Error
\item[RR] Randomized Response
\item[Smp] Sampling
\item[Spl] Splitting
\item[SUE] Symmetric Unary Encoding
\item[UE] Unary Encoding

\end{abbreviations}

\part{Thesis Introduction}

\chapter{Introduction}
\label{chap:chapter1}

\section{Introduction}

Let be given Article 12 from the Universal Declaration of Humans Right~\cite{UDHR}, which defines: ``\textit{No one shall be subjected to arbitrary interference with his privacy, family, home or correspondence, nor to attacks upon his honour and reputation. Everyone has the right to the protection of the law against such interference or attacks}." 

Notice, however, that with the advancement of technology of information (\textbf{not only correspondences anymore}), protecting individuals' privacy in the era of Big data is a significant challenge. Indeed, the explosion of the number of connected objects, mobile applications collecting and/or generating any type of data makes personal data ubiquitous and growing exponentially. 

Moreover, when collecting data in practice, one is often interested in multiple attributes of a population, i.e., \textit{multidimensional data}. For instance, in crowd-sourcing applications, the server may collect both demographic information (e.g., gender, nationality) and user habits in order to develop personalized solutions for specific groups. In addition, one generally aims to collect data from the same users throughout time (i.e., \textit{longitudinal} studies), which is essential in many situations. For example, the fact that remote antennas of mobile network operators (MNOs) have received cell phone connections may reveal a movement if the same user is identified in different antennas throughout time.

From a human point of view, data analysts can be external providers. In other words, they very rarely have the consent of the data providers (i.e., individuals concerned) to analyze the data. It is, therefore, necessary for the company providing the service to make all possible efforts to follow all the recommendations from data privacy authorities such as the General Data Protection Regulation (GPDR)~\cite{GDPR} and, particularly, make any re-identification unfeasible from a practical point of view. On the other hand, even if trusted service providers collect raw personal data, this practice can still lead to privacy breaches, i.e., the risk of information leakage is always possible even if service providers make every effort to secure the data. 

Indeed, data breaches are all too common~\cite{data_breaches}, which endanger users' privacy and can lead to substantial losses for companies under the GDPR (cf.~\cite{uber,facebook}, for example). Moreover, along with gathering data, extracting high-utility analytics through machine learning (ML) from the collected data is of great interest. Yet, even ML models trained with raw data can also indirectly reveal sensitive information~\cite{Gong2020_survey,Sarwate2013_survey} (e.g., cf.~\cite{Song2017,Shokri2017,Carlini2019}).

In addition, privacy issues appear more than ever in headlines (e.g.,~\cite{taxi,aol,tinder1,tinder2,twitch,t_mobile,experian}). To tackle privacy concerns, research communities have proposed \textbf{different methods to preserve privacy}, in which the \textbf{main goal is that anonymized data should not leak private information about any individual}~\cite{desfontaines2020}. To this end, \textit{k}-anonymity~\cite{samarati1998protecting,SWEENEY2002} and differential privacy (DP)~\cite{Dwork2006,Dwork2006DP,dwork2014algorithmic} are two well-known privacy techniques. On the one hand, \textit{k}-anonymity is very risky since it does not allow to counter intersecting and/or homogeneity attacks, for example~\cite{Machanavajjhala2006,Li2007}. On the other hand, DP has been increasingly accepted as the current standard for data privacy~\cite{DL_DP,uber_dp,census2021,dwork2014algorithmic}. However, in the originally proposed centralized DP model, queries perturbed by DP algorithms require the storage of raw databases because the noise is only added at the end of the request. As aforementioned, storing and/or sharing raw databases (as well as training ML models over raw data) is not always desirable because it is necessary to secure all access to them from both a technical and human point of view. 

To preserve privacy at the user-side, an alternative approach, namely, local differential privacy (LDP), was initially formalized in~\cite{first_ldp}. With LDP, rather than trusting in a data curator to have the raw data and sanitize it to output queries, each user applies a DP mechanism to their data before transmitting it to the data collector server. The LDP model allows collecting data in unprecedented ways and, therefore, has led to several adoptions by industry. For instance, big tech companies like Google, Apple, and Microsoft, reported the implementation of LDP mechanisms to gather statistics in well-known systems (i.e., Google Chrome browser~\cite{rappor}, Apple iOS and macOS~\cite{apple}, and Windows 10 operation system~\cite{microsoft}). 

\section{Motivation and Objectives} \label{ch1:motivation_objectives}

For the rest of this manuscript, the author will utilize \textbf{we} rather than \textbf{I} to highlight the contributions of all my collaborators (cf. Acknowledgment on page vii). Yet, \textbf{the author is the only one responsible for all errors} that may still be present on this manuscript. The work in this manuscript is based on two motivating projects. 

On the one hand, we had a preliminary collaboration with the Orange Business Services (OBS) team in Belfort, France, i.e., an MNO. The OBS team presented us \textit{an overview} of their deployed system named Flux Vision~\cite{fluxvision1}, which publishes real-time statistics on human mobility by analyzing call detail records (CDRs). The Flux Vision system motivated us \textbf{to study how to gather knowledge from the published statistics as well as to propose a distinct privacy-preserving data collection process}. More precisely, from a practical perspective, based on \textit{longitudinal} and \textit{multidimensional} OBS mobility reports, we noticed that these statistics could be improved to provide more information about mobility patterns of the individuals concerned. Thus, this is our \textbf{first objective}. Furthermore, our \textbf{second objective} is to propose a privacy-preserving CDRs processing system, which could improve the privacy of MNOs' clients. Next, from a theoretical perspective on statistical learning, our \textbf{third objective} is to improve the utility and privacy of multiple frequency estimates (i.e., multidimensional and longitudinal data collections) under LDP guarantees. 

In addition, we also worked on a collaborative framework with Selene Cerna and Christophe Guyeux, members of the AND\footnote{Algorithmique Numérique Distribuée (or, distributed digital algorithmics in English).} research team from the same research department as ours\footnote{Department of Informatics and Complex Systems (DISC in French).}. Selene Cerna holds a CIFRE thesis (N 2019/0372) with the fire department named Service Départemental d'Incendie et de Secours du Doubs (SDIS 25), i.e, an emergency medical services (EMS) in France. For the past few years, the AND team has been investigating ML-based solutions to optimize the SDIS 25 services under a strict confidentiality agreement on the SDIS 25 data. The way these data have been shared motivated us \textbf{to study the privacy-utility trade-off of ML models trained over sanitized data}. That is, we consider the case of centralized data owners (e.g., MNOs and EMS) that \textit{collect sensitive information} from individuals for both billing and/or legal purposes \textit{but do not trust} the third entity to develop decision-support systems. So, our \textbf{fourth and last objective} is to evaluate empirically the privacy-utility trade-off of different ML-based solutions trained over sanitized data. We mainly focused on the SDIS 25 data. Notice, however, that this manuscript \textit{does not} focus on the data collection nor the feature engineering processes carried out by Selene Cerna but, rather, we will present only necessary information about the dataset while \textit{focusing on the privacy-utility trade-off} analysis. 

\section{Main Contributions of this Thesis}

The main contributions of this thesis are summarized in the following:

\begin{enumerate}
    \item First, based on one-week statistical data of unions of consecutive days published by OBS~\cite{fluxvision1}, we present a method for inferring and recreating a synthetic dataset that matches the original statistical data with low mean relative error. We thus generated and published it as an open dataset (\url{https://github.com/hharcolezi/OpenMSFIMU}) such that others can use it to evaluate new privacy-preserving techniques as well as ML tasks.
    
    \item Second, by studying these aggregate statistics on human mobility, we proposed an LDP-based CDRs processing system to generate multidimensional mobility reports throughout time by offering strong privacy guarantees for each user.
    
    \item The first two studies on CDRs-based mobility reports are translated to longitudinal statistical releases about the frequency of visitors by multiple attributes. We then contribute to the \textbf{theoretical} aspect under the LDP setting. More precisely, we first focused on optimizing the \textit{utility} of LDP protocols for \textit{longitudinal} and \textit{multidimensional} frequency estimates. 
    
    \item Next, we identified a limitation of the state-of-the-art solution used for multidimensional frequency estimates with LDP, which splits users into groups instead of splitting the privacy budget. We then propose a solution to this limitation, which improves the \textit{privacy} of users while providing \textit{the same or better utility} (regarding the mean squared error metric) than the state-of-the-art solution. 
    
    \item Lastly, we empirically evaluated the privacy-utility trade-off of differentially private input perturbation-based ML models. That is, we assessed practical solutions in which data owners (e.g., MNOs and EMS) could sanitize their datasets locally before transmitting these data to untrusted parties to develop decision-support tools, with no considerable impact on the utility. 
\end{enumerate}

\section{Thesis Outline}

The rest of this manuscript is organized as follows: Chapter~\ref{chap:chapter2} presents the scientific background on data anonymization techniques. Chapter~\ref{chap:chapter3} provides the scientific background on machine learning techniques and presents the databases we will experiment on. Chapter~\ref{chap:chapter4} presents the first contribution of this manuscript, namely, an open, longitudinal, and synthetic dataset of faked virtual humans generated by an optimization approach applied to a real-life CDRs-based anonymized database. Chapter~\ref{chap:chapter7} proposes a privacy-preserving CDRs processing system to generate mobility reports longitudinally. Chapter~\ref{chap:chapter5} presents our first theoretical contribution on statistical learning with LDP. Chapter~\ref{chap:chapter6} resolves one limitation of Chapters~\ref{chap:chapter7} and~\ref{chap:chapter5} by improving the privacy of individuals while keeping the utility on statistical learning with LDP. Chapter~\ref{chap:chapter91} empirically evaluates two differentially private machine learning settings on multivariate time series forecasting. Chapter~\ref{chap:chapter8} proposes a privacy-preserving methodology to sanitize an EMS intervention dataset while allowing both statistical learning and forecasting tasks. Chapter~\ref{chap:chapter9} empirically evaluates the impact of sanitizing the location of an emergency when training ML models to predict the response time of ambulances. Chapter~\ref{chap:chapter92} empirically evaluates the impact of training ML models over anonymized data to predict the victims' mortality. Lastly, Chapter~\ref{chap:conclusion} provides a general conclusion of this work and its perspectives.

\part{Background}

\chapter{Data Anonymization} \label{chap:chapter2}

In Chapter~\ref{chap:chapter1}, we have introduced some main concerns with regard to privacy, the motivating projects of this thesis, as well as our objectives. In this chapter, we present the background on data anonymization techniques that our work relies on. We highlight that the content of this chapter is \textbf{primarily} inspired by existing literature in books~\cite{dwork2014algorithmic,Zhu2017} and papers~\cite{SWEENEY2002,Machanavajjhala2006,tianhao2017,Andrs2013}. Appropriate references to other works are provided throughout this chapter.

\section{Introduction: Syntactic VS Algorithmic Privacy} \label{ch2:introduction} 

In the literature, many privacy models have been proposed to tackle privacy issues. In this manuscript, we consider two data privacy definitions, namely, \textit{Syntactic privacy} and \textit{Algorithmic privacy}. More specifically, the former notion tries to define a syntactic criterion that should be satisfied by the \textit{output dataset} through transforming the data. The most influential method is named \textit{k}-anonymity~\cite{samarati1998protecting,SWEENEY2002}, which was the starting point for other extensions like \textit{l}-diversity~\cite{Machanavajjhala2006} and \textit{t}-closeness~\cite{Li2007}. We introduce \textit{k}-anonymity in Section~\ref{ch2:sub_k_anon}, which will be used in Chapter~\ref{chap:chapter92}. Throughout this manuscript, we will refer to \textbf{anonymity} as a condition of being ``safe in the crowd" (i.e., anonymous).

The latter algorithmic notion considers that anonymization is a property of the \textit{algorithm}, rather than the output dataset. This is the core insight of \textit{differential privacy}~\cite{Dwork2006,Dwork2006DP}, which addresses the paradox of learning about a population while learning nothing about single individuals~\cite{dwork2014algorithmic}. One special form of DP is the \textit{non-interactive case} considered in this manuscript, which corresponds to, e.g., releasing summary statistics, the sanitized dataset, a synthetic dataset, and so on. Throughout this manuscript, we will refer to \textbf{sanitization} the fact that data anonymization was achieved through verifying DP (i.e., using a DP algorithm). In this manuscript, we consistently used differential privacy. So, we present the centralized model of DP in Section~\ref{ch2:sub_dp}, the local model of DP in Section~\ref{ch2:sub_ldp}, and a local model of DP for location privacy in Section~\ref{ch2:sub_geo_ind}. 

\section{\textit{k}-anonymity} \label{ch2:sub_k_anon}

Given a public medical database without identifiers but where age, ZIP code, ..., were present, and a $20\$$ dollars public voter records from Massachusetts, United States of America, a Ph.D. student named Latanya Sweeney was able to re-identify the Governor of Massachusetts in this medical database~\cite{sweeney2015only}. This re-identification attack took place because there was similar demographic information in both medical databases and voter list records. This way, the combination of several demographic data made people \textit{unique} in both databases, which allowed Sweeney to directly match these records in both databases.

To tackle this \textit{uniqueness} problem in data publishing, Samarati and Sweeney~\cite{samarati1998protecting,SWEENEY2002} proposed the \textit{k}-anonymity model, which requires that each released record to be indistinguishable from at least $k-1$ others. Intuitively, the larger \textit{k} is the better the privacy protection will be. On applying \textit{k}-anonymity, there is a difference between: \textit{explicit identifiers} (e.g., names), which are removed or masked to avoid direct re-identification; \textit{sensitive attributes} (e.g., disease), that might be preserved, and \textit{quasi-identifiers (QIDs)} such as age and gender, in which \textit{k}-anonymity seeks to ensure indistinguishability. We recall the definition of \textit{k}-anonymity in the following.

\begin{definition}[\textit{k}-anonymity requirement~\cite{samarati1998protecting,SWEENEY2002}] Each release of data must ensure that every combination of values of QIDs can be indistinctly matched to at least \textit{k} individuals.
\end{definition}

We also recall here an example from~\cite{Machanavajjhala2006}. Table~\ref{ch2:tab_pseudo_dataset} exhibits a pseudonymized dataset (i.e., with no direct identifiers like `name') that stores the medical record of a set of individuals. This dataset is composed of both sensitive (disease) and `non-sensitive' information like age, gender, and nationality. Table~\ref{ch2:tab_4_anonymous} exhibits a 4-anonymous version of the original data in Table~\ref{ch2:tab_pseudo_dataset}. Note that in Table~\ref{ch2:tab_4_anonymous}, there is no \textit{unique} record anymore and there are three different combinations of values grouped by $k=4$ records.

\setlength{\tabcolsep}{5pt}
\renewcommand{\arraystretch}{1.4}
\begin{table}[!ht]
    \scriptsize
    \centering
    \begin{tabular}{|r|c|c|c|c|c|}
            \cline{2-6}
            \multicolumn{1}{c|}{} &  \multicolumn{4}{c|}{Quasi Identifiers -- QIDs} & Sensitive \\
            \hline
            ID & Zip & Age & Gender & Nationality & Disease \\
            \hline
            1 & 13053 & 28 & M& Russian & Tuberculosis \\
            \hline
            2 & 13068 & 29 & M& American & Heart \\
            \hline
            3 & 13068 & 21 & F& Japanese & Viral \\
            \hline
            4 & 13053 & 23 & M& American & Viral \\
            \hline
            5 & 14853 & 49 & M& Indian & Cancer \\
            \hline
            6 & 14853 & 48 & F& Russian & Heart \\
            \hline
            7 & 14850 & 47 & M& American & Viral \\
            \hline
            8 & 14850 & 49 & F& American & Viral\\
            \hline
            9 & 13053 & 31 & M& American & Cancer \\
            \hline
            10 & 13053 & 37 & M &Indian & Cancer \\
            \hline
            11 & 13068 & 36 & F& Japanese & Cancer \\
            \hline
            12 & 13068 & 35 & F& American & Cancer \\
            \hline
            
  \end{tabular}
    \caption{An example of a pseudonymized dataset (adapted from~\cite{Machanavajjhala2006}).}
    \label{ch2:tab_pseudo_dataset}
\end{table}

\setlength{\tabcolsep}{5pt}
\renewcommand{\arraystretch}{1.4}
\begin{table}[!ht]
    \scriptsize
    \centering
    \begin{tabular}{|c|c|c|c|c|l}
    \cline{1-5}
    \multicolumn{4}{|c|}{Quasi Identifiers -- QIDs} & Sensitive &\\
    \cline{1-5}
     Zip & Age & Gender & Nationality & Disease &\\
    \cline{1-5}
    \textcolor{red}{130**} & \textcolor{red}{$[21;31[$} & \textcolor{red}{*}& \textcolor{red}{*} & \textcolor{red}{Tuberculosis} &\multirow{4}{0.5cm}{$\left. \begin{array}{r} \\ \\ \\ \\
                                                                                 \end{array} \right\} \textrm{4 individuals}$} \\
    130** & $[21;31[$ & *& * & Heart &\\
    130** & $[21;31[$ & *& * & Viral &\\
    130** & $[21;31[$ & *& * & Viral &\\
\cline{1-5}
    
    148** & $[41;50[$ & *& * & Cancer & \multirow{4}{0.5cm}{$\left. \begin{array}{r} \\ \\ \\ \\
                                                                                 \end{array} \right\} \textrm{4 individuals}$}\\
    148** & $[41;50[$ & *& * & Heart &\\
    148** & $[41;50[$ & *& * & Viral& \\
    148** & $[41;50[$ & *& * & Viral&\\
\cline{1-5}
    130** & $[31;41[$ & *& * & Cancer & \multirow{4}{0.5cm}{$\left. \begin{array}{r} \\ \\ \\ \\
                                                                                 \end{array} \right\}\textrm{4 individuals}$}\\
    130** & $[31;41[$ & *&* & Cancer &\\
    130** & $[31;41[$ & *& * & Cancer &\\
    130** & $[31;41[$ & *& * & Cancer &\\
    \cline{1-5}
\end{tabular}
    \caption{A $4$-anonymous dataset of Table~\ref{ch2:tab_pseudo_dataset} (adapted from~\cite{Machanavajjhala2006}).}
    \label{ch2:tab_4_anonymous}
\end{table}

However, several studies have pointed out limitations of the \textit{k}-anonymity model, normally resulting in a new syntactic notion of privacy such as \textit{l}-diversity~\cite{Machanavajjhala2006} and \textit{t}-closeness~\cite{Li2007}. For instance, the last four records in Table~\ref{ch2:tab_4_anonymous} exhibits the same sensitive value \textit{Cancer}. So, if an attacker with background knowledge knows someone within $[31;41[$ years old contributed to this dataset, it is obvious the disease value for this person. This is also known as \textit{homogeneity} attack. Besides, \textit{k}-anonymity does not \textit{compose}, i.e., if the same person participates in two independent \textit{k}-anonymous releases, there is no guarantee s/he will be \textit{k}-anonymous in the composition of both dataset. Suppose the person in the first row (in red color) tested positive for tuberculosis in the hospital that release the $4$-anonymous dataset of Table~\ref{ch2:tab_4_anonymous}. Although this hospital had a good laboratory, the person decides to take a second test in another hospital, which releases the $5$-anonymous dataset of Table~\ref{ch2:tab_5_anonymous}. So, if an attacker knows, e.g., that someone is 29 years old, lives in ZIP code 13012, and visited both hospitals, the \textit{unique} record that matches in both Tables~\ref{ch2:tab_4_anonymous} and~\ref{ch2:tab_5_anonymous} is the first one (also in red color). Thus, jeopardizing this user privacy since \textit{k}-anonymity does not compose.

\setlength{\tabcolsep}{5pt}
\renewcommand{\arraystretch}{1.4}
\begin{table}[!ht]
    \centering
    \scriptsize
    \begin{tabular}{|c|c|c|c|c|l}
          \cline{1-5}
          \multicolumn{4}{|c|}{Quasi Identifiers -- QIDs} & Sensitive \\
          \cline{1-5}
          Zip & Age & Gender & Nationality & Disease \\
          \cline{1-5}
          \textcolor{red}{130**} & \textcolor{red}{$<35$} & \textcolor{red}{*} & \textcolor{red}{*} & \textcolor{red}{Tuberculosis} & \multirow{5}{0.5cm}{$\left. \begin{array}{r} \\ \\ \\ \\ \\
                                                                                 \end{array} \right\}\textrm{5 individuals}$}\\
          130** & $<35$ & * & * & Diabetes \\
          130** & $<35$ & * & * & Parkinson \\
          130** & $<35$ & * & * & Parkinson  \\
          130** & $<35$ & * & * & Diabetes \\
          \cline{1-5}
          148*** & $\geq35$ & * & * & Heart & \multirow{5}{0.5cm}{$\left. \begin{array}{r} \\ \\ \\ \\ \\
                                                                                 \end{array} \right\}\textrm{5 individuals}$}\\
          148*** & $\geq35$ & * & * & Cancer \\
          148*** & $\geq35$ & * & * & Viral \\
          148*** & $\geq35$ & * & * & Cancer \\
          148*** & $\geq35$ & * & * & Cancer \\
          \cline{1-5}
        \end{tabular}
    \caption{An example of a $5$-anonymous dataset from a second hospital.}
    \label{ch2:tab_5_anonymous}
\end{table}

\section{Differential Privacy} \label{ch2:sub_dp}

Consider a database that stores the result of an infectious disease of a set of individuals (e.g., Table~\ref{ch2:tab_pseudo_dataset}). From this database, we could learn statistics about the underlying population and publish these statistics publicly. However, information might leak about specific individuals in the database, which could compromise their privacy. In theory, we would like that the global information relative to the population to be public, e.g., ``how many people tested positive for this disease". At the same time, we would like that the information of each individual to be private, i.e., not releasing ``\textit{who} tested positive for the disease". Unfortunately, this is not always possible. For instance, if each time an attacker adds or removes someone of the database and performs the query ``how many people tested positive for this disease?", in the end, it is possible to infer whose people tested positive by calculating the influence of each individual. 

One way to preserve privacy in this scenario is to add some \textit{noise} in the output of the query, which, \textit{ideally}, should not destroy the utility of the data. In other words, the challenge would be to maximize the utility of the released noisy statistics while preserving the privacy of the individuals. Differential privacy (DP)~\cite{Dwork2006,Dwork2006DP} is a formal definition that allows quantifying the privacy-utility trade-off. Indeed, rather than being a privacy property of the \textit{output} dataset (like \textit{k}-anonymity and its variants), DP is a definition that must be respected by a randomized \textit{algorithm} (i.e., algorithmic notion of privacy). 

In recent years, DP has been increasingly accepted as the current standard for data privacy with several large-scale implementations in the real-world~\cite{desfontaines_dp_real_world} (cf.~\cite{uber_dp,linkedin,microsoft,apple,facebook_dp1,facebook_dp2,rappor,aktay2020google,census2021,census,wellenius2020impacts}). One key reason is that DP addresses the paradox of learning about a population while learning nothing about single individuals~\cite{dwork2014algorithmic}. More specifically, the idea is that removing (or adding) a single row from the database should not affect \textit{much} the statistical results. A formal definition of DP is given in the following.

\begin{definition}[($\epsilon, \delta$)-Differential Privacy~\cite{dwork2014algorithmic}] \label{def:dp} Given $\epsilon>0$ and $0 \leq \delta <1$, a randomized algorithm ${\mathcal{A}: \mathcal{D} \rightarrow R}$ is said to provide {($\epsilon, \delta$)-differential-privacy (($\epsilon, \delta$)-DP}) if, for all neighbouring datasets $D_{1},D_{2} \in \mathcal{D}$ that differ on the data of one user, and for all sets $R$ of outputs:

\begin{equation}
{ \Pr[{\mathcal{A}}(D_{1})\in R]\leq e^{\epsilon } \Pr[{\mathcal{A}}(D_{2})\in R]} + \delta \textrm{.}
\label{eq:dp}
\end{equation}

\end{definition}

The additive $\delta$ on the right-side of Eq.~\eqref{eq:dp} is interpreted as a probability of failure. Normally, a common choice for $\delta$ is to set it significantly smaller than $1/n$ where $n$ is the number of users in the database~\cite{dwork2014algorithmic}. Throughout this manuscript, if $\delta=0$, we will just say that $\mathcal{A}$ is $\epsilon$-DP. 

Notice that if $\epsilon$ (a.k.a. the \textit{privacy loss} or the \textit{privacy budget}) is zero, both distributions are equal, and in this case, there is no leakage of information. This is equivalent to the privacy goal stated by Dalenius~\cite{dalenius1977towards} in 1977 as ``\textit{access to a statistical database should not enable one to learn anything about an individual that could not be learned without access}". However, respecting such a statement, as proven in~\cite{Dwork2006DP}, no utility could ever be obtained. So, we have to accept leaking \textit{some} information about individuals in order to have some utility, which is translated to increasing $\epsilon$ (i.e., \textit{privacy-utility trade-off}).

\subsection{Properties of Differential Privacy} \label{ch2:subsub_prop_dp}

Differential privacy possesses several important properties, highlighting its strength in comparison with other privacy models. For instance, with DP, there is no need to define the \textit{background knowledge} that attackers might have, which is equivalent to assuming an attacker with \textit{unlimited resources}. Besides, DP definition protects \textit{anything} associated with a single individual, e.g., their presence in the database and their sensitive information~\cite{desfontaines2020}. On the other hand, DP does not protect against attribute inference as it may leak information about individuals \textbf{not} present in the database.

In addition, DP is immune to \textit{post-processing}, which means it is not possible to make an $\epsilon$-DP mechanism less differentially private by evaluating any function $f$ of the response of the mechanism, given that there is no additional information about the database.

\begin{proposition}[Post-Processing of DP~\cite{dwork2014algorithmic}] If $\mathcal{A} : \mathcal{D} \rightarrow R$ is $\epsilon$-DP, then $f (\mathcal{A})$ is also $\epsilon$-DP for any function $f$.
\end{proposition}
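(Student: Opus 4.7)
The plan is to directly unwind the definition of $\epsilon$-DP for the composed mechanism $f \circ \mathcal{A}$. First I would reduce the problem to the case of a \emph{deterministic} function $f$, since any randomized $f$ can be written as a distribution over deterministic maps (formally, by conditioning on $f$'s internal randomness $r$, which is independent of the database). If the post-processing inequality holds for every fixed deterministic $f_r$, then averaging over $r$ preserves the $e^{\epsilon}$ multiplicative bound, so the randomized case follows immediately.

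Next, assume $f : R \to R'$ is deterministic and fix neighbouring datasets $D_1, D_2 \in \mathcal{D}$ and an arbitrary measurable set $S \subseteq R'$. The key observation is the set-theoretic identity
\begin{equation*}
\{\, f(\mathcal{A}(D)) \in S \,\} \;=\; \{\, \mathcal{A}(D) \in f^{-1}(S) \,\},
\end{equation*}
which lets us transfer any event in the range of $f \circ \mathcal{A}$ back to an event in the range of $\mathcal{A}$. Setting $T := f^{-1}(S) \subseteq R$, this is a perfectly valid output set to plug into Definition~\ref{def:dp}.

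Applying the $\epsilon$-DP guarantee of $\mathcal{A}$ to the set $T$ then yields
\begin{equation*}
\Pr[f(\mathcal{A}(D_1)) \in S] \;=\; \Pr[\mathcal{A}(D_1) \in T] \;\leq\; e^{\epsilon} \Pr[\mathcal{A}(D_2) \in T] \;=\; e^{\epsilon} \Pr[f(\mathcal{A}(D_2)) \in S],
\end{equation*}
which is exactly the $\epsilon$-DP condition for $f \circ \mathcal{A}$. Since $D_1, D_2$ and $S$ were arbitrary, we are done.

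The only subtle point, and the part I expect to require the most care, is the randomized-$f$ reduction: one must argue that $f$'s randomness is independent of the database (so that the conditional probabilities factor cleanly), and then invoke the fact that a convex combination of probability ratios bounded by $e^{\epsilon}$ is itself bounded by $e^{\epsilon}$. Everything else is essentially a one-line pre-image argument.
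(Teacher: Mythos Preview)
Your argument is correct and is precisely the standard textbook proof (the pre-image reduction for deterministic $f$, followed by averaging over internal randomness for randomized $f$). Note, however, that the paper does not actually supply a proof of this proposition: it is stated as a known property and attributed to~\cite{dwork2014algorithmic}, so there is no in-paper proof to compare against---your write-up simply fills in the cited result.
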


Furthermore, DP also \textit{composes} well, which is one of the most powerful features of this privacy model. For instance, accounting for the \textit{overall} privacy loss consumed in a pipeline of several DP algorithms applied to the same database is feasible due to composition. We recall two types of composition below.

\begin{proposition}[Sequential Composition~\cite{dwork2014algorithmic}] \label{ch2:prop_sequential_composition} Let $\mathcal{A}_1$ be an $\epsilon_1$-DP mechanism and $\mathcal{A}_2$ be an $\epsilon_2$-DP mechanism. Then, the mechanism $\mathcal{A}_{1,2}(\mathcal{D})=\left( \mathcal{A}_1(\mathcal{D}), \mathcal{A}_2(\mathcal{D})\right)$ is $(\epsilon_1+\epsilon_2)$-DP.

\end{proposition}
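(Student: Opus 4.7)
The plan is to prove the bound pointwise (on singleton outputs in the discrete case, or on density ratios in the continuous case) and then lift it to arbitrary measurable output sets. First I would fix any pair of neighboring datasets $D, D' \in \mathcal{D}$ differing on a single user's record, together with an arbitrary outcome pair $(r_1, r_2)$ in the joint range of $\mathcal{A}_{1,2}$. The key structural observation to flag at the outset is that $\mathcal{A}_1$ and $\mathcal{A}_2$ are run with \emph{independent} internal randomness, so the joint distribution of $\mathcal{A}_{1,2}(D) = (\mathcal{A}_1(D), \mathcal{A}_2(D))$ factorizes as a product of marginals; without this independence the composition theorem as stated would not hold.

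Next I would write the ratio of the joint probabilities as a product of the two marginal ratios, and apply the $\epsilon_i$-DP guarantee to each factor separately:
\begin{equation*}
\frac{\Pr[\mathcal{A}_{1,2}(D) = (r_1, r_2)]}{\Pr[\mathcal{A}_{1,2}(D') = (r_1, r_2)]} = \frac{\Pr[\mathcal{A}_1(D) = r_1]}{\Pr[\mathcal{A}_1(D') = r_1]} \cdot \frac{\Pr[\mathcal{A}_2(D) = r_2]}{\Pr[\mathcal{A}_2(D') = r_2]} \leq e^{\epsilon_1} \cdot e^{\epsilon_2} = e^{\epsilon_1 + \epsilon_2}.
\end{equation*}
This already delivers the required bound at the level of elementary outcomes. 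To finish, I would lift to an arbitrary measurable set $R \subseteq R_1 \times R_2$ of outputs by summing (or integrating, in the continuous case) the pointwise inequality over $R$, which preserves the factor $e^{\epsilon_1+\epsilon_2}$ and yields Definition~\ref{def:dp} with parameter $\epsilon_1+\epsilon_2$ and $\delta=0$.

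The only genuine subtlety, and what I would treat as the main obstacle, is making the independence step rigorous rather than incidental: one should formalize $\mathcal{A}_{1,2}$ as a mechanism that draws two independent random seeds and feeds the same input $\mathcal{D}$ to each component, so that the joint law truly factorizes across both $D$ and $D'$. Once that is in place, the remaining arguments are purely calculational, and the extension from singletons to measurable sets is routine monotone-class / Fubini-style reasoning that requires no new ideas beyond the pointwise inequality displayed above.
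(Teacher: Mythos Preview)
Your argument is correct and is exactly the standard proof of sequential composition. Note, however, that the paper does not actually supply its own proof of this proposition: it is stated with a citation to \cite{dwork2014algorithmic} and left unproven, so there is nothing in the paper to compare against beyond observing that your write-up matches the textbook derivation found in that reference.
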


\begin{proposition}[Parallel Composition~\cite{dwork2014algorithmic}] \label{ch2:prop_parallel_composition} Let $\mathcal{A}_1$ be an $\epsilon_1$-DP mechanism and $\mathcal{A}_2$ be an $\epsilon_2$-DP mechanism. Let $D_1$ and $D_2$ be arbitrary disjoint subsets of the input domain $\mathcal{D}$. Then, the mechanism $\mathcal{A}_{1,2}(\mathcal{D})=\left( \mathcal{A}_1(D_1), \mathcal{A}_2(D_2)\right)$ is $max(\epsilon_1,\epsilon_2)$-DP. 

\end{proposition}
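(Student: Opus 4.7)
The plan is to exploit the disjointness of $D_1$ and $D_2$: since every individual belongs to at most one part of the partition, a single-record change between neighbouring databases $D, D' \in \mathcal{D}$ can perturb only one of the two inputs fed to $\mathcal{A}_1$ and $\mathcal{A}_2$. So first I would fix arbitrary neighbours $D \sim D'$ and write $D_i, D_i'$ for the restrictions to the $i$-th block ($i\in\{1,2\}$). By disjointness exactly one of the two cases holds: either $D_1 \sim D_1'$ while $D_2 = D_2'$, or $D_2 \sim D_2'$ while $D_1 = D_1'$.

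Next I would analyse one case, say the first. For any measurable output set $R \subseteq R_1 \times R_2$ of the product mechanism, I would begin with product sets $R = S_1 \times S_2$ and use the fact that $\mathcal{A}_1$ and $\mathcal{A}_2$ are run with independent randomness. This gives
\begin{align*}
\Pr[\mathcal{A}_{1,2}(D) \in S_1 \times S_2]
&= \Pr[\mathcal{A}_1(D_1) \in S_1]\, \Pr[\mathcal{A}_2(D_2) \in S_2] \\
&\le e^{\epsilon_1}\,\Pr[\mathcal{A}_1(D_1') \in S_1]\, \Pr[\mathcal{A}_2(D_2') \in S_2] \\
&= e^{\epsilon_1}\,\Pr[\mathcal{A}_{1,2}(D') \in S_1 \times S_2],
\end{align*}
using the $\epsilon_1$-DP guarantee of $\mathcal{A}_1$ on the first factor and the equality $D_2 = D_2'$ on the second. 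Symmetrically, the second case yields the same inequality with $e^{\epsilon_2}$ in place of $e^{\epsilon_1}$. Since exactly one case occurs and the bound must cover both, taking $\max(\epsilon_1,\epsilon_2)$ gives the claimed ratio for every product-form event.

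To finish, I would extend the pointwise bound from measurable rectangles to all measurable sets of $R_1 \times R_2$. The standard route is to observe that both $\Pr[\mathcal{A}_{1,2}(D)\in\cdot]$ and $e^{\max(\epsilon_1,\epsilon_2)}\,\Pr[\mathcal{A}_{1,2}(D')\in\cdot]$ are measures on the product $\sigma$-algebra, and that the rectangle inequality, combined with countable additivity, lifts to arbitrary measurable $R$ via a monotone-class / $\pi$-$\lambda$ argument; in the discrete setting it is just a sum over atoms, which is immediate.

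The only real subtlety is the independence step: one must make sure $\mathcal{A}_1$ and $\mathcal{A}_2$ use independent internal coins so that the joint distribution factorises, which is implicit in treating them as separate mechanisms composed ``in parallel''. The rest is essentially bookkeeping. I expect the measurability lift and the careful case split (identifying which block holds the changed record) to be the only places where mistakes are easy to make; the DP inequalities themselves are applied exactly once each and combined by taking a maximum.
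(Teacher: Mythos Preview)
Your proof is correct and follows the standard argument for parallel composition. Note, however, that the paper does not actually prove this proposition: it is stated in the background section as a known property of differential privacy, with a citation to~\cite{dwork2014algorithmic}, and no proof is given. So there is nothing in the paper to compare your argument against. What you have written is essentially the textbook proof, and the points you flag as subtleties (independence of the mechanisms' randomness, and the extension from rectangles to general measurable sets) are exactly the right places to be careful.
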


\subsection{Differentially Private Mechanisms: Laplace and Gaussian}  \label{ch2:sub_lap_gauss}

Any mechanism that respects Definition~\ref{def:dp} can be considered differentially private. Two widely used DP mechanisms for numeric queries (i.e., functions $f : \mathcal{D} \rightarrow \mathbb{R}$) are the Laplace mechanism~\cite{Dwork2006} and the Gaussian mechanism~\cite{dwork2014algorithmic}. One important parameter that determines how accurately we can answer the queries is their \textit{sensitivity}. We recall the definition of $\ell_1$- and $\ell_2$-sensitivity and both Laplace and Gaussian mechanisms below, respectively.

\begin{definition}[$\ell_1$-sensitivity~\cite{dwork2014algorithmic}] The $\ell_1$-sensitivity of a function $f : \mathcal{D} \rightarrow \mathbb{R}$, for all neighbouring datasets $D_{1},D_{2} \in \mathcal{D}$ that differ on the data of one user, is:

\begin{equation*}
    \Delta_1 (f) = max \textrm{  } || f(D_1) - f(D_2)  ||_1
\end{equation*}

\end{definition}

\begin{definition}[$\ell_2$-sensitivity~\cite{dwork2014algorithmic}] The $\ell_2$-sensitivity of a function $f : \mathcal{D} \rightarrow \mathbb{R}$, for all neighbouring datasets $D_{1},D_{2} \in \mathcal{D}$ that differ on the data of one user, is:

\begin{equation*}
    \Delta_2 (f) = max \textrm{  } || f(D_1) - f(D_2)  ||_2
\end{equation*}

\end{definition}

\begin{definition}[Laplace mechanism~\cite{Dwork2006}] For a query function $f : D \rightarrow \mathbb{R}$ over a dataset $D \in \mathcal{D}$, the Laplace mechanism is defined as:

\begin{equation*}
    \mathcal{A}_L (D, f(.), \epsilon) = f(D) + Lap\left ( \frac{\Delta_1}{\epsilon} \right),
\end{equation*}

\end{definition}

in which $Lap(b)$ is the Laplace distribution centered around 0 and of scale $b$. The Laplace mechanism is proven to preserve $\epsilon$-DP~\cite{Dwork2006}.

\begin{definition}[Gaussian mechanism~\cite{dwork2014algorithmic}] For a query function $f : D \rightarrow \mathbb{R}$ over a dataset $D \in \mathcal{D}$ and for $\sigma = \frac{\Delta_2}{\epsilon} \sqrt{2 \ln{(1.25/\delta)}}$, the Gaussian mechanism is defined as:

\begin{equation*}
    \mathcal{A}_G (D, f(.), \epsilon, \delta) = f(D) + \mathcal{N}\left (0, \sigma^2 \right)
\end{equation*}

\end{definition}

in which $\mathcal{N}\left (0, \sigma^2 \right)$ is the normal distribution centered at 0 with variance $\sigma^2$. For $\epsilon \in (0,1)$, the Gaussian mechanism provides ($\epsilon, \delta$)-DP~\cite{dwork2014algorithmic}.

\subsection{Privacy amplification by sampling} \label{ch2:sub_sampling}

There exist scenarios in which using a random subsample of the database is sufficient to approximate the overall distribution of the original database (e.g., census data). Sampling is a fundamental tool in the design of differentially private mechanisms as there is an \textit{amplification} effect~\cite{Chaudhuri2006,Li2012,balle2018privacy,balle2020privacy,first_ldp}. For instance, amplification by sampling plays a key role in machine learning since many classes of algorithms utilize sampling strategies during the training process (e.g., differentially private stochastic gradient descent~\cite{DL_DP}).

So, why is there an amplification effect? Informally, assume we extract a random subsample from a database and, next, we apply a DP mechanism to this sampled database. Observe now that there is more \textit{uncertainty} on the output of the DP mechanism since an attacker would be, first, unable to distinguish which data samples were used and, second, there is the DP guarantee. More rigourously, Li et al.~\cite[Theorem 1]{Li2012} theoretically prove this effect. 

\begin{theorem} \label{theo:amp_sampling} \textbf{Amplification by Sampling}~\cite{Li2012}. Let $\mathcal{A}$ be an $\epsilon'$-DP mechanism and $\mathcal{S}$ to be a sampling algorithm with sampling rate $\beta$. Then, if $\mathcal{S}$ is first applied to a dataset $\mathcal{D}$, which is later sanitized with $\mathcal{A}$, the derived result satisfies $\epsilon$-DP with $\epsilon=\ln{\left( 1 + \beta (e^{\epsilon'} - 1)  \right)}$.

\end{theorem}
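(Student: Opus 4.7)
The plan is to argue directly from Definition~\ref{def:dp} by conditioning on whether the ``distinguishing'' user lies in the random sample. Fix neighboring datasets $D_1, D_2 \in \mathcal{D}$; under the standard add/remove convention we may assume without loss of generality that $D_1 = D_2 \cup \{u\}$ for some record $u$. Let $T = \mathcal{S}(D_1)$ and $T' = \mathcal{S}(D_2)$ denote the two random subsamples, obtained by retaining each record independently with probability $\beta$. The crucial observation is that, conditionally on the event $\{u \notin T\}$, the random set $T$ has the \emph{same} distribution as $T'$, whereas conditionally on $\{u \in T\}$ it has the same distribution as $T' \cup \{u\}$, i.e., it differs from a fresh draw of $T'$ in exactly one record.

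First I would establish one direction of the $\epsilon$-DP inequality for an arbitrary measurable event $R$. Using the law of total probability,
\begin{equation*}
\Pr[\mathcal{A}(T) \in R] = \beta\,\Pr[\mathcal{A}(T' \cup \{u\}) \in R] + (1-\beta)\,\Pr[\mathcal{A}(T') \in R].
\end{equation*}
The $\epsilon'$-DP property of $\mathcal{A}$ bounds the first conditional probability by $e^{\epsilon'}\,\Pr[\mathcal{A}(T') \in R]$, so
\begin{equation*}
\Pr[\mathcal{A}(T) \in R] \leq \bigl(1 + \beta(e^{\epsilon'}-1)\bigr)\,\Pr[\mathcal{A}(T') \in R] = e^{\epsilon}\,\Pr[\mathcal{A}(T') \in R],
\end{equation*}
which is the inequality in the ``larger-to-smaller'' direction with the claimed $\epsilon = \ln(1+\beta(e^{\epsilon'}-1))$.

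The hard part is the reverse inequality $\Pr[\mathcal{A}(T') \in R] \leq e^{\epsilon}\,\Pr[\mathcal{A}(T) \in R]$, since the roles of $D_1$ and $D_2$ are not symmetric in the conditioning above: removing $u$ shrinks the support but does not change which samples were possible. I would handle it by setting $a := \Pr[\mathcal{A}(T' \cup \{u\}) \in R]$ and $b := \Pr[\mathcal{A}(T') \in R]$, so that $\Pr[\mathcal{A}(T) \in R] = \beta a + (1-\beta) b$, and invoking the lower-bound form $a \geq e^{-\epsilon'} b$ of $\epsilon'$-DP. The target reduces to
\begin{equation*}
b \;\leq\; \bigl(1 + \beta(e^{\epsilon'}-1)\bigr)\bigl(\beta a + (1-\beta) b\bigr),
\end{equation*}
which in the worst case $a = e^{-\epsilon'} b$ becomes the algebraic identity
\begin{equation*}
\bigl(1+\beta(e^{\epsilon'}-1)\bigr)\bigl(1-\beta(1-e^{-\epsilon'})\bigr) = 1 + \beta(1-\beta)\bigl(e^{\epsilon'}+e^{-\epsilon'}-2\bigr) \geq 1,
\end{equation*}
valid because $\beta \in [0,1]$ and $e^{\epsilon'}+e^{-\epsilon'}\geq 2$. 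Together with the first direction, this yields $\epsilon$-DP of $\mathcal{A} \circ \mathcal{S}$ with $\epsilon = \ln(1+\beta(e^{\epsilon'}-1))$, as claimed.
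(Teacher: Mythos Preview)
Your argument is correct. Note, however, that the paper itself does \emph{not} supply a proof of Theorem~\ref{theo:amp_sampling}: the result is quoted from Li et al.~\cite{Li2012} and treated as a black-box tool (invoked later in the RS+FD analysis of Chapter~\ref{chap:chapter6}). So there is no ``paper's own proof'' to compare against here.

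That said, what you wrote is essentially the standard proof from~\cite{Li2012}: condition on whether the distinguishing record $u$ survives Poisson sampling, couple the two subsamples so they coincide on $D_2$, and then apply the $\epsilon'$-DP bound on the single pair of neighboring subsamples $(T', T'\cup\{u\})$. Your handling of the reverse inequality via the worst case $a = e^{-\epsilon'}b$ and the algebraic check $(1+\beta(e^{\epsilon'}-1))(1-\beta(1-e^{-\epsilon'})) = 1+\beta(1-\beta)(e^{\epsilon'}+e^{-\epsilon'}-2) \geq 1$ is clean and complete. The only implicit assumption worth flagging is that you are using \emph{Poisson} (i.i.d.\ Bernoulli-$\beta$) sampling, which is indeed the model underlying the cited bound; other sampling schemes (e.g., fixed-size without replacement) yield slightly different constants.
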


\section{Local Differential Privacy} \label{ch2:sub_ldp}

The centralized DP model from Section~\ref{ch2:sub_dp}, assumes that a trusted curator has access to compute on the entire raw data of users. By `trusted', we mean that curators do not misuse or leak private information from individuals. However, this assumption does not always hold in real life~\cite{data_breaches}. To preserve privacy at the user-side, an alternative approach, namely, local differential privacy (LDP), was initially formalized in~\cite{first_ldp}. With LDP, rather than trusting in a data curator to have the raw data and sanitize it to output queries, each user applies a DP mechanism to their data before transmitting it to the data collector server. A formal definition of LDP is given in the following:

\begin{definition}[$\epsilon$-Local Differential Privacy]\label{def:ldp} A randomized algorithm ${\mathcal{A}}$ satisfies $\epsilon$-local-differential-privacy ($\epsilon$-LDP) if, for any pair of input values $v_1, v_2 \in Domain(\mathcal{A})$ and any possible output $y$ of ${\mathcal{A}}$:

\begin{equation*}
    \Pr[{\mathcal{A}}(v_1) = y]\leq e^{\epsilon }\cdot \Pr[{\mathcal{A}}(v_2) = y] \textrm{.}
\label{eq:ldp}
\end{equation*}
\end{definition}

Intuitively, $\epsilon$-LDP guarantees that an attacker can not distinguish whether the true value is $v_1$ or $v_2$ (input) with high confidence (controlled by $\epsilon$) irrespective of the background knowledge one has. This is because both values have approximately the same probability to generate the same perturbed output. Similar to the centralized model of DP, LDP also enjoys the properties described in Section~\ref{ch2:subsub_prop_dp}, e.g., immunity to post-processing and composition~\cite{dwork2014algorithmic}. 

The LDP model allows collecting data in unprecedented ways and, therefore, has led to several adoptions by industry. For instance, big tech companies like Google, Apple, and Microsoft, reported the implementation of LDP mechanisms to gather statistics in well-known systems (i.e., Google Chrome browser~\cite{rappor}, Apple iOS and macOS~\cite{apple}, and Windows 10 operation system~\cite{microsoft}). Indeed, there is a rich literature on LDP models~\cite{Duchi2013,Duchi2018,Bassily2015,Cormode2021,Murakami2019,zhou2021local,bassily2017practical,rappor,microsoft,tianhao2017,kairouz2016extremal,kairouz2016discrete,wang2019,xiao2,apple,Hadamard,Chamikara2020,Fernandes2019}, and we refer the interest reader to recent survey works on LDP~\cite{Xiong2020_survey,Wang2020_survey,yang2020_survey}. 

In this manuscript, \textbf{we focus on the \textit{fundamental} problem of private frequency (or histogram) estimation} under $\epsilon$-LDP guarantees. This is a primary objective of LDP, in which the data collector decodes all the sanitized data of the users and can then estimate the number of users for each possible value. The frequency estimation task has received considerable attention in the literature~\cite{Cormode2021,Murakami2019,tianhao2017,kairouz2016discrete,Hadamard,Alvim2018,rappor,microsoft,Zhao2019,Li2020,Fernandes2019} as it is a building block for other complex tasks (e.g., heavy hitter estimation~\cite{Bassily2015,Wang2021,bassily2017practical,Bun2019}, estimating marginals~\cite{Peng2019,Zhang2018,Ren2018,Fanti2016}, frequent itemset mining~\cite{Wang2018,Qin2016}). 

Let $A_j=\{v_1,v_2,...,v_{c_j}\}$ be a set of $c_j=|A_j|$ values of a given attribute and let $\epsilon$ be the privacy budget. Each user $u_i$, for $i \in \{1,2,...,n\}$, has a value $v \in A_j$. Thus, the aggregator's goal is to estimate a $c_j$-bins histogram, including the frequency of all values in $A_j$. Algorithm~\ref{alg:general_LDP} exhibits the general procedure for frequency estimation under LDP, which includes: \texttt{Encoding} and \texttt{Randomization} at the user-side, and \texttt{Aggregation} at the server-side (i.e., the aggregator).

\begin{algorithm}
\centering
\caption{General procedure for frequency estimation under LDP}
\label{alg:general_LDP}
\begin{algorithmic}[1]
\Statex \textbf{Input :} Original data of users, privacy parameter $\epsilon$, and local randomizer $\mathcal{A}$.
\Statex \textbf{Output :} Estimated frequencies.

\# \texttt{User-side}

\State \textbf{for} each user $u_i$ ($i \in \{1,2,...,n\}$) with input value $v \in A_j$ \textbf{do}

\State  \hskip1em \texttt{Encode}($v$) into a specific format (\textbf{if needed});
\State  \hskip1em \texttt{Randomize}($v$) with $\mathcal{A}(v,\epsilon)$;
\State  \hskip1em Transmit the randomized output to the aggregator. 

\State \textbf{end for}

\# \texttt{Server-side}
\State The server \texttt{aggregates} the reported values and estimates their frequency.

\State \textbf{return :} $c_j$-bins histogram, including the frequency of all values in $A_j$.
\end{algorithmic}
\end{algorithm}

In addition, if one intends to collect data from the same population, i.e., longitudinal studies, the authors in~\cite{rappor} introduced the concept of \textit{memoization}. The idea behind \textit{memoization} is to use two steps of sanitization, where the first step uses an upper bound value of $\epsilon_{\infty}$-LDP and only outputs lower epsilon reports using this randomized data. This will be a subject of study in Chapter~\ref{chap:chapter5}. In the next three subsections, we will review state-of-the-art LDP protocols for non-longitudinal frequency estimation (a.k.a. frequency oracles). 

\subsection{Randomized response} \label{ch2:sub_RR}

Randomized response (RR) is a surveying technique proposed by Warner~\cite{Warner1965}, to provide plausible deniability for individuals responding to embarrassing questions. Suppose we want to do a survey to know ``how many people have already cheated on their partner". Due to social embarrassment, people would probably hesitate to answer this question honestly, thus lying on their answer. Instead, with RR, users would benefit from plausible deniability to their answers, following the scheme below.

Each user, throw a secret coin: 
\begin{itemize}
    \item If Tails throw the coin again (ignoring the outcome) and answer the question honestly;
    \item If Heads, then throw the coin again and answer ``Yes" if Head, and ``No" if Tail.
\end{itemize}

Notice that even if users might have answered ``Yes", we still would not be sure if they answered honestly or at random. With more details, Figure~\ref{fig:RR} illustrates the probability tree of the RR protocol with an unbiased coin (i.e., with equal probability $1/2$).

\begin{figure}[ht]
\centering
\tikzset{
  treenode/.style = {shape=rectangle, rounded corners,
                     draw, align=center,
                     top color=white},
  root/.style     = {treenode},
  env/.style      = {treenode},
  dummy/.style    = {circle,draw}
}
\tikzstyle{level 1}=[level distance=2.5cm, sibling distance=2.75cm]
\tikzstyle{level 2}=[level distance=2.5cm, sibling distance=2cm]

\begin{tikzpicture}
  [
    grow                    = right,
    edge from parent/.style = {draw, -latex},
    every node/.style       = {font=\footnotesize},
    sloped
  ]
  \node [root] {RR}
    child { node [env] {Head}
        child { node [env] {No}
          edge from parent node [below] {1/2} }
        child { node [env] {Yes}
          edge from parent node [above] {1/2} }
        edge from parent node [below] {1/2} }
    child { node [env] {Tail}
        child { node [env] {No}   
            edge from parent node [below] {Truth}}
        child { node [env] {Yes}   
            edge from parent node [above] {Truth}}
        edge from parent node [above] {1/2}};
\end{tikzpicture}
\caption{Summary of randomized response method with unbiased coins (i.e., with equal $1/2$ probability).}\label{fig:RR}
\end{figure}
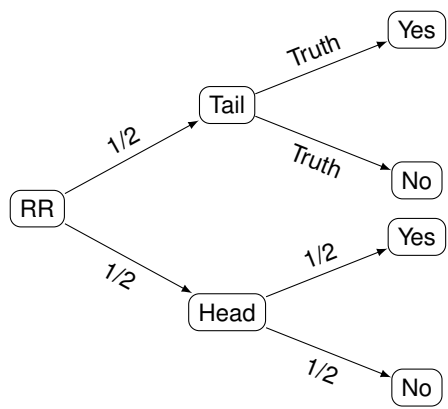

From Figure~\ref{fig:RR}, let $\mathcal{A}$ represent the RR mechanism, we can calculate the following probabilities:

\begin{gather}
   \Pr[\mathcal{A}(Yes)=Yes] = \Pr[\mathcal{A}(No)=No] = 0.75 \textrm{,}    \\       
   \Pr[\mathcal{A}(Yes)=No] = \Pr[\mathcal{A}(No)=Yes] = 0.25   \textrm{.}
\end{gather}

So, now, the objective is to estimate the frequency of ``Yes" and ``No" answers, i.e., the distribution of the original data. Let $f(v_y)$ be the proportion of \textit{true} ``Yes" answers and $N_y$ be the proportion of \textit{observed} ``Yes" answers. The following equation gives an estimated relation between these two variables:

\begin{equation*}
    N_y \approx \frac{1}{2} f(v_y) + \frac{1}{4}n \textrm{.}
\end{equation*}

The higher the number of samples $n$, with high probability, the more accurate the frequency estimation will be. In this case, $f(v_y)$ can be estimated with: 

\begin{equation*}
\hat{f}(v_y) \approx 2 N_y - \frac{1}{2}n \textrm{.}
\end{equation*}

Similarly, we can calculate the number of estimated ``No" answers. Translating the unbiased-coin RR model to DP theory, this model satisfies $\epsilon$-LDP with $\epsilon=\ln \left(\frac{0.75}{0.25}\right)=\ln (3)$~\cite{dwork2014algorithmic}. More generically, given $v \in \{0,1\}$ we can design an RR protocol to satisfy an arbitrary $\epsilon$ value (i.e., with biased coins) with the following perturbation function~\cite{kairouz2016discrete,kairouz2016extremal}:

\begin{equation*}
    \forall{y \in \{0,1\}} \Pr[\mathcal{A}_{RR(\epsilon)}(v)=y] = \begin{cases} p=\frac{e^{\epsilon}}{e^{\epsilon}+1} , \textrm{ if } y = v\\ q=\frac{1}{e^{\epsilon}+1}, \textrm{ if } y \neq v \textrm{,} \end{cases}
\end{equation*}

This satisfies $\epsilon$-LDP since $\frac{p}{q}=e^{\epsilon}$. Notice that the RR algorithm does not require any encoding technique. To estimate the normalized frequency $f(v_i)$ that a value $v_i \in V$ occurs where $V=\{v_1,v_2\}=\{0,1\}$, one calculates~\cite{kairouz2016discrete,kairouz2016extremal}:

\begin{equation}\label{eq:est_pure}
    \hat{f}(v_i) = \frac{N_i - nq}{n(p - q)} \textrm{,}
\end{equation}

in which $N_i$ is the number of times the value $v_i$ has been reported and $n$ is the total number of users. In Theorems 1 and 2 from~\cite{tianhao2017}, it is shown that $\hat{f}(v_i)$ is an \textbf{unbiased estimation} of the true frequency $f(v_i)$ (i.e., $E[\hat{f}(v_i)]=f(v_i)$), and \textbf{the variance of this estimation} is calculated as:

\begin{equation}\label{eq:var}
    Var[\hat{f}(v_i)]= \frac{q(1-q)}{n(p-q)^2} + \frac{f(v_i)(1-p-q)}{n(p-q)}  \textrm{.}
\end{equation}

Since the estimation in Eq.~\eqref{eq:est_pure} is unbiased, its variance $Var[\hat{f}(v_i)]$ is equal to the mean squared error (MSE)~\cite{mse_var} that is commonly used as an accuracy metric (e.g., cf.~\cite{Wang2020,Wang2020_post_process,Wang2021_b,li2021privacy}), also adopted throughout this manuscript. More formally, 

\begin{equation} \label{eq:mse_var}
    \begin{aligned}
        MSE &= \frac{1}{|V|} \sum_{v \in V} E \left[ \left( \hat{f}(v_i) - f(v_i) \right)^2 \right ] \\
        &= \frac{1}{|V|} \sum_{v \in V} \left( Var[\hat{f}(v_i)] + (E[\hat{f}(v_i)] - f(v_i))^2 \right)\\ 
        &= \frac{1}{|V|} \sum_{v \in V} Var[\hat{f}(v_i)] \textrm{.}
    \end{aligned}
\end{equation}

Furthermore, with no knowledge about the real frequency $f(v_i)$ and because in real life the vast majority of values appear very infrequently, we will consider $f(v_i) = 0$. Notice that this is common practice in the literature (e.g., cf.~\cite{tianhao2017,Wang2021_b}), which provides an approximation for the variance as~\cite{tianhao2017}:

\begin{equation}\label{eq:var_pure}
    Var^*[\hat{f}(v_i)] = \frac{q (1 - q)}{n(p - q)^2} \textrm{.}
\end{equation}

Replacing $p=\frac{e^{\epsilon}}{e^{\epsilon}+1}$ and $q=\frac{1}{e^{\epsilon}+1}$ into Eq.~\eqref{eq:var_pure}, the RR variance is calculated as:

\begin{equation*}
    Var^*[\hat{f}_{RR}(v_i)] = \frac{e^{\epsilon}}{n(e^{\epsilon}-1)^2} \textrm{.}
\end{equation*}

\subsection{Generalized randomized response} \label{ch2:sub_GRR}

The \textit{k}-Ary RR~\cite{kairouz2016discrete} mechanism extends RR to the case of $c_j \geq 2$ and it is also referred to as direct encoding~\cite{tianhao2017} (since no particular encoding needed) or generalized RR (GRR)~\cite{Wang2018,Wang2020_post_process,Zhang2018}. Throughout this manuscript, we will use the term GRR for this LDP protocol. Given a value $v \in A_j$, \textit{GRR($v$)} outputs the true value with probability $p$, and any other value $v'\in A_j$ such that $v' \neq v$ with probability $1-p$. More formally, the perturbation function is defined as:

\begin{equation*}
    \forall{y \in A_j} \Pr[\mathcal{A}_{GRR(\epsilon)}(v)=y] = \begin{cases} p=\frac{e^{\epsilon}}{e^{\epsilon}+c_j-1} , \textrm{ if } y = v\\ q=\frac{1}{e^{\epsilon}+c_j-1}, \textrm{ if } y \neq v \textrm{.} \end{cases}
\end{equation*}

GRR satisfies $\epsilon$-LDP since $\frac{p}{q}=e^{\epsilon}$. The estimated frequency $\hat{f}(v_i)$ that a value $v_i$ occurs for $i \in [1,c_j]$ is also calculated using Eq.~\eqref{eq:est_pure}. Replacing $p=\frac{e^{\epsilon}}{e^{\epsilon}+c_j-1}$ and $q=\frac{1}{e^{\epsilon}+c_j-1}$ into Eq.~\eqref{eq:var_pure}, the GRR variance is calculated as:

\begin{equation}\label{eq:var_grr}
    Var^*[\hat{f}_{GRR}(v_i)] = \frac{e^{\epsilon} + c_j - 2}{n(e^{\epsilon}-1)^2} \textrm{.}
\end{equation}

\subsection{Unary encoding protocols} \label{ch2:sub_UE}

Protocols based on unary encoding (UE) consist of transforming a value $v$ into a binary representation of it. So, first, for a given value $v$, $B=Encode(v)$, where $B=[0,0,...,1,0,...0]$, a $c_j$-bit array where only the $v$-th position is set to one. Next, the bits from $B$ are flipped independently, depending on parameters $p$ and $q$, to generate a sanitized vector $B'$, in which:

\begin{equation*}
    \Pr[B_i'=1] =\begin{cases} p, \textrm{ if } B_i=1 \\ q, \textrm{ if } B_i=0 \textrm{.}\end{cases}
\end{equation*}

The proof that UE-based protocols satisfy $\epsilon$-LDP for

\begin{equation}\label{eq:eps_UE}
    \epsilon = ln\left( \frac{p(1-q)}{(1-p)q} \right )  \textrm{,}
\end{equation}

is known in the literature and can be found in~\cite{rappor,tianhao2017}. In~\cite{tianhao2017} the authors presents two ways for selecting probabilities $p$ and $q$, which determines the protocol variance. One well-known UE-based protocol is the Basic One-time RAPPOR~\cite{rappor}, referred to as symmetric UE (SUE), which selects $p=\frac{e^{\epsilon/2}}{e^{\epsilon/2}+1}$ and $q=\frac{1}{e^{\epsilon/2}+1}$, where $p+q=1$ (symmetric). The estimated frequency $\hat{f}(v_i)$ that a value $v_i$ occurs for $i \in [1,c_j]$ is also calculated using Eq.~\eqref{eq:est_pure}. Replacing $p=\frac{e^{\epsilon/2}}{e^{\epsilon/2}+1}$ and $q=\frac{1}{e^{\epsilon/2}+1}$ into Eq.~\eqref{eq:var_pure}, the SUE variance is calculated as~\cite{rappor}:

\begin{equation}\label{eq:var_sue}
    Var^*[\hat{f}_{SUE}(v_i)] = \frac{e^{\epsilon/2}}{n(e^{\epsilon/2}-1)^2} \textrm{.}
\end{equation}

Moreover, rather than selecting $p$ and $q$ to be symmetric, Wang et al.~\cite{tianhao2017} proposed optimized UE (OUE), which selects parameters $p=\frac{1}{2}$ and $q=\frac{1}{e^{\epsilon}+1}$ that minimize the variance of UE-based protocols while still satisfying $\epsilon$-LDP. Similarly, the estimation method used in Eq.~\eqref{eq:est_pure} equally applies to OUE. Replacing $p=\frac{1}{2}$ and $q=\frac{1}{e^{\epsilon}+1}$ into Eq.~\eqref{eq:var_pure}, the OUE variance is calculated as~\cite{tianhao2017}:

\begin{equation}\label{eq:var_oue}
   Var^*[\hat{f}_{OUE}(v_i)] = \frac{4e^{\epsilon}}{n(e^{\epsilon}-1)^2} \textrm{.}
\end{equation}

\subsection{Adaptive LDP protocol} \label{ch2:sub_ADP}

Comparing Eq.~\eqref{eq:var_grr} with Eq.~\eqref{eq:var_oue}, elements $c_j-2+e^{\epsilon}$ is replaced by $4e^{\epsilon}$. Thus, as highlighted in~\cite{tianhao2017}, when $c_j < 3e^{\epsilon} +2$, the utility loss with GRR is lower than the one of OUE. This adaptive selection of LDP protocol has been used in many settings in the literature~\cite{Zhang2018,Wang2018}. Throughout this manuscript, we will use the term adaptive (ADP) to denote this best-effort and dynamic selection of LDP mechanism.

\section{Geo-Indistinguishability} \label{ch2:sub_geo_ind}

Geo-indistinguishability (GI)~\cite{Andrs2013} is based on a generalization of DP developed in~\cite{Chatzikokolakis2013} and has been proposed for preserving location privacy without the need of a trusted curator (e.g., a malicious location-based service), i.e., a local DP model. A mechanism satisfies $\epsilon$-GI if for any two locations $x_1$ and $x_2$ within a radius $r$, the output $y$ of them is $(\epsilon,r)$-geo-indistinguishable if we have:

\begin{equation*}\label{eq:geo}
\frac{\Pr(y|x_1)}{\Pr(y|x_2)} \leq e^{\epsilon r} \textrm{, } \forall r>0 \textrm{, } \forall y \textrm{, } \forall x_1,x_2: d(x_1,x_2)\leq r\textrm{.}
\end{equation*}

Intuitively, this means that for any point $x_2$ within a radius $r$ from $x_1$, GI forces the corresponding distributions to be at most $l=\epsilon r$ distant. In other words, the level of distinguishability $l$ increases with $r$, e.g., an attacker can distinguish that the user is in Paris rather than London but can hardly (controlled by $\epsilon$) determine the user's exact location. Although both GI and DP use the notation of $\epsilon$ to refer to the privacy budget, they cannot be compared directly because $\epsilon$ in GI contains the unit of measurement (e.g., meters).

On the continuous plane (as we consider in this manuscript), an intuitive polar Laplace mechanism has been proposed in~\cite{Andrs2013} to achieve GI, which is briefly described in the following. Rather than reporting the user's true location $x \in \mathbb{R}^2$, we report a point $y \in  \mathbb{R}^2$ generated randomly according to $D_{\epsilon}(y) =\frac{\epsilon^2}{2\pi} e^{-\epsilon d_2(x,y)}$. Algorithm~\ref{alg:GI_location} shows the pseudocode of the polar Laplace mechanism in the continuous plane. More specifically, the noise is drawn by first transforming the true location $x$ to polar coordinates. Then, the angle $\theta$ is drawn randomly between $[0,2\pi)$ (line 3), and the distance $r$ is drawn from $C^{-1}_{\epsilon}(p)$ (line 5), which is calculated using the negative branch $W_{-1}$ of the Lambert $W$ function~\cite{w_lambert}. Finally, the generated distance and angle are added to the original location.

\begin{algorithm}[!ht]
\centering
\caption{Polar Laplace mechanism in continuous plane~\cite{Andrs2013}}
\label{alg:GI_location}
\begin{algorithmic}[1]
\Statex \textbf{Input :} $\epsilon>0$, real location $x \in \mathbb{R}^2$.
\Statex \textbf{Output :}  sanitized location $y \in \mathbb{R}^2 $.

\State Draw $\theta$ uniformly in $[0,2\pi)$
\State Draw $p$ uniformly in $[0,1)$
\State Set $r = C^{-1}_{\epsilon}(p) = -\frac{1}{\epsilon} \left ( W_{-1} \left ( \frac{p-1}{e}\right) + 1 \right )$

\State \textbf{return :} $y=x+\langle r \cos{(\theta)},r \sin{(\theta)} \rangle$
\end{algorithmic}
\end{algorithm}

\section{Conclusion} \label{ch2:sub_conclusion}

In this chapter, we have revised state-of-the-art anonymization techniques. We started with the well-known \textit{k}-anonymity model, presenting its definition, an intuitive example, as well as some of its limitations. We then presented differential privacy, which is a definition that should be satisfied by a randomized algorithm. While the former satisfies a syntactic notion of privacy, i.e., the final database should satisfy ``\textit{k}-anonymity", DP is a property of the process. In addition, DP offers strong post-processing and composition properties, which are important in designing differentially private systems for real-life applications. Besides, we have presented the decentralized setting of DP, also known as local DP, in which there is no need to assume a trusted server. In the LDP setting, the aggregator already knows the users' identifiers, but not their private data. In this case, users apply a differentially private algorithm in their own device such that only perturbed data is sent to the aggregator. Also, we have presented geo-indistinguishability, which is an LDP model to protect location privacy. Geo-indistinguishability utilizes a Laplacian noise to perturb the actual location of a user before transmitting to the (un)trusted server and has received considerable attention due to its effectiveness and simplicity of implementation (e.g., Location Guard~\cite{loc_guard}). Lastly, for each DP model, we have presented the main mechanisms that will be used throughout this manuscript.

\chapter{Machine Learning and Databases Used on Experiments}
\label{chap:chapter3}

In Chapter~\ref{chap:chapter2}, we have revised the background on data anonymization techniques. In this chapter, we now briefly review the background on machine learning techniques and concepts that our work utilizes, as well as the databases we experiment on. We highlight that the content of this chapter related to machine learning is \textbf{primarily} inspired by existing literature~\cite{geron2019hands,goodfellow2016deep,james2013introduction}. Appropriate references to other works are provided throughout this chapter.

\section{Introduction to Machine Learning}

Following the definition of machine learning (ML) given by G{\'e}ron in their book~\cite{geron2019hands} ``\textit{Machine Learning is the science (and art) of programming computers so they can learn from data}." In contrast with traditional programming techniques that are based on conditional and loop statements, ML automatically learns \textit{from data}. The way of \textit{learning} ranges, e.g., from supervised, unsupervised, semi-supervised, and reinforcement learning. In this manuscript, we focus only on \textbf{supervised} learning, in which the ML algorithms also receive the desired outputs (e.g., a scalar, a label). ML supervised applications typically solve prediction and classification tasks both approached in this manuscript.

\subsection{Classification Problems}

Classification predictive modeling problems have as main goal to predict a class label. Indeed, based on a set of input $X$ the objective is to classify each sample in a given discrete label $y$. The output variables are frequently referred to as labels or categories. A classical example of a classification task is spam filters, in which a classifier is trained over emails labeled as spam or not spam. In general, depending on the objective one may want to train ML classification algorithms for binary, multiclass, or multilabel problems, for example.

\subsection{Regression Problems}

Regression predictive modeling problems have as main goal the prediction of a numerical value. More precisely, based on a set of input $X$, the objective is to predict a numerical value $y$. For example, the price of a house may be predicted by using as predictors the number of bedrooms, its area, its location, and so on. Generally, a problem with multiple inputs is often referred to as a multivariate regression problem. One special type of regression is with \textbf{ordered data}, also known as \textit{time-series} data. In these cases, the order of the samples \textit{matters}. Indeed, time-series data is a set of observations collected by repeated measures throughout time. There are many practical applications for time series data in both classification and regression problems. For example, forecasting the spread of infectious diseases~\cite{Rahimi2021}, tracking financial market indices~\cite{Sezer2020}, and forecasting human mobility~\cite{luca2020deep}, to name a few. 

\subsection{Modeling Techniques} \label{ch3:ML_models}

To select the most performing ML algorithm per problem we tackled, we generally evaluated one or more among the ML models described in the next three subsections (Section~\ref{ch3:lasso}--~\ref{ch3:dl_models}).

\subsubsection{Linear Model} \label{ch3:lasso}

In this manuscript, we only considered a regularized version of the Linear Regression model, namely, least absolute shrinkage and selection operator (LASSO)~\cite{lasso}, which is widely used for prediction purposes. The LASSO is a method of contracting the coefficients of the regression, whose ability to select a subset of variables is due to the nature of the constraint on the coefficients. Originally proposed by Tibshirani~\cite{lasso} for models using the standard least squares estimator, it has been extended to many statistical models such as generalized linear models. We used the LASSO implementation from the Scikit-learn library~\cite{scikit}.

\subsubsection{Decision Tree Algorithms}\label{ch3:trees_models}

One of the popular predictive modeling techniques used in ML is decision tree learning~\cite{Breiman2017}. Decision tree-based algorithms are often chosen for predictive modeling because of their interpretability and high performance. We evaluated two decision tree learning algorithms in this manuscript:

\begin{itemize}
    
    \item Extreme Gradient Boosting (XGBoost)~\cite{XGBoost} is a decision-tree-based ensemble ML algorithm that produces a predictive model based on an ensemble of weak predictive models (decision trees). XGBoost uses a novel regularization approach over standard gradient boosting machines, which significantly decreases the model's complexity. The system is optimized by a quick parallel tree construction and adapted to be fault-tolerant under distributed environments.
    
    \item Light Gradient Boosted Machine (LGBM)~\cite{NIPS2017_6907} is a novel gradient boosting framework, which implemented a leaf-wise strategy. This strategy significantly reduces computational speed and resource consumption in comparison to other decision tree-based algorithms.
    
\end{itemize}

\subsubsection{Artificial Neural Networks} \label{ch3:dl_models}

Another popular active research area in ML is artificial neural networks. Neural networks are the foundation of deep learning (DL), which has become a progressively popular research topic. We used the Keras library~\cite{keras} to implement all our DL models. Throughout this manuscript, we will evaluate one or more of the following DL methods:

\begin{itemize}
    \item Multilayer Perceptron (MLP) is an artificial neural network of the feedforward type~\cite{goodfellow2016deep,LeCun2015,Chollet2017}, characterized by a unidirectional flow of computation. MLPs are based on the interconnection of several units (neurons) to transmit signals, which are normally structured into three or more layers, namely, input, hidden(s), and output.
    
    \item Recurrent neural network (RNN) is a specialized class of neural networks used to process sequential data (e.g., time-series data). RNNs have at least one feedback connection that provides the ability to use contextual information when mapping between input and output sequences. In this manuscript, we have applied three state-of-the-art improvements over the standard RNN, which are described in the following:
    
    \begin{itemize}
        \item Long Short-Term Memory~\cite{LSTM} is a type of RNN that overcomes the vanishing gradient problem of standard RNNs. Inside its cell memory unit, the learning process is controlled by three gates: input, forget, and output, which give LSTM the ability to learn which data in a sequence is important to keep or to discard.
    
        \item Gated Recurrent Unit~\cite{GRU} is also a type of RNN, which works using the same principle as LSTM. GRU utilizes two gates: update and reset, which decide what information should be passed to the output.
        
        \item Bidirectional RNN (BiRNN)~\cite{BI_RNN} is a combination of two RNNs: one RNN moves forward while the other moves backward. That is, BiRNN connects two hidden layers of opposite directions to the same output. The RNN cells in a BiRNN can either be standard RNNs, LSTMs, GRUs, and so on.    
    \end{itemize}
    
\end{itemize}

\subsection{Model Selection and Hyperparameter Tuning} 

Generally, besides multiple alternatives of ML algorithms for a given task, there are as well several hyperparameters to tune in each of them. More precisely, let be given the definition from~\cite{james2013introduction}: ``\textit{The process of evaluating a model’s performance is known as model assessment, whereas the process of selecting the proper level of flexibility for a model is known as model selection.}"

Throughout this manuscript, we assess the performance scores of our ML models on a \textit{hold-out} testing set. In the following two subsections, we describe the performance metrics (Section~\ref{ch3:sub_metrics}) and the hyperparameters' optimization methods (Section~\ref{ch3:optimization}) considered in this manuscript.

\subsection{Performance Metrics} \label{ch3:sub_metrics}

Throughout this manuscript, we used common metrics from the literature to evaluate the performance of ML models. For \textbf{regression} tasks, we considered using one or more of the following metrics:

\begin{itemize}
        
    \item Root mean squared error (RMSE) measures the square root average of the squares of the errors and is calculated as: $RMSE=\frac{1}{n} \sqrt{\sum_{i=1}^{n} \left( y_i - \hat{y}_i \right)^2}$;
    
    \item Mean absolute error (MAE) measures the averaged absolute difference between real and predicted values and is calculated as: $MAE=\frac{1}{n} \sum_{i=1}^{n} |y_i - \hat{y}_i|$;

    \item Mean absolute percentage error (MAPE) measures how far the model’s predictions are off from their corresponding outputs on average and is calculated as: $MAPE=\frac{1}{n} \sum_{i=1}^{n} \left | \frac{y_i - \hat{y}_i}{y_i}\right| \cdot100\%$;
    
    \item Coefficient of determination ($R^2$) measures the proportion of the variance in the dependent variable that is predictable from the independent variable(s);
\end{itemize}

in which $y_i$ is the real output, $\hat{y}_i$ is the predicted output, and $n$ is the total number of samples, for $i \in [1,n]$. In addition, for \textbf{binary classification} tasks, we considered the following metrics: 

\begin{itemize}

    \item Accuracy (ACC) measures how many observations, both positive and negative, were correctly classified.

    \item \textit{Recall} measures how many observations out of all positive observations have been classified as positive.

    \item \textit{Precision} measures how many observations predicted as positive are indeed positive.
    
    \item Macro average F1-Score (MF1) is the harmonic mean between precision and recall with macro average, which calculates metrics for each label and finds their unweighted mean.
    
\end{itemize}

\subsection{Hyperperameter Optimization} \label{ch3:optimization}

The goal of hyperparameter optimization in ML is to discover the set of hyperparameters of a particular ML algorithm that returns the best performance measured on a \textit{hold-out} set. The search space defines the volume to be searched, with each dimension being a hyperparameter and each point representing a model configuration. In this manuscript, we mainly used Bayesian optimization (BO)~\cite{Shahriari2016} and random search optimization~\cite{bergstra2012random}. On the one hand, to apply a random search, one initially defines the search space as a bounded domain of hyperparameters values. Next, each step of the optimization randomly samples a point in that domain, builds the model, and then evaluates its performance. In the end, the random search optimization selects the most accurate method encountered during the iterative process. On the other hand, in contrast to random search, Bayesian methods track the entire set of prior evaluations of hyperparameters, which are used to build a probabilistic model of mapping hyperparameters to the likelihood of a score of an objective function. Rather than random sampling points in the domain, the goal of BO is to improve as iterations go by.
 
\section{Machine Learning with Differential Privacy} \label{ch3:sub_DP_ML}

In this manuscript, we consider two differentially private ML settings, which depend on where the DP guarantee is added. As revised in Section~\ref{ch2:subsub_prop_dp}, DP is immune to post-processing, which means that after the differentially private step, everything stays DP~\cite{dwork2014algorithmic}. The two considered settings are described in the following two subsections.

\subsection{Differentially Private Input Perturbation} \label{ch3:input_perturbation}

Input perturbation (or data perturbation) consists to the fact that \textbf{DP is added to each data sample} $\textbf{x}_i \in \mathcal{D}$. For example, let $\textbf{x}$ be a real-valued vector, then a differentially private version of it using the Laplace mechanism (cf. Section~\ref{ch2:sub_lap_gauss}) is: $\hat{\textbf{x}}=\textbf{x}+Lap(b)$. This is also true for categorical data, e.g., by randomizing each data point in $\textbf{x}$ with some LDP protocol (i.e., frequency oracle) from Section~\ref{ch2:sub_ldp}. On the one hand, input perturbation is the easiest method to apply~\cite{Sarwate2013_survey,Rubaie2019_survey} and it is independent of any ML and post-processing techniques. On the other hand, the perturbation of each sample in the dataset may have a negative impact on the utility of the trained model. 

In the literature, some works~\cite{kang2020input,Fukuchi2017} started to investigate how `input perturbation' through applying the Gaussian mechanism~\cite{dwork2014algorithmic} on data samples can guarantee $(\epsilon,\delta)$-DP on the final ML model. In~\cite{ldp_fed_ml}, the authors sanitized each sample with LDP protocols (GRR~\cite{kairouz2016discrete} for categorical data and the Piecewise mechanism~\cite{wang2019} for real-valued data) for training ML models to compare with federated learning. Indeed, there are an extensive literature on training ML models over differentially private data (e.g.,~\cite{Fan2018,MahawagaArachchige2020,Cyphers2017,zhou2021local,Chamikara2020,Dwork2014,Yilmaz2020,Arcolezi2020,Couchot2019}).

\subsection{Differentially Private Gradient Perturbation}

Another solution to guarantee DP to the trained model is perturbing intermediate values in iterative algorithms. In Chapter~\ref{chap:chapter91} of this manuscript, we considered training deep learning models with DP guarantees. In this case, the authors in~\cite{DL_DP} proposed a differentially private version of the stochastic gradient descent algorithm (DP-SGD). Indeed, DL models trained with DP-SGD provide provable DP guarantees for their input data. Two new parameters are added to the standard stochastic gradient descent algorithm, namely, \textit{clip} and \textit{noise multiplier}. The former is used to bound how much each training point can impact the model's parameters, and the latter is used to add controlled Gaussian noise to the clipped gradients in order to ensure DP guarantee to each data sample in the training dataset. There are many works in differentially private DL literature (e.g.,~\cite{tf_privacy,pytorch_privacy,Jayaraman2019,Shokri2015,Carlini2019,qu2021privacy,Phan2017}).

\section{Presentation of Databases Used on Experiments} \label{ch3:databases}

This section presents the databases shared by the OBS team (Section~\ref{ch3:orange_db}), the preprocessed SDIS 25 datasets resulting of the work carried out by our collaborator Selene Cerna (Section~\ref{ch3:sdis25_dbs}), and open datasets from the UCI ML~\cite{uci} repository (Section~\ref{ch3:sub_open_datasets}).

\subsection{Flux Vision Mobility Reports} \label{ch3:orange_db}

The first motivating project of this manuscript concerns multidimensional CDRs-based mobility reports released by OBS throughout time. On the one hand, the OBS team initially shared a database of \textbf{daily statistics for a single area} (Section~\ref{ch3:fimu_db}). We used this first database in Chapter~\ref{chap:chapter4} with the main goal of improving the utility of these data. In addition, the OBS team provided us with a more informative database of \textbf{$30$-minutes statistics for six areas of interest} (Section~\ref{ch3:paris_db}). We used this second database in Chapter~\ref{chap:chapter91} with the main goal of evaluating the privacy-utility trade-off of differentially private DL models on a multivariate time series forecasting task.

\subsubsection{Tourism Mobility Reports} \label{ch3:fimu_db}

One important use case of CDRs has been to analyze the mobility patterns of people in tourist events~\cite{Merrill2020,heerschap2014innovation,fluxvision1}. The first database at our disposal, from now on named \textbf{FIMU-DB}, regards multiple \textit{tourism statistics} on the frequency of visitors \textit{by days and by the union of consecutive days}. OBS considered `visitors' people present at least \(1\) hour between 06:00 and 23:59 of a given day of the reporting period in the area of interest. The geographical space is the area of an international music festival named \enquote{Festival International de Musique Universitaire} (FIMU). The FIMU is organized and financed by the City of Belfort, France, with the support of student associations. The $31^{st}$ edition of the FIMU occurred on the first five days of June 2017~\cite{FIMU}. 

\textbf{The FIMU-DB has seven different files}. Among them, \textit{five files} describe for each day, the cumulative number of unique visitors on the last \(Nb\) days, where \(Nb\) ranges from \(1\) to \(7\) days. These files are labeled from now on as FO\_country, FR\_geo, FR\_Gender, FR\_region, and FR\_age, where `FO' stands for foreigners and `FR' stands for French citizens.

In each file relating to French citizens, people are grouped according to their visitor category. \enquote{Resident} are people whose billing address is the administrative area around the FIMU. \enquote{French tourist} are people billed in France but not in the aforementioned category. The FO\_country file has only people grouped as \enquote{Foreign tourist} who are people with a foreign mobile phone operator.

In summary, each file aggregates people according to the \textbf{cumulative count} from 1 to 7 days (i.e., \textit{the number of people in the union of consecutive days}), and also by specific categories, which are briefly detailed below:

\begin{enumerate}
\item The FR\_Gender file contains 3,776 rows at total and distinguishes the people by \textbf{gender} (masculine, feminine, and Not Registered -- NR). Furthermore, during the analysis, we noticed very few differences in the frequency of men and women per day (about $50\%$ for both). Hence, in this study, the NR values were changed to masculine or feminine, with an equal probability of 50\%;

\item The FR\_age file contains 8,820 rows at total and groups the visitors by \textbf{age ranges} as: `$<$18', `18-24', `25-34', `35-44', `45-54', `55-64', `$>$65', and `NR';

\item The FR\_geo file contains 14,989 rows and groups the visitors in a specific category named \textbf{geolife}, divided into different socio-professional sub-categories as: `NR', 'comfortable family pavilion', `traditional rural', `comfortable family urban', `secondary residence', `popular', `dynamic rural', `growing peri-urban', `rural worker', `dynamic urban', `middle-class urban', and `low-income urban';

\item The FR\_region file contains 50,350 rows and groups the visitors in the specific category named (French) \textbf{region} as: `AUTRE 97', `Centre', `Languedoc-Roussillon', ``Provence-Alpes-Côte d'Azur", `Lorraine', `Ile-de-France', `Franche-Comté', `Midi-Pyrénées', `Corse', `Basse-Normandie', `Aquitaine', `Poitou-Charentes', `Pays de la Loire', `Nord-Pas-de-Calais', `Champagne-Ardenne', `Bourgogne', `Bretagne', `Alsace', `Rhône-Alpes', `Picardie', `Auvergne', and `Haute-Normandie';

\item The FO\_country file contains 10,832 rows and groups the foreign visitors by \textbf{country} as: `Belgium + Luxembourg', `Asia Oceania', `Netherlands', `Scandinavia', `United Kingdom', `Italy', `Spain', `China', `Other countries in Europe', `Germany', `United States', `Russia', `Swiss', `Eastern country', and `Rest of the world'.

\end{enumerate}

For instance, Table~\ref{tab:example_volume} exhibits $5$ random samples to illustrate how the volume data are grouped by geolife profiles in the FR\_geo file. In addition, Fig.~\ref{ch3:fig_example_fimu_db} illustrates the cumulative number of people for the three first consecutive FIMU's days using the same FR\_geo file (randomly replacing \# values for an integer within $1$ and $20$).

\setlength{\tabcolsep}{5pt}
\renewcommand{\arraystretch}{1.4}
\begin{table}[!ht]
\centering
\scriptsize
\caption{Number of unique visitors per geolife present on days of FIMU.}
    \begin{tabular}{c c c c c}
    \hline
    Date & Geolife & Visitor category & Cumulative days & Volume\\ \hline
    2017-06-01 &  comfortable family pavilion &  French Tourist &          7 days &   2751\\
    2017-06-02 &            low-income urban &           Resident &          4 days &   3355\\
    2017-06-03 &  comfortable family pavilion &           Resident &          3 days &      \# (i.e., $<$20) \\
    2017-06-04 &         secondary residence &  French Tourist &          1 days &     97\\
    2017-06-05 &rural worker &Resident &3 days &1,359 \\ \hline
    \end{tabular}
\label{tab:example_volume}
\end{table}

\begin{figure}[!ht]
\centering
\includegraphics[width=0.7\linewidth]{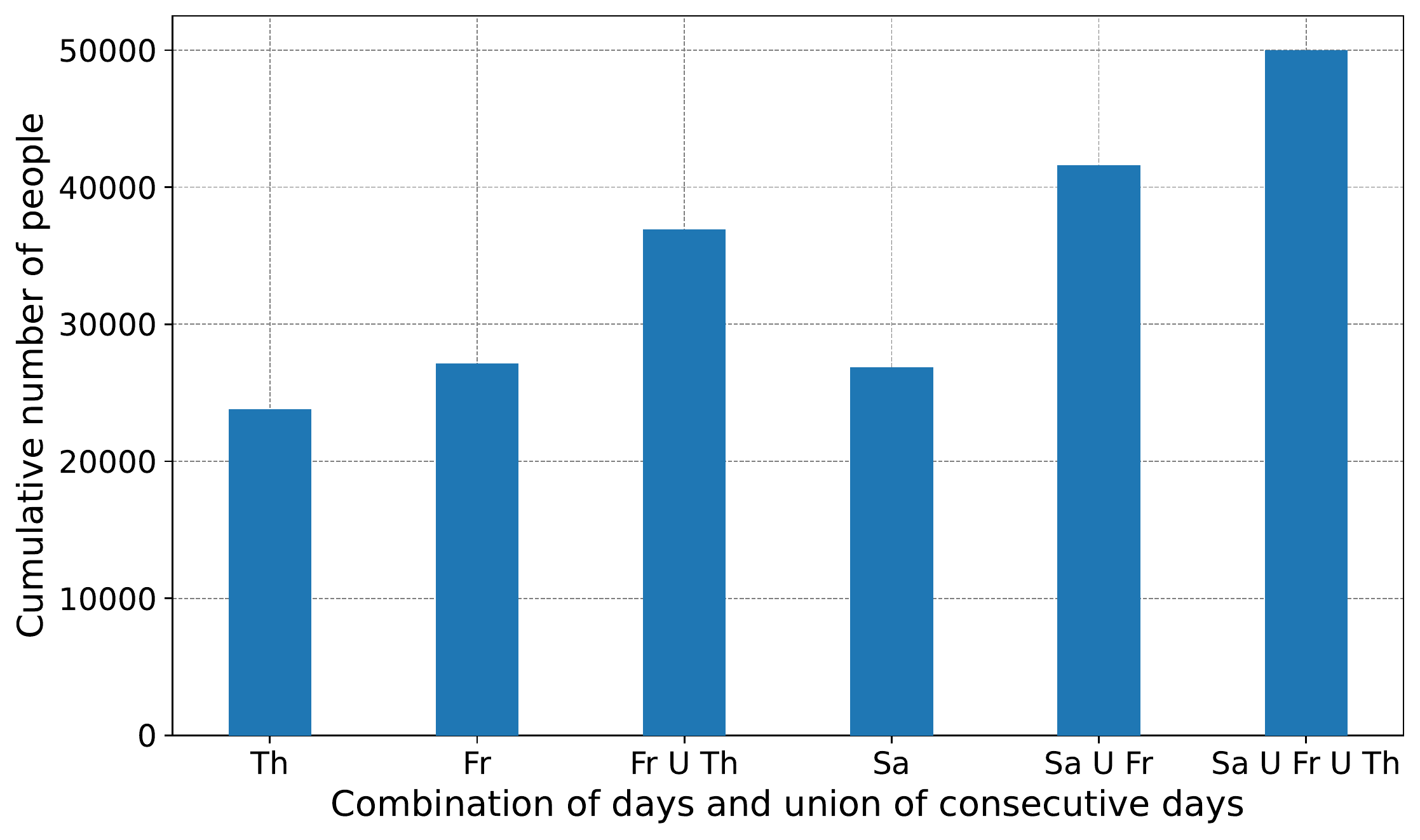}
\caption{Cumulative number of people for the three first consecutive days of FIMU, i.e., for Thursday (Tu), Friday (Fr), and Saturday (Sa). For instance, Sa U Fr means the \textbf{union} of Saturday and Friday.}
\label{ch3:fig_example_fimu_db}
\end{figure}

Furthermore, \textbf{the remaining two files} labeled from now on as Nights\_actual and Presence\_time. Unlike previous data files, these latter files \textbf{do not} consider cumulative days information, but the volume of visitors each day \((Nb=1)\). Similarly, both files classify the data by the main categories (Resident, French tourist, Foreign tourist) and by specific categories described below:

\begin{itemize}

\item The Nights\_actual file has 1,145 rows describing for each day the number of visitors who spent a night at the relevant date. Here, people are grouped by a specific category namely \textbf{sleeping area} where people spent the night. There sleeping areas are: `Agglomeration of Hericourt', `Rest Territory of Belfort', `NR', `City of Belfort', `Vosges', `Rest of Doubs', `Rest of Haute Saone', `North Haut Rhin', `Agglomeration of Belfort', `Agglomeration of Montbeliard', and `South Haut Rhin';

\item  The Presence\_time file has 1,301 rows describing for each day the number of hours where visitors were present in the area of interest. Here, people are grouped by a specific category namely \textbf{visit duration} within several sub-categories as: `Duration 2h', `Duration 3h', `Duration 4h', `Duration 5h', `Duration 6h', `Duration 7h', `Duration 8h', `Duration 9h', `Duration 10h', and `Duration 10h-18h'. For instance, `Duration 2h' matches people present between one and two hours.
\end{itemize}

We noticed that in the Nights\_actual file, the total volume of visitors per day is much less compared to the previous five files (around 4,000 on average). This means that many people did not spend the night near the city of Belfort. Therefore, considering the number of visitors per day from all other files and those in Nights\_actual, the term NR was assigned to people that did not sleep in the area of interest.

\subsubsection{Geomarketing Reports} \label{ch3:paris_db}

Another use case of CDRs is understanding people mobility during the spread of infectious diseases~\cite{Kishore2019,ebola,Oliver2020,Grantz2020,Buckee2020}. The second database at our disposal regards multiple published Flux Vision~\cite{fluxvision1} statistics for geomarketing purposes, which \textbf{were collected during the novel Coronavirus Disease 2019 (COVID-19) pandemic}~\cite{who_annouces_pandemic,Wang2020_covid} in 2020. The complete database \textit{is fully available online} in~\cite{data_paris_db}. 

We only used the file named ``\textbf{presence30min.csv}", which comprises information for two periods: from 2020-04-20 to 2020-05-03 and from 2020-08-24 to 2020-11-04. This dataset has frequency statistics by $30$ minutes (min) on the number of users by ``Zone" (i.e., $6$ regions in Paris) and by ``type" (i.e., French or foreign). The geographical space (i.e., Zone) concerns $6$ specific regions in Paris, France, named ``Commune Montreuil", ``IRIS 930480204", ``IRIS 930480205", ``IRIS 930480206", ``IRIS 930480401", and ``IRIS 930480604" in the original file. 

We applied the following preprocessing to the original ``presence30min.csv" file. We aggregated the number of users by ``type" for each of the $6$ regions, i.e., \textit{focusing only on the total number of users per the $6$ regions}. In addition, for each week, region, and $30$-min interval, we used the interquartile range technique~\cite{iqr} to detect outliers and missing data. These values were completed with the average value for that respective week, region, and $30$-min interval. We will refer to this pre-processed dataset as \textbf{Paris-DB} throughout this manuscript. 

More formally, the Paris-DB is a multivariate time series dataset $X_{(t_1,t_{\tau})}$ with aggregate number of people per $6$ regions and corresponding time period $t \in [1, \tau]$ of $30$-min intervals. That is, $X_{(t_1,t_{\tau})} = [\langle t_1, \textbf{x}_{1}\rangle, \langle t_2, \textbf{x}_{2} \rangle, ..., \langle t_{\tau}, \textbf{x}_{\tau} \rangle ]$, where $\textbf{x}_{t}$ is a vector of size $6$ in which each position represents the number of users per region at time $t \in[1,\tau]$. 

On analyzing the Paris-DB, Fig.~\ref{fig:analysis_lockdown} illustrates the \textbf{total number of people} for two 14-days periods: from the beginning of 2020-04-21 to the end of 2020-05-03 and from the beginning of 2020-09-23 to the end of 2020-10-06. The plot on the left-side corresponds to mobility analytics during the first national lockdown period in France~\cite{lockdown} because of the COVID-19 pandemic. The plot on the right-side corresponds to a period with no lockdown measures. As one can notice, there is a clear difference between the first period of analysis (low mobility activity) and the second one (high mobility activity). This type of mobility analysis provides important insights on mobility patterns for public authorities and policymakers to fight the COVID-19 pandemic, for example~\cite{Vespe2021,deAlarcon2021}. 

\begin{figure}[!ht]
    \centering
    \includegraphics[width=1\linewidth]{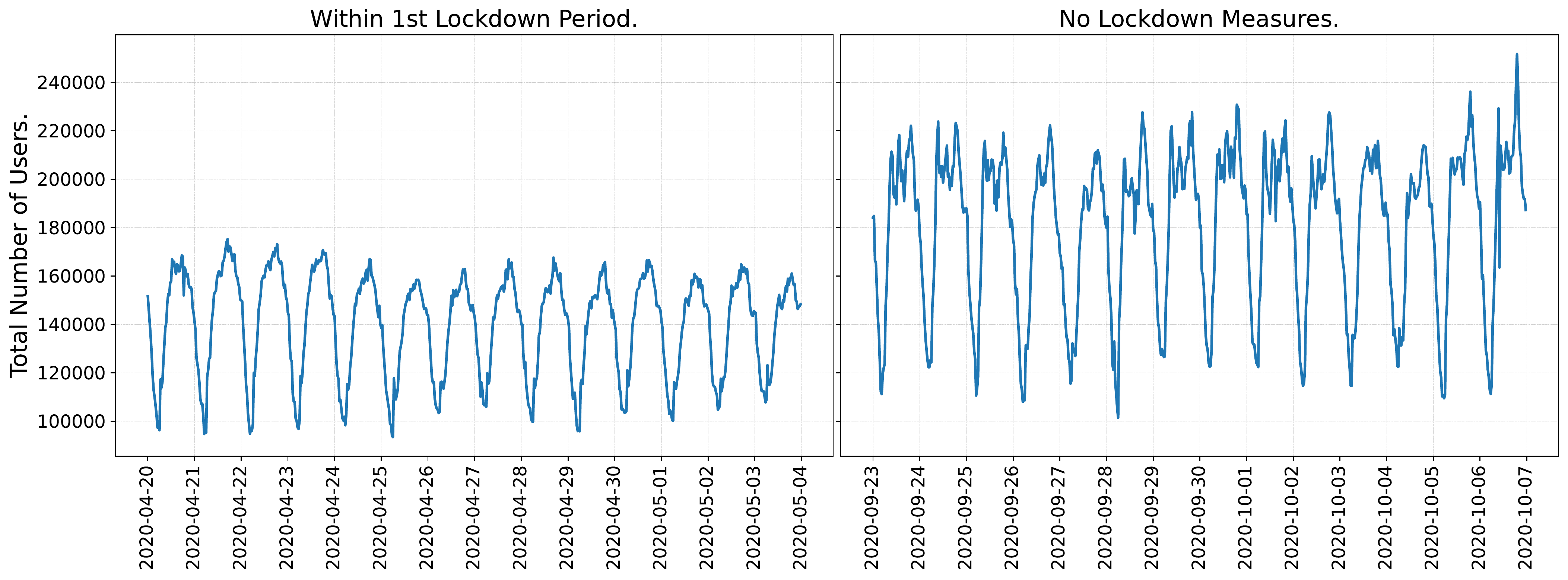}
    \caption{Aggregated human mobility analytics during two weeks within the first lockdown period in France (left-side plot) and during two weeks with no lockdown measures (right-side plot).}
    \label{fig:analysis_lockdown}
\end{figure}

\subsection{Firemen Database} \label{ch3:sdis25_dbs}

As mentioned in Chapter~\ref{chap:chapter1}, the AND team has been investigating ML-based solutions to optimize the SDIS 25 (i.e., and EMS in France) services. Our connection with the SDIS 25 is through the Ph.D. student Selene Cerna, with a CIFRE thesis (N 2019/0372) and a strict confidentiality agreement to use SDIS 25 original data. In this section, we present three datasets processed by her. These datasets have also been used by Selene Cerna to develop the ML-based solutions with original data that we use for comparison purposes, i.e., to evaluate the \textbf{privacy-utility trade-off} of our solutions.

\subsection{Interventions Data} \label{ch3:interv_dbs}

Predicting the operational demand of EMS is a way to allow their reallocation of human and material resources (e.g., cf.~\cite{Chen2016,Grekousis2019,Lstm_Cerna2019,Cerna2020_boosting,Cerna2020_b,EliasMallouhy2021}). So, the first dataset we use, from now on named \textbf{Interv-DB}, has information about $382046$ \textbf{interventions} attended by the SDIS 25 from 2006 to 2018 in $608$ cities inside the Doubs region. The \textbf{Interv-DB} has two attributes: \textit{SDate}, which is the precise ``\textit{Starting Date}" of the intervention, and the \textit{City} where the intervention took place. The way this dataset has been preprocessed will be explained in Chapter~\ref{chap:chapter8}, as part of a proposed privacy-preserving methodology that allows both statistical learning and forecasting tasks of firemen demand by region.

\subsection{Response Time Data} \label{ch3:art_dbs}

Although predicting the operational demand may help on the redeployment of resources, another solution would be to predict the response time of each ambulance, which would allow, e.g., to move from a static resource deployment plan into a dynamic one. The second dataset we use contains information about $186130$ \textbf{dispatched ambulances} from SDIS 25 centers that attended $182700$ EMS interventions from 2006 up to June 2020. After a preprocessing step carried out by Selene Cerna, the final dataset, from now on named \textbf{ART-DB}, has the following attributes:

\begin{itemize}
    \item \textbf{Temporal features.} Based on the time the SDIS 25 has been notified, a few temporal features were included, such as the: year, month, day, weekday, hour, and categorical indicators to denote holidays, end/start of the month, and end/start of the year;
    
    \item \textbf{Operation demand features.} The number of interventions attended by the SDIS 25 in the \textit{past hour} and the number of \textit{active} interventions in the current hour;
    
    \item \textbf{Traffic data.} These are prediction of traffic level for the Doubs region as indicators ranging from 1 (regular flow) to 4 (extremely difficult flow) per day from~\cite{bisonFute};
    
    \item \textbf{Weather data.} These are historical weather information from~\cite{meteoFrance} such as precipitation, temperature, wind speed, gust speed, and so on, which were added according to the hour of each intervention;
    
    \item \textbf{Location-based features.} The latitude and longitude coordinates of the intervention and of the SDIS 25 center that took charge of the intervention; the district, the city, and the zone of the intervention;
    
    \item \textbf{Computed features.} The great-circle distance~\cite{great_circle_dist} between the SDIS 25 center and the emergency scene; the estimated travel time, and the estimated driving distance. These two latter features were obtained with the open source routing machine (OSRM)~\cite{OSRM} API;
    
    \item \textbf{The scalar target variable} is the ambulance response time (ART) in minutes, which is the time measured from the SDIS 25 notification to the ambulance's arrival on the emergency scene. 
\end{itemize}

\subsection{Calls, Victims, and Operators Data} \label{ch3:calls_vic_ope_dbs}

From another point of view, identifying high urgent situations (i.e., a life-or-death situation) would allow EMS to quickly respond to victims needing priority attention (i.e., if they might die). Therefore, our third dataset, from now on named \textbf{Vic\_Mort-DB}, has information about $177883$ \textbf{victims} that the SDIS 25 attended from January 2015 to December 2020. After a preprocessing step carried out by Selene Cerna, the \textbf{Vic\_Mort-DB} has the following attributes:

\begin{itemize}
    \item \textbf{Victim data.} The victim's age, gender, and city (where the intervention occurred); 
   
   \item \textbf{(Call center) operator data.} The operator's age, gender, grade (e.g., commander, captain), and seniority (i.e., experience time in days);
    
    \item \textbf{Temporal features.} Based on the time the SDIS 25 has been notified, a few temporal features were included, such as the: hour, day, day of the week, month, and year;
    
    \item \textbf{Call/Intervention data.} The delay time to answer the phone, total call duration, delay time to diffuse the alert (i.e., to notify an SDIS 25 center), the SDIS 25 center that assisted the victim, and the type of intervention. The latter is described by 3 variables: type of operation (e.g., aid to person, fire), the subtype of operation (e.g., an emergency, fire on the public road, fire in an individual room), and the motive for departure (e.g., cardiac arrest, respiratory distress);
    
    \item \textbf{Calculated features.} Probability of mortality by \textit{motive} and by \textit{age}, which are calculated according to the learning set only; the age of the victims grouped into 8 categories; the total sum of delay time to answer the phone, call duration, and delay time to diffuse the alert; and the great-circle distance between the center and the city;
    
    \item \textbf{The target variable} is the victim's mortality, which is a binary attribute (0: alive, 1: dead). 
\end{itemize}

\subsection{Open Datasets} \label{ch3:sub_open_datasets}

For ease of reproducibility of the works carried out in Chapters~\ref{chap:chapter5} and~\ref{chap:chapter6}, we (also)\footnote{Besides the multidimensional open dataset generate in Chapter~\ref{chap:chapter4}.} considered \textbf{three multidimensional open datasets from the UCI ML repository}~\cite{uci}. These datasets were selected because they allow evaluating our solutions more practically, i.e., with typically real-world datasets. For instance, they differ on the number of users $n$, on the number of attributes $d$, on the number of values per attribute $\textbf{c}=[c_1,c_2,...,c_d]$, and on the data distribution of each attribute. These datasets are described in the following.

\begin{itemize}
    
    \item \textit{Nursery\footnote{\url{https://archive.ics.uci.edu/ml/datasets/nursery}}.} This dataset contains $n=12960$ samples and $d=8$ categorical attributes. The domain size of each attribute is $\textbf{c}=[3, 5, 4, 4, 3, 2, 3, 3, 5]$, respectively. 
    
    \item \textit{Adult\footnote{\url{https://archive.ics.uci.edu/ml/datasets/adult}}.} This dataset contains $48842$ samples extracted from the 1994 Census database. There are $14$ attributes (including the income attribute), in which $9$ are categorical and $5$ are numerical (i.e., considering `education' instead of `education-num'). After removing all samples with missing values (i.e., symbol `?'), there are $n=45222$ samples in this dataset. We only selected the categorical attributes (i.e, $d=9$). The domain size of each attribute is $\textbf{c}=[7, 16, 7, 14, 6, 5, 2, 41, 2]$, respectively. 
    
    \item \textit{Census-Income\footnote{\url{http://archive.ics.uci.edu/ml/datasets/Census-Income+\%28KDD\%29}}.} This dataset contains weighted census data from the 1994 and 1995 years. There are $40$ demographic and employment related variables (including the total person income attribute), in which $33$ are categorical and $7$ are numerical. In total, there are $n=299285$ samples in this dataset. We only selected the categorical attributes (i.e, $d=33$). The domain size of each attribute is \begin{math}\textbf{c}=[ 9, 52, 47, 17,  3,  7, 24, 15,  5, 10,  2,  3,  6,  8,  6,  6, 51, 38, 8, 10,  9, 10,  3,  4,  5, 43, 43, 43,  5,  3,  3,  3,  \\2]\end{math}, respectively.
    
\end{itemize}

\section{Conclusion}

In this section, we have revised state-of-the-art ML techniques and some concepts. We started revising supervised learning and classification and regression tasks, all three considered in this manuscript. Next, we have reviewed state-of-the-art ML algorithms ranging from linear (i.e., LASSO), decision-tree learning (i.e., LGBM and XGBoost), and deep learning (e.g., MLP, RNNs) models. We also presented the metrics that will be used to assess the models' performance, as well as two hyperparameter tuning methods (i.e., random search and Bayesian optimization). Lastly, we also reviewed how to build differentially private ML models, which fundamentally depends on where the DP guarantee is added. That is, by the post-processing property of DP~\cite{dwork2014algorithmic} (cf. Section~\ref{ch2:subsub_prop_dp}), everything after DP, stays DP. Indeed, we mainly consider in this manuscript the rigorous input perturbation setting, which sanitizes each data sample independently (i.e., row-by-row). Although utility may drop, we believe the privacy-utility trade-off is worthwhile since the sanitized dataset will be protected if data leakes~\cite{data_breaches}, and the ML model will also be differentially private, protecting the data against, e.g., data reconstruction attacks, membership inference attacks~\cite{Song2017,Shokri2017}. On the other hand, we also briefly presented another setting, namely, gradient perturbation contextualized to deep learning methods trained with the DP-SGD~\cite{DL_DP} algorithm. Lastly, we also presented the datasets we will be using in this manuscript to perform our experiments.

\part{Contribution: Improving the Utility and Privacy of LDP protocols}

\chapter{MS-FIMU: A Multidimensional Dataset to Evaluate LDP protocols} \label{chap:chapter4}

In Chapters~\ref{chap:chapter1},~\ref{chap:chapter2}, and~\ref{chap:chapter3}, we have presented all main components that will be used in the rest of the contribution chapters of this manuscript. In this chapter, we start to study statistics on aggregated human mobility data generated by OBS (i.e., with the Flux Vision system~\cite{fluxvision1}). The main objective here is to instantiate a mobility scenario from these statistics and to generate a synthetic dataset that allows simulating the data collection pipeline with the privacy-preserving techniques we develop in the next three Chapters~\ref{chap:chapter7},~\ref{chap:chapter5}, and~\ref{chap:chapter6}. We emphasize that although the \textbf{exact} Flux Vision's anonymization method is \textbf{unknown} to the author, we refer to state-of-the-art methods that could give similar results.

\section{Introduction} \label{ch4:introduction}

The main objective of this chapter is to propose an approach to instantiate a mobility scenario that matches the \textbf{anonymized} dataset of mobility described in Section~\ref{ch3:fimu_db} named FIMU-DB, which was published by OBS. To generate the FIMU-DB, as stated by OBS, algorithms for data acquisition are compliant with European laws to guarantee the \textbf{anonymity} of each person. Indeed, following the GDPR~\cite{GDPR} and CNIL~\cite{CNIL}, MNOs must anonymize ``on-the-fly" CDRs used for purposes other than billing. More precisely, if CDRs are used for mobility analytics, these data must be processed within a required time interval (e.g., 15 minutes) if and only if there is a sufficient number of users present for reaching a specific level of anonymity (i.e., ``hide in the crowd"). Besides, all implicit metadata (e.g., users' IDs, timestamps) should be excluded before transmitting any data for processing. That is to say, data should be aggregated (respecting anonymity) within the required time interval and all kinds of identifiers must be excluded before any further analysis.

More specifically, to generate FIMU-DB, OBS established pre-defined indicators through generalization (e.g., age ranges, socio-professional profiles, ...). Next, an anonymity threshold of $k=20$ was defined, i.e., if there are less than $20$ users the number is masked with the symbol \#. Besides, given the number of identified Orange customers, an extrapolation algorithm was developed to estimate the real population. This latter algorithm can be seen as a perturbation-based technique to add noise to the true value. Thus, within the required time interval, OBS processed CDRs respecting $k=20$ for any interval to produce the mobility indicators per day and per the \textbf{union} of consecutive days. We notice that to generate cumulative statistics, i.e., the number of unique users by the union of days could have been done, e.g., using Bloom filters~\cite{bloom_f}. Also, we use $k$ to indicate the anonymity threshold as it follows the ``hide in the crowd" protection provided by \textit{k}-anonymity~\cite{samarati1998protecting,SWEENEY2002}. However, in our view, we believe that the OBS procedure approximates some of the privacy-preserving approaches described in~\cite{Bittau2017}.

In summary, the FIMU-DB is subject to noise resulting from the extrapolation of detected Orange clients and from the anonymization procedure to respect the GDPR and the CNIL. Besides, the FIMU-DB has information on the number of people present by the \textbf{union} of consecutive days (also referred to as `cumulative' information throughout this chapter). 

Therefore, on the one hand, with our proposed approach, we aim to improve the utility of this data, providing the number of people present by all the \textbf{intersections} of days. The mobility scenario we propose represents an invaluable source of information to the city public administration and private companies. Rather than being limited to the number of unique people present in certain regions per union of consecutive days, the scenario allows knowing if they are the same visitors or different visitors over the analyzed time period. With such specific information, companies and public administration would be able to manage their employees and equipment resources efficiently to improve accommodation and transportation systems according to peoples' mobility, thus providing better attendees comfort and security.  

Besides, we propose to recreate the instantiated mobility scenario with virtual humans, such that the synthetic dataset matches the original statistical data. Therefore, as an open dataset, one can carry out studies such as testing and improving data sanitization techniques. \textbf{For the rest of this manuscript, we will refer to the final synthetic dataset as Mobility Scenario FIMU (MS-FIMU)}, which contains $7$ categorical attributes for $88,935$ unique users along $7$ days (on average $\sim 26,000$ unique users per day). That is to say, a \textbf{longitudinal} and \textbf{multidimensional} dataset. \textbf{We invite the interested reader to visit the Github page (\url{https://github.com/hharcolezi/OpenMSFIMU}) to access the final results of this chapter and the published synthetic open dataset.}

The rest of this chapter is organized as follows.Section~\ref{ch4:sec_study_case_data_analysis} presents the study case and some challenges we faced working with real-world anonymized data. Section~\ref{ch4:sec_materials_methods} introduces the proposed approach to instantiate a precise mobility scenario and to generate synthetic data. Section~\ref{ch4:sec_resul_disc} presents the results and its discussion. Finally, Section~\ref{ch4:sec_conclusion} provides concluding remarks. The methodology presented in Section~\ref{ch4:sec_materials_methods} and the results in Section~\ref{ch4:sec_resul_disc} were published in a full paper~\cite{ms_fimu} at the 16th International Wireless Communications \& Mobile Computing Conference (IWCMC 2020). 

\section{Study Case and Data Analysis} 
\label{ch4:sec_study_case_data_analysis}

The main background for this chapter is the database named FIMU-DB from Section~\ref{ch3:fimu_db}, which has seven main files: FR\_gender, FR\_age, FR\_geo, FR\_region, FO\_country, Nights\_actual, and Presence\_time. In this section, we present the scenario in which OBS collected the data and we highlight some challenges one can face working with real-life anonymized data. 

\subsection{Study Case}
\label{subsec:study_case}

As reviewed in Section~\ref{ch3:fimu_db}, the FIMU-DB was published by OBS, which collected statistics on the frequency of users on days and union of consecutive days through analyzing mobile phone data (i.e., CDRs). The geographical space is the area of an international music festival a.k.a \enquote{Festival International de Musique Universitaire} (FIMU). 

Modeling people's mobility in such events is of great importance for public administration and private companies. Hence, we propose to model a more precise mobility scenario, including \textbf{one day before the FIMU event, the five days of the FIMU, and one day after the FIMU end. In other words, this $Nb=7$ days scenario for a 5-days event provides information for these institutions to know the number of people who got in and out of the zone of analysis before, during, and after the event.}

\subsection{Challenges with Anonymized Statistical Data}

Although the data produced by OBS are adequate for marketing purposes, conducting scientific studies using these data leads to, in our view, two challenges. First, we are unable to determine the real number of people when OBS published \#, i.e., due to the anonymity threshold $k=20$. On the one hand, one could think of excluding all \# values, which would probably compromise the utility of the data. So, in this chapter, instead of excluding this information, we considered the option of randomly replacing \# by an integer within the known range from 1 to 20. However, both solutions (excluding or randomly assigning an integer) result in different cardinalities between the seven files that describe the same population, which represents an \textbf{inconsistency}. 

For instance, Table~\ref{tab:example_database} summarizes the records of the first three days of the FIMU. In this scenario, the first day of analysis is Thursday and it has only one record labeled as `Th1'. Friday has two records labeled as `Fr1' (only Friday) and `Fr2' (Friday \textbf{OR} Thursday), respectively. And Saturday has three records labeled as `Sa1' (only Saturday), `Sa2' (Saturday \textbf{OR} Friday), and `Sa3' (Saturday \textbf{OR} Friday \textbf{OR} Thursday), respectively. \textbf{For the rest of this chapter, we will be using this notation (e.g., `Fr1', `Fr2', ...) to indicate the `cumulative' information (i.e., the union of consecutive days).}

In Table~\ref{tab:example_database}, both `FR\_geo', `FR\_region', and `FR\_age' columns present the total number of \textit{unique} French visitors aggregated in each file. This is according to the `Cum. days' attribute exemplified in Table~\ref{tab:example_volume} and after replacing randomly all \# values. The same procedure is reproduced for the other files. \textbf{Theoretically, the information from all three columns `FR\_geo', `FR\_region', and `FR\_age' should be equal as they describe the same population. However, this is not true, and the difference between files changes depending on the replacement of all \# values (unknown).}

\setlength{\tabcolsep}{5pt}
\renewcommand{\arraystretch}{1.4}
\begin{table}[!ht]
\scriptsize
\centering
\caption{Unique French visitors present over three FIMU's days.}
\begin{tabular}{cccccc}
\hline
\textbf{Label} &\textbf{Cum. days} & \textbf{FR\_geo} &\textbf{\ldots} &\textbf{FR\_region} &\textbf{FR\_age} \\ \hline
Th1  & 01 day  & 23,816 & \ldots &23,598 & 23,810  \\ \hline
Fr1 & 01 day  &27,145   & \ldots &26,945 & 27,143  \\ \hline
Fr2 & 02 days  &36,917  & \ldots &36,758 & 36,915\\ \hline
Sa1 & 01 day  & 26,894  & \ldots  &26,699 & 26,868\\ \hline
Sa2 & 02 days  & 41,615  &\ldots &41,373 & 41,589\\ \hline
Sa3 & 03 days & 50,024  & \ldots &49,823 & 49,999\\ \hline
\end{tabular}

\label{tab:example_database}
\end{table}

\section{Proposed approach} 
\label{ch4:sec_materials_methods} 

Our goal is to improve the understanding of people's mobility behavior from the number of unique visitors per day and cumulative days to find out the number of unique visitors by the intersection of days. The whole proposed approach is summarized with a flowchart depicted by Fig.~\ref{fig:flow_chart}. In this particular study, the ultimate goal is to infer the number of people who stayed in the city for one or any combination of days (i.e., the aggregate number of users in each intersection of days), considering one week, including the FIMU event. Further, once the whole mobility scenario is known, the objective is to generate samples to build a synthetic dataset with virtual people. The approach is detailed and applied in the following two subsections.

\subsection{Mobility scenario modeling}
\label{subsec:ms_modeling}

As described on the left side of Fig.~\ref{fig:flow_chart}, first, we input data with cumulative information and replace the \# values. A Boolean map is used to describe every combination of $Nb=7$ consecutive days resulting in $2^{Nb}=128$ variables. Then, each of the \(Nb(Nb+1)/2=28\) cumulative days is described as a Boolean vector with 0 (excluded) and 1 (included) values per combination of days according to the representative map. 

\begin{figure}[!ht]
\centering
\includegraphics[width=0.85\linewidth]{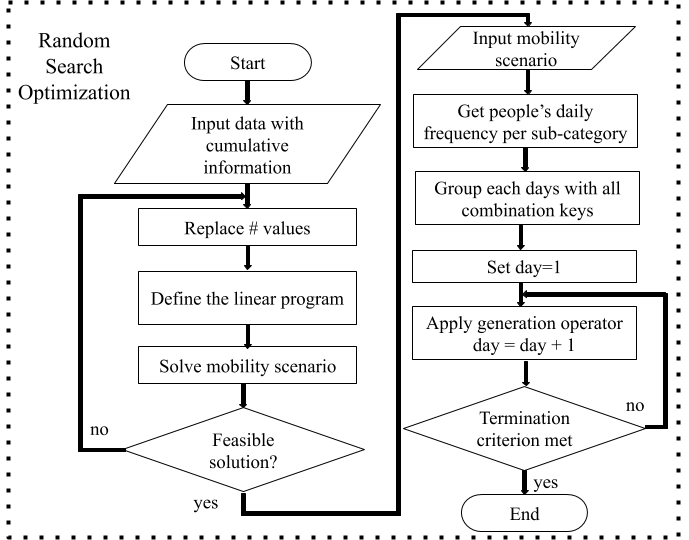}
\caption{Flowchart of the proposed algorithm to, first, instantiate a mobility scenario with information by the intersection of days and, second, to generate synthetic data.}
\label{fig:flow_chart}
\end{figure}

Then, a linear program (LP) is defined to instantiate the first feasible solution given a random initial solution, without trying to maximize or minimize any combination of days. The system constraints are the number of people per cumulative day, expressed as Boolean vectors. However, due to both problems of \# values and \textbf{inconsistencies between the cardinalities of the datasets}, rather than using the exact `known values', these problems are addressed by establishing bounds. In this case, the lower and upper bounds are the minimum and maximum values between all the datasets. The motivation for such an approach is to instantiate a feasible solution that respects the values of all available data such that the global error could minimize. More precisely, we are defining a linear constraint solver that computes an arbitrary solution within the set of feasible solutions rather than using the linear program as an optimization mechanism. In this case, the objective function of this system is just a constant (zero). Eq.(\ref{eq:min}) mathematically describes the LP as: 

\begin{equation}\label{eq:min}
\begin{array}{rl}
\displaystyle \min & 0 \textrm{,}\\
\textrm{s.t.} & \textrm{ $lb_i \leq A_{ij} x_j \leq ub_i$}\\
    & \textrm{$x_j  \geq 0$}%
\end{array}
\end{equation}

$\forall i \in [1,Nb(Nb+1)/2]$ and $\forall j \in [1,2^{Nb}]$ where $A_{ij}$ is the Boolean matrix representing the Boolean vectors $i$ and its respective days combinations $j$; $x_j$ is the number of people per combination of days; and $lb$ and $ub$ are both lower and upper bounds, respectively, which corresponds to the total number of unique users. 

Hence, instantiating a feasible solution for all the categories (Resident, French tourists, and foreign tourists) and grouping them as a unique mobility scenario provides the number of people for each combination of days. Then, with such results, the second part is retaken for generating samples of virtual humans aiming to approximate the original data. 

Before moving on to step 2 (generate synthetic data), let us consider the scenario of Table~\ref{tab:example_database} to better understand the proposed LP. Fig.~\ref{fig:example_karnaugh} illustrates the Boolean map representation of \(Nb=3\) consecutive days (Th=Thursday, Fr=Friday, Sa=Saturday, and its complements), and the example of both $Th1$ (unique visitors on Thursday) and $Sa2$ variables (unique visitors present on Saturday \textbf{OR} Friday). Notice that $x1$ represents the number of visitors that are present neither Thursday nor Friday nor Saturday. This number is obviously not known and, hence, it is not considered.

\begin{figure}[H]
\centering
\includegraphics[width=0.45\linewidth]{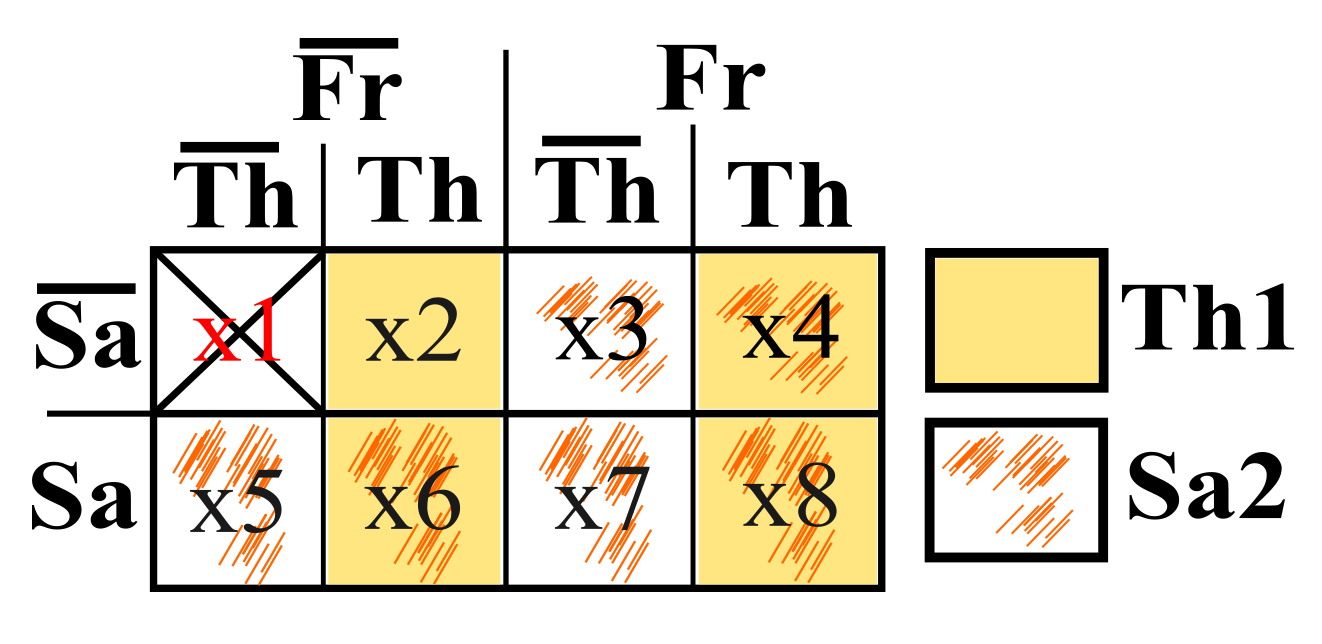}
\caption{Representation of Nb=3 days combination and illustration of both Th1 and Sa2 known values.}
\label{fig:example_karnaugh}
\end{figure}

Considering the LP in Eq.(\ref{eq:min}), Eq.(\ref{eq:ex_map3}) exhibits the $A_{ij}$ matrix according to Fig.~\ref{fig:example_karnaugh} and its lower ($lb$) and upper bound ($ub$) with values from Table~\ref{tab:example_database} relating to French citizens. 

\begin{equation}
\label{eq:ex_map3}
\left[
\begin{matrix}
Th1 \\Fr1 \\Fr2 \\Sa1 \\Sa2\\Sa3
\end{matrix}
\right]
\Longrightarrow
\left[
\begin{matrix}
23,598\\26,945\\36,758\\26,699\\41,373\\49,823\\
\end{matrix}
\right]
\leq
\left[
\begin{matrix}
0& 1& 0& 1& 0& 1& 0& 1& \\
0& 0& 1& 1& 0& 0& 1& 1& \\
0& 1& 1& 1& 0& 1& 1& 1& \\
0& 0& 0& 0& 1& 1& 1& 1& \\
0& 0& 1& 1& 1& 1& 1& 1& \\
0& 1& 1& 1& 1& 1& 1& 1&  
\end{matrix}
\right]
\left[
\begin{matrix}
x_1\\x_2\\ \vdots\\ x_{7}\\x_{8}
\end{matrix}\right]
\leq
\left[
\begin{matrix}
23,816\\27,145\\36,917\\26,894\\41,615\\50,024\\
\end{matrix}
\right]
\end{equation}

\subsection{Synthetic data generation}
\label{subsec:data_gen}

The proposed algorithm illustrated on the right side of Fig.~\ref{fig:flow_chart} is summarized in the subsequent steps. First, using the original data, the frequency of visitors present each day of the week under study is calculated for each sub-category, e.g., on the first day $50.2\%$ are men and $49.8\%$ are women. 

Next, we set up a dictionary for each day grouping its related keys of combination days; people present only Thursday are described by TT, people present both Thursday and Friday are described by TF, and so on. It is noteworthy that the same TF key appears on both Thursday and Friday dictionaries as they are the same people that attended both days in the analysis area.

Then, an iteration starts filling up each key for the first day with virtual individuals respecting the frequency of men and women, geolife categories, age groups, regions (countries for foreign tourists), the sleeping area, and the visit duration. Afterward, for the next six days, people with similar keys are directly copied from one day to another. In this case, the frequency for each category is re-calculated considering the existing people. The remaining people are then generated according to the new frequency. However, there is one exception about the attribute `visit duration', which means that people could be present more hours from one day to another (a dynamic attribute). Hence, the approach may vary the duration time of every people each day relative to the real frequency acquired from the original data.

Once the stop criterion is met, i.e., when all days have their respective virtual humans, the error is calculated by querying the generated data and comparing it to the original one. The error, total error, and error rate metrics are defined in the following.

\begin{definition}[Error]
Let \(|A|\) be the cardinality of set A and \(A_{|j}\) be the subset of \(A\) restricted to sub-category \(j\), \textit{i.e.}, $A_{|j} = \{x | x \in A, x \in j\}$. Given a set O (original data), a set G (generated data), and sub-categories \(j\) related to each specific category (i.e., from the gender category there are two sub-categories, feminine and masculine), the error is defined as
$$
    error(j) = | | G_{\vert j} | - | O_{\vert j} || \textrm{.}
$$
\end{definition}

\begin{definition}[Total Error] 
The total error TE is the sum of errors per sub-category \(j\) and per day \(i\) defined as
$$
TE = \sum_{i=1}^{n}\sum_{j=category} error(j)_i \textrm{.}
$$
\end{definition}

\begin{definition}[Error Rate]
The error rate ER is calculated considering \(j\) original datasets
$$
    ER = \frac{TE} {\sum_{j=dataset}\sum_{j=category} | O_{j \vert j} |}  \textrm{.}
$$
\end{definition}

These computations are repeated for $m$ iterations based on a \textbf{random search optimization} approach. In particular, the first parameter randomly generated is the \# values within the range 1-20, which changes the number of people per day and, consequently, per file at every iteration. In addition, considering the LP in Eq.(\ref{eq:min}), an initial solution is randomly generated such that the linear constraint solver can provide a different mobility scenario at each iteration. Then, the error rate metric is calculated. Finally, the mobility scenario and synthetic dataset with the smallest error rate are recovered as a final solution.

Some motivations for such a random search approach are described as follows. First, an initial attempt to model our problem as a linear program resulted in an infinity number of solutions. And second, as aforementioned, the \# problem due to privacy constraints had to be handled, resulting in different cardinalities for the datasets. 

\section{Results and Discussion}
\label{ch4:sec_resul_disc}

To carry out this work, we used the Pyeda Python package~\cite{pyeda} for Boolean algebra operations. We applied the mixed-integer nonlinear programming solver from the Gekko package~\cite{gekko} to the proposed mobility scenario in Eq.(\ref{eq:min}). The Faker package~\cite{faker} assigned fake French names for French citizens and default names (United States) for foreign tourists. All algorithms were implemented in Python 3. In order to run our codes, we used a machine with Intel (R) Core (TM) i7-8650 CPU @ 1.90GHz and 32GB RAM using Debian 10. In the next two subsections, we present our results.

\subsection{Mobility scenario}
\label{subsec:resul_ms}

The random search algorithm performs $5,000$ evaluations of $m=100$ iterations in parallel to search for the most representative distribution of people over the week of interest. This is a suitable way to ensure a convergence pattern towards a global minimum due to probabilistic properties. At the end of 22 minutes, the random search stops, and the dataset is recovered with an error rate less than $8.1\%$ at evaluation $1,050$ and iteration $79$. 

We summarize the final mobility scenario in Table~\ref{ch4:tab_final_ms}, which presented the smallest error rate. In Table~\ref{ch4:tab_final_ms}, each day of the week is represented in an abbreviated format, e.g., Sunday -- $Su$ and its complement by $\overline{Su}$. Besides, Fig.~\ref{fig:error} depicts the decreasing error rate function based on the number of iterations. Lastly, for illustration purposes, Table~\ref{tab:ex_error} presents the number of visitors for both real and synthetic datasets (FR\_age) and the absolute error for three sub-categories of ages on the first day of interest.

\setlength{\tabcolsep}{5pt}
\renewcommand{\arraystretch}{1.4}
\begin{table}[!ht]
    \scriptsize
    \centering
    \begin{tabular}{c c c c| c| c| c| c| c| c| c| c }
    \hline
    \multicolumn{4}{c|}{\multirow{3}{*}{Days combination}} & \multicolumn{4}{c}{$\overline{Fr}$} & \multicolumn{4}{c}{$Fr$}  \\ \cline{5-12}
    & & & & \multicolumn{2}{c|}{$\overline{Th}$} & \multicolumn{2}{c|}{$Th$} & \multicolumn{2}{c}{$\overline{Th}$} & \multicolumn{2}{|c}{$Th$} \\ \cline{5-12}
    & & & & \multicolumn{1}{c|}{$\overline{We}$} & \multicolumn{1}{c|}{$We$} & \multicolumn{1}{c|}{$\overline{We}$} & \multicolumn{1}{c|}{$We$} & \multicolumn{1}{c|}{$\overline{We}$} & \multicolumn{1}{c|}{$We$} & \multicolumn{1}{c|}{$\overline{We}$} & \multicolumn{1}{c}{$We$} \\ \hline
    
    \multirow{8}{*}{$\overline{Tu}$} & \multirow{4}{*}{$\overline{Mo}$} & \multirow{2}{*}{$\overline{Su}$} & $\overline{Sa}$ & - & 4851 & 4378 &1527 &1801 &1701 & 786 & 3450 \\ \cline{4-12}
    & &  & $Sa$ & 4791 & 234 & 87 & 266 & 1748 & 48 & 417 & 893 \\ \cline{3-12}
    & &\multirow{2}{*}{$Su$} &$\overline{Sa}$ & 9695 & 228 & 199 & 508 & 341 & 92 & 506 & 1220  \\ \cline{4-12}
    & & & $Sa$ & 2171 & 287 & 74 & 73 & 4237 & 103 & 1109 & 1229 \\ \cline{2-12}
    & &\multirow{2}{*}{$\overline{Su}$} &$\overline{Sa}$ & 5937 & 183 & 49 & 207 & 97 & 36 & 67 & 233 \\  \cline{4-12}
    & & & $Sa$ & 592 & 100 & 103 & 42 & 63 & 116 & 63 & 80 \\ \cline{3-12}
    & &\multirow{2}{*}{$Su$} &$\overline{Sa}$ & 7380 & 71 & 34 & 56 & 71 & 89 & 77 & 22 \\ \cline{4-12}
    & & & $Sa$ & 256 & 51 & 96 & 49 & 27 & 52 & 94 & 61 \\ \hline
    
    \multirow{8}{*}{$Tu$} & \multirow{4}{*}{$\overline{Mo}$} & \multirow{2}{*}{$\overline{Su}$} & $\overline{Sa}$ & 7052 & 446 & 213 & 787 & 1163 & 35 & 679 & 775 \\ \cline{4-12}
    & &  & $Sa$ & 441 & 59 & 104 & 71 & 62 & 94 & 106 & 99 \\ \cline{3-12}
    & &\multirow{2}{*}{$Su$} &$\overline{Sa}$ & 1004 & 110 & 53 & 70 & 85 & 87 & 52 & 53  \\ \cline{4-12}
    & & & $Sa$ & 42 & 94 & 50 & 91 & 93 & 38 & 51 & 36\\ \cline{2-12}
    & &\multirow{2}{*}{$\overline{Su}$} &$\overline{Sa}$ & 159 & 309 & 72 & 325 & 442 & 67 & 396 & 94 \\  \cline{4-12}
    & & & $Sa$ & 111 & 76 & 89 & 35 & 71 & 34 & 102 & 434 \\ \cline{3-12}
    & &\multirow{2}{*}{$Su$} &$\overline{Sa}$ & 434 & 84 & 71 & 41 & 112 & 67 & 89 & 149 \\ \cline{4-12}
    & & & $Sa$ & 4176 & 61 & 71 & 93 & 211 & 74 & 506 & \textbf{176} \\ \hline
    
    \end{tabular}
    \caption{Final result of using our methodology, which produces a mobility scenario with the frequency of users per day and per the intersection of days.}
    \label{ch4:tab_final_ms}
\end{table}

\begin{figure}[H]
    \centering
    \includegraphics[width=0.75\linewidth]{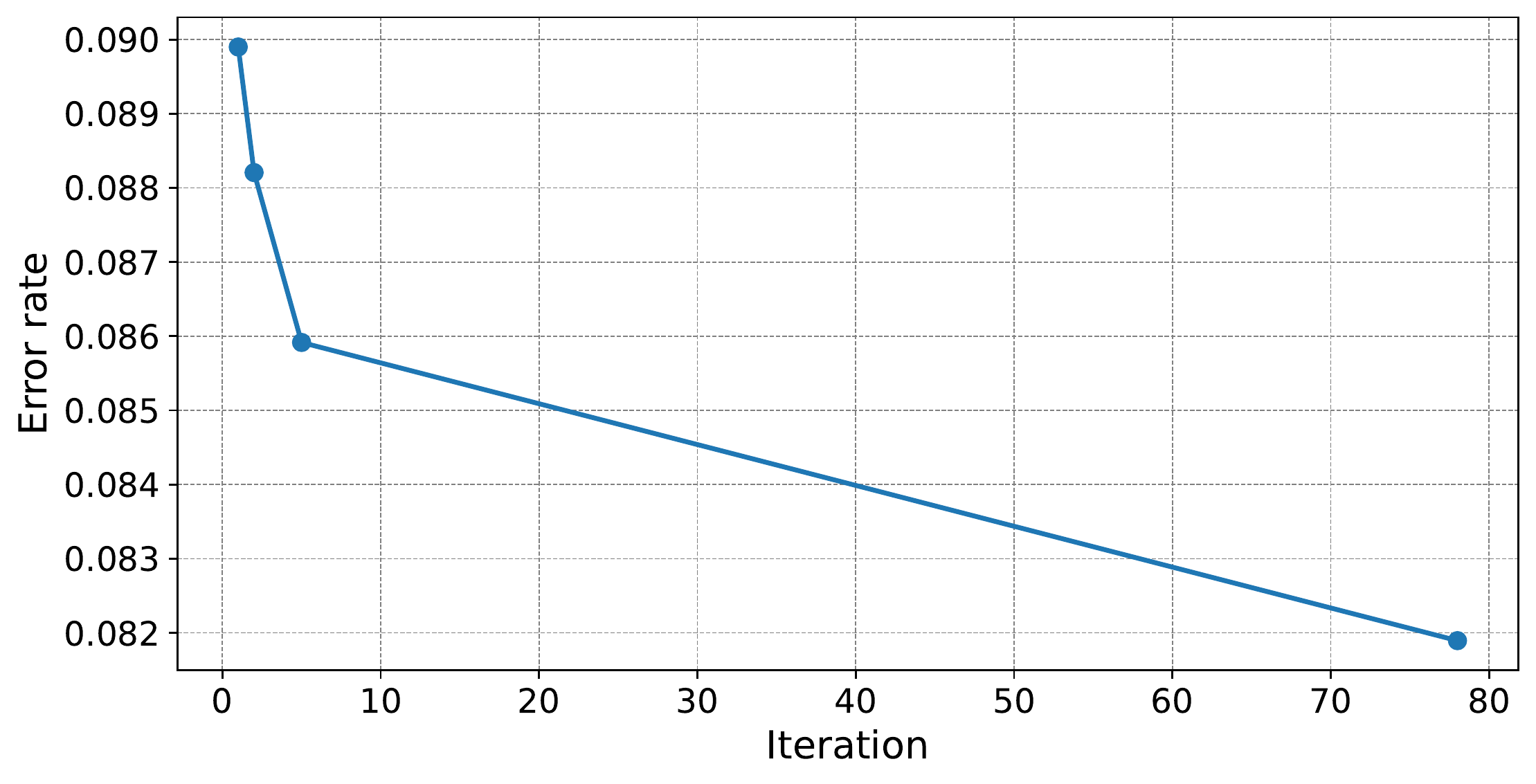}
    \caption{Decreasing error rate function through iterations.}
    \label{fig:error}
\end{figure}

\setlength{\tabcolsep}{5pt}
\renewcommand{\arraystretch}{1.4}
\begin{table}[H]
    \scriptsize
    \centering
    \caption{Number of visitors per dataset and absolute error for each sub-category of ages on the first day.}
    \label{tab:ex_error}
    \begin{tabular}{cccc}
        \hline
        \textbf{Age group} & \textbf{Real data} & \textbf{Synthetic data} &\textbf{  Absolute Error  }\\
        \hline
        18-24 &2,312 & 2,319 &7 (0.3\%) \\\hline
        35-44 & 3,230 & 3,215 &15 (0.46\%)\\\hline
        \(>65\) &3,483 & 3,439 &44 (1.26\%)\\\hline
    \end{tabular}
\end{table}

\subsection{Synthetic data} \label{ch4:info_ms_fimu}

In the end, an open dataset is proposed with an associative table whose primary key is (Person ID, Date ID) combination, which specifies the visit duration information, as shown in Table~\ref{tab:associative}. These two individual keys are linked to two other tables. The first, exemplified by Table~\ref{tab:personal_info}, contains specific information about people, for instance, fake French names, geolife profile, and region. The second table maps the days under analysis, from the first to the last day respectively as follows: \{1: 2017-05-31\}, ..., \{7: 2017-06-06\}.

\setlength{\tabcolsep}{5pt}
\renewcommand{\arraystretch}{1.4}
\begin{table}[!ht]
    \scriptsize
    \centering
    \caption{Final format of the generated dataset.}
    \label{tab:associative}
    \begin{tabular}{cccc}
        \hline
        \textbf{Index} & \textbf{Person ID} & \textbf{Date ID} & \textbf{Visit Duration}\\
        \hline
        1 &5385 & 2 & 6h\\\hline
        2 & 234 & 5 & 4h\\
        \hline
    \end{tabular}
\end{table}

\setlength{\tabcolsep}{5pt}
\renewcommand{\arraystretch}{1.4}
\begin{table}[!ht]
    \centering
    \scriptsize
    \caption{Table with personal information about individuals.}
    \label{tab:personal_info}
    \begin{tabular}{c c c c c c c c}
    \hline
    \textbf{Person ID} & \textbf{Name} & \textbf{Gender} &    \textbf{Age} &   \ldots & \textbf{Visitor category} &  \textbf{Region}   \\
    \hline
     91 &  Adrien Clement &      M &  45-54 &       \ldots &   French tourist &         Alsace \\\hline
     32947 &         Grégoire Didier &      M &  25-34 &  \ldots &   French tourist &  Franche-Comté \\\hline
     53990 &           Marie Le Lemaitre &      F &  25-34 &  \ldots &         Resident &  Franche-Comté \\\hline
     58664 &      Michelle-Céline Marion &      F &  25-34 &  \ldots &         Resident &  Franche-Comté  \\\hline
    \end{tabular}
\end{table}

We recall here the information about all the attributes of the MS-FIMU dataset below.

\begin{itemize}
    \item Static:
    \begin{itemize}
        \item \textbf{Visitor Category} with 3 values: `Resident', `Foreign tourist', and `French tourist';
        
        \item \textbf{Gender} with 3 values: `M' (masculine), `F' (feminine), and `NR' (Not Registered, e.g., for foreign people);
        
        \item \textbf{Age} with 8 values: `$<$18', `18-24', `25-34', `35-44', `45-54', `55-64', `$>$65', and `NR';
        
        \item \textbf{Geolife} with 12 values: `NR', 'comfortable family pavilion', `traditional rural', `comfortable family urban', `secondary residence', `popular', `dynamic rural', `growing peri-urban', `rural worker', `dynamic urban', `middle-class urban', and `low-income urban';
        
        \item \textbf{Region} with 37 values (countries for foreign people): `Belgium + Luxembourg', `Asia Oceania', `Netherlands', `Scandinavia', `United Kingdom', `Italy', `Spain', `China', `Other countries in Europe', `Germany', `United States', `Russia', `Swiss', `Eastern country', `Rest of the world', `AUTRE 97', `Centre', `Languedoc-Roussillon', ``Provence-Alpes-Côte d'Azur", `Lorraine', `Ile-de-France', `Franche-Comté', `Midi-Pyrénées', `Corse', `Basse-Normandie', `Aquitaine', `Poitou-Charentes', `Pays de la Loire', `Nord-Pas-de-Calais', `Champagne-Ardenne', `Bourgogne', `Bretagne', `Alsace', `Rhône-Alpes', `Picardie', `Auvergne', and `Haute-Normandie';
        
        \item \textbf{Sleeping Area} with 11 values: `Agglomeration of Hericourt', `Rest Territory of Belfort', `NR', `City of Belfort', `Vosges', `Rest of Doubs', `Rest of Haute Saone', `North Haut Rhin', `Agglomeration of Belfort', `Agglomeration of Montbeliard', `South Haut Rhin'.
    \end{itemize}
    
    \item Dynamic: 
        \begin{itemize}
        \item \textbf{Visit Duration} with 10 values: `Duration 2h', `Duration 3h', `Duration 4h', `Duration 5h', `Duration 6h', `Duration 7h', `Duration 8h', `Duration 9h', `Duration 10h', `Duration 10h-18h'.
    \end{itemize}
\end{itemize}

The motivation to release the synthetic open dataset with an associative table is to facilitate its improvement through adding more information about the population. Therefore, the associative table will remain unaltered, while more attributes can be added to the table with specific information about people. The generated dataset is available for anyone to freely access, use, modify, and share for any purpose at the aforementioned Github page (\url{https://github.com/hharcolezi/OpenMSFIMU}).

\subsection{Discussion and Related Work}

In the literature, several studies on human mobility show that humans follow particular patterns with a high probability of predictability~\cite{deMontjoye2013}. Hence, there is a high interest in understanding how people move. However, taking into account users' privacy, research emerges using synthetic and open data to solve such a problem. For example, in~\cite{Kashiyama2017}, the authors provided an approach for creating an open people mass movement dataset. In~\cite{Caiati2016}, the authors studied the use of open data for building and validating a realistic urban mobility model. The authors in~\cite{Pappalardo2017} developed a framework for the generation of individual human mobility trajectories with realistic Spatio-temporal patterns. Finally, the authors in~\cite{Kong2018} proposed a mobility dataset generation method of social vehicles traveling. 

All the aforementioned works treated a different problem from ours, which corresponds to different data types available they have. In our case, there were only statistical mobility indicators with information about the unique number of people per day and per the union of consecutive days. We then proposed a solution based on linear programming (linear constraint solver) to instantiate a feasible solution and, thus, reconstruct virtual humans based on statistics. We also notice that the authors in~\cite{Alaggan2017} used a similar linear constraint solver to their problem.

Regarding our solution described in Section~\ref{ch4:sec_materials_methods} and summarized in Fig.~\ref{fig:flow_chart}, one can notice that we have split the problem into two steps. Indeed, solving a single linear program considering the number of intersections $2^{Nb}=128$ for each sub-category (e.g., masculine or feminine) of each category (e.g., gender) would probably require a large number of variables and, thus, it was not considered in this chapter. In addition, we consider that virtual humans have `static' values for five attributes, i.e., each person has always the same geolife profile; people are from one unique region, they normally sleep in the same zone, and so on. The exception is for the `Visit duration' attribute, which was considered `dynamic' since people can vary the number of hours they stay in the FIMU per day.

Hence, as noticed in Fig.~\ref{fig:error} and Table~\ref{tab:ex_error}, the error metrics are very low when querying people in each sub-category from the generated data, compared to the original one. In other words, the result, which is one of many possible scenarios, closely describes how people behave during the week of interest. With such results, it is possible to assert with a reasonable amount of accuracy how many people were present in each combination (\textbf{intersection}) of 7 days, which is a more precise mobility scenario than just knowing the number of unique people per day or cumulative days (\textbf{union}).

For instance, from the final mobility scenario, and by querying the generated dataset, we can find out how many foreign tourists, French tourists, and residents are present only one or several days at the FIMU event, as well as their specific information such as socio-professional profile, region or countries, age groups, gender, and so on. For illustrative purposes, it is possible to know that from $176$ visitors present during all week (see Table.~\ref{ch4:tab_final_ms}, highlighted in bold), $153$ are residents, $20$ are French tourists, and $3$ are foreign tourists.

Moreover, it is noticed that foreign tourists were present normally at one unique day or at most three consecutive days, which is consistent with reality. Indeed, foreign people come to the FIMU for few days and usually have no `gaps' between days, such as one day present, the other not, and the next yes. Additionally, the premise of assigning one unique sleeping area for each visitor seems to indicate that the approach is consistent.

Such specific information on human mobility would be valuable for local communities and for accommodation and transportation companies, which would allow them to learn how people behave during a time period in a particular area. For instance, if one has information about the presence of foreign tourists on a specific combination of days and if they do not change much their sleeping place, accommodation companies can improve their future strategies to assist this population. Similarly, tourism companies would be more prepared knowing that most of the people present during the week are residents while tourists are rather present during the weekend of the FIMU event.

\section{Conclusion}
\label{ch4:sec_conclusion}

This chapter proposes an approach to infer and recreate synthetic data that provides a precise mobility scenario based on one-week statistical data of \textbf{unions of consecutive days} made available by~\cite{fluxvision1}. Our improved mobility scenario presents specific information about people present on one or several combinations of days (i.e., \textbf{all intersections of days}). The proposed approach is generic enough to apply to other mobility scenarios that rely on databases with information about the cumulative number of unique people for days (i.e., the union of consecutive days). Moreover, the proposed approach overcomes challenges due to data acquisition with anonymization techniques such as an anonymity threshold ($k=20$ in this case) and extrapolation algorithms.

The results show that the proposal can be efficiently applied to generate a synthetic dataset with specific information about people present in a certain region, for instance, attending the FIMU~\cite{FIMU} as was the case in this study. Finally, the generated and open dataset named MS-FIMU closely matches the original statistics with a low error rate, which substantiates the proposed approach. One direct \textbf{use case of MS-FIMU} is to evaluate differentially private cardinality estimation methods on longitudinal studies (e.g.,~\cite{Alaggan2017,app_blip,Alaggan2018}). Besides, \textbf{experimenting with machine learning tasks} could also be considered. Lastly, one can also \textbf{evaluate the effectiveness of new LDP protocols} on multidimensional and longitudinal frequency estimates, as we present in Chapters~\ref{chap:chapter5} and~\ref{chap:chapter6}, or other complex tasks such as marginal estimation (e.g.,~\cite{Shen2021,Peng2019,Zhang2018,Ren2018,Fanti2016}).


\chapter{LDP-Based System to Generate Mobility Reports from CDRs} \label{chap:chapter7}

In Chapters~\ref{chap:chapter1},~\ref{chap:chapter3}, and~\ref{chap:chapter4}, we have reviewed mobility reports published by OBS Flux Vision system~\cite{fluxvision1}. These mobility reports are, in other words, longitudinal statistics releases about the frequency of visitors by multiple attributes (e.g., as in~\cite{aktay2020google,heerschap2014innovation} too). Although current data privacy legislations require \textit{anonymity} ``on-the-fly" to collect CDRs for human mobility analytics, we posed ourselves two questions: \textbf{Q$_1$) what if} MNOs do not trust the data analyzers (e.g., third party companies working on mobility analytics)? Or \textbf{Q$_2$) what if} future data privacy legislations require different privacy protections than ``anonymity on-the-fly", e.g., demanding ``\textbf{sanitization} on-the-fly"? Indeed, while the former \textit{anonymity} protection provides syntactic privacy through the ``safe in the crowd" concept, the latter \textit{sanitization} protection provides algorithmic privacy by using a DP model, as we defined in Section~\ref{ch2:introduction} for this manuscript. These questions \textbf{Q$_1$} and \textbf{Q$_2$} initially motivated the core contributions of this chapter where we propose an LDP-based CDRs processing system to generate mobility reports, following the objective of the OBS Flux Vision system. We invite the reader to refer to Chapter~\ref{chap:chapter2} for the background on LDP. Lastly, we highlight that although we refer to our proposal as \textit{LDP-based}, this is a centralizer data owner (i.e., MNOs) that applies the LDP protocol on its servers, thus, providing $\epsilon$-DP guarantees for users.

 \section{Introduction} \label{ch7:sec_introduction}

We start by recalling some requirements of data privacy regulations on collecting and analyzing CDRs for human mobility analytics. For instance, although MNOs have the right and duty to store CDRs, according to the GDPR~\cite{GDPR}, it does not mean MNOs have the right to use the collected \textit {raw data} for other purposes. Besides, the CNIL~\cite{CNIL}, in France, requires that CDRs used for human mobility analysis (i.e., for purposes other than billing) to be \textit{anonymized on-the-fly} (i.e., ``hide in the crowd"). One reason behind this is because users cannot sanitize their data locally since CDRs are automatically generated on MNOs' servers through the use of a service (e.g., making/receiving phone calls). Lastly, MNOs cannot process data for mobility analytics containing users' IDs or a hashed version of them as they are still unique IDs.

The purpose of this chapter is, thus, to propose a privacy-preserving system for human mobility analytics through mobile phone CDRs. This way, MNOs can benefit from such an important data source while respecting their clients' privacy and following major recommendations of data protection authorities such as the GDPR and CNIL. Indeed, we intend to analyze human mobility through \textit{multidimensional} and \textit{longitudinal} statistical data releases (e.g., as in~\cite{aktay2020google,heerschap2014innovation,fluxvision1}). Throughout this paper, ``multidimensional" refers to data with multiple $d>1$ attributes. For instance, as shown in Section~\ref{ch3:fimu_db} and Chapter~\ref{chap:chapter4}, from subscription data, MNOs can gather: gender, age range (date of birth), and county origin (invoice address). From CDRs there is the coarse location (antennas that handled the service) and if it is ``roaming data" (foreign mobile) or not. Besides, ``longitudinal" refers to data with temporal information, i.e., analyzing human mobility over time (a.k.a. \textit{continuous monitoring} in software literature~\cite{rappor,microsoft,Erlingsson2019}). 

Therefore, we propose an LDP-based CDRs processing system for such purpose that goes beyond \textit{``anonymity on-the-fly"}, i.e., with \textit{``sanitization on-the-fly"}. The main reason to use the local DP model is that it allows sanitizing each sample independently while providing strong privacy guarantees (see Section~\ref{ch2:sub_ldp}). So, while MNOs CDRs processing systems utilizes each users' raw information (e.g., gender masculine, age range $<$18, ...), we propose that an LDP version of each users' data be used instead, i.e., LDP(masculine), LDP($<$18). In fact, we assume that MNOs CDRs processing systems have, first, pre-defined the mobility indicators (e.g., gender, age-ranges, nationality, ...) they want to release.  

Fig.~\ref{ch7:fig_system_model} illustrates our system model overview and the considered trust boundary (in dashed line). The first entity, namely, \textbf{users}, refers to MNOs' clients, which are not able to sanitize their data locally when using a service (e.g., exchanging SMS). The second entity is the MNOs themselves, which are \textbf{data holders} and must ensure \textit{``sanitization on-the-fly"} through an LDP mechanism \textit{for all CDRs used for analyzing human mobility}. The third entity is the \textbf{data processor} (considered as an \textit{untrusted} analyzer in our model), which processes data to generate statistical indicators. The last entity is the \textbf{data consumers}, which have access only to released statistics. 

\begin{figure}[!ht]
\centering
\includegraphics[width=1\linewidth]{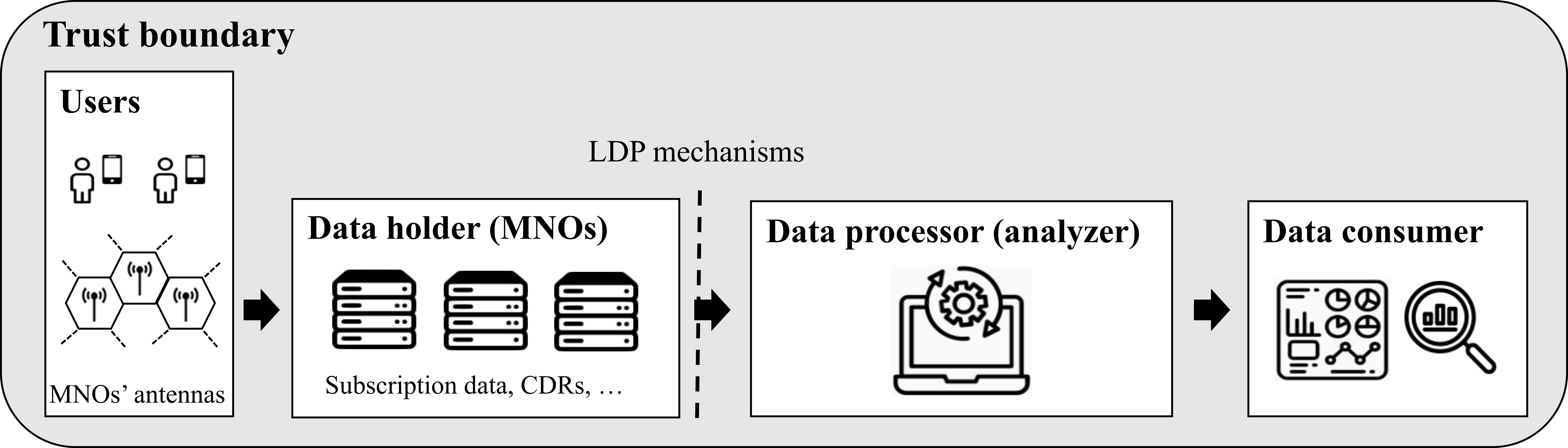}
\caption{Overview of our system model with an LDP-based privacy-preserving solution to sanitize each users' data on-the-fly before transmitting to the analyzer.} \label{ch7:fig_system_model}
\end{figure}

With more details, each time a user makes a call, or sends SMS, or connects to the internet ..., a CDR is generated and is stored by MNOs offline for billing and legal purposes, along with their subscription data (e.g., invoice address). This way, instead of MNOs transmit the raw data of this user (according to the pre-defined indicators), this data should be processed by an $\epsilon$-LDP algorithm in the MNOs servers, where $\epsilon$ is a public parameter, before transmitting it to the data processor. \textbf{Besides, similar to MNOs CDRs processing systems, the users' IDs should be excluded before transmitting any data to the analyzer. Thus, improving the users' privacy.}

Therefore, the data processor would only store uncorrelated (i.e., no IDs) $\epsilon$-LDP data. At the end of the period of analysis, the analyzer can aggregate these data to produce statistics through multiple frequency estimation, which depends on the LDP algorithm and the public parameter $\epsilon$. Notice that \textbf{with our proposal, both users and MNOs are safeguarded as no \textit{raw data} will be collected to analyze human mobility anymore.} \textit{However, $\epsilon$-LDP values must not result in \textbf{indirect unique identifiers}.} Indeed, if one can detect a unique $\epsilon$-LDP value for many days, it would violate the privacy of this user as s/he could be easily tracked away. 

So, in this chapter, we propose to use the GRR~\cite{kairouz2016discrete} LDP mechanism, which corresponds to the situation where no particular encoding is chosen. In other words, with GRR, $\epsilon$-LDP private reports will become anonymous depending on the size of the attribute (e.g., feminine or masculine, for the gender attribute), thus allowing a longitudinal collection of data. Lastly, we propose to generate mobility reports similar to the FIMU-DB of Section~\ref{ch3:fimu_db}, i.e., multidimensional frequency estimates by day and by the union of consecutive days (``\textit{cumulative frequency estimates}").

The remainder of this chapter is organized as follows. In Section~\ref{ch7:sec_multi_grr}, We extended the analytical analysis of GRR for multidimensional frequency estimates. Next, we explain our proposed LDP-based CDRs processing system in Section~\ref{ch7:sec_prop_metho}. In Section~\ref{ch7:sec_results}, we present our results, its discussions, and review related work. Lastly, in Section~\ref{ch7:sec_conclusion} we present the concluding remarks. The results of Section~\ref{ch7:sec_multi_grr} and a preliminary version of the proposed LDP-based CDRs processing system in Section~\ref{ch7:sec_prop_metho} with its results were published in a full paper~\cite{Arcolezi2021} at the 15th IFIP Summer School on Privacy and Identity Management.

\section{Multidimensional Frequency Estimates with GRR}\label{ch7:sec_multi_grr}

In the literature, there are few works for collecting multidimensional data with LDP based on random sampling (i.e., dividing users in groups)~\cite{xiao2,wang2019,Duchi2018,Wang2021_b,tianhao2017}. This technique reduces both dimensionality and communication costs, which will also be the focus of this chapter. Let $d\geq2$ be the total number of attributes, $\textbf{c}=[c_1,c_2,...,c_d]$ be the domain size of each attribute, $n$ be the number of users, and $\epsilon$ be the whole privacy budget. An intuitive solution is splitting (\textit{Spl}) the privacy budget, i.e., assigning $\epsilon/d$ for each attribute. The other solution is based on uniformly sampling without replacement (\textit{Smp}) only $r$ attribute(s) out of $d$ possible ones, i.e., assigning $\epsilon/r$ per attribute. Notice that both solutions satisfy $\epsilon$-LDP according to the sequential composition theorem
~\cite{dwork2014algorithmic}. More visually, Fig.~\ref{ch7:fig_spl_smp} illustrates both \textit{Spl} and \textit{Smp} solutions, with $r=1$ for \textit{Smp}.

\begin{figure}[!ht]
    \centering
    \includegraphics[width=1\linewidth]{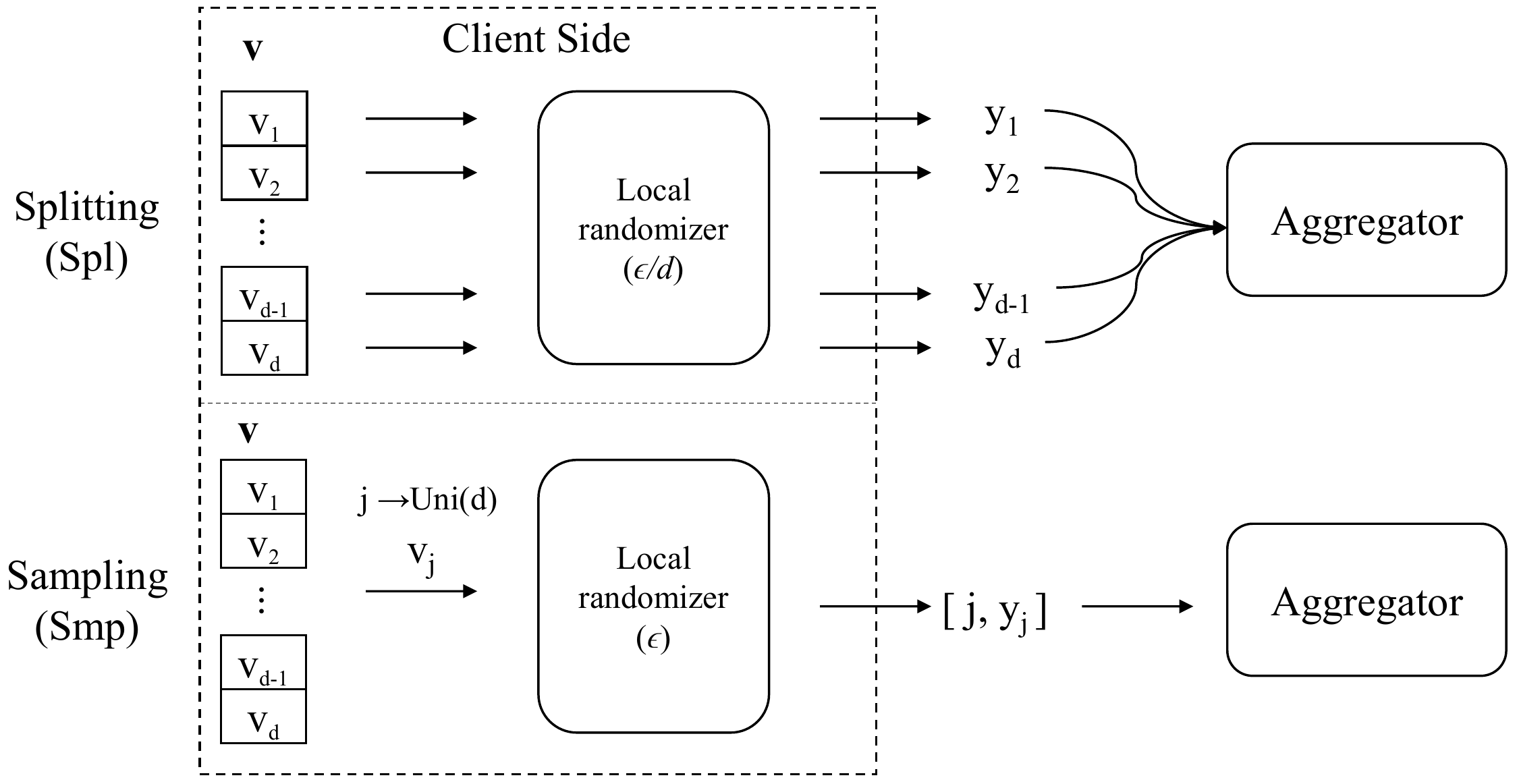}
    \caption{State-of-the-art solutions for multidimensional frequency estimates under $\epsilon$-LDP guarantees, where $Uni(d)=Uniform(\{1,2,...,d\})$.}
    \label{ch7:fig_spl_smp}
\end{figure}

For the first case, \textit{Spl}, replacing $\epsilon$ by $\epsilon/d$ in Eq.~\ref{eq:var_grr}, gives the variance ($\sigma^{2}_{1}$) of GRR as:

\begin{equation} \label{eq:var_spl_grr}
    \sigma^{2}_{1,GRR} = \frac{e^{\epsilon/d} + c_j - 2}{n(e^{\epsilon/d}-1)^2} \textrm{.}
\end{equation}

For the second case, \textit{Smp}, the number of users per attribute is reduced to $nr/d$. Thus, replacing $n$ by $nr/d$ and $\epsilon$ by $\epsilon/r$ in Eq.~\ref{eq:var_grr}, gives the variance ($\sigma^2_2$) of GRR as:

\begin{equation} \label{eq:var_smp_grr}
    \sigma^{2}_{2,GRR} = \frac{d(e^{\epsilon/r} + c_j - 2)}{nr(e^{\epsilon/r}-1)^2} \textrm{.}
\end{equation}

Obviously, if $r=d$ in Eq.~\eqref{eq:var_smp_grr}, one has Eq.~\eqref{eq:var_spl_grr}. Practically, the objective is reduced to finding $r$, which minimizes $\sigma^{2}_{2,GRR}$. This way, to find the optimal $r$, we first multiply $\sigma^{2}_{2,GRR}$ in Eq.~\eqref{eq:var_smp_grr} by $\epsilon$. Without loss of generality, minimizing $\sigma^{2}_{2,GRR}$ is equivalent to minimizing $\frac{\epsilon e^{\epsilon/r}}{r(e^{\epsilon/r}-1)^2}$. Hence, let $x=r/\epsilon$ be the independent variable, $\sigma^2_{2,GRR}$ can be rewritten as $y=\frac{1}{x}\cdot \frac{e^{1/x}}{(e^{1/x}-1)^2}$ as a function over $x$. It is not hard to prove that $y$ is an increasing function w.r.t. $x$ and, hence, we have a minimum and optimal when $r=1$ (a single attribute per user). We highlight that this is a common result in the LDP literature obtained for different protocols and contexts~\cite{xiao2,wang2019,Wang2021_b,tianhao2017,Jianyu2020,Wang2021,erlingsson2020encode,bassily2017practical}.

\section{LDP-Based Collection of CDRs for Mobility Reports} \label{ch7:sec_prop_metho}

In this section, according to the system overview in Fig.~\ref{ch7:fig_system_model}, we detail our LDP-based solution (Section ~\ref{ch7:sub_prop_metho}) regarding the \textit{Cumulative frequency estimates} scenario outlined in the introduction and its limitations (Section ~\ref{ch7:sub_limitations}). 

\subsection{Proposed methodology}\label{ch7:sub_prop_metho}

Fig.~\mbox{\ref{ch7:fig_flow_chart}} illustrates the overview of our LDP-based CDRs processing system applied to generate mobility reports by days and by the \textit{union} of consecutive days in a flow chart. Without loss of generality, we present our methodology for days, but it can be extended to any timestamp one desires. 

\begin{figure}[!ht]
\centering
\includegraphics[width=1\linewidth]{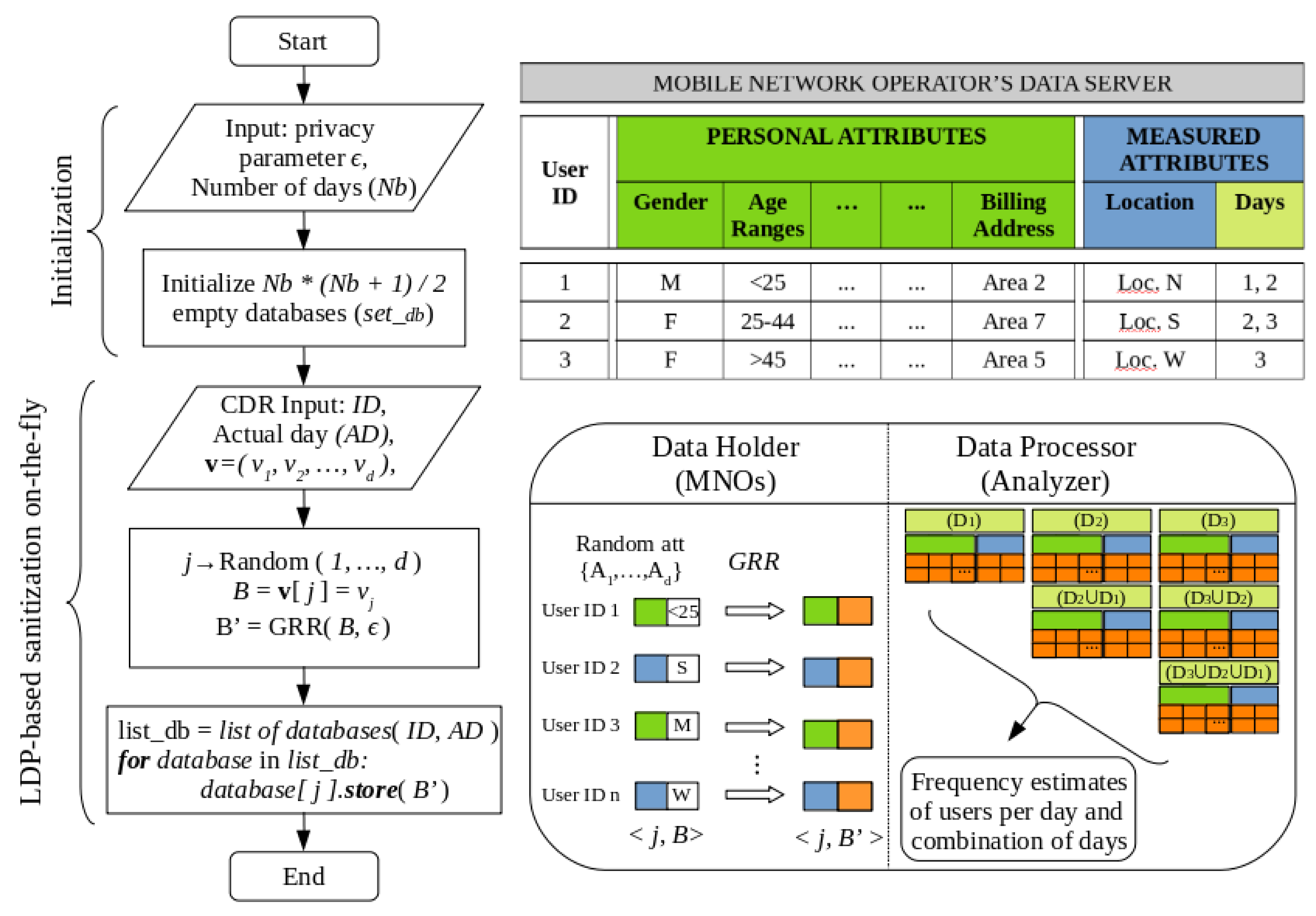}
\caption{Overview of our LDP-based CDRs processing system to generate mobility reports by days and by the \textit{union} of consecutive days.} \label{ch7:fig_flow_chart}
\end{figure}

\begin{enumerate}
    \item \textbf{Initialization.} According to the left side of Fig.~\mbox{\ref{ch7:fig_flow_chart}}, MNOs should define the privacy guarantee $\epsilon$, which is uniform for all users. Let $Nb$ be the whole period of analysis (e.g., total number of days) known a priori, e.g., before an event like the FIMU. So, the data processor should initialize $Nb(Nb+1)/2$ empty databases, \textit{which corresponds to all days and union of consecutive days}. For instance, if $Nb=3$ one will have $set_{db}=\{D_1, D_2, D_2\cup D_1, D_3, D_3\cup D_2, D_3\cup D_2\cup D_1\}$. 

    \item \textbf{LDP-based sanitization on-the-fly.} Similar to MNOs CDRs processing systems (e.g.,~\cite{fluxvision1}), MNOs will continue to be responsible for applying a privacy-preserving technique. In our proposal, the privacy-preserving technique corresponds to an LDP-based sanitization model on-the-fly using the GRR~\cite{kairouz2016discrete} protocol explained in Section~\ref{ch2:sub_ldp}. Besides, we assume that MNOs store information about their clients such that each user $u_i$ ($1 \le i\le n$) has a discrete-encoded tuple record $\textbf{v}=(v_1,v_2,...,v_d)$, which contains the values of $d$ categorical attributes $A=\{A_1,A_2,...,A_d\}$ (according to the pre-defined mobility indicators, e.g., as the table in the right side of Fig.~\ref{ch7:fig_flow_chart}). Since we have multiple attributes, we adopt the \textit{Smp} solution from~\ref{ch7:sec_multi_grr}, which randomly samples a \textbf{single attribute per user} and uses the whole privacy budget $\epsilon$ to sanitize it. For the rest of this chapter, we will refer to this solution as Smp[GRR].
    
    Therefore, we propose that MNOs apply GRR a \textit{single time} (i.e., once and for all) for each users' sampled data $B=v_j$ and consistently use the sanitized value $B'$ for all future reports $\langle j, B' \rangle$. In other words, MNOs would not use any raw data anymore but, rather, an $\epsilon$-LDP version of their clients' data. Since GRR does not utilize any particular encoding, the uncorrelated $\epsilon$-LDP values could be made `anonymous' within all other reports, thus, allowing longitudinal data collection with no risk of creating a `unique ID'. Notice that our solution can not ensure ``anonymity on-the-fly", but instead, the $\epsilon$-LDP values could probably be ``hidden in the crowd" depending on the domain size of the attributes. 
        
    Moreover, on the MNOs' side, each CDR contains metadata such as the user's ID and timestamp (Actual Day -- AD). Hence, for each user, MNOs calculate a $list_{db}$ that represents which databases (days and union of consecutive days) the $\epsilon$-LDP data should be stored by the data processor (with no ID). For instance, the $list_{db}$ can be calculated by knowing the days this user ``was present" (by CDRs) or, similarly, by using Bloom filters~\cite{bloom_f} to de-duplicate the users' presence throughout days. We later explain in an example how to calculate $list_{db}$.
   
    \item \textbf{Generating statistics.} Throughout the analysis period, the data processor can estimate the frequency of the population for all $d$ attributes for the database of each day and the combinations of past consecutive days. Finally, at the end of the analysis period, the analyst will have $Nb(Nb+1)/2$ databases, with the estimated frequencies for all $d$ attributes in each combination of \textit{union} of consecutive days.
    
\end{enumerate}

\textbf{Example to calculate $list_{db}$.} To calculate the $list_{db}$ for each user, consider the right side of Fig.~\ref{ch7:fig_flow_chart}, which has data for $Nb=3$ days. First, let Actual Day $AD=1$ (the first day of analysis). So, user $ID=1$ is detected by the MNO and his $list_{db}=\{D_1, D_2\cup D_1, D_3\cup D_2\cup D_1\}$. The reason behind this is that if this user does not appear anymore, we have considered his $\epsilon$-LDP report in the whole analysis. Next, let $AD=2$. For the same user $ID=1$, the MNO knows he was present in both two days, hence, his $list_{db}=\{D_2, D_3\cup D_2\}$ as the previous day his $\epsilon$-LDP report was already stored in $D_2\cup D_1$ and $D_3\cup D_2\cup D_1$. And, for the user $ID=2$, her $list_{db}=\{D_2, D_2\cup D_1, D_3\cup D_2, D_3\cup D_2\cup D_1\}$ to guarantee her $\epsilon$-LDP report is considered in each past union and future ones in the case she does not show up anymore. Without loss of generality the same procedure is applied when $AD=3$.

\subsection{Limitations}\label{ch7:sub_limitations}

The first key limitation we see in our methodology is the storage factor, which is due to collecting users' data per day and union of consecutive days. For instance, data processors need to initialize $Nb(Nb+1)/2$ empty databases where if one wishes to analyze an enhanced detailed scenario, it grows up very fast (i.e., with at least an $Nb^2/2$ factor). However, this scenario is only intended in special mobility analytics cases, e.g., tourism events, natural disasters, following up the spread of diseases, etc. In addition, there is high power for computation and powerful tools to deal with big data nowadays. One way to smoothen this problem in, e.g., daily scenarios, is to exclude the stored data after retrieving statistics. 

Further, similar to the FIMU-DB explained in Chapter~\ref{chap:chapter3}, there is an important loss of information by not calculating the intersection of users through days. That is, we propose to compute the number of users per union of consecutive days as it may have very few users per intersection (see our enhanced mobility scenario in Table~\ref{ch4:tab_final_ms} of Chapter~\ref{chap:chapter4}). The latter would not produce accurate frequency estimations due to the LDP formulation, which is data-hungry. At first glance, one can surely compute the pair-wise intersection for any two days in the analysis period using $|A \cap B| = |A| + |B| - |A\cup B|$. One possibility of solving the whole problem is to use the methodology developed in Chapter~\ref{chap:chapter4}, which models our proposed mobility scenario (days and union of consecutive days) as a linear program to find a solution for all intersections. Besides, for the case where one can have sufficient data samples per pair-, triple-, ..., and $Nb$-wise intersections, one can easily extend our methodology for such a case. However, the storage factor is even bigger as data processors would have to initialize $2^{Nb}-1$ empty databases (all combinations of intersections of days).

Lastly, the \textit{single time} sanitization step implies always reporting the same sanitized value $B'$ for the unique sampled attribute, which can be effective in the cases where the true client's data does not vary (static)~\cite{rappor,microsoft}. On the other hand, a measured attribute such as location is dynamic. Therefore, for the users who sample a dynamic attribute, for each different value, a new sanitized value would be generated, thus accumulating the privacy budget $\epsilon$ by the sequential composition theorem~\cite{dwork2014algorithmic}. Yet, in our privacy-preserving architecture (Fig.~\ref{ch7:fig_system_model}), the collected/stored $\epsilon$-LDP reports are `uncorrelated' from users, as no ID will be stored. Thus, improving the privacy of users.

\section{Results and Discussion} \label{ch7:sec_results}

In this section, we present the setup of our experiments in Section~\ref{ch7:sub_setup_experiments}. Next, we report the results in Section~\ref{sub:results_LDP_CDRs} obtained by applying our proposed methodology in the MS-FIMU dataset generated in Chapter~\ref{chap:chapter4}. Lastly, we discuss our work and review related work in Section \ref{ch7:sub_discussion}. 

\subsection{Setup of Experiments} \label{ch7:sub_setup_experiments}

\textbf{Environment.} All algorithms were implemented in Python 3.8.8 with NumPy 1.19.5 and Numba 0.53.1 libraries. The codes we developed for the preliminary results in paper~\cite{Arcolezi2021} are available in a Github repository\footnote{\url{https://github.com/hharcolezi/ldp-protocols-mobility-cdrs}.}. In all experiments of this manuscript, we report average results over 100 runs as LDP algorithms are randomized.

\textbf{Dataset.} We experimented with the MS-FIMU dataset from Chapter~\ref{chap:chapter4}. In these experiments, we excluded the data from `Foreign tourist' users regarding the `Visitor category' attribute. Hence, the filtered dataset aggregates a population of $87,098$ unique French users with $6$ attributes, where $5$ are static (`Visitor category' excluded) and $1$ is dynamic, along $Nb=7$ days (on average $\sim 26,000$ unique users per day). For more details about the attributes of this dataset, please refer to Section~\ref{ch4:info_ms_fimu}. Notice that the `Region' attribute only considers $22$ regions in France since we excluded Foreign people.

\textbf{Evaluation and metrics.} Let $Nb=7$ days be the whole analysis period, we then have $Nb(Nb+1)/2=28$ databases considering each day and union of consecutive days combination as $set_{db}=\{D_1, ..., D_3\cup D_2\cup D_1,..., D_7\cup D_6\cup...\cup D_1\}$. Notice that, at the same time, we can evaluate the privacy-utility trade-off according to data size, i.e., each day has around $26,700$ unique users, while the last union of consecutive days $D_7\cup D_6\cup...\cup D_1$ has all $87,098$ users. 

We vary the privacy parameter in the range $\epsilon=[1,2,3,4,5,6]$, which is within range of values experimented in the literature for multidimensional data (e.g., in~\cite{wang2019} the range is $\epsilon=[0.5,...,4]$ and in~\cite{Wang2021_b} the range is $\epsilon=[0.1,...,10]$). 

To evaluate our results, we use the mean squared error (MSE) metric averaged per the number of attributes $d$. Thus, for each attribute $j$ at time $t \in [1,Nb]$, we compute for each value $v_i \in A_j$ the estimated frequency $\hat{f}(v_i)$ and the real one $f(v_i)$ and calculate their differences. More precisely,

\begin{equation}
    MSE_{avg} = \frac{1}{d} \sum_{j \in [1,d]} \frac{1}{|A_j|} \sum_{v_i \in A_j}(f(v_i) - \hat{f}(v_i) )^2 \textrm{.}
\end{equation}

\textbf{Methods evaluated.} We consider for evaluation the two solutions from Section~\ref{ch7:sec_multi_grr}: 
\begin{itemize}
    \item Spl[GRR]: Splitting the privacy budget over the number of attributes $d$, i.e., for each user, send all value with $\epsilon/d$-LDP. 
    
    \item Smp[GRR]: Sampling a single attribute and send it with the whole privacy budget, i.e., for each user, send a sampled value with $\epsilon$-LDP. This is the solution adopted in our LDP-based CDRs processing system (cf. Fig.~\ref{ch7:fig_flow_chart}).
\end{itemize}

\subsection{Cumulative frequency estimates results}\label{sub:results_LDP_CDRs}

Fig.~\ref{ch7:fig_mean_mse_vhs} illustrate for both Spl[GRR] and Smp[GRR] methods, the averaged $MSE_{avg}$ per the number of days $Nb$ (y-axis) according to the privacy parameter $\epsilon$ (x-axis). With more details, Fig.~\ref{ch7:fig_results_rmse_all} illustrates for both methods the $MSE_{avg}$ results (y-axis) according to the privacy budget $\epsilon$ for each day and the union of consecutive days (x-axis), e.g., `321' refers to $D_3\cup D_2\cup D_1$. Lastly, for the sake of illustration, Fig.~\ref{ch7:fig_freq_est} illustrates multidimensional frequency estimates for a single day ($D_7$) and for the union of all consecutive days ($D_7\cup D_6\cup...\cup D_1$) using the adopted Smp[GRR] solution and $\epsilon=1$.

\begin{figure}[!ht]
\centering
\includegraphics[width=0.65\linewidth]{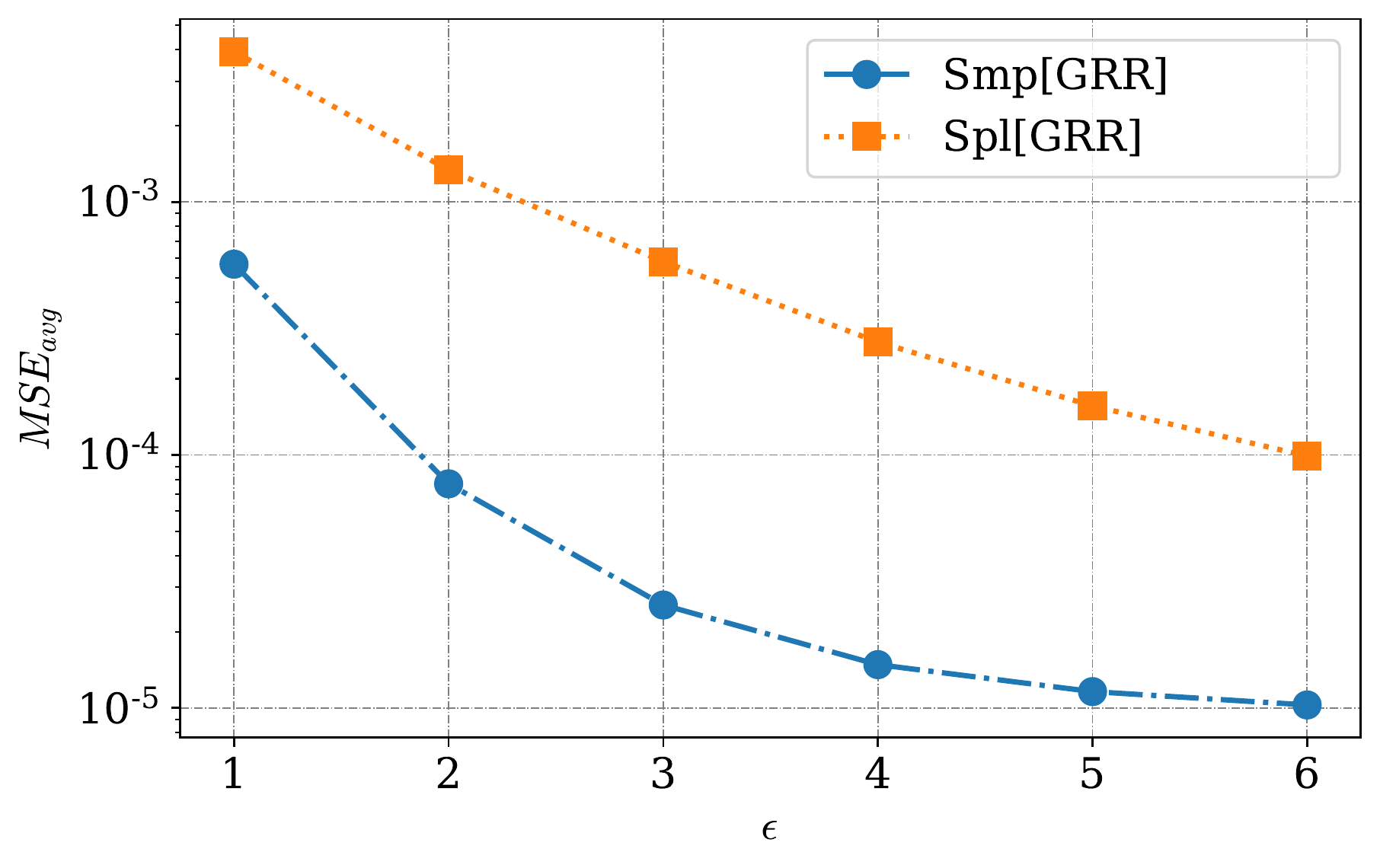}
\caption{Averaged $MSE_{avg}$ per the number of days $Nb$ (y-axis) varying $\epsilon$ (x-axis) on the MS-FIMU dataset comparing Spl[GRR] and Smp[GRR].} \label{ch7:fig_mean_mse_vhs}
\end{figure}

\begin{figure}[!ht]
\centering
\includegraphics[width=1\linewidth]{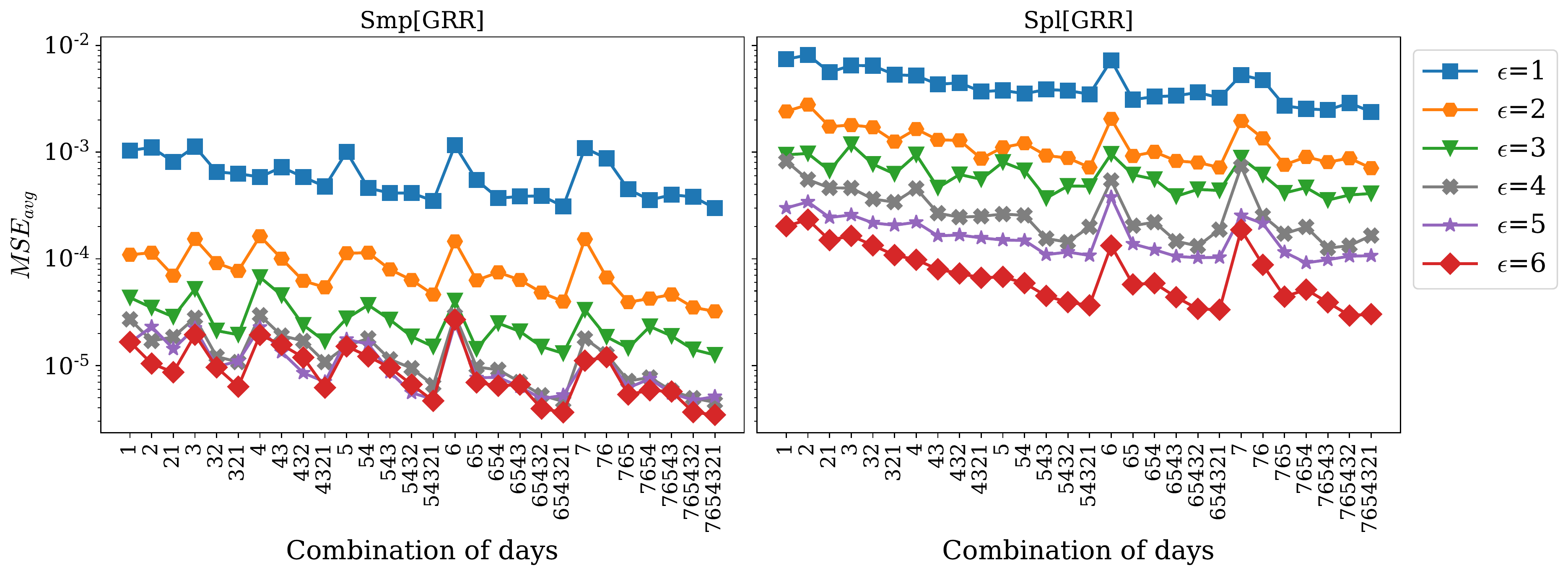}
\caption{$MSE_{avg}$ (y-axis) analysis comparing Smp[GRR] (left-side plot) and Spl[GRR] (right-side plot) by varying the privacy budget $\epsilon$ on each combination of days (x-axis) individually.} \label{ch7:fig_results_rmse_all}
\end{figure}

\begin{figure}[!ht]
\centering
\includegraphics[width=1\linewidth]{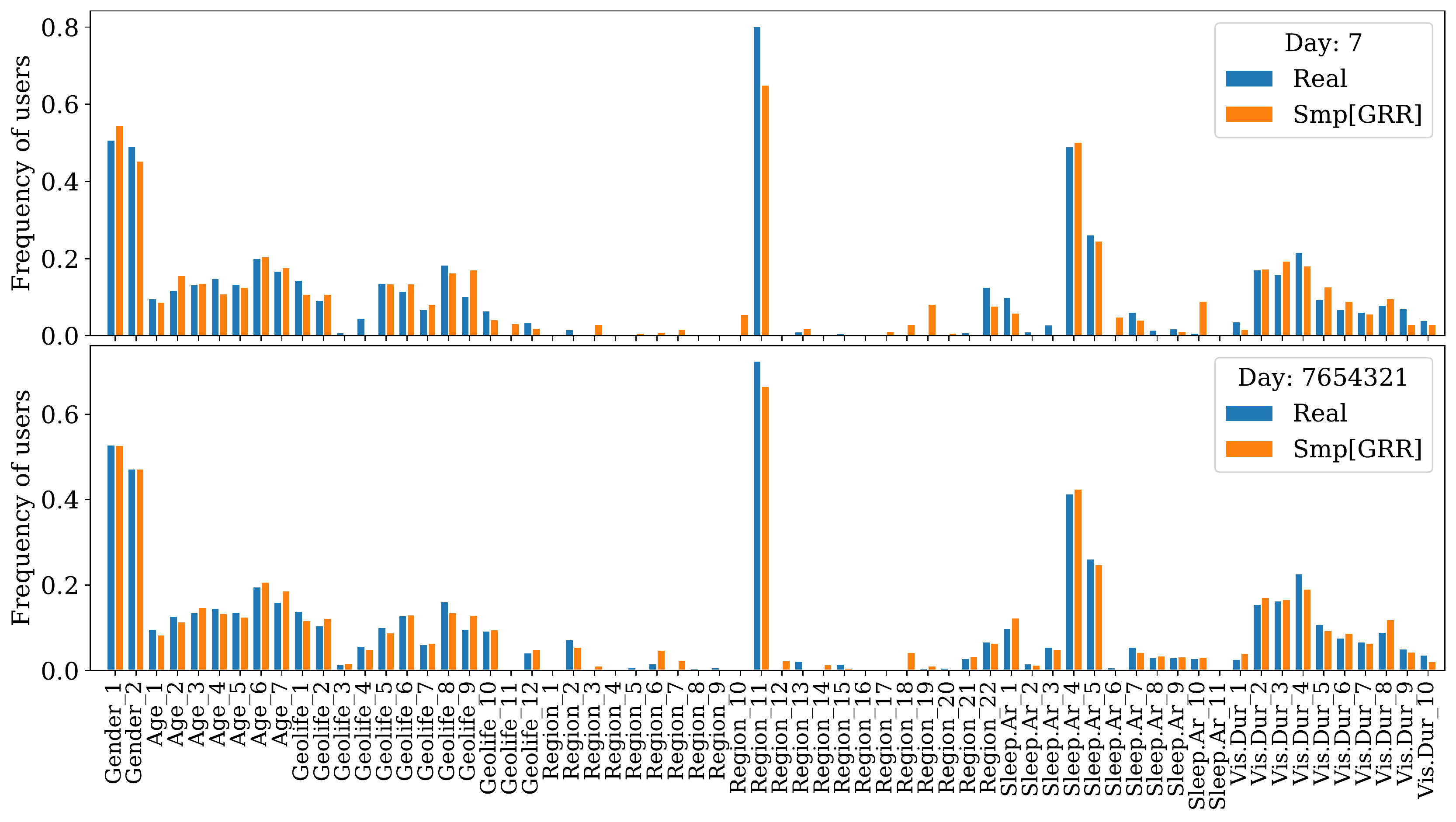}
\caption{Comparison between real and estimated frequencies for a single day ($D_7$) and to the union of all consecutive days ($D_7\cup D_6\cup...\cup D_1$) using the adopted Smp[GRR] solution and $\epsilon=1$.} \label{ch7:fig_freq_est}
\end{figure}

As one can notice in Fig.~\ref{ch7:fig_mean_mse_vhs}, overall, the proposed Smp[GRR] solution adopted in our LDP-based CDRs processing system consistently and considerably outperforms the baseline Spl[GRR]. In Fig.~\ref{ch7:fig_results_rmse_all}, except for $\epsilon=1$, the curves of Smp[GRR] are under even to the best one of Spl[GRR] using the highest privacy budget $\epsilon=6$. As also highlighted in the literature~\mbox{\cite{wang2019,xiao2,tianhao2017}}, privacy budget splitting is sub-optimal, which leads to higher estimation error. Indeed, in a multidimensional setting, the combination of privacy budget splitting and high numbers of values in a given attribute (e.g., $Region$ with 22 values) leads to lower data utility even for high privacy regimes. On the other hand, the Smp[GRR] solution based on random sampling uses the whole privacy budget to a single attribute, and this problem is, hence, minimized. However, there is also an error provided by the sampling technique, which is due to observing a sample instead of the entire population.

Moreover, in Figs.~\ref{ch7:fig_results_rmse_all} and~\ref{ch7:fig_freq_est}, it is noteworthy that the $MSE_{avg}$ decreases as the data size increases. Intuitively, this is due to LDP, which requires a large amount of data to guarantee a good balance of noise. In our case, single days (e.g., $D_7$) have less users comparing to the union of all consecutive days (e.g., $D_7\cup D_6\cup...\cup D_1$) and, hence, single days are generally the peak-values in Fig.~\ref{ch7:fig_results_rmse_all}. Indeed, these results are consistent with Eqs.~\eqref{eq:var_spl_grr} and~\eqref{eq:var_smp_grr}, where the variances are decreasing functions over the number of users $n$. Yet, these peak values are smoothed using Smp[GRR], which induces less error by sampling a single attribute for each user.

Lastly, we highlight that the objective of our experiments was to measure the accuracy loss (based on the $MSE_{avg}$ error metric) of using our LDP-based ``sanitization on-the-fly" system in comparison with the original statistics produced by an ``anonymity on-the-fly" based system. As shown in the results, accurate multidimensional frequency estimates could be achieved for practical purposes with strong privacy guarantees (see, e.g., Fig.~\ref{ch7:fig_freq_est} with $\epsilon=1$). On the other hand, in terms of the overall privacy budget $\epsilon$ per user, in the worst case, the sequential composition theorem~\cite{dwork2014algorithmic} applies for each data release. As also pointed out in~\cite[Section 8.4, Table 2]{linkedin} and in~\cite{desfontaines_dp_real_world}, real-world DP systems utilize $\epsilon$ as large as the ones experimented in this manuscript on daily basis. Thus, some future implementation of our LDP-based CDRs processing system to generate mobility reports is a potential perspective.

\subsection{Discussion and Related Work}\label{ch7:sub_discussion}

As reviewed in Chapters~\ref{chap:chapter1} and~\ref{chap:chapter2}, mobile phone CDRs have been largely used to analyze human mobility in several contexts, e.g., the spread of infectious diseases~\cite{Kishore2019,Lu2012,ebola,deAlarcon2021,Grantz2020,Vespe2021}, natural disasters~\cite{Hong2018,Dujardin2020,Lu2012}, tourism~\cite{fluxvision1,heerschap2014innovation,Merrill2020}, and so on. However, concerning privacy, de Montjoye et al.~\cite{deMontjoye2013} show that humans follow particular patterns, which allows predicting human mobility with high accuracy. For instance, in a dataset of $1.5$ million users, the authors showed that $95\%$ of this population can be re-identified using four approximate locations and their timestamps. Besides, Zang and Bolot~\cite{Zang2011} have performed extensive experiments showing that the anonymization of location data from CDRs using \textit{k}-anonymity~\cite{samarati1998protecting,SWEENEY2002} leads to privacy risks. Further, in non-technical papers, de Montjoye et al.~\cite{deMontjoye2018} discuss the conscientious use of mobile phone data for mobility analytics, and Buckee~\cite{Buckee2014} highlights both the importance of collecting CDRs to analyzing human mobility in low-income countries and the privacy concerns that rise up. 

Because of these privacy issues, MNOs tend to publish aggregated mobility data~\cite{deAlarcon2021,Xu2017,Vespe2021,Tu2018,fluxvision1}, e.g., the number of users by coarse location at a given timestamp or the number of users in a single location (cf. Section~\ref{ch3:fimu_db}). However, as recent studies have shown, even aggregated mobility data can be subject to membership inference attacks~\cite{Pyrgelis2017,Pyrgelis2020} and users' trajectory recovery attack~\cite{Tu2018,Xu2017}. More precisely, the later authors in~\cite{Tu2018,Xu2017} showed that their attack reaches accuracies as high as $73\% \sim 91\%$, suggesting generalization and perturbation through DP~\cite{Dwork2006,Dwork2006DP,dwork2014algorithmic} as a means to mitigate this attack. Therefore, it is vital to deploying systems that allow analyzing human mobility (e.g., through CDRs) with strong privacy-preserving guarantees. 

With these elements in mind and with the motivating questions \textbf{Q$_1$} and \textbf{Q$_2$} from the beginning of this chapter, we have proposed a solution beyond ``anonymity on-the-fly" since aggregated location data are still at risk of leaking private information. Indeed, our solution considers ``sanitization on-the-fly" with an LDP protocol, in which rather than transmitting aggregated raw data for the analyzer, we propose that MNOs sanitize each users' data independently (as if it was made by the user) and send it to the \textit{untrusted analyzer}. 

As we present in this chapter, implementing the Smp[GRR] solution in our methodology could ensure that $\epsilon$-LDP private reports will not become indirect \textit{unique IDs}. The reason behind this is because no particular encoding is used with GRR and, thus, $\epsilon$-LDP values are generic to any user. So, it is possible to utilize the sanitized value in longitudinal studies if the domain size of attributes is not big. Besides, notice that each time users connect, MNOs will always report the same attribute out of $d$ possible ones. That is, even though users appear all days in the analysis (in this dataset $\sim 0.2\%$ of users), MNOs will never report the remaining $d-1$ attributes, which were not sampled. Lastly, our solution would also safeguard MNOs as no \textit{raw data} would be shared with the analyzer for the purpose of human mobility analysis, but, rather, $\epsilon$-LDP values that are robust to post-processing. One clear limitation of our LDP-based CDRs processing system is that the recent privacy amplification by shuffling~\cite{Balle2019,Erlingsson2019,erlingsson2020encode,Wang2020,li2021privacy} does not apply. Although all users' IDs are excluded, the signals' order is not hidden due to ``sanitization on-the-fly". That is, the $\epsilon$-LDP reports are not aggregated in ``batches" to provide some ``anonymity" and profit from amplification. Therefore, extending our solution to the shuffle DP model is a potential and intended perspective.

\section{Conclusion} \label{ch7:sec_conclusion}

This chapter investigated the problem of collecting and analyzing CDRs-based data to generate multidimensional frequency estimates throughout time. We proposed an LDP-based CDRs processing system as an extension of ``anonymity on-the-fly" to satisfy ``\textit{sanitization} on-the-fly", thus, providing higher privacy guarantees for each user. With our proposal, we can have preliminary answers to the motivating questions \textbf{Q$_1$} and \textbf{Q$_2$} highlighted at the beginning of this chapter. That is, such a privacy-preserving system would allow MNOs to share the sanitized data with \textit{untrusted analyzers}, with a more strict setting that allows sanitizing each data independently on-the-fly. As shown in the results, the proposed LDP-based CDRs processing system using Smp[GRR] achieves accurate multidimensional frequency estimates for practical purposes (\textit{cf.} Fig.~\ref{ch7:fig_freq_est}, for example), proving its effectiveness in producing mobility reports as the original ones from OBS. 

On the one hand, this is because GRR has low utility loss for attributes with small domain sizes. On the other hand, if MNOs intend to pre-define a mobility indicator on a higher domain (e.g., the number of people in each $\sim 1,000$ bus stops of a given city), other protocols like OUE~\cite{tianhao2017} could provide higher data utility, as its variance does not depend on the domain size. However, since OUE is based on unary-encoding (cf. Section~\ref{ch2:sub_ldp}), it would probably generate a sanitized value similar to a \textit{unique ID}. In other words, analyzers would be able to use the \textit{unique} OUE-based reports to track individuals across many days. One possible solution would be using two rounds of sanitization (i.e., \textit{memoization}~\cite{rappor,microsoft}), also mentioned in Section~\ref{ch2:sub_ldp}. Indeed, this is one of the core contributions of the next Chapter~\ref{chap:chapter5}, which investigates how to improve the utility of LDP protocols for longitudinal (based on memoization) and multidimensional frequency estimates.


\chapter{Multidimensional Frequency Estimates Over Time With LDP: Utility Focus}   \label{chap:chapter5}

In Chapter~\ref{chap:chapter7}, we focused on a more \textbf{practical} perspective for the problem of generating multidimensional mobility reports throughout time from CDRs. In this chapter, we abstracted this problem and, thus, we contribute on the \textbf{theoretical} aspect by optimizing \textbf{the utility} of LDP protocols for \textit{longitudinal} and \textit{multidimensional} frequency estimates. This way, the more the estimated frequencies approximate the real ones, the more ML models can take advantage of when performing learning/prediction tasks~\cite{ElSalamouny2020}. Notice that \textbf{our solutions are generic to any LDP application scenario} (e.g., collecting user behavior in software~\cite{rappor,microsoft,apple}). We invite the reader to refer to Chapter~\ref{chap:chapter2} for the background on LDP. 

\section{Introduction} \label{ch5:sec_introduction}

In this chapter, we focus on the problem of private frequency (or histogram) estimation of multiple attributes throughout time with LDP. As in previous Chapter~\ref{chap:chapter7}, we assume there are $d$ attributes $A=\{A_1,A_2,...,A_d\}$, where each attribute $A_j$ with a discrete domain has a specific number of values $c_j=|A_j|$. Each user $u_i$ for $i \in \{1,2,...,n\}$ has a tuple $\textbf{v}^{(i)}=(v^{(i)}_{1},v^{(i)}_{2},...,v^{(i)}_{d})$, where $v^{(i)}_{j}$ represents the value of attribute $A_j$ in record $\textbf{v}^{(i)}$. Thus, for each attribute $A_j$ at time $t \in [1,\tau]$, the aggregator's goal is to estimate a $c_j$-bins histogram, including the frequency of all values in $A_j$. 

On tackling both longitudinal and multidimensional settings, one needs to consider the allocation of the privacy budget, which can grow extremely quickly due to the composition theorem~\cite{dwork2014algorithmic}. So, first, we focus on solving the multidimensional aspect with a random sampling-based solution~\cite{xiao2,wang2019,Duchi2018,Wang2021_b}, also used in Chapter~\ref{chap:chapter7}. Next, we considered the \textit{memoization}-based framework~\cite{rappor,microsoft,erlingsson2020encode} to solve the longitudinal setting, which allows having an upper bound to the privacy budget. In both cases, we extended the analysis of three state-of-the-art protocols, namely, GRR~\cite{kairouz2016discrete}, OUE~\cite{tianhao2017}, and SUE~\cite{rappor}, presented in Section~\ref{ch2:sub_ldp}. Thus, combining the optimal cases of each setting, we propose a new solution named \underline{A}daptive \underline{L}DP for \underline{LO}ngitudinal and \underline{M}ultidimensional \underline{FRE}quency \underline{E}stimates (ALLOMFREE). We demonstrate through experimental validations using four real-world datasets the advantages of ALLOMFREE over state-of-the-art protocols~\cite{rappor,tianhao2017}, with a gain of accuracy, on average, ranging from $10\%$ up to $55\%$ with the analyzed range of $\epsilon$-LDP guarantees. 

The rest of this chapter is organized as follows. In Section~\ref{ch5:sec_multidimensional}, we extend the analysis of OUE and SUE to multidimensional data collections. In Section~\ref{ch5:sec_longitudinal} we present the \textit{memoization}-based framework for longitudinal data collections, the extension and analysis of longitudinal GRR and longitudinal UE-based protocols; the numerical evaluation of their performance, and we present our ALLOMFREE solution. In Section~\ref{ch5:sec_results_discussion}, we present experimental results, discuss our results and review related work. Lastly, in Section~\ref{ch5:sec_conc}, we present the concluding remarks. The development in Sections~\ref{ch5:sec_multidimensional} and~\ref{ch5:sec_longitudinal} and the results presented in Section~\ref{ch5:sec_results_discussion} were submitted as part of a full article~\cite{Arcolezi2021_allomfree} to the Digital Communications and Networks journal.

\section{Multidimensional Frequency Estimates with LDP}\label{ch5:sec_multidimensional}

As reviewed in Section~\ref{ch7:sec_multi_grr}, there are mainly two solutions for collecting multidimensional data with LDP (see Fig.~\ref{ch7:fig_spl_smp}). In this section, we will follow the same development used in Section~\ref{ch7:sec_multi_grr} for two other protocols, namely, SUE and OUE. Let $d\geq2$ be the total number of attributes, $\textbf{c}=[c_1,c_2,...,c_d]$ be the domain size of each attribute, $n$ be the number of users, and $\epsilon$ be the whole privacy budget. 

For the first case, \textit{Spl}, replacing $\epsilon$ by $\epsilon/d$ in Eqs.~\eqref{eq:var_sue} and~\eqref{eq:var_oue} give the variances ($\sigma^{2}_{1}$) of SUE and OUE, respectively, as:

\begin{equation} \label{ch5:eq_var_spl_ue}
\begin{split}
\sigma^{2}_{1,SUE} & = \frac{e^{\epsilon/2d}}{n(e^{\epsilon/2d}-1)^2}  \textrm{,}   \\
\sigma^{2}_{1,OUE} & = \frac{4e^{\epsilon/d}}{n(e^{\epsilon/d}-1)^2} \textrm{.}
\end{split}
\end{equation}

For the second case, \textit{Smp}, the number of users per attribute is reduced to $nr/d$. Thus, replacing $n$ by $nr/d$ and $\epsilon$ by $\epsilon/r$ in Eqs.~\ref{eq:var_sue} and~\ref{eq:var_oue} give the variances ($\sigma^2_2$) of SUE and OUE, respectively, as:

\begin{equation} \label{ch5:eq_var_smp_ue}
\begin{split}
\sigma^{2}_{2,SUE} & = \frac{d(e^{\epsilon/2r})}{nr(e^{\epsilon/2r}-1)^2} \textrm{,}\\
\sigma^{2}_{2,OUE} & = \frac{d(4e^{\epsilon/r})}{nr(e^{\epsilon/r}-1)^2} \textrm{.}
\end{split}
\end{equation}

Obviously, if $r=d$ in Eq.~\eqref{ch5:eq_var_smp_ue}, one has Eq.~\eqref{ch5:eq_var_spl_ue}. Practically, the objective is reduced to finding $r$, which minimizes $\sigma^2_2$ for each protocol. This way, to find the optimal $r$ for each protocol, we first multiply each $\sigma^2_2$ in Eq.~\eqref{ch5:eq_var_smp_ue} by $\epsilon$. Without loss of generality, minimizing $\sigma^{2}_{2,SUE}$ and $\sigma^{2}_{2,OUE}$ is equivalent to minimizing $\frac{\epsilon e^{\epsilon/2r}}{r(e^{\epsilon/2r}-1)^2}$ and $\frac{\epsilon e^{\epsilon/r}}{r(e^{\epsilon/r}-1)^2}$ (similar to GRR in Section~\ref{ch7:sec_multi_grr}), respectively. Hence, let $x=r/\epsilon$ be the independent variable, $\sigma^2_{2,OUE}$ can be rewritten as $y_1=\frac{1}{x}\cdot \frac{e^{1/x}}{(e^{1/x}-1)^2}$ and $\sigma^2_{2,SUE}$ can be rewritten as $y_2=\frac{1}{x}\cdot \frac{e^{1/2x}}{(e^{1/2x}-1)^2}$ as functions over $x$. It is not hard to prove that both $y_1$ and $y_2$ are increasing functions w.r.t. $x$ and, hence, we have a minimum and optimal when $r=1$ (a single attribute per user) for both protocols too. 

\textbf{Therefore, in this chapter, we adopt the multidimensional setting \textit{Smp} with $r=1$}. In this setting, users tell the data collector which attribute was sampled, and its perturbed value ensuring $\epsilon$-LDP by applying either GRR or UE-based protocols; the data analyst server would not receive any information about the remaining $d-1$ attributes. .

\section{Longitudinal Frequency Estimates with LDP}\label{ch5:sec_longitudinal}

In this section, we present the \textit{memoization}-based framework for longitudinal data collections (Section~\ref{sub:memoization}). Next, we present the analysis of longitudinal GRR (Section~\ref{subsub:l_de}) and longitudinal UE-based protocols (Section~\ref{subsub:l_ue}). Lastly, we evaluate numerically the extended longitudinal protocols (Section~\ref{sub:analysis_long}) and we propose our ALLOMFREE solution (Section~\ref{sub:allomfree}).

\subsection{Memoization-based data collection with LDP} \label{sub:memoization}

In the literature, many works study how to collect and analyze categorical data longitudinally based on \textit{memoization}~\cite{rappor,microsoft,erlingsson2020encode}. The key idea behind memoization is using two sanitization processes. The first round ($RR_1$) replaces the real value $B$ with a sanitized one $B'$ with a higher epsilon ($\epsilon_{\infty}$). Whenever one intends to report $B$, $B'$ shall be reused to produce other sanitized versions $B''$ with lower epsilon values. Notice that the second sanitization ($RR_2$) is a \textit{must} to avoid `averaging attacks', in which adversaries can reconstruct the true value from multiple sanitized versions of it. This technique allows achieving privacy over time with an upper bound value of $\epsilon_{\infty}$-LDP.

Let $A_j=\{v_1,v_2,...,v_{c_j}\}$ be a set of $c_j=|A_j|$ values of a given attribute and let $\epsilon$ be the privacy budget. In this chapter, for both $RR_1$ and $RR_2$ steps, we will apply either GRR, SUE, or OUE. The unbiased estimator in Eq.~\eqref{eq:est_pure} for the frequency $f(v_i)$ of each value $v_i$ for $i \in [1,c_j]$ is now extended to: 

\begin{equation}\label{eq:est_longitudinal}
    \hat{f}_L(v_i) = \frac{N_i - nq_1(p_2-q_2) - nq_2}{n(p_1-q_1)(p_2-q_2)} \textrm{,}
\end{equation}

in which $N_i$ is the number of times the value $v_i$ has been reported, $n$ is the total number of users, $p_1$ and $q_1$ are the parameters used by an LDP protocol for $RR_1$, and $p_2$ and $q_2$ are the parameters used by an LDP protocol for $RR_2$. 

\begin{theorem} \label{theo:est_long} The estimation result $\hat{f}_L(v_i)$ in Eq.~\eqref{eq:est_longitudinal} is an unbiased estimation of $f (v_i)$ for any value $v_i \in A_j$.
\end{theorem}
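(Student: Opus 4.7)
The plan is to prove unbiasedness by directly computing $E[N_i]$ as a linear function of the true count $n_i$ of users holding value $v_i$ (so that $f(v_i) = n_i/n$), and then verifying that the affine rescaling in Eq.~\eqref{eq:est_longitudinal} inverts this linear function exactly.

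First, I would observe that for both the GRR case and for each individual bit in the UE-based case, the two-round memoization mechanism reduces to a composition of two binary channels whose ``flip" parameters are exactly $(p_1,q_1)$ and $(p_2,q_2)$. Concretely, focus on a single user and the event ``the final report indicates $v_i$"; conditioning on the output of $RR_1$ and using the law of total probability gives
\begin{align*}
    \Pr[\text{report}=v_i \mid \text{true}=v_i] &= p_1 p_2 + (1-p_1) q_2 = p_1(p_2-q_2)+q_2, \\
    \Pr[\text{report}=v_i \mid \text{true}\neq v_i] &= q_1 p_2 + (1-q_1) q_2 = q_1(p_2-q_2)+q_2.
\end{align*}
For GRR this follows from the definitions in Section~\ref{ch2:sub_GRR}; for SUE/OUE it follows by applying the same computation to the $i$-th coordinate of the encoded vector, since each bit is perturbed independently.

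Next, by linearity of expectation across the $n$ users,
\begin{equation*}
    E[N_i] = n_i \bigl[p_1(p_2-q_2)+q_2\bigr] + (n-n_i)\bigl[q_1(p_2-q_2)+q_2\bigr].
\end{equation*}
Grouping terms yields
\begin{equation*}
    E[N_i] = n_i(p_1-q_1)(p_2-q_2) + n q_1(p_2-q_2) + n q_2.
\end{equation*}
Substituting into Eq.~\eqref{eq:est_longitudinal} and using linearity of expectation gives
\begin{equation*}
    E[\hat{f}_L(v_i)] = \frac{E[N_i] - n q_1(p_2-q_2) - n q_2}{n(p_1-q_1)(p_2-q_2)} = \frac{n_i}{n} = f(v_i),
\end{equation*}
which establishes unbiasedness provided $p_1\neq q_1$ and $p_2\neq q_2$ (a condition always satisfied when $\epsilon_\infty,\epsilon>0$).

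I do not expect a genuine obstacle here: the argument is a routine linearity-of-expectation computation, and the only subtle point is verifying that the two-round channel has the same four conditional probabilities for GRR and for each coordinate of UE-based protocols, so that a single formula works in both cases. Once this observation is made, the rest is algebraic rearrangement.
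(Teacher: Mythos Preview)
Your proposal is correct and follows essentially the same approach as the paper: compute the two conditional reporting probabilities $p_1p_2+(1-p_1)q_2$ and $q_1p_2+(1-q_1)q_2$, use linearity of expectation to obtain $E[N_i]$, and verify that the affine correction in Eq.~\eqref{eq:est_longitudinal} yields $f(v_i)$. Your factoring $p_1(p_2-q_2)+q_2$ and $q_1(p_2-q_2)+q_2$ makes the cancellation slightly more transparent than the paper's write-up, and your explicit remark that the same two-round channel probabilities hold for both GRR and each UE bit is a useful clarification the paper leaves implicit, but the argument is the same.
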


\begin{proof}
\begin{equation*}
\begin{aligned}
    E[\hat{f}_L(v_i)] &= E\left[ \frac{N_i - nq_1(p_2-q_2) + nq_2}{n(p_1-q_1)(p_2-q_2)} \right] \\
    &= \frac{E[Ni]}{n(p_1-q_1)(p_2-q_2)}  -  \frac{ q_1(p_2-q_2) - q_2}{(p_1-q_1)(p_2-q_2)}  \textrm{.}
\end{aligned}
\end{equation*}

Let us focus on 

\begin{equation*}
\begin{aligned}
    E[N_i] &= n f(v_i) \left(p_{1} p_{2} + q_{2} \left(1 - p_{1}\right)\right) \\
    &+ n \left(1 - f(v_i)\right) \left(p_{2} q_{1} + q_{2} \left(1 - q_{1}\right)\right)\textrm{.}
\end{aligned}
\end{equation*}

Thus,

\begin{equation*}
    E[\hat{f}_L(v_i)] = f(v_i) \textrm{.}
\end{equation*}
\end{proof}

\begin{theorem} \label{theo:variance_grr_allomfree} The variance of the estimation in Eq.~\eqref{eq:est_longitudinal} is:

\begin{equation}\label{var:longitudinal}
\begin{gathered}
    Var(\hat{f}_L(v_i))  = \frac{\gamma (1-\gamma)}{n (p_1-q_1)^2 (p_2-q_2)^2} \textrm{, where} \\
    \gamma = f(v_i) \left( 2 p_{1} p_{2} - 2 p_{1} q_{2} + 2 q_{2} - 1 \right) + p_{2} q_{1} + q_{2} (1 - q_{1}) \textrm{.}
\end{gathered}
\end{equation}

\end{theorem}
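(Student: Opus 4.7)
The plan is to exploit the affine dependence of $\hat{f}_L(v_i)$ on the observable count $N_i$, and then compute $\mathrm{Var}(N_i)$ directly from the fact that $N_i$ is a sum of independent Bernoulli random variables. First, I would read off from Eq.~\eqref{eq:est_longitudinal} that
\begin{equation*}
\hat{f}_L(v_i) \;=\; \frac{N_i}{n(p_1-q_1)(p_2-q_2)} \;+\; C,
\end{equation*}
where $C$ is a deterministic constant depending only on $n, p_1, q_1, p_2, q_2$. This immediately gives
\begin{equation*}
\mathrm{Var}\bigl(\hat{f}_L(v_i)\bigr) \;=\; \frac{\mathrm{Var}(N_i)}{n^{2}(p_1-q_1)^{2}(p_2-q_2)^{2}},
\end{equation*}
reducing the task to evaluating $\mathrm{Var}(N_i)$.

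Next, I would write $N_i = \sum_{u=1}^{n} Z_u$, where $Z_u$ indicates whether user $u$'s doubly sanitized report equals $v_i$. Conditional on user $u$'s true value being $v_i$ (which holds for a fraction $f(v_i)$ of users), $Z_u$ is Bernoulli with parameter $\pi_1 := p_1 p_2 + (1-p_1)q_2$; otherwise, $Z_u$ is Bernoulli with parameter $\pi_0 := q_1 p_2 + (1-q_1)q_2$. Since both memoization rounds are applied independently across users, the $Z_u$ are mutually independent, so
\begin{equation*}
\mathrm{Var}(N_i) \;=\; n f(v_i)\,\pi_1(1-\pi_1) \;+\; n\bigl(1-f(v_i)\bigr)\,\pi_0(1-\pi_0).
\end{equation*}

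The remaining step is an algebraic simplification, and this is where the main obstacle lies. I would expand the two Bernoulli variances, use the identity $\pi_1 - \pi_0 = (p_1-q_1)(p_2-q_2)$, and collect terms until the compact form $n\gamma(1-\gamma)$ emerges with $\gamma$ as defined in the theorem. The expression $2p_1 p_2 - 2p_1 q_2 + 2q_2 - 1$ rewrites as $2\pi_1 - 1$, which is the clue that the cross-terms between the $f(v_i)$ and $1-f(v_i)$ contributions should recombine into a single Bernoulli-like parameter $\gamma = f(v_i)(2\pi_1 - 1) + \pi_0$. Once this is verified, dividing $n\gamma(1-\gamma)$ by $n^{2}(p_1-q_1)^{2}(p_2-q_2)^{2}$ recovers exactly the target formula in Eq.~\eqref{var:longitudinal}. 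As sanity checks I would also confirm the two boundary cases $f(v_i)=0$ (giving the purely $\pi_0$-dominated noise variance used in the $Var^{*}$ approximations of Section~\ref{ch2:sub_ldp}) and the single-round limit $p_2=1, q_2=0$, which should reduce Eq.~\eqref{var:longitudinal} to the non-longitudinal variance expression Eq.~\eqref{eq:var}.
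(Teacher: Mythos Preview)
Your first two moves---pulling the affine constant out of $\hat f_L(v_i)$ and writing $N_i=\sum_{z=1}^n X_z$ as a sum of independent indicators---are exactly what the paper does. The divergence, and the gap, is in how you compute $\mathrm{Var}(N_i)$.

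You condition on the true values: a fixed set of $nf(v_i)$ users report $v_i$ with probability $\pi_1$ and the remaining $n(1-f(v_i))$ users with probability $\pi_0$, giving
\[
\mathrm{Var}(N_i)=n f(v_i)\,\pi_1(1-\pi_1)+n\bigl(1-f(v_i)\bigr)\,\pi_0(1-\pi_0).
\]
The paper instead treats every $X_z$ as identically distributed---implicitly modeling each user's true value as itself random with $\Pr[\text{value}=v_i]=f(v_i)$---so that $\Pr[X_z=1]=\gamma$ is a \emph{single} mixture probability and $\mathrm{Var}(N_i)=n\gamma(1-\gamma)$ immediately. These two variances are \emph{not} equal, so the algebraic collapse you are counting on will not happen. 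A quick check: at $f(v_i)=1$ your expression is $\pi_1(1-\pi_1)$, while $\gamma(1-\gamma)=(2\pi_1-1+\pi_0)(2-2\pi_1-\pi_0)$; with $\pi_1=0.5,\ \pi_0=0.3$ you get $0.25$ versus $0.21$. The identity you need holds only when $\pi_1+\pi_0=1$, which is not assumed here.

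So the missing step is not algebra but a modeling choice: to reproduce Eq.~\eqref{var:longitudinal} you must drop the fixed-design decomposition into $\pi_1$- and $\pi_0$-users and instead compute the single probability $\Pr[X_z=1]$ for a \emph{random} user, as the paper does, and then read off the Bernoulli variance $\gamma(1-\gamma)$ directly.
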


\begin{proof}
Thanks to Eq.~\eqref{eq:est_longitudinal} we have

\begin{equation*}
Var(\hat{f}_L(v_i)) = 
\frac{Var(N_i)}{n^2 (p_1-q_1)^2 (p_2-q_2)^2}  \textrm{.}
\end{equation*}

Since $N_i$ is the number of times the value $v_i$ is observed, it can be defined as $N_i = \sum_{z=1}^n X_z$ where $X_z$ is equal to 1 if the user $z$, 
$1 \le z \le n$ reports value $v_i$, and 0 otherwise. We thus have 
$
Var(N_i) 
= \sum_{z=1}^n Var(X_z) 
= n Var(X)$. Since all the users are independent,

\begin{equation*}
P(X = 1) = P(X^2 = 1) = f(v_i) \left( 2 p_{1} p_{2} - 2 p_{1} q_{2} + 2 q_{2} - 1 \right) + p_{2} q_{1} + q_{2} (1 - q_{1}) = \gamma \textrm{.}
\end{equation*}

We thus have $Var(X)= \gamma - \gamma^2 = \gamma(1 - \gamma) $ and, finally,

\begin{equation*} 
Var(\hat{f}_L(v_i)) =
\frac{\gamma (1-\gamma)}{n (p_1-q_1)^2 (p_2-q_2)^2}.
\end{equation*}
\end{proof}

In this chapter, we will use the \textit{approximate variance}, in which $f(v_i)=0$ in Eq.~\eqref{var:longitudinal}, which gives:

\begin{equation}\label{var:aprox_longitudinal}
    Var^*(\hat{f}_L(v_i))  =  \frac{\left(p_{2} q_{1} - q_{2} \left(q_{1} - 1\right)\right) \left(- p_{2} q_{1} + q_{2} \left(q_{1} - 1\right) + 1\right)}{n (p_1-q_1)^2 (p_2-q_2)^2} \textrm{.}
\end{equation}

\subsection{Longitudinal GRR (L-GRR): definition and $\epsilon$-LDP study}\label{subsub:l_de}

Let $V=\{v_1,v_2,...,v_{c_j}\}$ be a set of $c_j$ values of a given attribute and let $v_i$ be the real value. We now describe an extension of GRR for longitudinal studies; we refer to this protocol as L-GRR for the rest of this chapter. First, GRR does not require any particular encoding (direct encoding~\cite{tianhao2017}). Next, there are two rounds of sanitization, $RR_1$ and $RR_2$ applying GRR, described in the following.

\begin{enumerate}
    \item $RR_1[GRR]$: Memoize a value $B'$ such that
    \begin{equation*}
    B'=
    \begin{cases}
      v_i, & \text{with probability}\ p_1 \textrm{,}\\
      v_{k\neq v_i}, & \text{with probability}\ q_1=\frac{1-p_1}{c_j-1}  \textrm{,}\\
    \end{cases}
  \end{equation*}

  in which $p_1$ and $q_1$ control the level of longitudinal $\epsilon_{\infty}$-LDP. The value $B'$ shall be reused as the basis for all future reports on the real value $v_i$.
  \item $RR_2[GRR]$: Generate a reporting $B''$ such that
  \begin{equation*}\label{eq:perm}
    B''=
    \begin{cases}
      B', & \text{with probability}\ p_2 \textrm{,}\\
      v_{k\neq B'}, & \text{with probability}\ q_2=\frac{1-p_2}{c_j-1}  \textrm{,}\\
    \end{cases}
  \end{equation*}
  in which $B''$ is the report to be sent to the server.
\end{enumerate}

Visually, Fig.~\ref{fig:tree_l_grr} illustrates the probability tree of the L-GRR protocol. In the first round of sanitization, $RR_1$, our proposed L-GRR applies GRR with $p_1=Pr[ B'=v_i | B=v_i ] =\frac{e^{\epsilon_{\infty}}}{e^{\epsilon_{\infty}}+c_j-1}$ and $q_1=Pr[ B'=v_i | B=v_{k\neq i} ] =\frac{1-p_1}{c_j-1}=\frac{1}{e^{\epsilon_{\infty}}+c_j-1}$ (\colorbox{yellow}{highlighted} in the middle of Fig.~\ref{fig:tree_l_grr}), where $c_j=|A_j|$. As discussed in Section~\ref{ch2:sub_GRR}, this \textit{permanent} memoization satisfies $\epsilon_{\infty}$-LDP since $\frac{p_1}{q_1}=e^{\epsilon_{\infty}}$, which is the upper bound.

\begin{figure}[!ht]
\centering
\tikzstyle{level 1}=[level distance=3cm, sibling distance=2cm]
\tikzstyle{level 2}=[level distance=2.6cm, sibling distance=0.75cm]
\tikzstyle{bag} = [text width=4em, text centered]
\tikzstyle{end} = [circle, minimum width=2pt,fill, inner sep=0pt]
\begin{tikzpicture}[grow=right, sloped]
\node[bag] {$B=v_i$}
child {
node[bag] {$B'=v_{k\neq i}$}
child {
node[end, label=right:
{\colorbox{yellow}{$B''=v_i$}}] {}
edge from parent
node[below] {$q_2$}
}
child {
node[end, label=right:
{B''=$v_{k\neq i}$}] {}
edge from parent
node[above] {$p_2$}
}
edge from parent
node[below] {$q_1$}
}
child {
node[bag] {\colorbox{yellow}{$B'=v_{i}$}}
child {
node[end, label=right:
{$B''=v_{k\neq i}$}] {}
edge from parent
node[below] {$q_2$}
}
child {
node[end, label=right:
{\colorbox{yellow}{$B''=v_i$}}] {}
edge from parent
node[above] {$p_2$}
}
edge from parent
node[above] {$p_1$}
};
\end{tikzpicture}
\begin{tikzpicture}[grow=right, sloped]
\node[bag] {$B=v_{k\neq i}$}
child {
node[bag] {\colorbox{yellow}{$B'=v_{i}$}}
child {
node[end, label=right:
{$B''=v_{k\neq i}$}] {}
edge from parent
node[below] {$q_2$}
}
child {
node[end, label=right:
{\colorbox{yellow}{$B''=v_{i}$}}] {}
edge from parent
node[above] {$p_2$}
}
edge from parent
node[below] {$q_1$}
}
child {
node[bag] {$B'=v_{k\neq i}$}
child {
node[end, label=right:
{\colorbox{yellow}{$B''=v_{i}$}}] {}
edge from parent
node[below] {$q_2$}
}
child {
node[end, label=right:
{$B''=v_{k\neq i}$}] {}
edge from parent
node[above] {$p_2$}
}
edge from parent
node[above] {$p_1$}
};
\end{tikzpicture}
\caption{Probability tree for two rounds of sanitization using GRR (L-GRR).} \label{fig:tree_l_grr}
\end{figure}
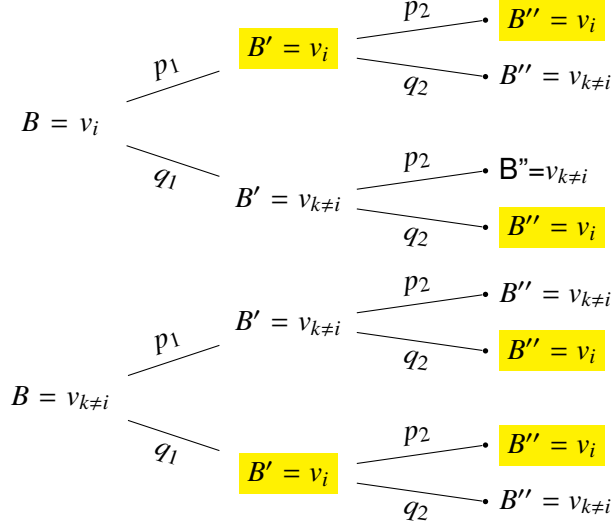

On the other hand, with a single collection of data, the attacker's knowledge of $v_i$ comes only from $B''$, which is generated using two randomization steps with GRR. This provides a higher level of privacy protection~\cite{rappor}. From Fig.~\ref{fig:tree_l_grr}, we can obtain the following conditional probabilities:

\begin{equation*}
    \Pr[ B'' | B ] = 
    \begin{cases}
        \Pr[ B''=v_i | B=v_i ] = p_1 p_2 + q_1 q_2 \\
        \Pr[ B''=v_{k\neq i} | B=v_i ] =  p_1 q_2 + q_1 p_2\\
        \Pr[ B''=v_i | B=v_{k\neq i} ] = p_1 q_2 + q_1 p_2 \\
        \Pr[ B''=v_{k\neq i} | B=v_{k\neq i} ] = p_1 p_2 + q_1 q_2 \\
    \end{cases}
\end{equation*}

Let $p_s=\Pr[ B''=v_i | B=v_i ]$ and $q_s = \Pr[ B''=v_i | B=v_{k\neq i} ]$ (\colorbox{yellow}{highlighted} in far right of Fig.~\ref{fig:tree_l_grr}), with the second round of sanitization, $RR_2[GRR]$, our proposed L-GRR protocol satisfies $\epsilon_1$-LDP since $\frac{p_s}{q_s}=e^{\epsilon_1}$. Notice that $\epsilon_{1}$ corresponds to a single report (lower bound) and its extension to infinity reports is limited by $\epsilon_{\infty}$ (upper bound) since $RR_2[GRR]$ uses as input the output of $RR_1[GRR]$. More specifically, the calculus of $\epsilon_1$ for L-GRR is:

\begin{equation}\label{eq:e1_grr}
     \epsilon_{1} = \ln{  \left ( \frac{p_1 p_2 + q_1 q_2}{p_1 q_2 + q_1 p_2}\right)}  
\end{equation}

in which $p_1=\frac{e^{\epsilon_{\infty}}}{e^{\epsilon_{\infty}}+c_j-1}$, $q_1=\frac{1-p_1}{c_j-1}$, and both $p_2$ and $q_2$ are selectable according with $\epsilon_{\infty}$, $\epsilon_{1}$, and $c_j$, calculated as:

\begin{equation} \label{ch5:eq_p2_lgrr}
\begin{gathered}
    p_2 = \frac{e^{\epsilon_{1} + \epsilon_{\infty}} - 1}{- c_{j} e^{\epsilon_{1}} + \left(c_{j} - 1\right) e^{\epsilon_{\infty}} + e^{\epsilon_{1}} + e^{\epsilon_{1} + \epsilon_{\infty}} - 1}  \textrm{,}\\
    q_2  = \frac{1-p_2}{c_j - 1} \textrm{.}
\end{gathered}
\end{equation}

The estimated frequency $\hat{f}_L(v_i)$ that a value $v_i$ occurs for $i \in [1,c_j]$ is calculated using Eq.~\eqref{eq:est_longitudinal}. Lastly, one can calculate the L-GRR approximate variance by replacing the resulting $p_1,q_1,p_2,q_2$ parameters into Eq.~\eqref{var:aprox_longitudinal}.

\subsection{Longitudinal UE (L-UE): definition and $\epsilon$-LDP study}\label{subsub:l_ue}

We now describe UE-based protocols for longitudinal studies; we refer to this protocol as L-UE for the rest of this chapter. Let $V=\{v_1,v_2,...,v_{c_j}\}$ be a set of $c_j$ values of a given attribute and let $v_i$ be the real value. First, $Encode(v)=B$ (unary encoding), where $B=[0,0,...,1,0,...0]$, a $c_j$-bit array where only the $v$-th position is set to one. Next, there are two rounds of sanitization, $RR_1$ and $RR_2$ applying UE-based protocols, described in the following.

\begin{enumerate}
    \item $RR_1[UE]$: For each bit $i$, $1\le i \le c_j$ in $B$, memoize a value $B'$ such that
    \begin{equation*}
    P(B'_i=1)=
    \begin{cases}
      p_1, & \text{if}\ B_i=1 \textrm{ and}\\
      q_1, & \text{if}\ B_i=0 \textrm{,}
    \end{cases}
  \end{equation*}
  in which $p_1$ and $q_1$ control the level of longitudinal $\epsilon_{\infty}$-LDP. The value $B'$ shall be reused as the basis for all future reports on the real value $v_i$.
  \item $RR_2[UE]$: For each bit $i$, $1\le i \le c_j$ in $B'$, generate a reporting $B''$ that
  \begin{equation*}
    P(B''_i=1)=
    \begin{cases}
      p_2, & \text{if}\ B'_i=1 \textrm{ and}\\
      q_2, & \text{if}\ B'_i=0 \textrm{,}
    \end{cases}
  \end{equation*}
  in which $B''$ is the report to be sent to the server.
\end{enumerate}

Visually, Fig.~\ref{fig:tree_l_ue} illustrates the probability tree of the L-UE protocol. \textbf{One natural question emerges: how to select the parameters $\{p_1,q_1,p_2,q_2\}$ in order to optimize the utility of this L-UE protocol?} One can see $RR_1[UE]$ as a \textit{permanent} sanitization and $RR_2[UE]$ as a `small' perturbation to avoid averaging attacks and keep privacy over time. 

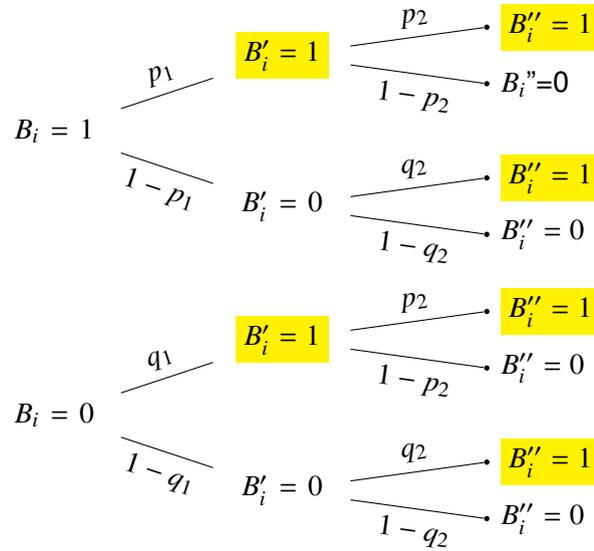
\begin{figure}[!ht]
\centering
\tikzstyle{level 1}=[level distance=3cm, sibling distance=2cm]
\tikzstyle{level 2}=[level distance=2.7cm, sibling distance=0.75cm]
\tikzstyle{bag} = [text width=4em, text centered]
\tikzstyle{end} = [circle, minimum width=2pt,fill, inner sep=0pt]
\begin{tikzpicture}[grow=right, sloped]
\node[bag] {$B_i=1$}
child {
node[bag] {$B_i'=0$}
child {
node[end, label=right:
{$B_i''=0$}] {}
edge from parent
node[below] {$1-q_2$}
}
child {
node[end, label=right:
{\colorbox{yellow}{$B_i''=1$}}] {}
edge from parent
node[above] {$q_2$}
}
edge from parent
node[below] {$1-p_1$}
}
child {
node[bag] {\colorbox{yellow}{$B_i'=1$}}
child {
node[end, label=right:
{$B_i$''=0}] {}
edge from parent
node[below] {$1-p_2$}
}
child {
node[end, label=right:
{\colorbox{yellow}{$B_i''=1$}}] {}
edge from parent
node[above] {$p_2$}
}
edge from parent
node[above] {$p_1$}
};
\end{tikzpicture}
\begin{tikzpicture}[grow=right, sloped]
\node[bag] {$B_i=0$}
child {
node[bag] {$B_i'=0$}
child {
node[end, label=right:
{$B_i''=0$}] {}
edge from parent
node[below] {$1-q_2$}
}
child {
node[end, label=right:
{\colorbox{yellow}{$B_i''=1$}}] {}
edge from parent
node[above] {$q_2$}
}
edge from parent
node[below] {$1-q_1$}
}
child {
node[bag] {\colorbox{yellow}{$B_i'=1$}}
child {
node[end, label=right:
{$B_i''=0$}] {}
edge from parent
node[below] {$1-p_2$}
}
child {
node[end, label=right:
{\colorbox{yellow}{$B_i''=1$}}] {}
edge from parent
node[above] {$p_2$}
}
edge from parent
node[above] {$q_1$}
};
\end{tikzpicture}
\caption{Probability tree for two rounds of sanitization using UE (L-UE).} \label{fig:tree_l_ue}
\end{figure}

Based on SUE and OUE, we are then left with four options: two known solutions that strictly use only OUE or SUE parameters in both sanitization steps and two proposed settings that combine both OUE and SUE. These four L-UE protocols are summarized below: 

\begin{enumerate}[I]
    \item both sanitization steps with OUE (L-OUE);
    \item both sanitization steps with SUE (L-SUE);
    \item starting with OUE and then with SUE (L-OSUE);
    \item starting with SUE and then with OUE (L-SOUE);
\end{enumerate}

in which, L-SUE is the well-known Basic-RAPPOR protocol~\cite{rappor}, L-OUE is the state-of-the-art OUE protocol~\cite{tianhao2017} with memoization, and both L-OSUE and L-SOUE are proposed in this chapter. 

As presented in~\cite{tianhao2017}, the OUE variance in Eq.~\eqref{eq:var_oue} is smaller than the SUE variance in Eq.~\eqref{eq:var_sue} and, therefore, the former can provide higher utility than the latter for $RR_1$. On the other hand, we argue that OUE might be too strict for $RR_2$ since the parameter $p_2=1/2$ is constant. Thus, we hypothesize that option III (i.e., L-OSUE) is the most suitable one. Without loss of generality, \textbf{the following analyses are done only for L-OSUE}, which can be easily extended to any of the other combinations. 

In the first round of sanitization, $RR_1$, our solution L-OSUE applies OUE with $p_1=Pr[ B_i^{'} =1 | B_i=1 ] =\frac{1}{2}$ and $q_1=Pr[ B_i^{'} =1 | B_i=0 ] =\frac{1}{e^{\epsilon_{\infty}}+1}$ (\colorbox{yellow}{highlighted} in the middle of Fig.~\ref{fig:tree_l_ue}). As discussed in Section~\ref{ch2:sub_UE}, this \textit{permanent} memoization satisfies $\epsilon_{\infty}$-LDP since $\frac{p_1(1-q_1)}{(1-p_1)q_1}=e^{\epsilon_{\infty}}$, which is the upper bound. 

Following the same development as for L-GRR, on the other hand, with a single collection of data, the attacker's knowledge of $B=Encode(v)$ comes only from $B''$, which is generated using two randomization steps with OUE and SUE, respectively. This provides a higher level of privacy protection~\cite{rappor}. From Fig.~\ref{fig:tree_l_ue}, we can obtain the following conditional probabilities according to each bit $i \in [1,c_j]$:

\begin{equation*}
    \Pr[ B_i'' | B_i ] = 
    \begin{cases}
        \Pr[ B_i''= 1| B_i = 1] = p_1 p_2 + (1 - p_1) q_2 \\
        \Pr[ B_i''= 0| B_i = 1] = p_1 (1 - p_2) + (1 - p_1) (1 - q_2) \\
        \Pr[ B_i''= 1| B_i = 0] = q_1 p_2 + (1 - q_1) q_2 \\
        \Pr[ B_i''= 0| B_i = 0] = q_1 (1 - p_2) + (1 - q_1) (1 - q_2) \\
    \end{cases}
\end{equation*}

Let $p_s=\Pr[ B_i'' =1 | B_i =1]$ and $q_s = \Pr[ B_i'' =1 | B_i =0 ]$ (\colorbox{yellow}{highlighted} in far right of Fig.~\ref{fig:tree_l_ue}), with the second round of sanitization, $RR_2[SUE]$, our proposed L-OSUE protocol satisfies $\epsilon_1$-LDP since $\frac{p_s(1-q_s)}{(1-p_s)q_s}=e^{\epsilon_{1}}$. Notice that $\epsilon_{1}$ corresponds to a single report (lower bound) and its extension to infinity reports is limited by $\epsilon_{\infty}$ (upper bound) since $RR_2[SUE]$ uses as input the output of $RR_1[OUE]$. More specifically, the calculus of $\epsilon_1$ for L-OSUE (or L-UE protocols in general) is:

\begin{equation}\label{eq:e1_ue}
    \epsilon_{1} = \ln{  \left (  \frac{\left(p_{1} p_{2} - q_{2} \left(p_{1} - 1\right)\right) \left(p_{2} q_{1} - q_{2} \left(q_{1} - 1\right) - 1\right)}{\left(p_{2} q_{1} - q_{2} \left(q_{1} - 1\right)\right) \left(p_{1} p_{2} - q_{2} \left(p_{1} - 1\right) - 1\right)}  \right)} \textrm{,}
\end{equation}

in which, for L-OSUE, we have $p_1=\frac{1}{2}$, $q_1=\frac{1}{e^{\epsilon_{\infty}}+1}$, and both $p_2$ and $q_2$ are symmetric ($p_2+q_2 = 1$) and selectable according to $\epsilon_{\infty}$ and $\epsilon_1$, calculated as: 

\begin{equation}  \label{ch5:eq_p2_losue}
    \begin{gathered}
    p_2 = \frac{1 - e^{\epsilon_{1} + \epsilon_{\infty}}}{e^{\epsilon_{1}} - e^{\epsilon_{\infty}} - e^{\epsilon_{1} + \epsilon_{\infty}} + 1} \textrm{,}\\
    q_2 = 1 - p_2 \textrm{.}
    \end{gathered}
\end{equation}

Similarly, the estimated frequency $\hat{f}_L(v_i)$ that a value $v_i$ occurs for $i \in [1,c_j]$ is calculated using Eq.~\eqref{eq:est_longitudinal}. Lastly, one can calculate the L-OSUE (or L-UE protocols in general) approximate variance by replacing the resulting $p_1,q_1,p_2,q_2$ parameters into Eq.~\eqref{var:aprox_longitudinal}.

\subsection{Numerical evaluation of L-GRR and L-UE protocols}\label{sub:analysis_long}

In this subsection, we evaluate numerically the approximate variance of all developed longitudinal protocols, namely, L-GRR and the four UE-based options namely L-OUE, L-SUE, L-OSUE, and L-SOUE, respectively. As aforementioned, once defined both $\epsilon_{\infty}$ and $\epsilon_1$ privacy guarantees, one can obtain the parameters $p_1$ and $q_1$ depending on $\epsilon_{\infty}$, and the parameters $p_2$ and $q_2$ depending on both $\epsilon_{\infty}$ and $\epsilon_1$ (and the domain size $c_j$ for L-GRR) as given in Eq.~\eqref{ch5:eq_p2_lgrr} for L-GRR and in Eq.~\eqref{ch5:eq_p2_losue} for L-OSUE. 

Next, once computed the parameters $\{p_1,q_1,p_2,q_2\}$, one can calculate the approximate variance with Eq.~\eqref{var:aprox_longitudinal} for each protocol. In other words, following our proposal, one has to set both the upper ($\epsilon_{\infty}$) and lower ($\epsilon_1$) bounds of the privacy guarantees. For example, let $\epsilon_{\infty} = 2$, one might want that the first $\epsilon_1$-LDP report to have high privacy such as $\epsilon_1=0.1$, i.e., $\epsilon_1=0.05\epsilon_{\infty}$ (\textbf{we will use this percentage notation to set up the privacy guarantees}).

Table~\ref{tab:analysis_var} exhibits numerical values of the approximate variance using Eq.~\eqref{var:aprox_longitudinal} for all longitudinal protocols with $n=10000$, $\epsilon_{\infty}=[0.5, 1.0, 2.0, 4.0]$ (as in~\cite{tianhao2017}), and $\epsilon_1 = \{0.6\epsilon_{\infty},0.5\epsilon_{\infty},0.4\epsilon_{\infty},0.3\epsilon_{\infty},0.2\epsilon_{\infty},0.1\epsilon_{\infty}\}$. For values of $\epsilon_1$ higher than $0.6\epsilon_{\infty}$, neither L-OUE nor L-SOUE could satisfy some values of $\epsilon_1$ because of the constant $p_2=1/2$ in $RR_2$. Yet, it is not desirable to have higher values of $\epsilon_1$ and, thus, we did not consider values above $0.6\epsilon_{\infty}$ in our analysis. Besides, Table~\ref{tab:analysis_var_non_long} exhibits numerical values for non-longitudinal GRR, OUE, and SUE protocols, which allows evaluating how utility degrades with a second step of sanitization. 

\setlength{\tabcolsep}{5pt}
\renewcommand{\arraystretch}{1.4}
\begin{table}[!ht]
    \scriptsize
    \centering
    \caption{Numerical values of Eq.~\eqref{var:aprox_longitudinal} (i.e., $Var^*[\hat{f}_L(v_i)]$) for L-GRR and L-UE protocols with different $\epsilon_{\infty}$ and $\epsilon_1$ privacy guarantees, following $\epsilon_1 = \{0.6\epsilon_{\infty},0.5\epsilon_{\infty},0.4\epsilon_{\infty},0.3\epsilon_{\infty},0.2\epsilon_{\infty},0.1\epsilon_{\infty}\}$, respectively.}
    \begin{tabular}{c| c| c| c| c| c| c| c| c}\hline
    \multirow{2}{*}{$\epsilon_{1}$} &  \multirow{2}{*}{Privacy Guarantees} &  \multicolumn{3}{c|}{L-GRR} &  \multicolumn{4}{c}{L-UE}  \\ \cline{3-9}
    & & $c_j=2$ & $c_j=32$ & $c_j=2^{10}$ & L-OSUE & L-SUE & L-SOUE & L-OUE\\ \hline
     
     \multirow{4}{*}{$0.6\epsilon_{\infty}$} &$\epsilon_{\infty}=0.5,\epsilon_{1}=0.30$ & 0.001103 &     0.980969 &       26706 &  0.004411 &  0.004436 &  0.005306 &  0.005549 \\
     & $\epsilon_{\infty}=1.0,\epsilon_{1}=0.60$ &    0.000270 &     0.125036 &        3153 &  0.001078 &  0.001103 &  0.001234 &  0.001347 \\
     & $\epsilon_{\infty}=2.0,\epsilon_{1}=1.20$ & 0.000062 &     0.006327 &         117 &  0.000247 &  0.000270 &  0.000264 &  0.000310 \\    
     & $\epsilon_{\infty}=4.0,\epsilon_{1}=2.40$ &    0.000011 &     0.000078 &           0.25903 &  0.000044 &  0.000062 &  0.000045 &  0.000057 \\ \hline
     
     \multirow{4}{*}{$0.5\epsilon_{\infty}$} &$\epsilon_{\infty}=0.5,\epsilon_{1}=0.25$ & 0.001592 &     2.088372 &       60218 &  0.006367 &  0.006392 &  0.007336 &  0.007611 \\ 
     &$\epsilon_{\infty}=1.0,\epsilon_{1}=0.50$ & \textbf{0.000392} &     0.268074 &        7198 &  \textbf{0.001567} &  \textbf{0.001592} &  0.001740 &  0.001872   \\ 
     &$\epsilon_{\infty}=2.0,\epsilon_{1}=1.00$ & \textbf{0.000092} &     0.013926 &         281 &  \textbf{0.000368} &  \textbf{0.000392} &  0.000389 &  0.000447   \\ 
     &$\epsilon_{\infty}=4.0,\epsilon_{1}=2.00$ & \textbf{0.000018} &     0.000188 &           0.74088 &  \textbf{0.000072} &  \textbf{0.000092} &  0.000073 &  0.000092   \\ \hline
     
     \multirow{4}{*}{$0.4\epsilon_{\infty}$} &$\epsilon_{\infty}=0.5,\epsilon_{1}=0.20$ &  0.002492 &     4.530779 &      135874 &  0.009967 &  0.009992 &  0.011012 &  0.011324 \\
     &$\epsilon_{\infty}=1.0,\epsilon_{1}=0.40$ &  0.000617 &     0.586823 &       16443 &  0.002467 &  0.002492 &  0.002658 &  0.002812\\ 
     &$\epsilon_{\infty}=2.0,\epsilon_{1}=0.80$ &  0.000148 &     0.031552 &         673 &  0.000593 &  0.000617 &  0.000617 &  0.000690  \\ 
     &$\epsilon_{\infty}=4.0,\epsilon_{1}=1.60$ &  0.000032 &     0.000484 &           2.12772 &  0.000127 &  0.000148 &  0.000128 &  0.000156  \\ \hline
    
     \multirow{4}{*}{$0.3\epsilon_{\infty}$} &$\epsilon_{\infty}=0.5,\epsilon_{1}=0.15$ & 0.004436 &    10 &      329836 &  0.017744 &  0.017769 &  0.018863 &  0.019214 \\
     &$\epsilon_{\infty}=1.0,\epsilon_{1}=0.30$ & 0.001103 &     1.398568 &       40412 &  0.004411 &  0.004436 &  0.004620 &  0.004799 \\
     &$\epsilon_{\infty}=1.0,\epsilon_{1}=0.60$ & 0.000270 &     0.078202 &        1737 &  0.001078 &  0.001103 &  0.001106 &  0.001198 \\
     &$\epsilon_{\infty}=2.0,\epsilon_{1}=1.20$ & 0.000062 &     0.001389 &           6 &  0.000247 &  0.000270 &  0.000248 &  0.000291 \\ \hline
          
     \multirow{4}{*}{$0.2\epsilon_{\infty}$} &$\epsilon_{\infty}=0.5,\epsilon_{1}=0.10$ &  0.009992 &    30 &      972656 &  0.039967 &  0.039992 &  0.041148 &  0.041536 \\
     &$\epsilon_{\infty}=1.0,\epsilon_{1}=0.20$ & 0.002492 &     4.080052 &      120651 &  0.009967 &  0.009992 &  0.010190 &  0.010394  \\ 
     &$\epsilon_{\infty}=2.0,\epsilon_{1}=0.40$ & 0.000617 &     0.237925 &        5443 &  0.002467 &  0.002492 &  0.002498 &  0.002610   \\ 
     &$\epsilon_{\infty}=4.0,\epsilon_{1}=0.80$ & 0.000148 &     0.004939 &          24 &  0.000593 &  0.000617 &  0.000595 &  0.000659   \\ \hline
     
     \multirow{4}{*}{$0.1\epsilon_{\infty}$} & $\epsilon_{\infty}=0.5,\epsilon_{1}=0.05$ & 0.039992 &   154 &     4941829 &  0.159967 &  0.159992 &  0.161191 &  0.161608 \\ 
     & $\epsilon_{\infty}=1.0,\epsilon_{1}=0.10$ & 0.009992 &    20 &      620584 &  0.039967 &  0.039992 &  0.040201 &  0.040424 \\ 
     & $\epsilon_{\infty}=2.0,\epsilon_{1}=0.20$ & 0.002492 &     1.255550 &       29356 &  0.009967 &  0.009992 &  0.010000 &  0.010130  \\ 
     & $\epsilon_{\infty}=4.0,\epsilon_{1}=0.40$ & 0.000617 &     0.030494 &         156 &  0.002467 &  0.002492 &  0.002469 &  0.002560  \\ \hline
    \end{tabular}
    \label{tab:analysis_var}
\end{table}

\setlength{\tabcolsep}{5pt}
\renewcommand{\arraystretch}{1.4}
\begin{table}[!ht]
    \scriptsize
    \centering
    \caption{Numerical values of Eq.~\ref{eq:var} (i.e., $Var^*[\hat{f}(v_i)]$) for the non-longitudinal GRR, OUE, and SUE protocols with different $\epsilon_{\infty}$ privacy guarantees.}
    \begin{tabular}{c | c| c| c| c| c}\hline
    $\epsilon_{\infty}$ &GRR($c_j=2$)  &GRR($c_j=32$)  &GRR($c_j=2^{10}$)  &OUE  &SUE \\ \hline
    $\epsilon_{\infty}=0.5$    &\textbf{0.000392} &     0.007520 &           0.243240 &  \textbf{0.001567} &  \textbf{0.001592} \\ \hline
    $\epsilon_{\infty}=1.0$    &\textbf{0.000092} &     0.001108 &           0.034707 &  \textbf{0.000368} &  \textbf{0.000392} \\ \hline
    $\epsilon_{\infty}=2.0$    &\textbf{0.000018} &     0.000092 &           0.002522 &  \textbf{0.000072} &  \textbf{0.000092} \\ \hline
    $\epsilon_{\infty}=4.0$    &0.000002 &     0.000003 &           0.000037 &  0.000008 &  0.000018  \\ \hline
    \end{tabular}
    \label{tab:analysis_var_non_long}
\end{table}

\textbf{From Table~\ref{tab:analysis_var}, one can notice that L-GRR presents the smallest variance values for binary attributes (i.e., when $c_j=2$).} On the other hand, L-GRR is also the most sensitive to change in privacy parameters $\epsilon_{\infty}$ and $\epsilon_1$ when $c_j$ is large, which leads to much higher variance than when using a non-longitudinal GRR in Table~\ref{tab:analysis_var_non_long}. Similar to non-longitudinal GRR, this increase in the variance is due to the number of values $c_j$, which decreases the probability $p$ of reporting the true value. With two rounds of sanitization, it further deteriorates the accuracy of the L-GRR protocol getting to extremely high values, e.g., see L-GRR$(c_j=2^{10})$. Interestingly, when $c_j=2$ in Table~\ref{tab:analysis_var}, the variance of L-GRR with $\epsilon_1=0.5\epsilon_{\infty}$ is a lagged version of the variance values given by the non-longitudinal GRR in Table~\ref{tab:analysis_var_non_long}. This effect is also observed for both L-SUE (cf. SUE in Table~\ref{tab:analysis_var_non_long}) and L-OSUE (cf. OUE in Table~\ref{tab:analysis_var_non_long}) protocols, which use symmetric probabilities on $RR_2$ (i.e., $p_2+q_2=1$). We highlighted these values in \textbf{bold font}. However, for L-GRR, this is not true for other values of $c_j$, whose further analysis is beyond the scope of this chapter.

On the other hand, L-UE protocols avoid having a variance that depends on $c_j$ by encoding the value into the unary representation, which results in a constant variance no matter the size of the attribute. To complement the results of Table~\ref{tab:analysis_var}, Fig.~\ref{fig:analysis_var} illustrates numerical values of the approximate variance for L-UE protocols with $\epsilon_1=\{0.3\epsilon_{\infty}, 0.6\epsilon_{\infty}\}$. With the four options I-IV analyzed, on high privacy regimes, L-OSUE and L-SUE have similar performance while \textit{always} favoring the proposed L-OSUE one. On lower privacy regimes, our proposed protocols L-SOUE and L-OSUE have similar performance, which outperform both L-OUE and L-SUE protocols. As shown in our experiments, the L-OUE protocol has the worst performance among the four options analyzed, with the exception of high values for $\epsilon_{\infty}$ (see the plot on the bottom of Fig.~\ref{fig:analysis_var}), when it has performance superior or similar to L-SUE. Indeed, for L-OUE, selecting $p_2=1/2$ for the second sanitization step is too strict, which results in higher variance value. \textbf{Therefore, by comparing the approximate variances, the best option for L-UE protocols, in terms of utility, is starting with OUE and then with SUE as we propose in this chapter, i.e., L-OSUE.}

\begin{figure}[!ht]
    \centering
    \includegraphics[width=0.7\linewidth]{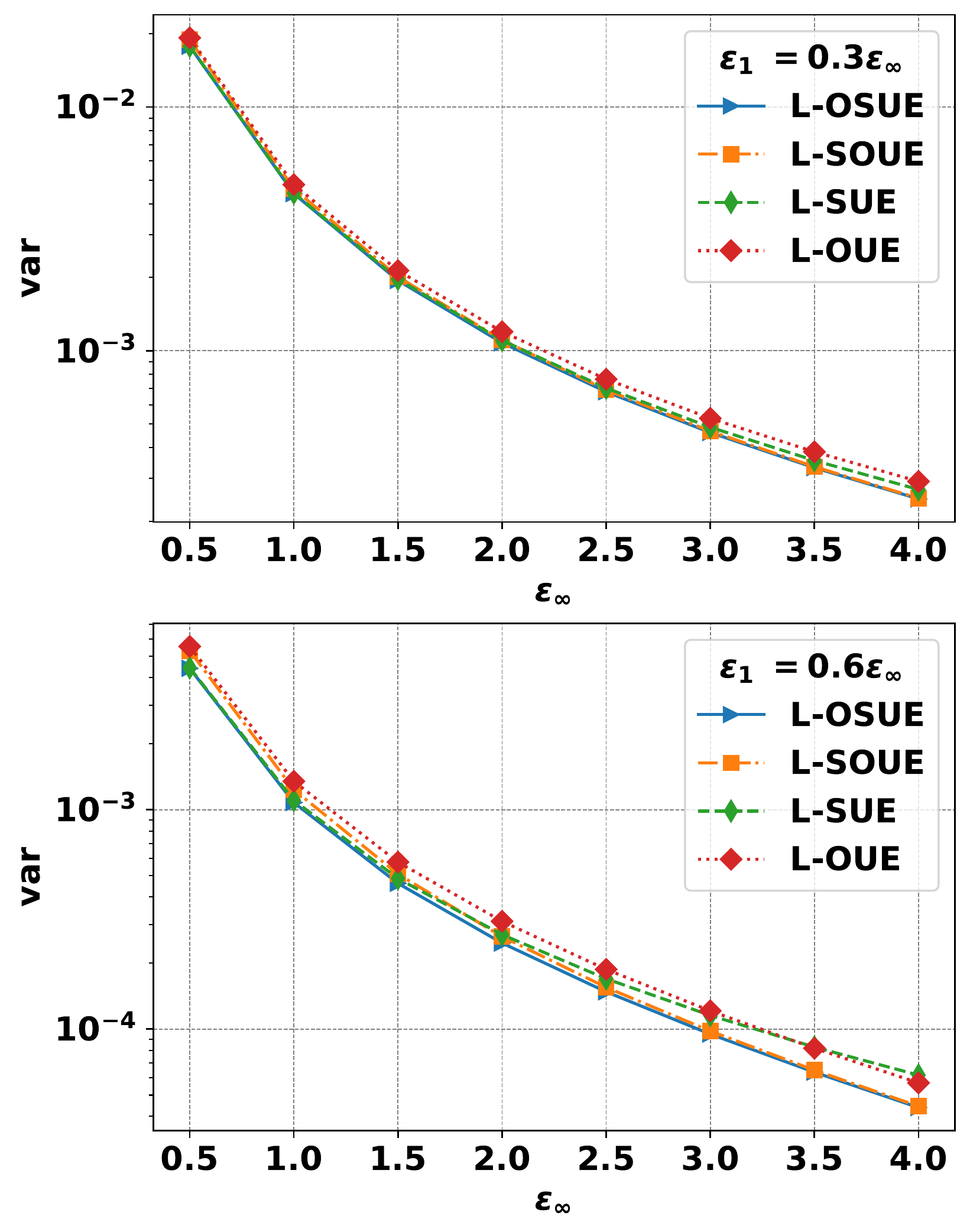}
    \caption{Numerical values of $Var^*[\hat{f}_L(v_i)]$ for L-UE protocols with $\epsilon_{1}=0.3\epsilon_{\infty}$ (plot on the top) and with $\epsilon_{1}=0.6\epsilon_{\infty}$ (plot on the bottom).}
    \label{fig:analysis_var}
\end{figure}

\subsection{The \textit{ALLOMFREE} algorithm}\label{sub:allomfree}

Let $A=\{A_1,A_2,...,A_d\}$ be a set of $d$ attributes with domain size $\textbf{c}=[c_1,c_2,...,c_d]$, $\mathbb{A}=\{\textit{L-GRR},\textit{L-OSUE}\}$ be a set of optimal longitudinal LDP protocols, and $\epsilon_{\infty}$ and $\epsilon_1$ be the longitudinal and \textit{single-report} privacy guarantees, respectively. Each user $u_i$, for $1 \leq i \leq n$, holds a tuple $\textbf{v}^{(i)}=(v^{(i)}_{1},v^{(i)}_{2},...,v^{(i)}_{d})$, i.e., a private value per attribute. From now on, we will simply omit the index notation $\textbf{v}^{(i)}$ and use $\textbf{v}$ in the analysis as we focus on one arbitrary user $u_i$ here. For each attribute $j \in [1,d]$ (we slightly abuse the notation and use $j$ for $A_j$) at time $t \in [1,\tau]$, the aggregator aims to estimate the frequencies of each value $v \in A_j$. 

\textbf{Client-Side.} In a multidimensional setting with different domain sizes for each attribute, a dynamic selection of longitudinal LDP protocols is preferred. As mentioned in Section~\ref{ch5:sec_multidimensional}, we propose that each user randomly sample $r=Uniform(1,2,...,d)$ to select a single attribute $A_r$. Given $c_r$ (the domain size), $\epsilon_{\infty}$, and $\epsilon_1$, one calculates the parameters $fp_{L-GRR}=\{p_1,q_1,p_2,q_2\}$ and $fp_{L-OSUE}=\{p_1,q_1,p_2,q_2\}$, for L-GRR and L-OSUE, respectively (cf. Eq.~\eqref{ch5:eq_p2_lgrr} and Eq.~\eqref{ch5:eq_p2_losue}). Next, with $fp_{L-GRR}$ and $fp_{L-OSUE}$, one calculates the approximate variances $Var^*[\hat{f}_{L_{(\textit{L-GRR})}}]$ for L-GRR and $Var^*[\hat{f}_{L_{(\textit{L-OSUE})}}]$ for L-OSUE with Eq.~\eqref{var:aprox_longitudinal}. Lastly, to select L-GRR as the local randomizer, we are then left to evaluate if $Var^*[\hat{f}_{L_{(\textit{L-GRR})}}] \leq Var^*[\hat{f}_{L_{(\textit{L-OSUE})}}]$. Therefore, the first round of sanitization ensures a \textit{permanent memoization} $B'$ that is always used for the second round of sanitization to generate $B''$ each time $t \in [1,\tau]$ the user will report the real value $B$. We call our solution \underline{A}daptive \underline{L}DP for \underline{LO}ngitudinal and \underline{M}ultidimensional \underline{FRE}quency \underline{E}stimates (ALLOMFREE), which is summarized in Algorithm~\ref{alg:allomfree} as a pseudocode. 

\begin{algorithm}[!ht]
\caption{User-side algorithm of ALLOMFREE.}
\label{alg:allomfree}
\begin{algorithmic}[1]
\State \textbf{Input :} $\textbf{v} = [v_1,v_2,..., v_d]$, $\textbf{c}=[c_1,c_2,...,c_d]$, $\mathbb{A}=\{\textit{L-GRR},\textit{L-OSUE}\}$, $\epsilon_{\infty}$, $\epsilon_1$, number of reports $\tau$. 

\State $r \gets Uniform(\{1,2,...,d \})$ \Comment{Select attribute only once}

\State $B \gets v_r$

\State $fp_{L-GRR} \gets p_1=\frac{e^{\epsilon_{\infty}}}{e^{\epsilon_{\infty}}+k_r-1},q_1=\frac{1-p_1}{k_r-1}, p_2 = \frac{e^{\epsilon_{1} + \epsilon_{\infty}} - 1}{- k_r e^{\epsilon_{1}} + \left(k_r - 1\right) e^{\epsilon_{\infty}} + e^{\epsilon_{1}} + e^{\epsilon_{1} + \epsilon_{\infty}} - 1}, q_2  = \frac{1-p_2}{k_r - 1}$ \Comment{Get $p_2$ and $q_2$ with Eq.~\eqref{ch5:eq_p2_lgrr}.}

\State $fp_{L-OSUE} \gets p_1=\frac{1}{2}, q_1=\frac{1}{e^{\epsilon_{\infty}}+1}, p_2 = \frac{1 - e^{\epsilon_{1} + \epsilon_{\infty}}}{e^{\epsilon_{1}} - e^{\epsilon_{\infty}} - e^{\epsilon_{1} + \epsilon_{\infty}} + 1}, q_2 = 1 - p_2$ \Comment{Get $p_2$ and $q_2$ with Eq.~\eqref{ch5:eq_p2_losue}.}

\State \textbf{if}  $Var^*[\hat{f}_{L_{(\textit{L-GRR})}}](fp_{L-GRR}) \leq Var^*[\hat{f}_{L_{(\textit{L-OSUE})}}](fp_{L-OSUE})$ : \Comment{Check variances with Eq.~\eqref{var:aprox_longitudinal}})

\State  \hskip1em $\mathcal{A} \gets \textrm{L-GRR}$ \Comment{Select L-GRR as local randomizer}

\State \textbf{else}

\State  \hskip1em $\mathcal{A} \gets \textrm{L-OSUE}$ \Comment{Select L-OSUE as local randomizer}

\State $B' \gets \mathcal{A}(B, \epsilon_{\infty}, c_r)$ \Comment{First round of sanitization (permanent memoization)}

\State \textbf{for} $t \in [1,\tau]$ \textbf{do}

\State  \hskip1em $B''= \mathcal{A}(B',\epsilon_{1}, c_r)$ \Comment{Second round of sanitization}

\State \textbf{end for}

\State  \textbf{send :} $(t,\langle r, B''\rangle)$ for $t \in [1,\tau]$ 

\end{algorithmic}
\end{algorithm}

The intuition of ALLOMFREE is as follows. By requiring each user to submit only 1 attribute with the whole privacy budget, it reduces both the variance incurred as well as the communication cost. Also, since we developed the calculus of the approximate variance in Eq.~\eqref{var:aprox_longitudinal} for the proposed longitudinal protocols (L-GRR and L-OSUE), ALLOMFREE can adaptively select the protocol with a smaller variance value to optimize the data utility. Therefore, ALLOMFREE utilizes optimal solutions for both multidimensional and longitudinal data collection settings developed in Sections~\ref{ch5:sec_multidimensional} and~\ref{ch5:sec_longitudinal} of this manuscript, respectively.

\textbf{Server-Side.} On the server-side, for each attribute $j\in[1,d]$ at time $t \in [1,\tau]$, the estimated frequency $\hat{f}_L(v_i)$ that a value $v_i$ occurs for $i \in [1,c_j]$ is calculated using Eq.~\eqref{eq:est_longitudinal}.

\textbf{Privacy analysis.} On the one hand, according to the analysis in Subsections~\ref{subsub:l_de} and \ref{subsub:l_ue}, Alg.~\ref{alg:allomfree} satisfies $\epsilon$-LDP with upper $\epsilon_{\infty}$ (infinity reports) and lower $\epsilon_1$ (a single report) bounds as it uses either L-GRR or L-OSUE to sanitize a single attribute per user. \textbf{Notice that, to ensure users' privacy over time and to avoid the sequential composition theorem~\cite{dwork2014algorithmic}, each user must always report the same unique attribute $A_r$}. In addition, the privacy of a user decreases gracefully according to the number of LDP reports $t \leq \tau $ that an adversary has gained access to, which is calculated as~\cite{Naor2020,erlingsson2020encode}: 

\begin{equation}\label{eq:eps_long}
\epsilon_t=\ln{\left (\frac{e^{\epsilon_{{\infty}}+t\epsilon_{1}} + 1}{e^{\epsilon_{{\infty}}}+e^{t\epsilon_{1}}} \right)} \leq \min \{ \epsilon_{{\infty}}, t\epsilon_{1}\}    \textrm{.}
\end{equation} 

\section{Results and Discussion} \label{ch5:sec_results_discussion}

In this section, we present the setup of our experiments in Section~\ref{sub:setup_allomfree}, the results with real-world data in Section~\ref{sub:results_allomfree}, and a general discussion in Section~\ref{ch5:discussion_allomfree} with related work and limitations.

\subsection{Setup of experiments} \label{sub:setup_allomfree}

The main goal of our experiments is to evaluate the proposed longitudinal LDP protocols on multidimensional frequency estimates a single time, i.e., satisfying $\epsilon_1$-LDP (as in~\cite{rappor,Vidal2020,Kim2018}, for example).

\textbf{Environment.} All algorithms were implemented in Python 3.8.8 with NumPy 1.19.5 and Numba 0.53.1 libraries. The codes we developed and used for all experiments are available in a Github repository\footnote{\url{https://github.com/hharcolezi/ldp-protocols-mobility-cdrs}.}. In all experiments, we report average results over 100 runs as LDP algorithms are randomized. 

\textbf{Methods evaluated.} We consider for evaluation the following solutions and protocols: 

\begin{itemize}

    \item Solution \textit{Smp} (cf. Section~\ref{ch5:sec_multidimensional}), which randomly samples a single attribute to send with the whole privacy budget. We will experiment with the state-of-the-art protocols, namely, L-SUE and L-OUE, and with our extended protocols L-OSUE and L-SOUE; 
    
    \item Our ALLOMFREE solution (cf. Alg.~\ref{alg:allomfree}), which also randomly samples a single attribute to send with the whole privacy budget but adaptively select the optimal protocol, i.e., either L-GRR or L-OSUE.

\end{itemize}

\textbf{Experimental evaluation and metrics.} We vary the longitudinal privacy parameter in the range $\epsilon_{\infty}=[0.5, 1, ..., 3.5, 4]$ with $\epsilon_1 = \{0.3\epsilon_{\infty}, 0.6 \epsilon_{\infty}\}$ to compare our experimental results with numerical ones from Section~\ref{sub:analysis_long}. Notice that this range of privacy guarantees is commonly used in the literature for multidimensional data (e.g., in~\cite{wang2019} the range is $\epsilon=[0.5,...,4]$ and in~\cite{Wang2021_b} the range is $\epsilon=[0.1,...,10]$). 

Since the estimator in Eq.~\eqref{eq:est_longitudinal} is unbiased (cf. Theorem~\ref{theo:est_long}), the variance of our protocols is equal to the MSE that is commonly used in practice as an accuracy metric~\cite{Wang2020,Wang2020_post_process,Wang2021_b,li2021privacy} (cf. Eq.~\eqref{eq:mse_var}). So, to evaluate our results, we use the MSE metric averaged per the number of attributes $d$ \textbf{in a single data collection $\tau=1$, i.e., with $\epsilon_1$-LDP}. Thus, for each attribute $j$, we compute for each value $v_i \in A_j$ the estimated frequency $\hat{f}(v_i)$ and the real one $f(v_i)$ and calculate their differences. More precisely,

\begin{equation}
    MSE_{avg} = \frac{1}{\tau} \sum_{t \in [1,\tau]} \frac{1}{d} \sum_{j \in [1,d]} \frac{1}{|A_j|} \sum_{v_i \in A_j}(f(v_i) - \hat{f}(v_i) )^2 \textrm{.}
\end{equation}

\textbf{Datasets.} For ease of reproducibility, we conduct our experiments on four multidimensional open datasets. We briefly recall here the datasets from Section~\ref{ch3:sub_open_datasets} and the generated one in Chapter~\ref{chap:chapter4}.

\begin{itemize}
    \item \textit{Nursery.} A dataset from the UCI machine learning repository~\cite{uci} with $d=9$ categorical attributes and $n=12960$ samples. The domain size of each attribute is $\textbf{c}=[3, 5, 4, 4, 3, 2, 3, 3, 5]$, respectively. 
    
    \item \textit{Adult.} A dataset from the UCI machine learning repository~\cite{uci} with $d=9$ categorical attributes and $n=45222$ samples after cleaning the data. The domain size of each attribute is $\textbf{c}=[7, 16, 7, 14, 6, 5, 2, 41, 2]$, respectively. 
    
    \item \textit{MS-FIMU.} The dataset developed in Chapter~\ref{chap:chapter4} in which we select $d=6$ categorical attributes (all static attributes, i.e., the dynamic `Visit duration' attribute was not used). The domain size of each attribute is $\textbf{c}=[3, 3, 8, 12, 37, 11]$ (cf. Section~\ref{ch4:info_ms_fimu}), respectively, and there are $n=88935$ samples.
    
    \item \textit{Census-Income.} A dataset from the UCI machine learning repository~\cite{uci} with $d=33$ categorical attributes and $n=299285$ samples. The domain size of each attribute is \begin{math}\textbf{c}=[ 9, 52, 47, 17,  3,  7, 24, ..., 43,  5,  3,  3,  3,  2]\end{math}, respectively. 

\end{itemize}

\subsection{Results}  \label{sub:results_allomfree}

Our experiments were conducted on four real-world datasets with varied parameters for $n$, $d$, and $\textbf{c}$, which allowed evaluating our solutions more practically. Fig.~\ref{fig:results_nursery_allomfree} (\textit{Nursery}), Fig.~\ref{fig:results_adults_allomfree} (\textit{Adult}), Fig.~\ref{fig:results_vhs_allomfree} (\textit{MS-FIMU}), and Fig.~\ref{fig:results_census_allomfree} (\textit{Census-Income}) illustrate for all evaluated protocols, averaged $MSE_{avg}$ (y-axis) according to the longitudinal privacy parameter $\epsilon_{\infty}$ (x-axis) with $\epsilon_1 = 0.3\epsilon_{\infty}$ (right-side plot) and with $\epsilon_1 = 0.6\epsilon_{\infty}$ (left-side plot), respectively.

As one can notice in the results, for all datasets, ALLOMFREE consistently and considerably outperforms the state-of-the-art protocols, namely, L-SUE (a.k.a. Basic-RAPPOR)~\cite{rappor} and L-OUE (that uses OUE~\cite{tianhao2017} twice). Indeed, the difference on performance between ALLOMFREE and the other longitudinal LDP protocols increases according to the privacy guarantees, i.e., for high $\epsilon_{\infty}$ and $\epsilon_1$ values the gap is bigger. This is, first, because in all datasets there are attribute(s) with small domain size (e.g., $c_j=2$ or $c_j=3$), in which L-GRR can provide smaller variance values than L-UE protocols (cf. Section~\ref{sub:analysis_long}). Secondly, by selecting adequately the probabilities $p_1,q_1,p_2,q_2$ for the L-UE protocol (i.e., L-OSUE) also optimizes data utility. Thus, since there is a way to measure the approximate variance of the extended protocols (i.e., Eq.~\eqref{var:aprox_longitudinal}), given the sampled attribute, ALLOMFREE adaptively selects one of the optimized protocol (i.e., L-GRR or L-OSUE) whose smaller variance improves the data utility.

In addition, among the L-UE protocols applied individually, the experimental results with multidimensional data approximate the numerical results with a single attribute from Section~\ref{sub:analysis_long}. For instance, the proposed L-OSUE provides similar or improved performance than L-SUE while always outperforming L-OUE. Besides, L-SOUE always outperforms L-OUE too, achieving similar performance than L-OSUE and L-SUE in low privacy regimes (i.e., high $\epsilon$ values). As we have already shown in Section~\ref{sub:analysis_long}, even though OUE has higher utility than SUE for one-time collection~\cite{tianhao2017}, applying OUE twice does not provide higher utility.

\begin{figure}[H]
    \centering
    \includegraphics[width=1\linewidth]{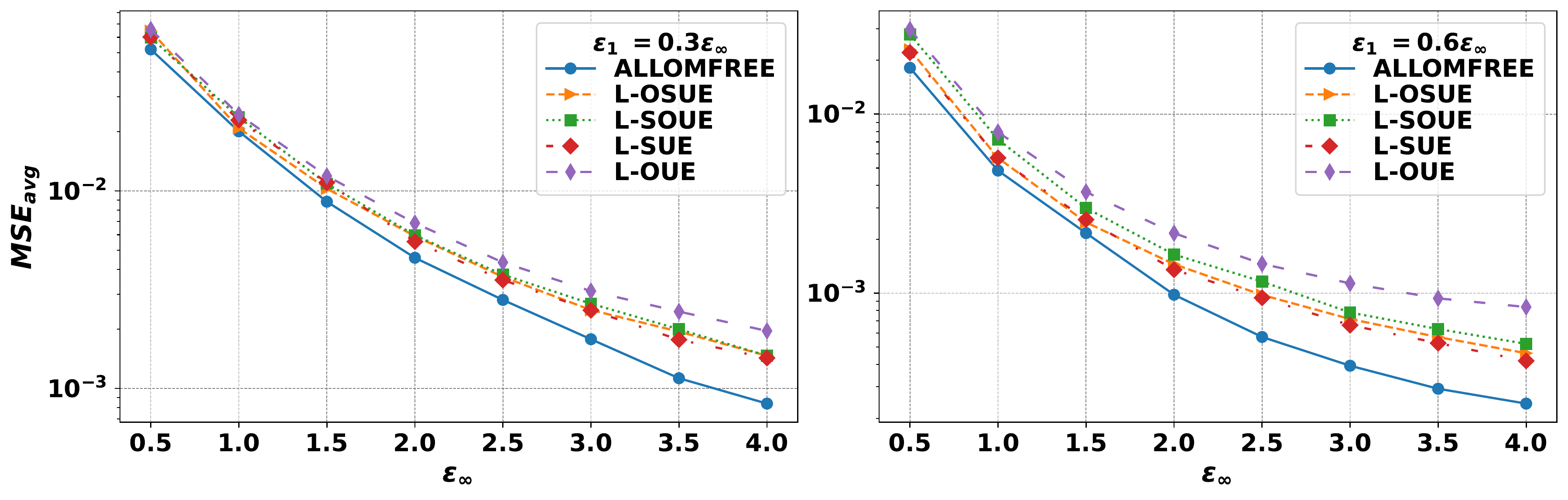}\\
    \caption{Averaged MSE varying $\epsilon_{\infty}$ with $\epsilon_1 = 0.3\epsilon_{\infty}$ (left-side plot) and with $\epsilon_1 = 0.6\epsilon_{\infty}$ (right-side plot) on the \textit{Nursery} dataset.}
    \label{fig:results_nursery_allomfree}
\end{figure}

\begin{figure}[H]
    \centering
    \includegraphics[width=1\linewidth]{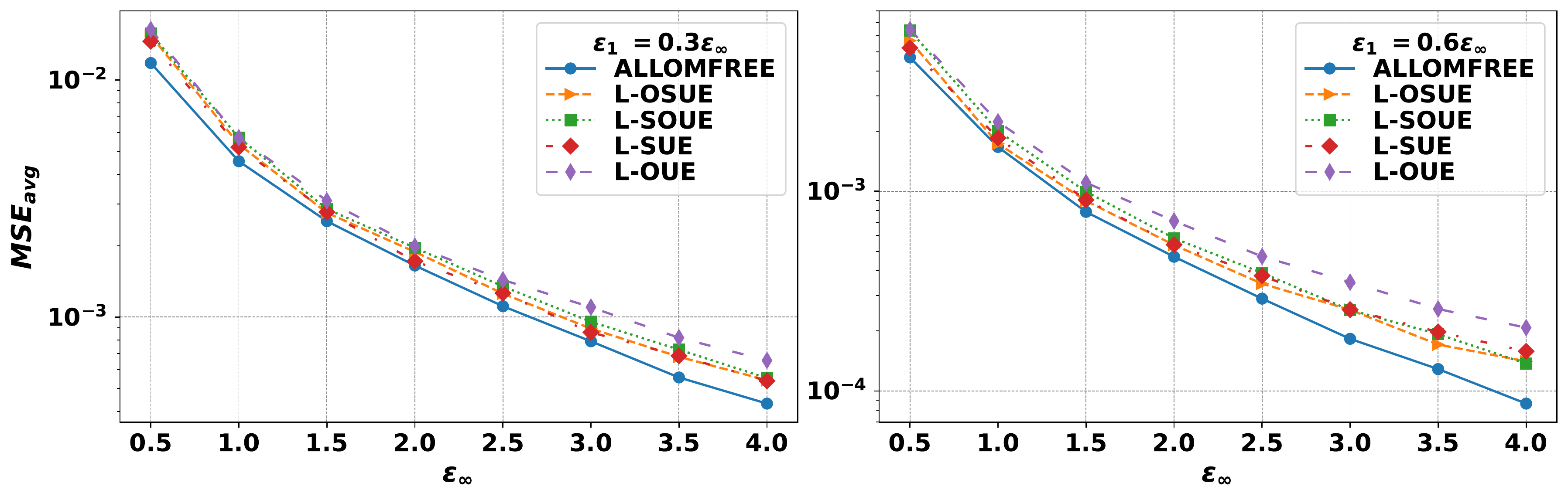}\\
    \caption{Averaged MSE varying $\epsilon_{\infty}$ with $\epsilon_1 = 0.3\epsilon_{\infty}$ (left-side plot) and with $\epsilon_1 = 0.6\epsilon_{\infty}$ (right-side plot) on the \textit{Adult} dataset.}
    \label{fig:results_adults_allomfree}
\end{figure}

\begin{figure}[H]
    \centering
    \includegraphics[width=1\linewidth]{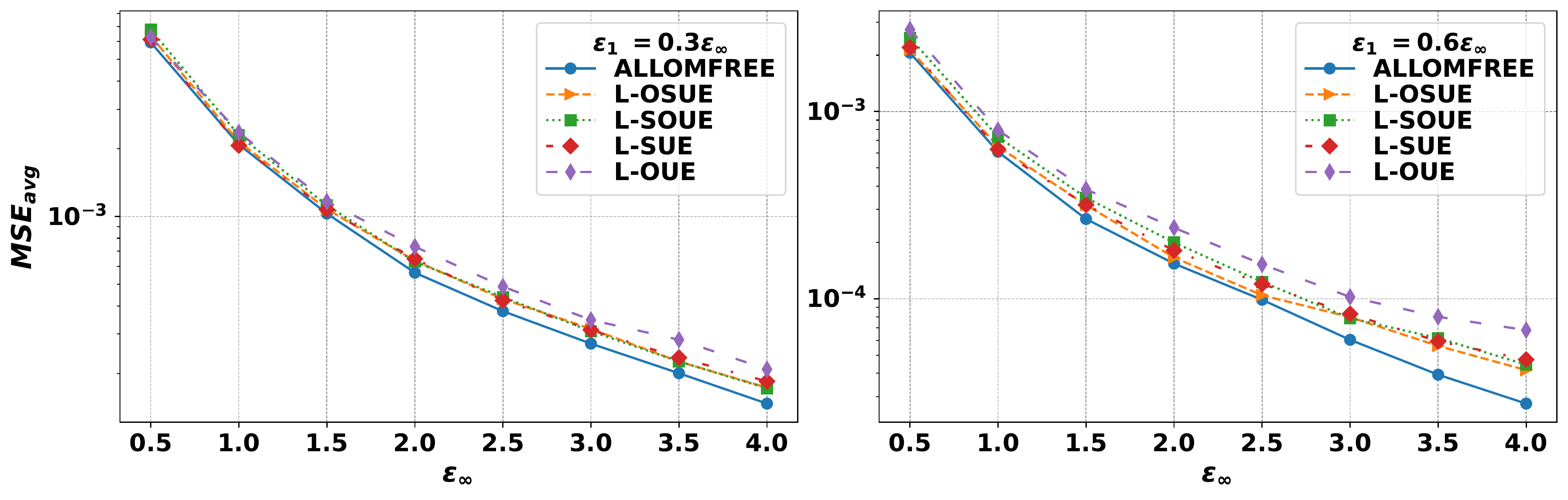}\\
    \caption{Averaged MSE varying $\epsilon_{\infty}$ with $\epsilon_1 = 0.3\epsilon_{\infty}$ (left-side plot) and with $\epsilon_1 = 0.6\epsilon_{\infty}$ (right-side plot) on the \textit{MS-FIMU} dataset.}
    \label{fig:results_vhs_allomfree}
\end{figure}

\begin{figure}[H]
    \centering
    \includegraphics[width=1\linewidth]{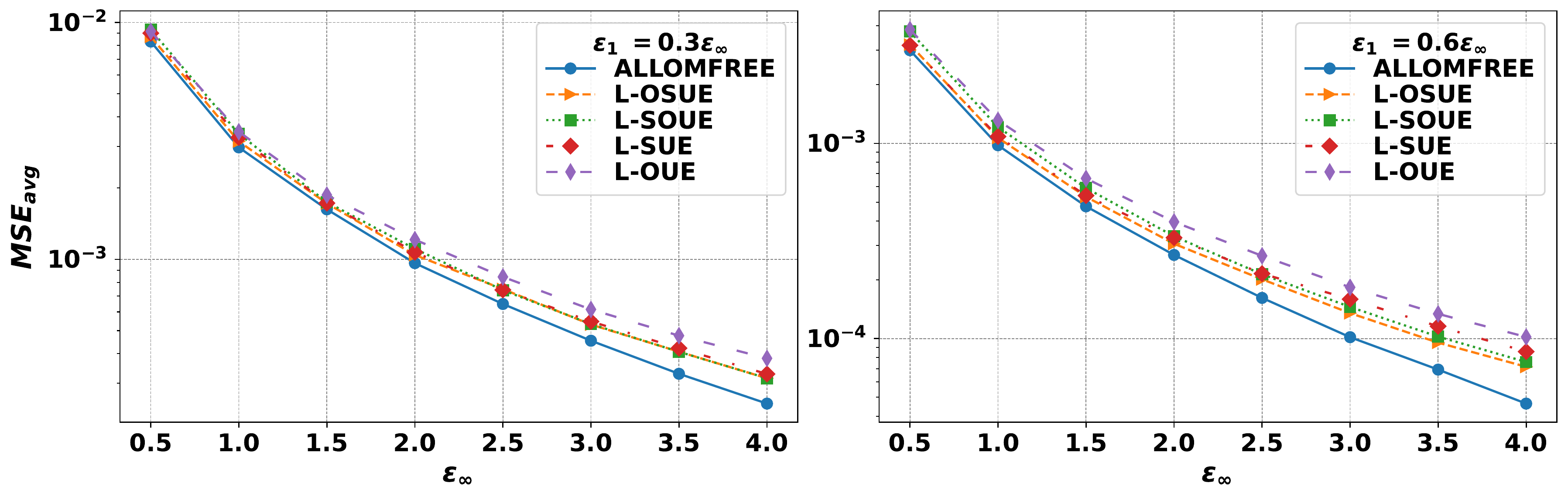}
    \caption{Averaged MSE varying $\epsilon_{\infty}$ with $\epsilon_1 = 0.3\epsilon_{\infty}$ (left-side plot) and with $\epsilon_1 = 0.6\epsilon_{\infty}$ (right-side plot) on the \textit{Census-Income} dataset.}
    \label{fig:results_census_allomfree}
\end{figure}

To complement the results of Figs.~\ref{fig:results_nursery_allomfree} --~\ref{fig:results_census_allomfree}, Table~\ref{tab:utility_results_03} ($\epsilon_1 = 0.3\epsilon_{\infty}$) and Table~\ref{tab:utility_results_06} ($\epsilon_1 = 0.6\epsilon_{\infty}$) exhibit for all datasets and $\epsilon_{\infty}$ guarantees the following utility metrics:

\begin{equation} \label{eq:utility}
\begin{gathered}
    \mathscr{U}_{\textit{L-SUE}} = \frac{MSE_{avg_{(\textit{L-SUE})}} - MSE_{avg_{(\textit{ALLOMFREE})}}}{MSE_{avg_{(\textit{L-SUE})}}} \textrm{,} \\
    \mathscr{U}_{\textit{L-OUE}} = \frac{MSE_{avg_{(\textit{L-OUE})}} - MSE_{avg_{(\textit{ALLOMFREE})}}}{MSE_{avg_{(\textit{L-OUE})}}} \textrm{,}
\end{gathered}
\end{equation}

in which $\mathscr{U}_{\textit{L-SUE}}$ and $\mathscr{U}_{\textit{L-OUE}}$ represent the accuracy gain of ALLOMFREE over the state-of-the-art L-SUE and L-OUE protocols, respectively.

\setlength{\tabcolsep}{5pt}
\renewcommand{\arraystretch}{1.4}
\begin{table}[!ht]
    \scriptsize
    \centering
    \caption{Accuracy gain of ALLOMFREE over the state-of-the-art L-SUE and L-OUE protocols for all datasets with $\epsilon_1 = 0.3\epsilon_{\infty}$, measured with the $\mathscr{U}_{\textit{L-SUE}}$ and $\mathscr{U}_{\textit{L-OUE}}$ metrics expressed in $\%$.}
    \begin{tabular}{lrrrrrrrr}
    \hline
    \multirow{2}{*}{$\epsilon_{\infty}$} & \multicolumn{2}{c}{\textit{Nursery}}  &  \multicolumn{2}{c}{\textit{Adult}}  &  \multicolumn{2}{c}{\textit{MS-FIMU}}  &  \multicolumn{2}{c}{\textit{Census-Income}}  \\ 
    & $\mathscr{U}_{\textit{L-SUE}}$ & $\mathscr{U}_{\textit{L-OUE}}$ & $\mathscr{U}_{\textit{L-SUE}}$ & $\mathscr{U}_{\textit{L-OUE}}$ & $\mathscr{U}_{\textit{L-SUE}}$ & $\mathscr{U}_{\textit{L-OUE}}$ & $\mathscr{U}_{\textit{L-SUE}}$ & $\mathscr{U}_{\textit{L-OUE}}$ \\
    \hline
    0.5 &  13.51 &  20.63 &  19.03 &  27.73 &   3.03 &   5.43 &   7.84 &   9.48 \\
    1.0 &  12.36 &  17.75 &  12.77 &  20.44 &   1.01 &  11.57 &   9.21 &  14.08 \\
    1.5 &  19.95 &  25.86 &   8.47 &  18.01 &   4.13 &  11.55 &   5.82 &  12.92 \\
    2.0 &  17.18 &  33.24 &   4.11 &  17.16 &  13.22 &  23.44 &  10.06 &  20.41 \\
    2.5 &  20.70 &  35.40 &  11.93 &  22.54 &  10.41 &  22.25 &  12.77 &  23.15 \\
    3.0 &  28.69 &  42.98 &   8.35 &  28.22 &  13.07 &  21.56 &  17.07 &  26.21 \\
    3.5 &  36.19 &  54.02 &  18.97 &  32.02 &  14.78 &  29.10 &  22.02 &  30.96 \\
    4.0 &  41.24 &  57.16 &  19.81 &  34.25 &  20.38 &  29.64 &  24.99 &  35.60 \\\hline
    Mean &  23.73 &  35.88 &  12.93 &  25.05 &  10.00 &  19.32 &  13.72 &  21.60 \\
    \hline
    \end{tabular}
    \label{tab:utility_results_03}
\end{table}

\setlength{\tabcolsep}{5pt}
\renewcommand{\arraystretch}{1.4}
\begin{table}[!ht]
    \scriptsize
    \centering
    \caption{Accuracy gain of ALLOMFREE over the state-of-the-art L-SUE and L-OUE protocols for all datasets with $\epsilon_1 = 0.6\epsilon_{\infty}$, measured with the $\mathscr{U}_{\textit{L-SUE}}$ and $\mathscr{U}_{\textit{L-OUE}}$ metrics expressed in $\%$.}
    \begin{tabular}{lrrrrrrrr}
    \hline
    \multirow{2}{*}{$\epsilon_{\infty}$} & \multicolumn{2}{c}{\textit{Nursery}}  &  \multicolumn{2}{c}{\textit{Adult}}  &  \multicolumn{2}{c}{\textit{MS-FIMU}}  &  \multicolumn{2}{c}{\textit{Census-Income}}  \\ 
    & $\mathscr{U}_{\textit{L-SUE}}$ & $\mathscr{U}_{\textit{L-OUE}}$ & $\mathscr{U}_{\textit{L-SUE}}$ & $\mathscr{U}_{\textit{L-OUE}}$ & $\mathscr{U}_{\textit{L-SUE}}$ & $\mathscr{U}_{\textit{L-OUE}}$ & $\mathscr{U}_{\textit{L-SUE}}$ & $\mathscr{U}_{\textit{L-OUE}}$ \\
    \hline
    0.5 &  17.82 &  38.84 &  10.42 &  27.46 &   6.41 &  24.79 &   5.65 &  21.61 \\
    1.0 &  14.99 &  38.97 &   9.83 &  25.14 &   2.97 &  23.32 &   9.79 &  25.46 \\
    1.5 &  15.88 &  41.05 &  12.90 &  28.59 &  16.00 &  30.52 &  11.88 &  28.05 \\
    2.0 &  27.52 &  54.69 &  12.95 &  33.78 &  14.81 &  35.65 &  18.45 &  32.31 \\
    2.5 &  39.59 &  60.96 &  23.28 &  38.50 &  17.71 &  35.34 &  24.89 &  39.11 \\
    3.0 &  40.64 &  65.32 &  28.59 &  47.95 &  27.26 &  40.97 &  36.12 &  44.48 \\
    3.5 &  44.39 &  68.73 &  34.85 &  50.00 &  33.69 &  50.94 &  40.01 &  48.18 \\
    4.0 &  42.24 &  71.13 &  45.26 &  58.33 &  41.83 &  59.47 &  45.85 &  54.44 \\\hline
    Mean &  30.38 &  54.96 &  22.26 &  38.72 &  20.08 &  37.62 &  24.08 &  36.70 \\
    \hline
    \end{tabular}
    \label{tab:utility_results_06}
\end{table}

From Tables~\ref{tab:utility_results_03} and~\ref{tab:utility_results_06}, one can notice that ALLOMFREE considerably improves the quality of the frequency estimates in comparison with the state-of-the-art L-SUE and L-OUE protocols. On average, ALLOMFREE improves the results of L-SUE at least $10\%$ with the \textit{MS-FIMU} dataset in Table~\ref{tab:utility_results_03} and at most $30.38\%$ with the \textit{Nursery} dataset in Table~\ref{tab:utility_results_06} for the privacy guarantees $\epsilon_{\infty}$ and $\epsilon_1$ analyzed. Similarly, on average, ALLOMFREE improves the results of L-OUE at least $19.32\%$ with the \textit{MS-FIMU} dataset in Table~\ref{tab:utility_results_03} and at most $54.96\%$ with the \textit{Nursery} dataset in Table~\ref{tab:utility_results_06}. The highest gain of accuracy was about $\sim 71\%$, achieved with the \textit{Nursery} dataset when $\epsilon_{\infty}=4$ in Table~\ref{tab:utility_results_06} in comparison with the L-OUE protocol. Finally, as one can note, with higher values of $\epsilon_1$, ALLOMFREE will provide much higher utility than the other protocols.

\subsection{Discussion and Related Work}  \label{ch5:discussion_allomfree}

Frequency estimation is a fundamental primitive in LDP and has received considerable attention for a single attribute in both theoretical and application perspectives~\cite{yang2020_survey,Wang2020_survey,Xiong2020_survey} (see, e.g.,~\cite{Murakami2019,Hadamard,Wang2021_b,Fernandes2019,Cormode2021,tianhao2017,kairouz2016discrete,Alvim2018,rappor,microsoft,Kim2018,Arcolezi2020,Vidal2020,alvim2017,kairouz2016extremal,Zhao2019,Li2020,Wang2017}). However, most studies for collecting multidimensional data with LDP mainly focused on numerical data~\cite{Xiong2020_survey} (e.g., cf.~\cite{xiao2,wang2019,Duchi2018,Wang2021_b}) or other complex tasks with categorical data, e.g., marginal estimation~\cite{Shen2021,Peng2019,Zhang2018,Ren2018,Fanti2016} and analytical/range queries~\cite{Jianyu2020,Xu2020,Gu2019,Cormode2019}. For instance, in~\cite{xiao2,wang2019}, the authors propose sampling-based LDP mechanisms for real-valued data (named Harmony and Piecewise Mechanism) and applied these protocols in a multidimensional setting using state-of-the-art LDP mechanisms from~\cite{Bassily2015,tianhao2017} for categorical data. Regarding multidimensional frequency estimates, in~\cite{tianhao2017}, the authors prove for the optimal local hashing protocol that sending $1$ attribute with the whole privacy budget $\epsilon$ results in less variance than splitting the privacy budget for $d=2$ attributes, i.e., with $\epsilon/2$. More generically, this is true for any number of attributes $d$ for the GRR protocol, as we have shown analytically and empirically in Chapter~\ref{chap:chapter7}, and for both OUE and SUE protocols, as shown in Section~\ref{ch5:sec_multidimensional}.

Besides, most frequency estimation academic literature focuses on single data collection. To address longitudinal data collections, in~\cite{rappor,microsoft,erlingsson2020encode}, the authors proposed LDP protocols based on two rounds of sanitization, i.e., \textit{memoization}, which was also adopted in this chapter. In the literature, some works~\cite{Kim2018,Vidal2020} applied L-SUE (a.k.a. Basic-RAPPOR~\cite{rappor}) and L-OUE (i.e., OUE~\cite{tianhao2017} two times) for longitudinal frequency estimates. However, rather than strictly using only SUE or OUE twice, we prove that the optimal combination is starting with OUE and then with SUE (i.e., L-OSUE). The privacy guarantees of chaining two LDP protocols has been further studied in~\cite{Naor2020,erlingsson2020encode}, which results in Eq.~\eqref{eq:eps_long}. Indeed, both ``multiple" settings combined (i.e., many attributes and several collections throughout time), imposes several challenges, in which this paper, proposes the first solution named ALLOMFREE under LDP.

Indeed, both ``multiple" settings combined (i.e., many attributes and several collections throughout time), imposes several challenges, in which this chapter, proposes the first solution named ALLOMFREE under $\epsilon$-LDP. Yet, concerning the privacy guarantees of ALLOMFREE, the memoization step is certainly effective for longitudinal privacy to the cases where the true client's data does not vary (static) or vary very slowly or in an uncorrelated manner~\cite{rappor}. In many application scenarios, gender, age-ranges, nationality, and other demographic data are generally static or vary hardly ever. On the other hand, for dynamic attributes such as location or the time spent in the application, this is not the case. Therefore, for each different value, a new memoized value would be generated, thus accumulating the privacy budget $\epsilon_{\infty}$ by the sequential composition theorem~\cite{dwork2014algorithmic}. 

\section{Conclusion} \label{ch5:sec_conc}

This chapter investigates the problem of collecting multidimensional data throughout time (i.e., longitudinal studies) for the fundamental task of frequency estimation under $\epsilon$-LDP guarantees. We extended the analysis of three state-of-the-art LDP protocols, namely, GRR~\cite{kairouz2016discrete}, OUE~\cite{tianhao2017}, and SUE~\cite{rappor} (cf. Section~\ref{ch2:sub_ldp}) for both longitudinal and multidimensional frequency estimates. On the one hand, for all three protocols, we theoretically prove that randomly sampling a single attribute per user improves data utility, which is an extension of common results in the LDP literature~\cite{erlingsson2020encode,tianhao2017,Jianyu2020,Wang2021,Zhang2018,Arcolezi2021}. 

On the other hand, in the literature, both SUE and OUE protocols have been extended (and also applied~\cite{Kim2018,Vidal2020}) to longitudinal studies based on the concept of \textit{memoization}~\cite{rappor,microsoft}, i.e., L-SUE and L-OUE, respectively. However, we numerically and experimentally show that combining both protocols provides higher data utility, i.e., starting with OUE and then with SUE (L-OSUE) minimizes the variance incurred rather than using SUE or OUE twice. In addition, for the first time, we also extended GRR for longitudinal studies (i.e., L-GRR), which provides higher data utility than the other protocols based on unary encoding for attributes with small domain sizes. 

We also notice that in a multidimensional setting with different domain sizes for each attribute, a dynamic selection of longitudinal LDP protocols is preferred. Therefore, we also proposed a new solution named \underline{A}daptive \underline{L}DP for \underline{LO}ngitudinal and \underline{M}ultidimensional \underline{FRE}quency \underline{E}stimates (ALLOMFREE), which combines all the aforementioned results. More specifically, ALLOMFREE randomly samples a single attribute to send with the whole privacy budget and adaptively selects the optimal protocol, i.e., either L-GRR or L-OSUE. 

To validate our proposal, we conducted a comprehensive and extensive set of experiments on four real-world open datasets. Under the same privacy guarantee, results show that ALLOMFREE consistently and considerably outperforms the state-of-the-art L-SUE~\cite{rappor} and L-OUE~\cite{tianhao2017} protocols in the quality of the frequency estimates, with a gain of accuracy, on average, ranging from $10\%$ up to $55\%$.

Lastly, we highlight that ALLOMFREE is based on the multidimensional \textit{Smp} solution, which randomly samples a single attribute out of $d$ ones to send it with $\epsilon$-LDP. However, aggregators (who are also seen as attackers) are aware of the sampled attribute and its LDP value, which is protected by a ``less strict" $e^{\epsilon}$ probability bound (rather than $e^{\epsilon/d}$). Indeed, in some cases, using the \textit{Smp} solution may be ``unfair" with some users, e.g., users that randomly sample a demographic attribute (e.g., age) might be less concerned to report their data than those whose sampled attribute is socially ``more" sensitive (e.g., disease, location, most common web page). Investigating how to deal with this ``unfair" issue on multidimensional frequency estimates is the main \textit{goal} of the next Chapter~\ref{chap:chapter6}.


\chapter{Multidimensional Frequency Estimates With LDP: Privacy Focus}  \label{chap:chapter6}

In Chapter~\ref{chap:chapter5}, we tackled both multidimensional and longitudinal settings for the fundamental task of frequency estimation under $\epsilon$-LDP guarantees. In this chapter, we continue contributing to the \textbf{theoretical} aspect dedicating our efforts to the multidimensional setting only. Indeed, the sampling-based solution for multidimensional frequency estimates used in Chapters~\ref{chap:chapter7} and~\ref{chap:chapter5} (i.e., \textit{Smp}), focuses on optimizing \textbf{the utility}. However, this solution considers that all attributes have equal weight in terms of privacy, which (generally) is not the case in real life. For example, in health data collection, people who randomly sample the disease attribute might hesitate to share their data in comparison with others that randomly sample, e.g., age. This idea extends to other application scenarios, e.g., in software monitoring applications with the ``favorite webpage" attribute, and so on. Therefore, in this chapter, we propose a solution for multidimensional frequency estimates under $\epsilon$-LDP guarantees, which improves the \textbf{privacy} of users while providing \textbf{the same or better performance} than the state-of-the-art \textit{Smp} solution.

\section{Introduction} \label{ch6:sec_introduction}

We start recalling the problem statement here. As in previous chapters, we assume there are $d$ attributes $A=\{A_1,A_2,...,A_d\}$, where each attribute $A_j$ with a discrete domain has a specific number of values $|A_j|=c_j$. Each user $u_i$ for $i \in \{1,2,...,n\}$ has a tuple $\textbf{v}^{(i)}=(v^{(i)}_{1},v^{(i)}_{2},...,v^{(i)}_{d})$, where $v^{(i)}_{j}$ represents the value of attribute $A_j$ in record $\textbf{v}^{(i)}$. Thus, for each attribute $A_j$, the analyzer's goal is to estimate a $c_j$-bins histogram, including the frequency of all values in $A_j$.

As presented in Chapters~\ref{chap:chapter7} and~\ref{chap:chapter5}, there are mainly two solutions for satisfying LDP by randomizing $\textbf{v}$, namely, \textit{Spl} and \textit{Smp}. We will also omit the index notation $\textbf{v}^{(i)}$ and use $\textbf{v}$ in the analysis as we focus on one arbitrary user $u_i$ here. Although the \textit{Smp} solution adds sampling error, in the literature~\cite{wang2019,xiao2,tianhao2017,Arcolezi2021,Wang2021_b,Jianyu2020,Wang2021,erlingsson2020encode,bassily2017practical} and in previous Chapters~\ref{chap:chapter7} and~\ref{chap:chapter5}, \textit{Smp} has proven to provide higher data utility than the former \textit{Spl} solution. 

However, as aforementioned, aggregators (who are also seen as attackers) are aware of the sampled attribute and its LDP value, which is protected by a ``less strict" $e^{\epsilon}$ probability bound (rather than $e^{\epsilon/d}$). In other words, while both solutions provide $\epsilon$-LDP, we argue that using the \textit{Smp} solution may be unfair with some users. For instance, on collecting multidimensional health records (i.e., demographic and clinical data), users that randomly sample a demographic attribute (e.g., gender) might be less concerned to report their data than those whose sampled attribute is ``disease" (e.g., if positive for human immunodeficiency viruses - HIV).

This way, there is a privacy-utility trade-off between the \textit{Spl} and \textit{Smp} solutions. With these elements in mind, we formulate the \textbf{problematic of this chapter} as: \textit{For the same privacy budget $\epsilon$, is there a solution for multidimensional frequency estimates that provides better data utility than Spl and more protection than Smp?}

Thus, we intend to solve the aforementioned problematic by answering the following question: \textit{What if the sampling result (i.e., the selected attribute) was \textbf{not} disclosed with the aggregator?} Thus, since the sampling step randomly selects an attribute $j \in [1,d]$ (we slightly abuse the notation and use $j$ for $A_j$), we propose that users \textbf{add uncertainty about the sampled attribute} through generating $d-1$ \textit{fake data}, i.e., one for each non-sampled attribute. 

We call our solution \textit{\underline{R}andom \underline{S}ampling plus \underline{F}ake \underline{D}ata (RS+FD)}. On the one hand, since RS+FD introduces some \textit{uncertainty} in the view of the aggregator, we remarked that users' \textit{privacy} is amplified by sampling~\cite{Chaudhuri2006,Li2012,balle2018privacy,balle2020privacy,first_ldp}. Besides, we integrate two state-of-the-art LDP protocols, namely, GRR~\cite{kairouz2016discrete} and OUE~\cite{tianhao2017} for single attribute frequency estimation, both presented in Section~\ref{ch2:sub_ldp}, into our RS+FD solution to propose four protocols. We demonstrate through experimental validations using four real-world datasets the advantages of our protocols with RS+FD over the state-of-the-art \textit{Spl} and \textit{Smp} solutions.

The rest of this chapter is organized as follows. In Section~\ref{ch6:sec_prop_metho}, we introduce our RS+FD solution, the integration of state-of-the-art LDP mechanisms within RS+FD, and their analysis. In Section~\ref{ch6:sec_results}, we present experimental results. Lastly, in Section~\ref{ch6:sec_conclusion}, we present the concluding remarks. The proposed RS+FD solution in Section~\ref{ch6:sec_prop_metho} and the results presented in Section~\ref{ch6:sec_results} were published in a full paper~\cite{Arcolezi2021_rs+fd} at the 30th International Conference on Information and Knowledge Management (CIKM 2021).

\section{Random Sampling Plus Fake Data (RS+FD)} \label{ch6:sec_prop_metho}

In this section, we present the overview of our RS+FD solution (Section~\ref{ch6:sub_overview}), and the integration of the local randomizers presented in Section~\ref{ch2:sub_ldp} within RS+FD (Subsections~\ref{ch6:sub_rs+fd_grr},~\ref{ch6:sub_rs+fd_oue}, and~\ref{ch6:sub_rs+fd_adp}).

\subsection{Overview of RS+FD} \label{ch6:sub_overview}

Fig.~\ref{ch6:fig_system_overview} illustrates the overview of our proposed RS+FD solution in comparison with the aforementioned known solutions, namely, \textit{Spl} and \textit{Smp}, which is detailed in the following. 

\begin{figure}[!ht]
    \centering
    \includegraphics[width=1\linewidth]{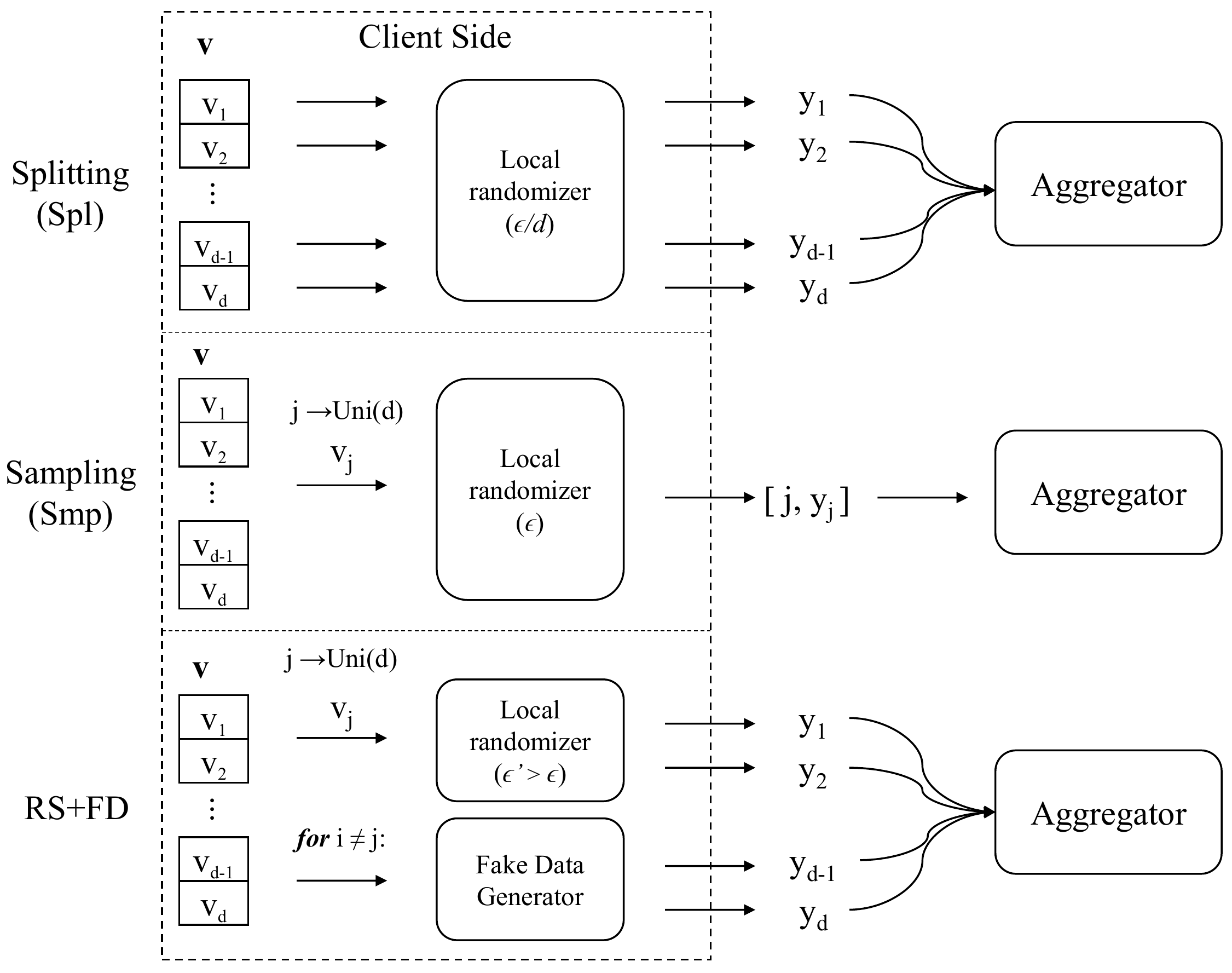}
    \caption{Overview of our random sampling plus fake data (RS+FD) solution in comparison with two known solutions, namely, \textit{Spl} and \textit{Smp}, where $Uni(d)=Uniform(\{1,2,...,d\})$.}
    \label{ch6:fig_system_overview}
\end{figure}

We consider the local DP model, in which there are two entities, namely, users and the aggregator (an untrusted curator). Let $n$ be the total number of users, $d$ be the total number of attributes, $\textbf{c}=[c_1,c_2,...,c_d]$ be the domain size of each attribute, $\mathcal{A}$ be a local randomizer, and $\epsilon$ be the whole privacy budget. Each user holds a tuple $\textbf{v}=(v_1,v_2,...,v_d)$, i.e., a private value per attribute.

\textbf{Client-Side.} The client-side is split into two steps, namely, local randomization and fake data generation (cf. Fig.~\ref{ch6:fig_system_overview}). Initially, each user samples a unique attribute $j$ uniformly at random and applies an LDP mechanism to its value $v_j$. Indeed, RS+FD is generic to be applied with any existing LDP mechanisms (e.g., GRR~\cite{kairouz2016discrete}, UE- or hash-based protocols~\cite{rappor,tianhao2017}, Hadamard Response~\cite{Hadamard}). Next, for each $d-1$ non-sampled attribute $i$, the user generates one random fake data. Finally, each user sends the (LDP or fake) value of each attribute to the aggregator, i.e., a tuple $\textbf{y}=(y_1,y_2,...,y_d)$. This way, the sampling result is not disclosed with the aggregator and, thus, an amplified privacy budget $\epsilon' \geq \epsilon$ can be used. In summary, Alg.~\ref{alg:rs+fd} exhibits the pseudocode of our RS+FD solution. 

\begin{algorithm}

\caption{\underline{R}andom \underline{S}ampling plus \underline{F}ake \underline{D}ata (RS+FD)}
\label{alg:rs+fd}
\begin{algorithmic}[1]

\Statex \textbf{Input :} tuple $\textbf{v} = (v_1,v_2,..., v_d)$, domain size of attributes $\textbf{c}=[c_1,c_2,...,c_d]$, privacy parameter $\epsilon$, local randomizer $\mathcal{A}$. 
\Statex \textbf{Output :} sanitized tuple $\textbf{y}=(y_1,y_2,...,y_d)$.

\State $\epsilon' = \ln{\left( d \cdot (e^{\epsilon} - 1) + 1 \right)}$ \Comment{amplification by sampling~\cite{Li2012}} 

\State $j \gets Uniform(\{1,2,...,d \})$ \Comment{Selection of attribute to sanitize}

\State $B_j \gets v_j$

\State $y_j \gets \mathcal{A}(B_j, c_j, \epsilon')$ \Comment{sanitize data of the sampled attribute}

\State \textbf{for} $i \in \{1,2,...,d\} \setminus \{j\}$ \textbf{do}\Comment{non-sampled attributes} 

\State  \hskip1em $y_i \gets \textit{Uniform}(\{1,...,c_i\}) $ \Comment{generate fake data}

\State \textbf{end for}

\Statex \textbf{return :} $\textbf{y}=(y_1,y_2,...,y_d)$ \Comment{sampling result is not disclosed}
\end{algorithmic}
\end{algorithm}

\textbf{Aggregator.} For each attribute $j \in [1,d]$, the aggregator performs frequency (or histogram) estimation on the collected data by removing bias introduced by the local randomizer and fake data.

\textbf{Privacy analysis.} Let $\mathcal{A}$ be any existing LDP mechanism, Algorithm~\ref{alg:rs+fd} satisfies $\epsilon$-LDP, in a way that $\epsilon'=\ln{\left( d \cdot (e^{\epsilon} - 1) + 1 \right)}$. Indeed, we observe that our scenario is equivalent to sampling a dataset $\mathcal{D}$ without replacement with sampling rate $\beta=\frac{1}{d}$ in the centralized setting of DP, which enjoys privacy amplification (cf. Section~\ref{ch2:sub_sampling}). More specifically, let a trusted curator in the centralized DP setting randomly split a dataset $\mathcal{D}$ in $d$ disjoint subsets $D_1,D_2,...,D_d$, i.e., each with $n/d$ non-overlapping users. Next, let the trusted curator perform frequency estimation in each subset $D \in \mathcal{D}$ with $\epsilon'$-DP. Therefore, invoking Theorem~\ref{theo:amp_sampling} (amplification by sampling) and Proposition~\ref{ch2:prop_parallel_composition} (parallel composition), all frequency estimation queries satisfy $\epsilon$-DP with $\epsilon=\ln{\left( 1 + \beta (e^{\epsilon'} - 1)  \right)}$ where $\beta=1/d$. In our case, with the local model, users sanitize their data locally with a DP model. This way, to satisfy $\epsilon$-LDP with RS+FD, an amplified privacy parameter $\epsilon' \geq \epsilon$ can be used.

\textbf{Limitations.} Similar to other sampling-based methods for collecting multidimensional data under LDP~\cite{Duchi2018,xiao2,wang2019,Wang2021_b}, our RS+FD solution also entails \textit{sampling error}, which is due to observing a sample instead of the entire population. In addition, in comparison with the \textit{Smp} solution, RS+FD requires more computation on the user side because of the fake data generation part. Yet, communication cost is still equal to the \textit{Spl} solution, i.e., each user sends one message per attribute. Lastly, while RS+FD utilizes an amplified $\epsilon' \geq \epsilon$, there is also bias generated from uniform fake data that may require a sufficient number of users $n$ to eliminate the noise.

\subsection{RS+FD with GRR} \label{ch6:sub_rs+fd_grr}

\textbf{Client side.} Integrating GRR as the local randomizer $\mathcal{A}$ into Alg.~\ref{alg:rs+fd} (RS+FD[GRR]) requires no modification. Initially, on the client-side, each user randomly samples an attribute $j$. Next, the value $v_j$ is sanitized with GRR (cf. Section~\ref{ch2:sub_GRR}) using the size of the domain $c_j$ and the privacy parameter $\epsilon'$. In addition, for each non-sampled $d-1$ attribute $i$, the user also generates fake data uniformly at random according to the domain size $c_i$. Lastly, the user transmits the sanitized tuple $\textbf{y}$, which includes the LDP value of the true data ``hidden" among fake data. Visually, Fig.~\ref{fig:prob_tree_rsfd_grr} illustrates the probability tree of the RS+FD[GRR] protocol.

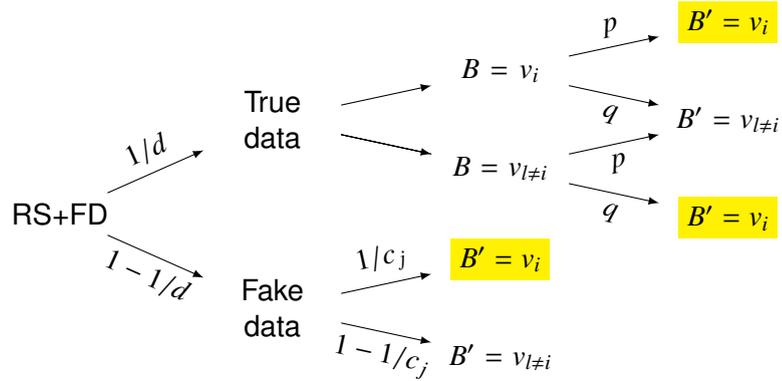
\begin{figure}[!ht]
\centering
\tikzset{
  treenode/.style = {shape=rectangle, rounded corners,
                     draw,align=center,
                     top color=white},
  root/.style     = {treenode},
  env/.style      = {treenode},
  dummy/.style    = {circle,draw}
}
\tikzstyle{level 1}=[level distance=2.8cm, sibling distance=2.5cm]
\tikzstyle{level 2}=[level distance=3cm, sibling distance=1.3cm]
\tikzstyle{bag} = [text width=4em, text centered]
\tikzstyle{end} = [circle, minimum width=2pt,fill, inner sep=0pt]
\begin{tikzpicture}
  [
    grow                    = right,
    edge from parent/.style = {draw, -latex},
    sloped
  ]
  \node [bag] {RS+FD}
    child { node [bag] {Fake data}
        child { node [bag] {$B'=v_{l\neq i}$}
          edge from parent node [below] {$1-1/c_j$} }
        child { node [bag] {\colorbox{yellow}{$B'=v_i$}}
          edge from parent node [above] {$1/c_j$} }
        edge from parent node [below] {$1-1/d$} }
    child { node [bag] {True data}
        child { node [bag] {$B=v_{l\neq i}$}  
         child { node [bag] {\colorbox{yellow}{$B'=v_i$}}
            edge from parent node [below] {$q$}}
            child { node [bag] { \textcolor{white}{$B'=v_{l\neq i}$}}
            edge from parent node [below] {$p$}}
            edge from parent node [below] {}
            edge from parent node [below] {}}
        child { node [bag] {$B=v_i$} 
            child { node [bag] {$B'=v_{l\neq i}$}
            edge from parent node [below] {$q$}}
            child { node [bag] {\colorbox{yellow}{$B'=v_i$}}
            edge from parent node [above] {$p$}}
            edge from parent node [above] {}}
        edge from parent node [above] {$1/d$}};
\end{tikzpicture}
\caption{Probability tree for the RS+FD[GRR] protocol.} 
\label{fig:prob_tree_rsfd_grr}
\end{figure}

\textbf{Aggregator RS+FD[GRR].} On the server-side, according to the probability tree in Fig.~\ref{fig:prob_tree_rsfd_grr}, for each attribute $j\in[1,d]$, the aggregator estimates $\hat{f}(v_i)$ for the frequency of each value $i \in [1,c_j]$ as:

\begin{equation}\label{eq:est_rs+fd_grr}
    \hat{f}(v_i) = \frac{N_i dc_j - n(d - 1 + qc_j)}{nc_j(p-q)} \textrm{,} 
\end{equation}

in which $N_i$ is the number of times the value $v_i$ has been reported, $p=\frac{e^{\epsilon'}}{e^{\epsilon'} + c_j - 1}$, and $q=\frac{1-p}{c_j-1}$.

\begin{theorem} \label{theo:est_grr} For $j\in[1,d]$, the estimation result $\hat{f}(v_i)$ in Eq.~\eqref{eq:est_rs+fd_grr} is an unbiased estimation of $f (v_i)$ for any value $v_i \in A_j$.
\end{theorem}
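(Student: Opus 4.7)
The plan is to prove unbiasedness by computing $E[\hat{f}(v_i)]$ directly, exploiting linearity of expectation and the probability tree in Fig.~\ref{fig:prob_tree_rsfd_grr}. The key observation is that the denominator $nc_j(p-q)$ and the additive correction $n(d-1+qc_j)$ in the numerator of Eq.~\eqref{eq:est_rs+fd_grr} are deterministic, so the only quantity whose expectation I need to evaluate is $N_i$, the number of users whose reported value for attribute $j$ equals $v_i$.

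First, I would compute $\Pr[y_j = v_i]$ for a single arbitrary user by walking down the two top branches of Fig.~\ref{fig:prob_tree_rsfd_grr}. Under the ``fake data'' branch (probability $1-1/d$), the reported value equals $v_i$ with probability $1/c_j$ by uniform sampling. Under the ``true data'' branch (probability $1/d$), the probability of outputting $v_i$ splits by whether the true value is $v_i$ (weight $f(v_i)$, then GRR emits $v_i$ with probability $p$) or is some $v_{l\neq i}$ (weight $1-f(v_i)$, emission probability $q$). Combining these pieces yields
\begin{equation*}
\Pr[y_j = v_i] = \frac{d-1}{d c_j} + \frac{1}{d}\bigl[f(v_i)(p-q) + q\bigr].
\end{equation*}
Since the $n$ users act independently, $E[N_i] = n\Pr[y_j = v_i]$.

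Next, I would substitute $E[N_i]$ into $E[\hat{f}(v_i)]$ using linearity. After multiplying through by $dc_j$, the constant terms $n(d-1)$ and $nqc_j$ cancel exactly against the subtracted $n(d-1+qc_j)$ in the numerator of Eq.~\eqref{eq:est_rs+fd_grr}, leaving only $nc_j f(v_i)(p-q)$, which is then cancelled by the denominator $nc_j(p-q)$. This yields $E[\hat{f}(v_i)] = f(v_i)$, establishing unbiasedness.

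There is no real obstacle here: the construction of the estimator in Eq.~\eqref{eq:est_rs+fd_grr} is already designed to invert the affine transformation $f(v_i)\mapsto \Pr[y_j=v_i]$ induced jointly by the sampling, fake-data, and GRR perturbation layers. The only care needed is to verify that the cancellations work out cleanly with the specific choice $q=(1-p)/(c_j-1)$, which ensures $q c_j + (1-p) - q = 1$ and makes the constant bias term collapse as expected. This structural check mirrors the standard debiasing computation for pure GRR (cf. Eq.~\eqref{eq:est_pure}) but with the extra $(d-1)/(dc_j)$ contribution from uniform fake data folded into the additive correction.
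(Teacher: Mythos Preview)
Your proposal is correct and takes essentially the same route as the paper: compute $E[N_i]$ from the probability tree in Fig.~\ref{fig:prob_tree_rsfd_grr} (the paper writes it directly as $\frac{n}{d}\bigl(f(v_i)(p-q)+q+\frac{d-1}{c_j}\bigr)$, which is exactly $n$ times your $\Pr[y_j=v_i]$), then substitute into the estimator and cancel. One minor remark: your closing ``structural check'' is both unnecessary and slightly off --- the cancellation you describe works for arbitrary $p,q$ without invoking the GRR constraint $q=(1-p)/(c_j-1)$, and the claimed identity $qc_j+(1-p)-q=1$ actually evaluates to $2(1-p)$ under that constraint; fortunately nothing in your main argument depends on it.
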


\begin{proof}

\begin{equation*}
\begin{aligned}
    E[\hat{f}(v_i)] &= E\left[ \frac{N_i d c_j - n(d - 1 + q c_j)}{n c_j (p-q)} \right] \\
    &= \frac{d }{n(p-q)} E[Ni] -  \frac{ d - 1 + q c_j}{c_j (p-q)}  \textrm{.}
\end{aligned}
\end{equation*}

Let us focus on 

\begin{equation*}
\begin{aligned}
    E[N_i] &= \frac{1}{d} \left( p n f (v_i) + q (n - n f (v_i))\right)  + \frac{d-1}{d c_j} n\\
    &= \frac{n}{d} \left(f (v_i)(p-q) + q   + \frac{d-1}{c_j} \right)   \textrm{.}
\end{aligned}
\end{equation*}

Thus,

\begin{equation*}
    E[\hat{f}(v_i)] = f(v_i) \textrm{.}
\end{equation*}
\end{proof}

\begin{theorem} \label{theo:variance_grr} The variance of the estimation in Eq.~\eqref{eq:est_rs+fd_grr} is:

\begin{equation}\label{var:rs+fd_grr}
\begin{gathered}
    \operatorname{VAR}(\hat{f}(v_i)) = \frac{d^2 \gamma (1-\gamma)}{n (p-q)^2} \textrm{, where} \\
    \gamma = \frac{1}{d} \left( q + f(v_i) (p-q) + \frac{(d-1)}{c_j} \right ) \textrm{.}
\end{gathered}
\end{equation}

\end{theorem}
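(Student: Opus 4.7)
The plan is to mirror the structure of the proof of Theorem~\ref{theo:variance_grr_allomfree} (the longitudinal case), since the estimator $\hat{f}(v_i)$ is again an affine transformation of the count $N_i$. First I would observe that by linearity of variance,
\begin{equation*}
\operatorname{VAR}(\hat{f}(v_i)) = \operatorname{VAR}\!\left(\frac{N_i d c_j - n(d-1+qc_j)}{n c_j (p-q)}\right) = \frac{d^2}{n^2 (p-q)^2}\,\operatorname{VAR}(N_i),
\end{equation*}
so the only nontrivial task is to compute $\operatorname{VAR}(N_i)$.

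Next I would write $N_i = \sum_{z=1}^{n} X_z$, where $X_z \in \{0,1\}$ is the indicator that user $z$ reports value $v_i$ on attribute $j$. Because users act independently under RS+FD, $\operatorname{VAR}(N_i) = n\,\operatorname{VAR}(X)$. The key identity is that $X$ is Bernoulli, so $\operatorname{VAR}(X) = \gamma(1-\gamma)$ where $\gamma = \Pr[X = 1]$. Reading this probability off the probability tree in Fig.~\ref{fig:prob_tree_rsfd_grr} gives two disjoint paths that produce $v_i$: (i) the sampling branch (probability $1/d$) followed by either truthful transmission of $v_i$ (contributing $p f(v_i)$) or a flipped transmission from some $v_{l\neq i}$ (contributing $q(1-f(v_i))$); and (ii) the fake-data branch (probability $1-1/d$) whose uniform draw hits $v_i$ with probability $1/c_j$. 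Summing,
\begin{equation*}
\gamma = \frac{1}{d}\bigl(p f(v_i) + q(1-f(v_i))\bigr) + \frac{d-1}{d c_j} = \frac{1}{d}\left(q + f(v_i)(p-q) + \frac{d-1}{c_j}\right),
\end{equation*}
which matches the statement. Substituting $\operatorname{VAR}(N_i) = n\gamma(1-\gamma)$ back into the affine-transformation formula yields
\begin{equation*}
\operatorname{VAR}(\hat{f}(v_i)) = \frac{d^2 \gamma(1-\gamma)}{n(p-q)^2},
\end{equation*}
as required.

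The computation is essentially routine and no single step looks difficult; the one place to be careful is correctly enumerating the tree paths that produce an observed $v_i$ report and using the fact that they are mutually exclusive, so that the marginal probabilities can simply be added. A minor sanity check I would perform is verifying consistency with Theorem~\ref{theo:est_grr}: plugging $E[N_i] = n\gamma$ into Eq.~\eqref{eq:est_rs+fd_grr} must recover $f(v_i)$, which it does precisely because of the choice of the normalizing constants. That consistency guarantees the $\gamma$ derived here is the same quantity that underlies unbiasedness, so no additional algebra is needed beyond the Bernoulli variance identity.
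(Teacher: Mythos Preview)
Your proposal is correct and follows essentially the same approach as the paper: reduce $\operatorname{VAR}(\hat{f}(v_i))$ to $\operatorname{VAR}(N_i)$ via the affine relation, decompose $N_i=\sum_z X_z$ into independent Bernoulli indicators, read $\gamma=\Pr[X=1]$ off the probability tree in Fig.~\ref{fig:prob_tree_rsfd_grr}, and finish with $\operatorname{VAR}(X)=\gamma(1-\gamma)$. The extra consistency check with Theorem~\ref{theo:est_grr} is a nice addition but not required.
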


\begin{proof}
Thanks to Eq.~\eqref{eq:est_rs+fd_grr} we have

\begin{equation*}
\operatorname{VAR}(\hat{f}(v_i)) = 
\frac{\operatorname{VAR}(N_i) d^2}{n^2 (p-q)^2}  \textrm{.}
\end{equation*}

Since $N_i$ is the number of times value $v_i$ is observed, it can be defined as $N_i = \sum_{z=1}^n X_z$ where $X_z$ is equal to 1 if the user $z$, 
$1 \le z \le n$ reports value $v_i$, and 0 otherwise. We thus have 
$
\operatorname{VAR}(N_i) 
= \sum_{z=1}^n \operatorname{VAR}(X_z) 
= n \operatorname{VAR}(X)$, since all the users are independent. \textcolor{black}{According to the probability tree in Fig.~\ref{fig:prob_tree_rsfd_grr}, }
\[
P(X = 1) = P(X^2 = 1) = \gamma = \frac{1}{d} \left( q + f(v_i) (p-q) + \frac{(d-1)}{c_j} \right )  \textrm{.}
\]
We thus have $\operatorname{VAR}(X)= \gamma - \gamma^2 = \gamma(1 - \gamma) $ and, finally,

\begin{equation*} \label{var:generic}
\operatorname{VAR}(\hat{f}(v_i)) =
\frac{d^2 \gamma (1-\gamma)}{n (p-q)^2}.
\end{equation*}
\end{proof}

\subsection{RS+FD with OUE}\label{ch6:sub_rs+fd_oue}

\textbf{Client side.} To use UE-based protocols (OUE in our work) as local randomizer $\mathcal{A}$ in Alg.~\ref{alg:rs+fd}, there is, first, a need to define the fake data generation procedure. We propose two solutions: (i) RS+FD[OUE-z] in Alg.~\ref{alg:rs+fd_oue_z}, which applies OUE to $d-1$ zero-vectors (i.e., vectors with only zeros, e.g., $[0,0,...,0,0]$), and (ii) RS+FD[OUE-r] in Alg.~\ref{alg:rs+fd_oue_r}, which applies OUE to $d-1$ one-hot-encoded fake data (uniform at random). Visually, Figs.~\ref{fig:prob_tree_rsfd_oue_z} and~\ref{fig:prob_tree_rsfd_oue_r} illustrate the probability trees of the RS+FD[OUE-z] and RS+FD[OUE-r] protocols, respectively.

\begin{algorithm}[H]
\caption{RS+FD[OUE-z]}
\label{alg:rs+fd_oue_z}
\begin{algorithmic}[1]
\Statex \textbf{Input :} tuple $\textbf{v} = (v_1,v_2,..., v_d)$, domain size of attributes $\textbf{c}=[c_1,c_2,...,c_d]$, privacy parameter $\epsilon$, local randomizer OUE. 
\Statex \textbf{Output :} sanitized tuple $\textbf{B}'=(B_1',B_2',...,B_d')$.

\State $\epsilon' = \ln{\left( d \cdot (e^{\epsilon} - 1) + 1 \right)}$ \Comment{amplification by sampling~\cite{Li2012}} 

\State $j \gets Uniform(\{1,2,...,d \})$ \Comment{Selection of attribute to sanitize}

\State $B_j=Encode(v_j)=[0,0,...,1,0,...0]$ \Comment{one-hot-encoding}

\State $B_j' \gets OUE(B_j, \epsilon')$ \Comment{sanitize real data with OUE}

\State \textbf{for} $i \in \{1,2,...,d\}\setminus \{j\}$ \textbf{do}\Comment{non-sampled attributes} 

\State  \hskip1em $B_i \gets [0,0,...,0] $ \Comment{initialize zero-vectors}

\State  \hskip1em $B_i' \gets OUE(B_i, \epsilon')$  \Comment{randomize zero-vector with OUE}

\State \textbf{end for}

\Statex \textbf{return :} $\textbf{B}'=(B_1',B_2',...,B_d')$ \Comment{sampling result is not disclosed}
\end{algorithmic}
\end{algorithm}

\begin{algorithm}[H]
\caption{RS+FD[OUE-r]}
\label{alg:rs+fd_oue_r}
\begin{algorithmic}[1]
\Statex \textbf{Input :} tuple $\textbf{v} = (v_1,v_2,..., v_d)$, domain size of attributes $\textbf{c}=[c_1,c_2,...,c_d]$, privacy parameter $\epsilon$, local randomizer OUE. 
\Statex \textbf{Output :} sanitized tuple $\textbf{B}'=(B_1',B_2',...,B_d')$.

\State $\epsilon' = \ln{\left( d \cdot (e^{\epsilon} - 1) + 1 \right)}$ \Comment{amplification by sampling~\cite{Li2012}} 

\State $j \gets Uniform(\{1,2,...,d \})$ \Comment{Selection of attribute to sanitize}

\State $B_j=Encode(v_j)=[0,0,...,1,0,...0]$ \Comment{one-hot-encoding}

\State $B_j' \gets OUE(B_j, \epsilon')$ \Comment{sanitize real data with OUE}

\State \textbf{for} $i \in \{1,2,...,d\}\setminus \{j\}$ \textbf{do}\Comment{non-sampled attributes} 

\State  \hskip1em $y_i \gets \textit{Uniform}(\{1,...,c_i\}) $ \Comment{generate fake data}

\State  \hskip1em $B_i \gets Encode(y_i) $ \Comment{one-hot-encoding}

\State  \hskip1em $B_i' \gets OUE(B_i, \epsilon')$  \Comment{randomize fake data with OUE}

\State \textbf{end for}

\Statex \textbf{return :} $\textbf{B}'=(B_1',B_2',...,B_d')$ \Comment{sampling result is not disclosed}
\end{algorithmic}
\end{algorithm}

\begin{figure}[!ht]
\centering
\tikzstyle{level 1}=[level distance=2.7cm, sibling distance=2.66cm]
\tikzstyle{level 2}=[level distance=2.5cm, sibling distance=1.6cm]
\tikzstyle{level 3}=[level distance=2.5cm, sibling distance=0.75cm]
\tikzstyle{bag} = [text width=3em, text centered]
\tikzstyle{end} = [circle, minimum width=2pt,fill, inner sep=0pt]

\begin{tikzpicture}
  [
    grow                    = right,
    edge from parent/.style = {draw, -latex},
    sloped
  ]
  \node [bag] {RS+FD}
    child { node [bag] {Fake data}
        child { node [bag] {$B_i=0$}  
         child { node [bag] {$B_i'=0$}
            edge from parent node [below] {$1-q$}}
            child { node [bag] {\colorbox{yellow}{$B_i'=1$}}
            edge from parent node [above] {$q$}}
            edge from parent node [below] {}
            edge from parent node [below] {}} 
            edge from parent node [below] {$1-1/d$}}
    child { node [bag] {True data}
        child { node [bag] {$B_i=0$}  
         child { node [bag] {$B_i'=0$}
            edge from parent node [below] {$1-q$}}
            child { node [bag] {\colorbox{yellow}{$B_i'=1$}}
            edge from parent node [above] {$q$}}
            edge from parent node [below] {}
            edge from parent node [below] {}}
        child { node [bag] {$B_i=1$} 
            child { node [bag] {$B_i'=0$}
            edge from parent node [below] {$1-p$}}
            child { node [bag] {\colorbox{yellow}{$B_i'=1$}}
            edge from parent node [above] {$p$}}
            edge from parent node [above] {}}
        edge from parent node [above] {$1/d$}};
\end{tikzpicture}
\caption{Probability tree for the RS+FD[OUE-z] protocol.} 
\label{fig:prob_tree_rsfd_oue_z}
\end{figure}

\begin{figure}[!ht]
\centering

\tikzstyle{level 1}=[level distance=2.7cm, sibling distance=3.5cm]
\tikzstyle{level 2}=[level distance=3cm, sibling distance=1.7cm]
\tikzstyle{level 3}=[level distance=3cm, sibling distance=0.7cm]
\tikzstyle{bag} = [text width=3em, text centered]
\tikzstyle{end} = [circle, minimum width=2pt,fill, inner sep=0pt]

\begin{tikzpicture}
  [
    grow                    = right,
    edge from parent/.style = {draw, -latex},
    sloped
  ]
  \node [bag] {RS+FD}
    child { node [bag] {Fake data}
        child { node [bag] {$B_i=0$}  
         child { node [bag] {$B_i'=0$}
            edge from parent node [below] {$1-q$}}
            child { node [bag] {\colorbox{yellow}{$B_i'=1$}}
            edge from parent node [above] {$q$}}
            edge from parent node [below] {$1-1/c_j$}
            edge from parent node [below] {}}
        child { node [bag] {$B_i=1$} 
            child { node [bag] {$B_i'=0$}
            edge from parent node [below] {$1-p$}}
            child { node [bag] {\colorbox{yellow}{$B_i'=1$}}
            edge from parent node [above] {$p$}}
            edge from parent node [above] {$1/c_j$}}
        edge from parent node [below] {$1-1/d$} }
    child { node [bag] {True data}
        child { node [bag] {$B_i=0$}  
         child { node [bag] {$B_i'=0$}
            edge from parent node [below] {$1-q$}}
            child { node [bag] {\colorbox{yellow}{$B_i'=1$}}
            edge from parent node [above] {$q$}}
            edge from parent node [below] {}
            edge from parent node [below] {}}
        child { node [bag] {$B_i=1$} 
            child { node [bag] {$B_i'=0$}
            edge from parent node [below] {$1-p$}}
            child { node [bag] {\colorbox{yellow}{$B_i'=1$}}
            edge from parent node [above] {$p$}}
            edge from parent node [above] {}}
        edge from parent node [above] {$1/d$}};
\end{tikzpicture}
\caption{Probability tree for the RS+FD[OUE-r] protocol.} 
\label{fig:prob_tree_rsfd_oue_r}
\end{figure}
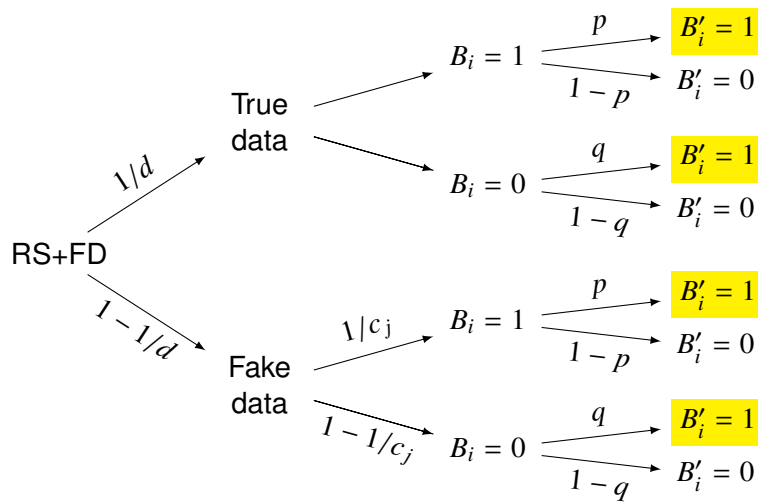

\textbf{Aggregator RS+FD[OUE-z].} On the server-side, if fake data are generated with OUE applied to zero-vectors as in Alg.~\ref{alg:rs+fd_oue_z}, according to the probability tree in Fig.~\ref{fig:prob_tree_rsfd_oue_z}, for each attribute $j\in[1,d]$, the aggregator estimates $\hat{f}(v_i)$ for the frequency of each value $i \in [1,c_j]$ as:

\begin{equation}\label{eq:est_rs+fd_oue_zeros}
    \hat{f}(v_i) =  \frac{d(N_i  - nq)}{n(p-q)}  \textrm{,}
\end{equation}

in which $N_i$ is the number of times the value $v_i$ has been reported, $n$ is the total number of users, $p=\frac{1}{2}$, and $q=\frac{1}{e^{\epsilon'}+1}$.

\begin{theorem} \label{theo:est_oue_z} For $j\in[1,d]$, the estimation result $\hat{f}(v_i)$ in Eq.~\eqref{eq:est_rs+fd_oue_zeros} is an unbiased estimation of $f (v_i)$ for any value $v_i \in A_j$.
\end{theorem}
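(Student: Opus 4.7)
The plan is to compute $E[N_i]$ directly from the probability tree in Fig.~\ref{fig:prob_tree_rsfd_oue_z} and then substitute into the estimator to show the bias vanishes. The key observation is that $N_i$ counts the number of users reporting $B_i' = 1$ for value $v_i$ at attribute $j$, so $E[N_i] = n \cdot \Pr[B_i' = 1]$, and the probability factorizes over the two sampling branches (true data vs. fake data).

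First I would marginalize over whether attribute $j$ was sampled. With probability $1/d$ the user reports true data at attribute $j$: conditional on this branch, $B_i = 1$ with probability $f(v_i)$ (giving $B_i' = 1$ with probability $p$) and $B_i = 0$ with probability $1 - f(v_i)$ (giving $B_i' = 1$ with probability $q$). With probability $1 - 1/d$ the user reports fake data, which under the zero-vector scheme means $B_i = 0$ deterministically, so $B_i' = 1$ with probability $q$. Combining,
\begin{equation*}
\Pr[B_i' = 1] = \tfrac{1}{d}\bigl(f(v_i) p + (1 - f(v_i)) q\bigr) + \bigl(1 - \tfrac{1}{d}\bigr) q = \tfrac{1}{d} f(v_i)(p - q) + q.
\end{equation*}

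Next I would plug $E[N_i] = n\bigl(\tfrac{1}{d} f(v_i)(p-q) + q\bigr)$ into the estimator of Eq.~\eqref{eq:est_rs+fd_oue_zeros}. The constant term $nq$ cancels immediately, leaving $E[\hat{f}(v_i)] = \frac{d \cdot n \cdot \tfrac{1}{d} f(v_i)(p-q)}{n(p-q)} = f(v_i)$, which is the desired unbiasedness.

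There is no substantial obstacle here: the argument is essentially a single application of the law of total probability along the two branches of the sampling tree, exactly paralleling Theorem~\ref{theo:est_grr}. The only subtle point worth flagging explicitly is that, unlike RS+FD[OUE-r], the fake-data branch contributes only the baseline $q$ to $\Pr[B_i' = 1]$ (independent of $v_i$), which is precisely why the correction term in the estimator is the simple $nq$ rather than something depending on $c_j$.
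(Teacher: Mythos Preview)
Your proof is correct and follows essentially the same approach as the paper: both compute $E[N_i]$ by conditioning on whether attribute $j$ was sampled (true-data branch contributing $\tfrac{1}{d}\bigl(f(v_i)p + (1-f(v_i))q\bigr)$) or not (fake-data branch contributing $\tfrac{d-1}{d}q$), then substitute into the estimator to cancel the $nq$ term. Your observation about why the correction term is simply $nq$ here (versus the $c_j$-dependent term in RS+FD[OUE-r]) is a nice clarifying remark not made explicit in the paper.
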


\begin{proof}
\begin{equation*}
\begin{aligned}
     E[\hat{f}(v_i)] &=  E\left[\frac{d(N_i  - nq)}{n(p-q)} \right] =   \frac{d(E[N_i]  - nq)}{n(p-q)} \\
     &= \frac{d}{n(p-q)}E[N_i]  - \dfrac{dq}{p-q}.
\end{aligned}
\end{equation*}

We have successively 

\begin{equation*}
\begin{aligned}
      E[N_i] &= \frac{n}{d} \left( p  f (v_i) + q (1 -  f (v_i))\right)  + \frac{(d-1)nq}{d}\\
     &= \frac{n}{d} \left( f(v_i)(p-q)   + dq \right) \textrm{.}
\end{aligned}
\end{equation*}

Thus,

\begin{equation*}
    E[\hat{f}(v_i)]  =  f(v_i)\textrm{.}
\end{equation*}
\end{proof}

\begin{theorem} \label{theo:variance_oue_z} The variance of the estimation in Eq.~\eqref{eq:est_rs+fd_oue_zeros} is:

\begin{equation}\label{var:rs+fd_oue_z}
\begin{gathered}
    \operatorname{VAR}(\hat{f}(v_i)) = \frac{d^2 \gamma (1-\gamma)}{n (p-q)^2} \textrm{, where} \\
    \gamma = \frac{1}{d} \left( dq + f(v_i) (p-q) \right) \textrm{.}
\end{gathered}
\end{equation}

\end{theorem}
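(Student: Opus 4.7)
The plan is to mirror the argument used for Theorem~\ref{theo:variance_grr}, since the estimator in Eq.~\eqref{eq:est_rs+fd_oue_zeros} has the same linear form in $N_i$ as the GRR case, namely $\hat{f}(v_i) = \frac{d(N_i - nq)}{n(p-q)}$. The first step is therefore to pull out the deterministic shift and the constant factor so that
\begin{equation*}
\operatorname{VAR}(\hat{f}(v_i)) \;=\; \frac{d^2}{n^2 (p-q)^2}\,\operatorname{VAR}(N_i) \textrm{.}
\end{equation*}

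Next, I would decompose $N_i = \sum_{z=1}^n X_z$, where $X_z \in \{0,1\}$ indicates whether user $z$ reports a $1$ in the $i$-th bit of attribute $j$. Because users act independently (each runs Algorithm~\ref{alg:rs+fd_oue_z} with its own random sampling, OUE randomness, and zero-vector fake data), one has $\operatorname{VAR}(N_i) = n\,\operatorname{VAR}(X)$ for a generic $X = X_z$.

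The key computation is then $\Pr[X=1]$, which is read directly off the probability tree in Fig.~\ref{fig:prob_tree_rsfd_oue_z}. With probability $1/d$ the attribute is the one actually sampled, in which case the bit is $1$ with probability $f(v_i)$ (giving output $1$ w.p.\ $p$) and $0$ with probability $1-f(v_i)$ (giving output $1$ w.p.\ $q$); with probability $(d-1)/d$ the bit comes from a zero-vector fake data, which yields output $1$ w.p.\ $q$. Summing,
\begin{equation*}
\Pr[X=1] \;=\; \tfrac{1}{d}\bigl( f(v_i)\,p + (1-f(v_i))\,q \bigr) + \tfrac{d-1}{d}\,q \;=\; \tfrac{1}{d}\bigl( f(v_i)(p-q) + dq \bigr) \;=\; \gamma \textrm{,}
\end{equation*}
which matches the stated $\gamma$. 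Since $X$ is Bernoulli and $X^2 = X$, we immediately get $\operatorname{VAR}(X) = \gamma(1-\gamma)$, and plugging back yields the claimed $\operatorname{VAR}(\hat{f}(v_i)) = \frac{d^2 \gamma(1-\gamma)}{n(p-q)^2}$.

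I do not expect any real obstacle here: the only place where a careless step could go wrong is combining the ``fake-data zero-vector'' branch with the ``sampled but true bit equals $0$'' branch, since both contribute a $q$ term and the weights $1/d$ vs.\ $(d-1)/d$ must be tracked correctly. Once that bookkeeping is done, the rest is a one-line Bernoulli variance computation, and the argument carries over verbatim from Theorem~\ref{theo:variance_grr}.
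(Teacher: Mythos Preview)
Your proposal is correct and is exactly the approach the paper indicates: the paper omits the proof, stating only that it follows the \textit{Proof} of Theorem~\ref{theo:variance_grr} with $\gamma$ read off the probability tree in Fig.~\ref{fig:prob_tree_rsfd_oue_z}. You have simply carried out that bookkeeping explicitly, and your computation of $\gamma$ matches the stated value.
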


The proof for Theorem~\ref{theo:variance_oue_z} follows the \textit{Proof} of Theorem~\ref{theo:variance_grr} and is omitted here. \textcolor{black}{In this case, $\gamma$ follows the probability tree in Fig.~\ref{fig:prob_tree_rsfd_oue_z}}.

\textbf{Aggregator RS+FD[OUE-r].} Otherwise, if fake data are generated with OUE applied to one-hot-encoded random data as in Alg.~\ref{alg:rs+fd_oue_r}, according to the probability tree in Fig.~\ref{fig:prob_tree_rsfd_oue_r}, for each attribute $j\in[1,d]$, the aggregator estimates $\hat{f}(v_i)$ for the frequency of each value $i \in [1,c_j]$ as:

\begin{equation}\label{eq:est_rs+fd_oue_random}
    \hat{f}(v_i) =  \frac{N_i d c_j - n\left[ qc_j + (p-q)(d-1) + qc_j(d-1)) \right]}{nc_j(p-q)} \textrm{,} 
\end{equation}

in which $N_i$ is the number of times the value $v_i$ has been reported, $p=\frac{1}{2}$, and $q=\frac{1}{e^{\epsilon'}+1}$.

\begin{theorem} \label{theo:est_oue_r} For $j\in[1,d]$, the estimation result $\hat{f}(v_i)$ in Eq.~\eqref{eq:est_rs+fd_oue_random} is an unbiased estimation of $f (v_i)$ for any value $v_i \in A_j$.
\end{theorem}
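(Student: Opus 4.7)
The plan is to follow the same structural argument used for Theorems~\ref{theo:est_grr} and~\ref{theo:est_oue_z}, with the main novelty being the expression for $E[N_i]$ that must account for the extra randomness introduced when fake data are generated by one-hot-encoding a uniformly sampled value (rather than a zero-vector) before applying OUE.

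First I would read off the conditional probabilities from the probability tree in Fig.~\ref{fig:prob_tree_rsfd_oue_r}. Conditioning on whether attribute $j$ is the sampled one (probability $1/d$) or a non-sampled one (probability $(d-1)/d$), the per-user probability that the reported bit for value $v_i$ equals $1$ splits as follows. In the true-data branch, this is $p f(v_i) + q(1-f(v_i)) = f(v_i)(p-q) + q$, exactly as in the OUE-z analysis. In the fake-data branch, the underlying bit $B_i$ is $1$ with probability $1/c_j$ (uniform random one-hot encoding), so the reported bit is $1$ with probability $p/c_j + q(1-1/c_j) = (p-q)/c_j + q$. Combining the two branches gives
\begin{equation*}
\frac{E[N_i]}{n} \;=\; \frac{1}{d}\bigl(f(v_i)(p-q) + q\bigr) + \frac{d-1}{d}\left(\frac{p-q}{c_j} + q\right).
\end{equation*}

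Next I would plug this into the estimator of Eq.~\eqref{eq:est_rs+fd_oue_random} and simplify. Multiplying the expression for $E[N_i]$ by $d c_j$ yields $c_j f(v_i)(p-q) + d c_j q + (d-1)(p-q)$. Subtracting the normalization constant $qc_j + (p-q)(d-1) + qc_j(d-1)$ causes the $(p-q)(d-1)$ terms to cancel and the $q$-terms to collapse to zero, leaving exactly $c_j f(v_i)(p-q)$. Dividing by $c_j(p-q)$ then yields $E[\hat{f}(v_i)] = f(v_i)$, establishing unbiasedness.

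The routine part is the algebra; the only place requiring care is the fake-data branch, since unlike RS+FD[OUE-z] the input to OUE is not deterministically zero but a Bernoulli($1/c_j$) bit. I expect the main conceptual point (rather than obstacle) to be correctly incorporating this $1/c_j$ factor, which is precisely what makes the bias-correction constant in Eq.~\eqref{eq:est_rs+fd_oue_random} differ from Eq.~\eqref{eq:est_rs+fd_oue_zeros} by the term $(p-q)(d-1) + qc_j(d-1)$ rather than simply $dqc_j$. Once $E[N_i]$ is written correctly, the cancellation is automatic and the proof closes in a couple of lines.
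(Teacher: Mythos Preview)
Your proposal is correct and follows essentially the same approach as the paper: both compute $E[N_i]$ by conditioning on whether attribute $j$ was sampled (true-data branch) or not (fake-data branch with a uniform one-hot input to OUE), obtaining exactly the expression you wrote, and then substitute into Eq.~\eqref{eq:est_rs+fd_oue_random} to recover $f(v_i)$. The paper presents the algebra slightly more tersely, but the decomposition and the key $1/c_j$ factor in the fake-data branch are identical to yours.
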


\begin{proof}
\begin{equation*}
\begin{aligned}
     E[\hat{f}(v_i)] &=  E\left[\frac{N_i d c_j - n\left[ qc_j + (p-q)(d-1) + qc_j(d-1)) \right]}{nc_j(p-q)}\right] \\
     &= \frac{d E[N_i]}{n(p-q)} - \frac{(p-q)(d-1) + qdc_j }{c_j(p-q)} .
\end{aligned}
\end{equation*}

We have successively 

\begin{equation*}
\begin{aligned}
     E[N_i] &= \frac{n}{d} \left( p  f (v_i) + q (1 -  f (v_i))\right)  + \frac{n(d-1)}{d}(\frac{p}{c_j} + \frac{c_j-1}{c_j}q)\\
     &=  \frac{n}{d} \left( f(v_i)(p-q) + q)\right)  + \frac{n(d-1)}{dc_j}(p-q  + c_jq) \textrm{.}
\end{aligned}
\end{equation*}

Thus,

\[
E[\hat{f}(v_i)]  =  f(v_i) \textrm{.}
\] 
\end{proof}

\begin{theorem} \label{theo:variance_oue_r} The variance of the estimation in Eq.~\eqref{eq:est_rs+fd_oue_random} is:

\begin{equation}\label{var:rs+fd_oue_r}
\begin{gathered}
    \operatorname{VAR}(\hat{f}(v_i)) = \frac{d^2 \gamma (1-\gamma)}{n (p-q)^2} \textrm{, where} \\
    \gamma = \frac{1}{d} \left(q + f(v_i) (p-q) + \frac{(d-1)}{c_j}\left( p + (c_j-1)q \right)\right) \textrm{.}
\end{gathered}
\end{equation}

\end{theorem}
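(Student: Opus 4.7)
The plan is to mirror the structure used in the proofs of Theorems 5 (variance of RS+FD[GRR]) and 7 (variance of RS+FD[OUE-z]), since the estimator in Eq.~\eqref{eq:est_rs+fd_oue_random} is again a linear (affine) function of a count $N_i$. Starting from that estimator, I would push the variance operator inside, use that the constant term contributes nothing, and reduce the problem to computing $\operatorname{VAR}(N_i)$. Because $\hat{f}(v_i)$ is of the form $\hat{f}(v_i) = \frac{d}{n(p-q)} N_i - C$ for some deterministic constant $C$, taking variance gives directly $\operatorname{VAR}(\hat{f}(v_i)) = \frac{d^2}{n^2 (p-q)^2}\operatorname{VAR}(N_i)$, which already matches the prefactor in the target formula.

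Next, I would write $N_i = \sum_{z=1}^n X_z$ where $X_z \in \{0,1\}$ indicates whether user $z$ reports bit $i$ equal to $1$ on attribute $j$. Since users act independently, $\operatorname{VAR}(N_i) = n \operatorname{VAR}(X)$, so I only need $P(X = 1)$, call it $\gamma$. Then $\operatorname{VAR}(X) = \gamma(1-\gamma)$ and the statement follows by substitution.

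The one substantive step is reading off $\gamma$ from the probability tree of Fig.~\ref{fig:prob_tree_rsfd_oue_r}. I would split on the sampling outcome: with probability $1/d$, attribute $j$ is the truly sampled attribute, and then bit $i$ equals $1$ with probability $f(v_i) p + (1 - f(v_i)) q = f(v_i)(p-q) + q$. With probability $(d-1)/d$, attribute $j$ is a fake-data attribute; the fake value is uniform over $c_j$ values, so bit $i$ of its one-hot encoding equals $1$ with probability $1/c_j$ (then reported as $1$ with prob.\ $p$) and equals $0$ with probability $(c_j-1)/c_j$ (then reported as $1$ with prob.\ $q$), giving the conditional probability $\frac{1}{c_j}\bigl(p + (c_j-1)q\bigr)$. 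Summing the two branches yields
\[
\gamma = \frac{1}{d}\left( f(v_i)(p-q) + q + \frac{d-1}{c_j}\bigl(p + (c_j-1)q\bigr)\right),
\]
matching the claimed expression. Plugging back gives the stated variance.

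I do not expect any genuine obstacle: the only risk is a bookkeeping slip in the fake-data branch (the extra $(d-1)/c_j$ term is what distinguishes this proof from the OUE-z case, where fake bits are always $0$ before randomization). Once the case split on ``real vs.\ fake attribute'' and then on ``bit value before OUE'' is written out cleanly, the algebra is immediate and the proof closes in the same one-line substitution as the prior two theorems.
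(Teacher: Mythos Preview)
Your proposal is correct and follows exactly the approach the paper takes: it explicitly states that the proof mirrors that of Theorem~\ref{theo:variance_grr}, with $\gamma$ read off from the probability tree in Fig.~\ref{fig:prob_tree_rsfd_oue_r}. Your derivation of $\gamma$ via the case split on real vs.\ fake attribute (and then on the bit value before OUE) is precisely the intended computation, and the algebra matches the stated formula.
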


The proof for Theorem~\ref{theo:variance_oue_r} follows the \textit{Proof} of Theorem~\ref{theo:variance_grr} and is omitted here. \textcolor{black}{In this case, $\gamma$ follows the probability tree in Fig.~\ref{fig:prob_tree_rsfd_oue_r}.}

\subsection{Analytical analysis: RS+FD with ADP} \label{ch6:sub_rs+fd_adp}

As shown in Chapter~\ref{chap:chapter5}, in a multidimensional setting with different domain sizes for each attribute, a dynamic selection of LDP mechanisms is preferred. In this chapter, we also analyze the \textit{approximate variances} $\operatorname{VAR}_1$ for RS+FD[GRR] in Eq.~\eqref{var:rs+fd_grr} and $\operatorname{VAR}_2$ for RS+FD[OUE-z] in Eq.~\eqref{var:rs+fd_oue_z}, in which $f(v_i)=0$. Assume there are $d\geq 2$ attributes with domain size $\textbf{c}=[c_1,c_2,...,c_d]$ and a privacy budget $\epsilon'$. For each attribute $j$ with domain size $c_j$, to select RS+FD[GRR], we are then left to evaluate if $\operatorname{VAR}_1 \leq \operatorname{VAR}_2$. This is equivalent to check whether, 

\begin{equation}\label{ineq:variance}
\frac{d^2\gamma_1 (1-\gamma_1)}{n(p_1-q_1)^2} - \frac{d^2\gamma_2 (1-\gamma_2)}{n(p_2-q_2)^2} \leq 0 \textrm{,}
\end{equation}

in which
$p_1 = \frac{e^{\epsilon'}}{e^{\epsilon'} + c_j - 1}$, 
$q_1 = \frac{1-p_1}{c_j-1}$,
$p_2=\frac{1}{2}$, 
$q_2=\frac{1}{e^{\epsilon'}+1}$,
$\gamma_1 = 
\frac{1}{d}
\left( 
q_1 + \frac{d-1}{c_j}
\right)$, and 
$
\gamma_2 =q_2$. 
\textbf{In other words, if Eq.~\eqref{ineq:variance} is verified, the utility loss is lower with RS+FD[GRR]; otherwise, RS+FD[OUE-z] should be selected. Throughout this chapter, we will refer to this dynamic selection of our protocols as RS+FD[ADP].}

For the sake of illustration, Fig.~\ref{fig:surface_variance} illustrates a 3D visualization of $\frac{d^2\gamma_1 (1-\gamma_1)}{n(p_1-q_1)^2} - \frac{d^2\gamma_2 (1-\gamma_2)}{n(p_2-q_2)^2}$, i.e., the left side of Eq.~\eqref{ineq:variance}, by fixing $\epsilon'=\ln(3)$ and $n=10000$, and by varying $d\in[2, 10]$ and $c_j \in [2, 20]$, which are common values for real-world datasets (cf. Section~\ref{ch6:sub_setup}). In this case, one can notice in Fig.~\ref{fig:surface_variance} that neither RS+FD[GRR] nor RS+FD[OUE-z] will always provide the lowest variance value, which reinforces the need for an adaptive mechanism. For instance, with the selected parameters, for lower values of $c_j$, RS+FD[GRR] can provide lower estimation errors even if $d$ is large. On the other hand, as soon as the domain size starts to grow, e.g., $c_j\geq 10$, one is better off with RS+FD[OUE-z] even for small values of $d\geq 3$, as its variance in Eq.~\eqref{var:rs+fd_oue_z} does not depend on $c_j$.

\begin{figure}[H]
    \centering
    \includegraphics[width=0.6\linewidth]{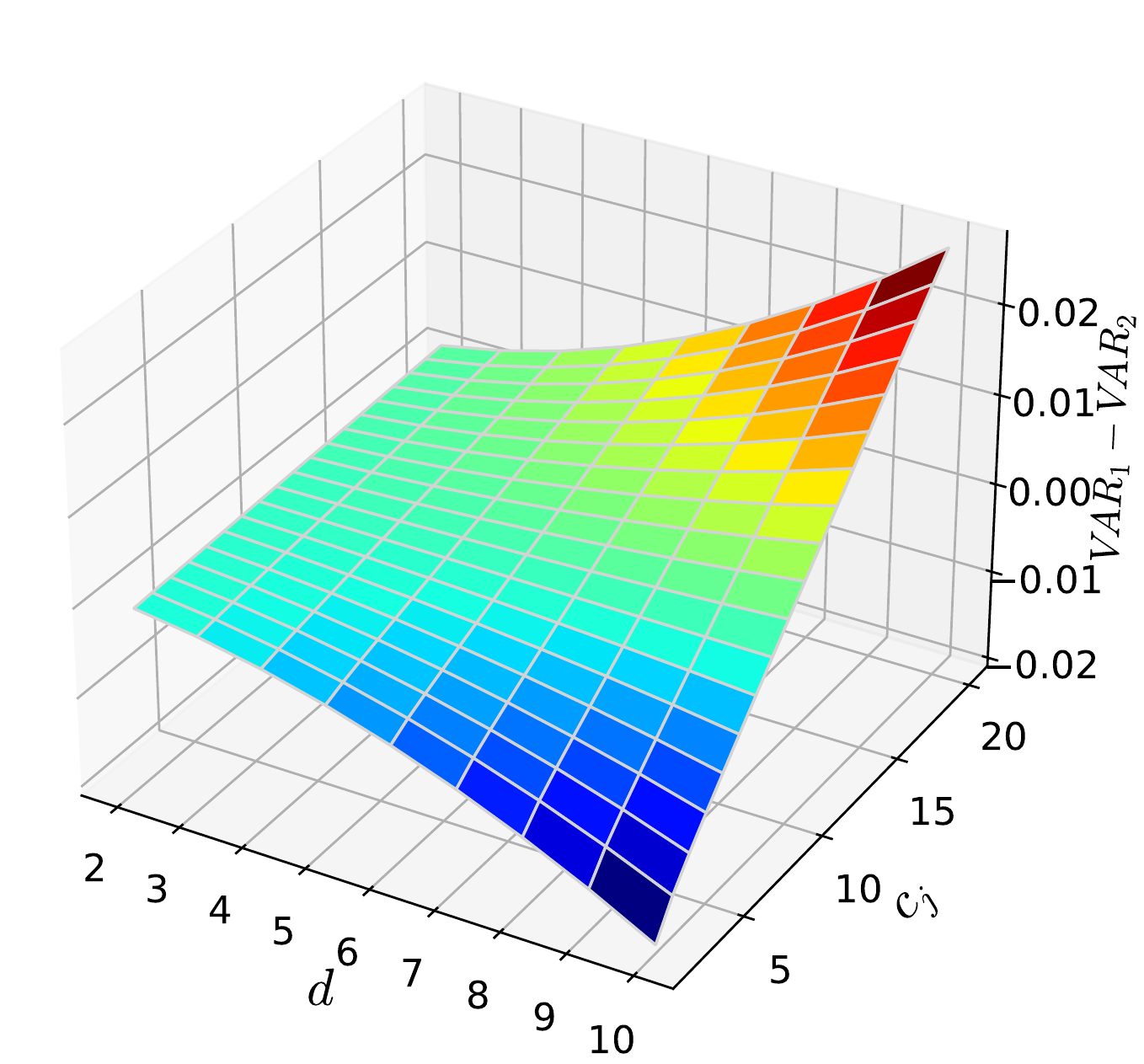}
    \caption{Analytical evaluation of Eq.~\eqref{ineq:variance} that allows a dynamic selection between RS+FD[GRR] with variance $\operatorname{VAR}_1$ and RS+FD[OUE-z] with variance $\operatorname{VAR}_2$. Parameters were set as $\epsilon'=\ln(3)$, $n=10000$, $d\in[2,10]$, and $c_j\in[2,20]$.}
    \label{fig:surface_variance}
\end{figure}

\section{Experimental Validation} \label{ch6:sec_results}

In this section, we present the setup of our experiments in Section~\ref{ch6:sub_setup}, the results with synthetic data in Section~\ref{ch6:sub_synthetic_data}, and the results with real-world data in Section~\ref{ch6:sub_real_data}.

\subsection{Setup of experiments} \label{ch6:sub_setup}

\textbf{Environment.} All algorithms were implemented in Python 3.8.5 with NumPy 1.19.5 and Numba 0.53.1 libraries. The codes we developed and used for all experiments are available in a Github repository\footnote{\url{https://github.com/hharcolezi/ldp-protocols-mobility-cdrs}}. In all experiments, we report average results over 100 runs as LDP algorithms are randomized.

\textbf{Synthetic datasets.} Our first set of experiments are conducted on six synthetic datasets. The distribution of values in each attribute follows an uniform distribution, for all synthetic datasets.

\begin{itemize}
    \item For the first two synthetic datasets, we fix the number of attributes $d=5$ and the domain size of each attribute as $\textbf{c}=[10,10,...,10]$ (uniform), and vary the number of users as $n=50000$ and $n=500000$.
    \item Similarly, for the third and fourth synthetic datasets, we fix the number of attributes $d=10$ and the domain size of each attribute as $\textbf{c}=[10,10,...,10]$ (uniform), and vary the number of users as $n=50000$ and $n=500000$.
    \item Lastly, for the fifth and sixth synthetic datasets, we fix the number of users as $n=500000$. Next, we set the number of attributes $d=10$ with domain size of each attribute as $\textbf{c}=[10,20,...,90,100]$ for one dataset, and we set the number of attributes $d=20$ with domain size of each attribute as $\textbf{c}=[10,10,20,20,...,100,100]$ for the other.
\end{itemize}

\textbf{Real-world datasets.} In addition, we also conduct experiments on four real-world open datasets with non-uniform distributions. We briefly recall here the datasets from Section~\ref{ch3:sub_open_datasets} and the generated one in Chapter~\ref{chap:chapter4}.

\begin{itemize}
    \item \textit{Nursery.} A dataset from the UCI machine learning repository~\cite{uci} with $d=9$ categorical attributes and $n=12960$ samples. The domain size of each attribute is $\textbf{c}=[3, 5, 4, 4, 3, 2, 3, 3, 5]$, respectively. 
    
    \item \textit{Adult.} A dataset from the UCI machine learning repository~\cite{uci} with $d=9$ categorical attributes and $n=45222$ samples after cleaning the data. The domain size of each attribute is $\textbf{c}=[7, 16, 7, 14, 6, 5, 2, 41, 2]$, respectively. 
    
    \item \textit{MS-FIMU.} The dataset developed in Chapter~\ref{chap:chapter4} in which we select $d=6$ categorical attributes (all static attributes, i.e., the dynamic `Visit duration' attribute was not used). The domain size of each attribute is $\textbf{c}=[3, 3, 8, 12, 37, 11]$ (cf. Section~\ref{ch4:info_ms_fimu}), respectively, and there are $n=88935$ samples.
    
    \item \textit{Census-Income.} A dataset from the UCI machine learning repository~\cite{uci} with $d=33$ categorical attributes and $n=299285$ samples. The domain size of each attribute is \begin{math}\textbf{c}=[ 9, 52, 47, 17,  3,  7, 24, ..., 43,  5,  3,  3,  3,  2 ]\end{math}, respectively. 

\end{itemize}

\textbf{Evaluation and metrics.} We vary the privacy parameter in a logarithmic range as $\epsilon=[\ln (2),\ln (3),...,\ln (7)]$, which is within range of values experimented in the literature for multidimensional data (e.g., in~\cite{wang2019} the range is $\epsilon=[0.5,...,4]$ and in~\cite{Wang2021_b} the range is $\epsilon=[0.1,...,10]$). 

Because our estimators in Eq.~\eqref{eq:est_rs+fd_grr}, Eq.~\eqref{eq:est_rs+fd_oue_zeros}, and Eq.~\eqref{eq:est_rs+fd_oue_random} are unbiased, their variance is equal to the MSE (cf. Eq.~\eqref{eq:mse_var}), which is commonly used in practice as an accuracy metric~\cite{Wang2020,Wang2020_post_process,Wang2021_b,li2021privacy}. So, to evaluate our results, we use the MSE metric averaged per the number of attributes $d$ to evaluate our results. Thus, for each attribute $j$, we compute for each value $v_i \in A_j$ the estimated frequency $\hat{f}(v_i)$ and the real one $f(v_i)$ and calculate their differences. More precisely,

\begin{equation}
    MSE_{avg} = \frac{1}{d} \sum_{j \in [1,d]} \frac{1}{|A_j|} \sum_{v \in A_j}(f(v_i) - \hat{f}(v_i) )^2 \textrm{.}
\end{equation}

\textbf{Methods evaluated.} We consider for evaluation the following solutions (cf. Fig.~\ref{ch6:fig_system_overview}) and protocols: 

\begin{itemize}

    \item Solution \textit{Spl}, which splits the privacy budget per attribute $\epsilon/d$ with a best-effort approach using the adaptive mechanism presented in Section~\ref{ch2:sub_ADP}, i.e., Spl[ADP]. 
    
    \item Solution \textit{Smp}, which randomly samples a single attribute and use all the privacy budget $\epsilon$ also with the adaptive mechanism, i.e., Smp[ADP].
    
    \item Our solution RS+FD, which randomly samples a single attribute and uses an amplified privacy budget $\epsilon' \geq \epsilon$ while generating fake data for each $d-1$ non-sampled attribute: 
    
    \begin{itemize}
        \item RS+FD[GRR] (Alg.~\ref{alg:rs+fd} with GRR as local randomizer $\mathcal{A}$);
        \item RS+FD[OUE-z] (Alg.~\ref{alg:rs+fd_oue_z});
        \item RS+FD[OUE-r] (Alg.~\ref{alg:rs+fd_oue_r});
        \item RS+FD[ADP] presented in Section~\ref{ch6:sub_rs+fd_adp} (i.e., adaptive choice between RS+FD[GRR] and RS+FD[OUE-z]).
    \end{itemize}

\end{itemize}

\subsection{Results on synthetic data} \label{ch6:sub_synthetic_data}

Our first set of experiments were conducted on six synthetic datasets. Fig.~\ref{ch6:fig_results_syn1_syn2} (first two synthetic datsets), Fig.~\ref{fig:results_syn3_syn4} (third and fourth synthetic datsets), and Fig.~\ref{fig:results_syn5_syn6} (last two synthetic datasets) illustrate for all methods, the averaged $MSE_{avg}$ (y-axis) according to the privacy parameter $\epsilon$ (x-axis). 

\textbf{Impact of the number of users.} In both Fig.~\ref{ch6:fig_results_syn1_syn2} and Fig.~\ref{fig:results_syn3_syn4}, one can notice that the $MSE_{avg}$ decreases with respect to the number of users $n$. More precisely, with the datasets we experimented, the $MSE_{avg}$ decreases (approximately) one order of magnitude by increasing $n$ in one order of magnitude too. In comparison with \textit{Smp}, the noise in our RS+FD solution comes mainly from fake data as it uses an amplified $\epsilon' \geq \epsilon$. \textit{This suggests that, in some cases, with appropriately high number of user $n$, our solutions may most likely provide higher data utility than the state-of-the-art \textit{Smp} solution (e.g., cf. Fig.~\ref{fig:results_syn5_syn6}).}

\textbf{Impact of the number of attributes.} One can notice the effect on increasing $d$ comparing the results of Fig.~\ref{ch6:fig_results_syn1_syn2} ($d=5$) and Fig.~\ref{fig:results_syn3_syn4} ($d=10$) while fixing $n$ and $\textbf{c}$ (uniform number of values). For instance, even though there are twice the number of attributes, the accuracy (measured with the averaged MSE metric) does not suffer much. \textit{This is because the amplification by sampling ($\frac{e^{\epsilon'}-1}{e^{\epsilon}-1} = \frac{1}{\beta}$~\cite{Li2012}) depends on the sampling rate $\beta=\frac{1}{d}$, which means that the more attributes one collects, the more the $\epsilon'$ is amplified, i.e., $\epsilon'=\ln{\left( d \cdot (e^{\epsilon} - 1) + 1 \right)}$; thus balancing data utility.}

Besides, in Fig.~\ref{fig:results_syn5_syn6}, one can notice a similar pattern, i.e., increasing the number of attributes from $d=10$ (left-side plot) to $d=20$ (right-hand plot), with varied domain size $\textbf{c}$, resulted in only a slightly loss of performance. This, however, is not true for the \textit{Spl} solution, for example, in which the $MSE_{avg}$ increased much more in order of magnitude than our RS+FD solution.

\begin{figure}[!htb]
    \centering
    \includegraphics[width=1\linewidth]{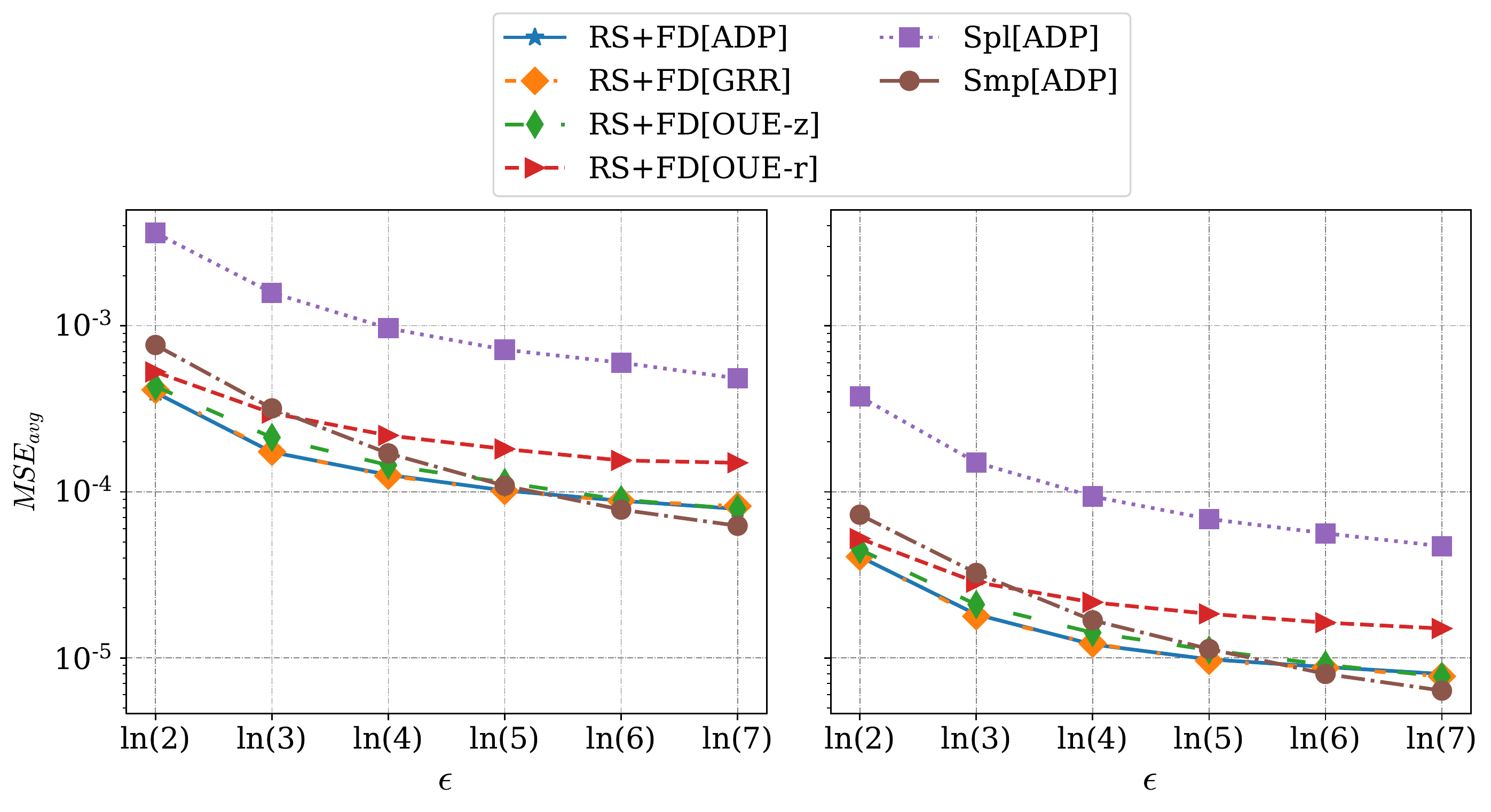}
        \caption{Averaged MSE varying $\epsilon$ on the \textit{synthetic} datasets with $d=5$, uniform domain size $\textbf{c}=[10,10,...,10]$, and $n=50000$ (left-side plot) and $n=500000$ (right-side plot).}\label{ch6:fig_results_syn1_syn2}
\end{figure}

\begin{figure}[!htb]
    \centering
    \includegraphics[width=1\linewidth]{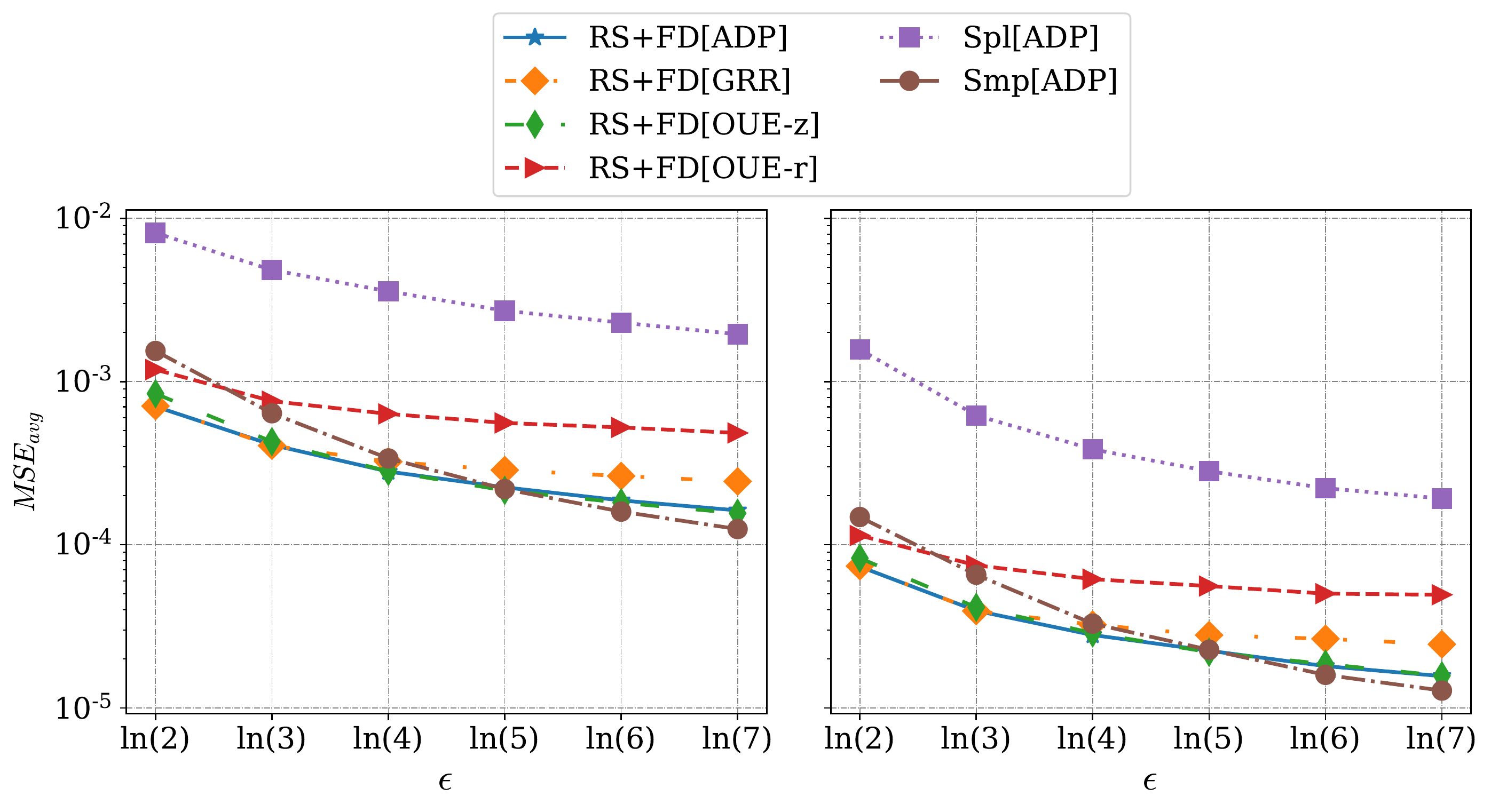}
        \caption{Averaged MSE varying $\epsilon$ on the \textit{synthetic} datasets with $d=10$, uniform domain size $\textbf{c}=[10,10,...,10]$, and $n=50000$ (left-side plot) and $n=500000$ (right-side plot).}\label{fig:results_syn3_syn4}
\end{figure}

\begin{figure}[!htb]
    \centering
    \includegraphics[width=1\linewidth]{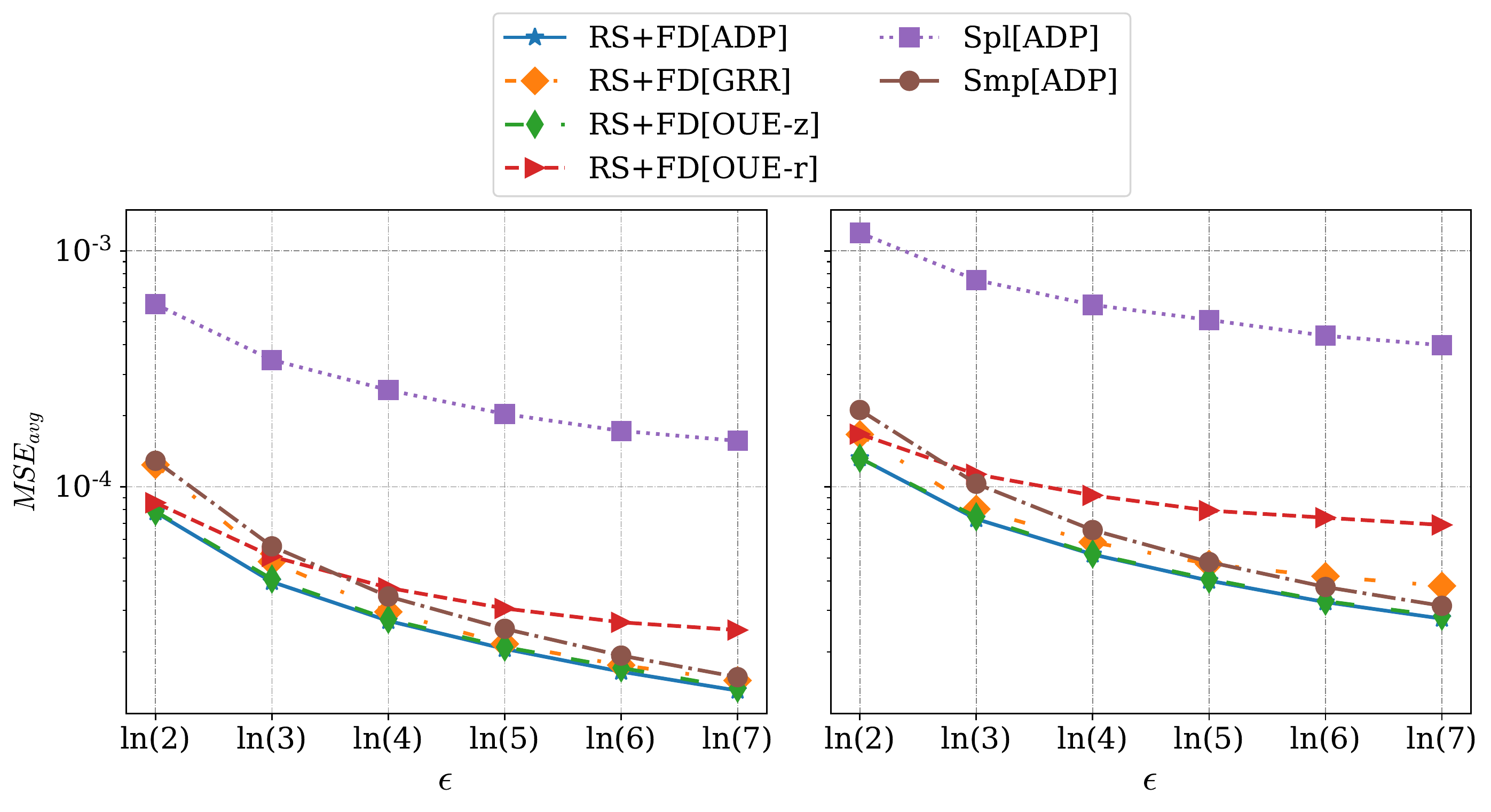}
        \caption{Averaged MSE varying $\epsilon$ on the \textit{synthetic} datasets with $n=500000$: the first with $d=10$ and domain size $\textbf{c}=[10,20,...,90,100]$ (left-side plot), and the other with $d=20$ and domain size $\textbf{c}=[10,10,20,...,100,100]$ (right-side plot).}\label{fig:results_syn5_syn6}
\end{figure}

\textbf{Comparison with existing solutions.} From our experiments, one can notice that the \textit{Spl} solution always resulted in more estimation error (i.e., higher $MSE_{avg}$) than our RS+FD solution and than the \textit{Smp} solution, which is in accordance with other works~\cite{wang2019,xiao2,tianhao2017,Arcolezi2021,Wang2021_b}. Besides, our RS+FD[GRR], RS+FD[OUE-z], and RS+FD[ADP] protocols achieve smaller estimation error (i.e., lower $MSE_{avg}$) or nearly the same $MSE_{avg}$ than the \textit{Smp} solution with a best-effort adaptive mechanism Smp[ADP], which uses GRR for small domain sizes $k$ and OUE for large ones. Although this is not true with RS+FD[OUE-r], it still provides less estimation error than Spl[ADP] while ``hiding" the sampled attribute from the aggregator.

\textbf{Globally, on high privacy regimes (i.e., low values of $\epsilon$), our RS+FD solution consistently outperforms the other two solutions \textit{Spl} and \textit{Smp}. By increasing $\epsilon$, Smp[ADP] starts to outperform RS+FD[OUE-r] while achieving similar performance than our RS+FD[GRR], RS+FD[OUE-z], and RS+FD[ADP] solutions.} In addition, one can notice in Fig.~\ref{fig:results_syn3_syn4}, for example, the advantage of RS+FD[ADP] over our protocols RS+FD[GRR] and RS+FD[OUE-z] applied individually, as it adaptively selects the protocol with the smallest \textit{approximate variance} value.

\subsection{Results on real world data} \label{ch6:sub_real_data}

Our second set of experiments were conducted on four real-world datasets with varied parameters for $n$, $d$, and $\textbf{c}$. Fig.~\ref{fig:results_nursery} (\textit{Nursery}), Fig.~\ref{fig:results_adults} (\textit{Adult}), Fig.~\ref{fig:results_vhs} (\textit{MS-FIMU}), and Fig.~\ref{ch6:fig_results_census} (\textit{Census-Income}) illustrate for all methods, averaged $MSE_{avg}$ (y-axis) according to the privacy parameter $\epsilon$ (x-axis).

\begin{figure}[!ht]
    \centering
    \includegraphics[width=0.625\linewidth]{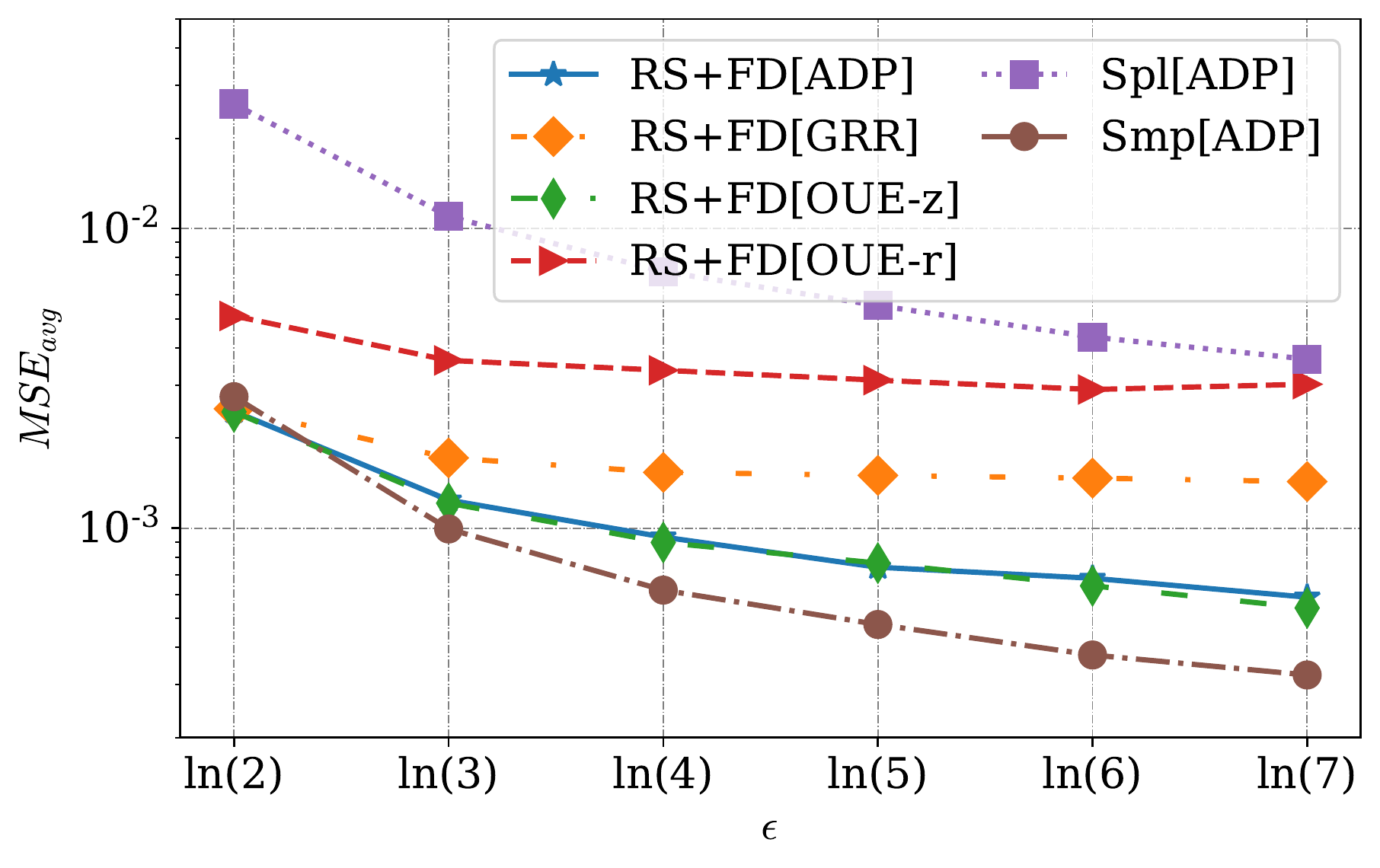}
    \caption{Averaged MSE varying $\epsilon$ on the \textit{Nursery} dataset with $n=12960$, $d=9$, and domain size $\textbf{c}=[3, 5, 4, 4, 3, 2, 3, 3, 5]$.}\label{fig:results_nursery}
\end{figure}

\begin{figure}[!ht]
    \centering
    \includegraphics[width=0.625\linewidth]{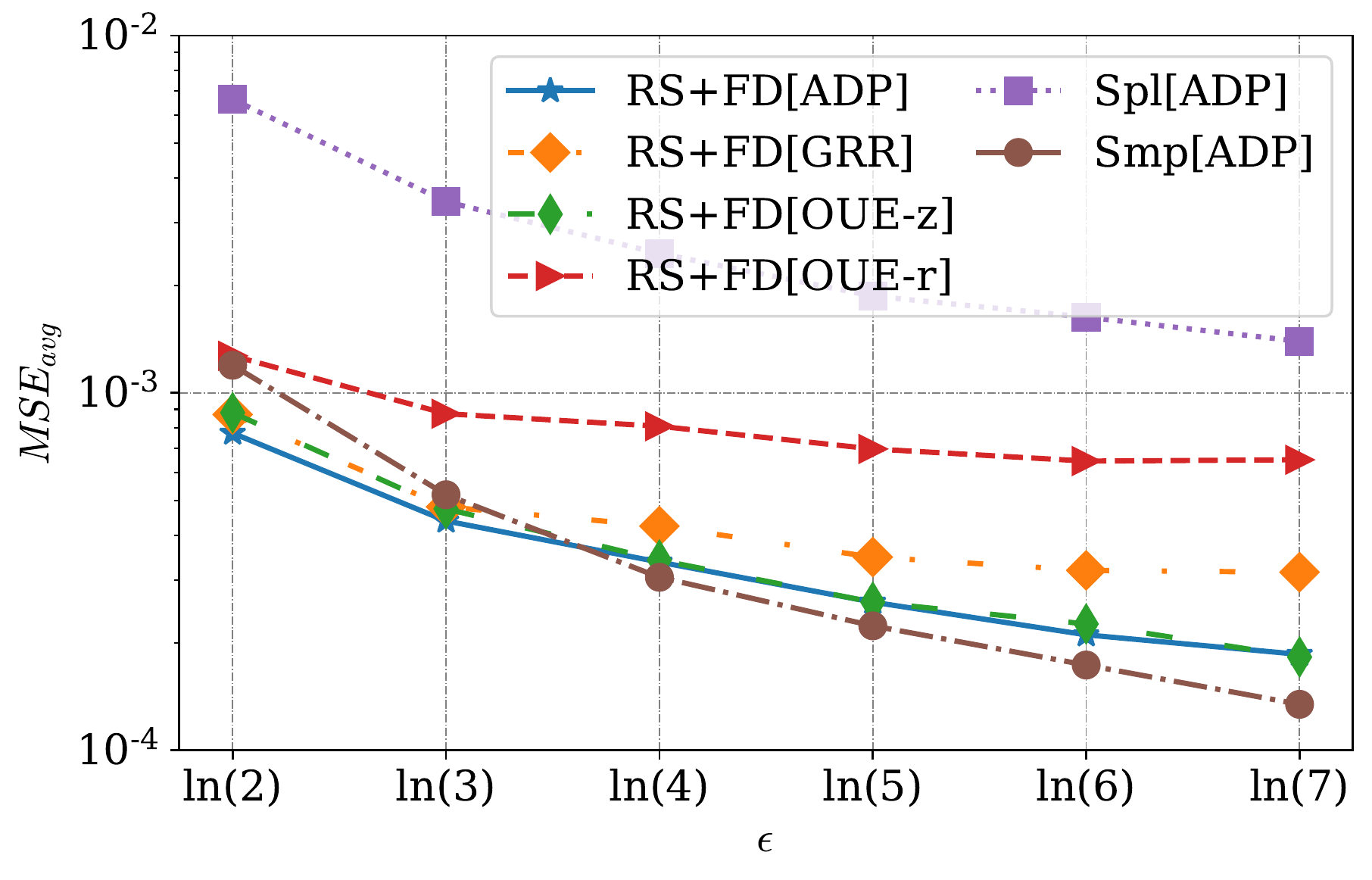}
    \caption{Averaged MSE varying $\epsilon$ on the \textit{Adult} dataset with $n=45222$, $d=9$, and domain size $\textbf{c}=[7, 16, 7, 14, 6, 5, 2, 41, 2]$.}\label{fig:results_adults}
\end{figure}

\begin{figure}[!ht]
    \centering
    \includegraphics[width=0.625\linewidth]{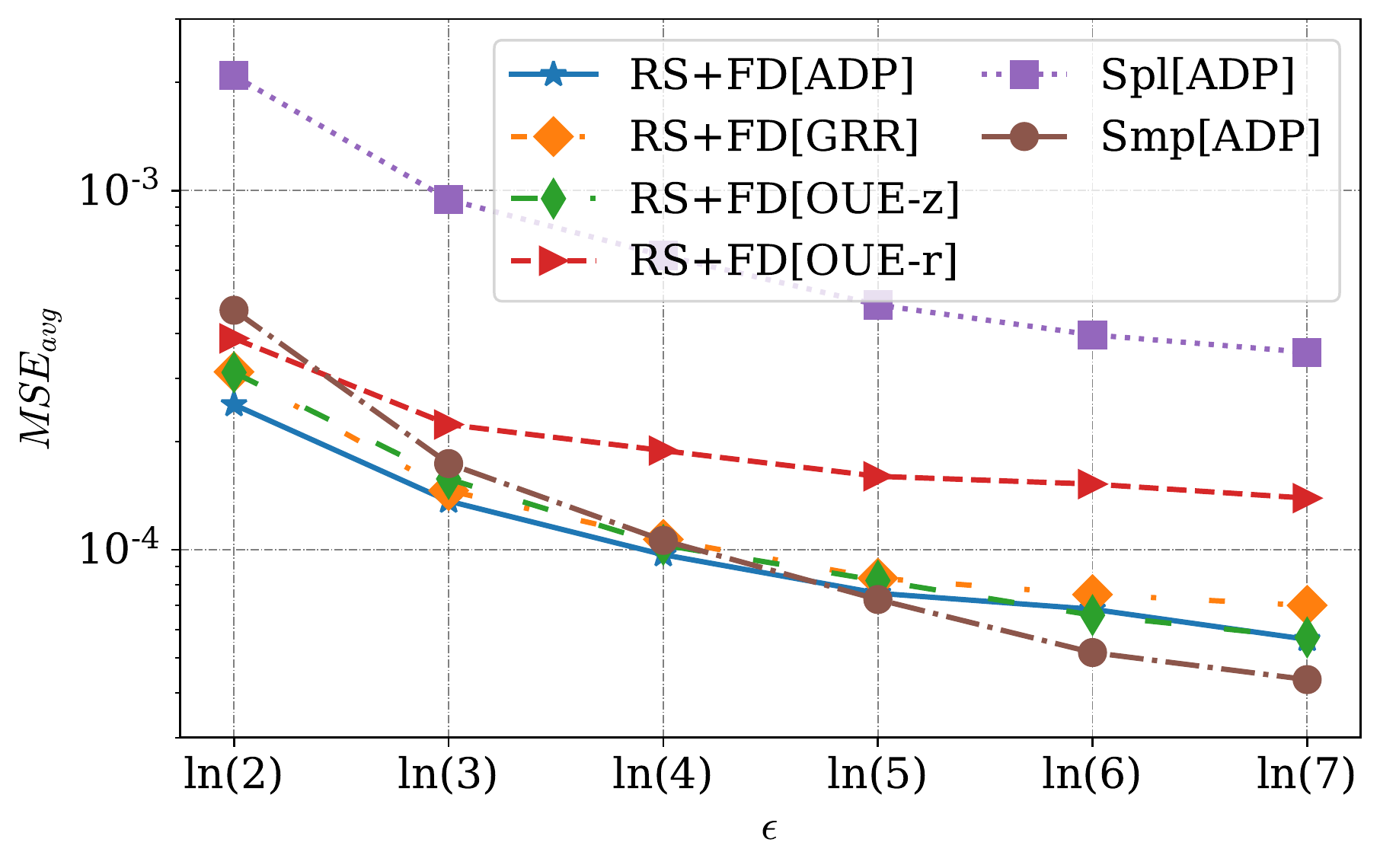}
    \caption{Averaged MSE varying $\epsilon$ on the \textit{MS-FIMU} dataset with $n=88935$, $d=6$, and domain size $\textbf{c}=[3, 3, 8, 12, 37, 11]$.}\label{fig:results_vhs}
\end{figure}

\begin{figure}[!ht]
    \centering
    \includegraphics[width=0.625\linewidth]{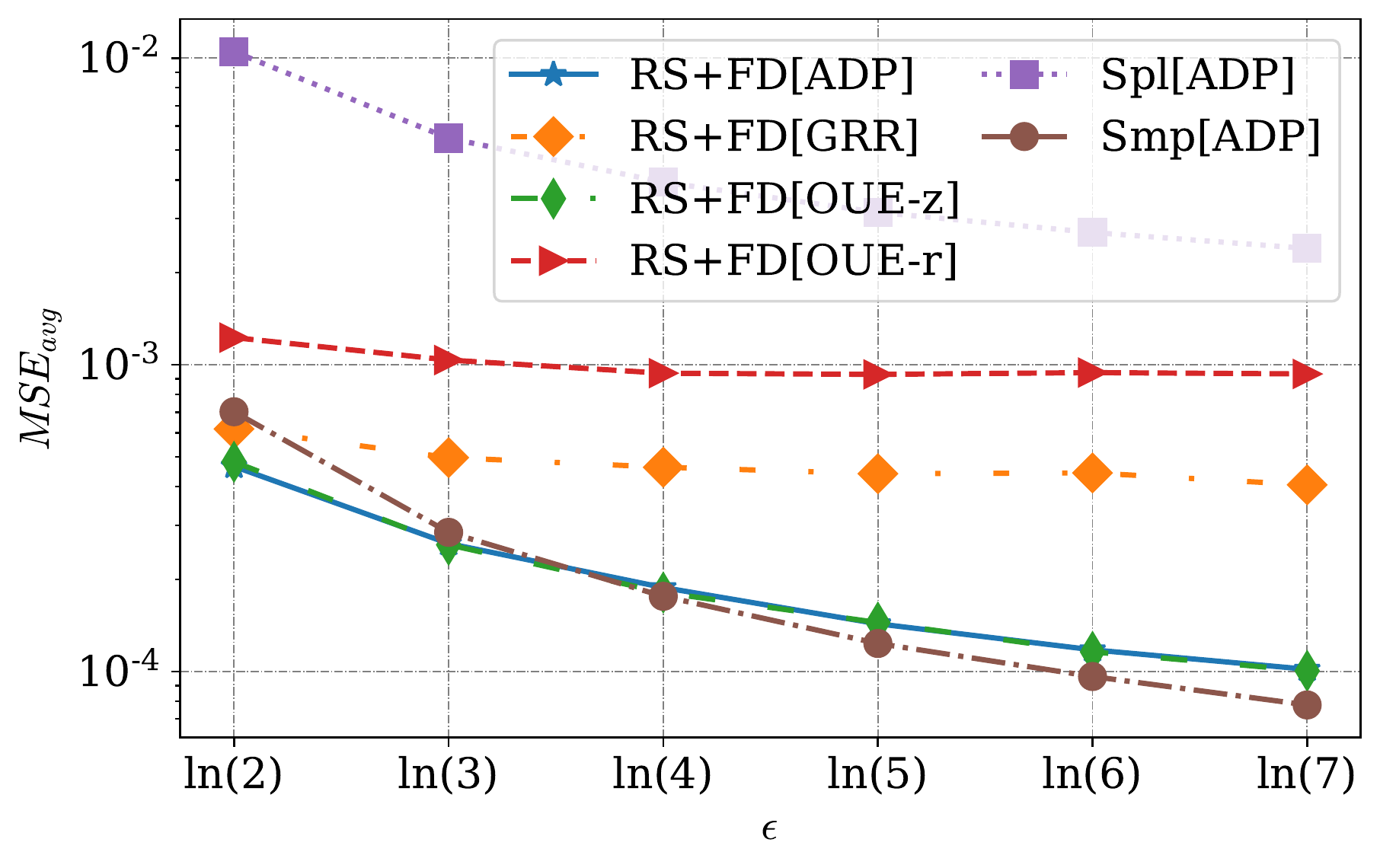}
    \caption{Averaged MSE varying $\epsilon$ on the \textit{Census-Income} dataset with $n=299285$, $d=33$, and domain size \begin{math}\textbf{c}=[9, 52, 47, 17,  3,  ..., 43, 43, 43,  5,  3,  3,  3,  2]\end{math}.}\label{ch6:fig_results_census}
\end{figure}

The results with real-world datasets follow similar behavior than with synthetic ones. For all tested datasets, one can observe that the $MSE_{avg}$ of our proposed protocols with RS+FD is still smaller than the \textit{Spl} solution with a best-effort adaptive mechanism Spl[ADP]. As also highlighted in the literature~\cite{wang2019,xiao2,tianhao2017,Arcolezi2021,Wang2021_b} and in Chapters~\ref{chap:chapter7} and~\ref{chap:chapter5}, privacy budget splitting is sub-optimal, which leads to higher estimation error.  

On the other hand, for both \textit{Adult} and \textit{MS-FIMU} datasets, our solutions RS+FD[GRR], RS+FD[OUE-z], and RS+FD[ADP] achieve nearly the same $MSE_{avg}$ (sometimes smaller $MSE_{avg}$ on high privacy regimes, i.e., for low $\epsilon$) than the \textit{Smp} solution with the best-effort adaptive mechanism Smp[ADP]. For the \textit{Nursery} dataset, with small number of users $n$, only RS+FD[OUE-z] and RS+FD[ADP] are competitive with Smp[ADP]. Lastly, for the \textit{Census} dataset, with a large number of attributes $d=33$, increasing the privacy parameter $\epsilon$ resulted in a small gain on data utility for our solutions RS+FD[GRR] and RS+FD[OUE-r]. On the other hand, both of our solutions RS+FD[OUE-z] and RS+FD[ADP] achieve nearly the same or smaller $MSE_{avg}$ scores than Smp[ADP]. 

Moreover, one can notice that using the \textit{approximate variance} in Eq.~\eqref{ineq:variance} led RS+FD[ADP] to achieve an improved performance over our RS+FD[GRR] and RS+FD[OUE-z] protocols applied individually. For instance, for the \textit{Adult} dataset, with RS+FD[ADP] it was possible to outperform Smp[ADP] 3x more than with RS+FD[GRR] or RS+FD[OUE-z] (similarly, 1x more for the \textit{MS-FIMU} dataset). Besides, for the \textit{Census-Income} dataset, RS+FD[ADP] improves the performance of the other protocols applied individually on high privacy regimes while accompanying the RS+FD[OUE-z] curve on the lower privacy regime cases.

\textbf{In general, these results help us answering the problematic of this chapter (cf. Section~\ref{ch6:sec_introduction}) that for the same privacy parameter $\epsilon$, one can achieve nearly the same or better data utility with our RS+FD solution than when using the state-of-the-art \textit{Smp} solution. Besides, RS+FD enhances users' privacy by ``hiding" the sampled attribute and its $\epsilon$-LDP value among fake data.} On the other hand, there is a price to pay on computation, in the generation of fake data, and on communication cost, which is similar to the \textit{Spl} solution, i.e., send a value per attribute.

\section{Discussion and Related Work} \label{ch6:discussion_rel_work}

As reviewed in Section~\ref{ch5:discussion_allomfree}, most studies for collecting multidimensional data with LDP mainly focused on numerical data~\cite{xiao2,Duchi2018,wang2019,Wang2021_b} or other complex tasks with categorical data, e.g., marginal estimation~\cite{Shen2021,Peng2019,Zhang2018,Ren2018,Fanti2016} and analytical/range queries~\cite{Jianyu2020,Xu2020,Gu2019,Cormode2019}. Regarding multidimensional frequency estimates, in Chapters~\ref{chap:chapter7} and~\ref{chap:chapter5}, we prove that for GRR, SUE, and OUE, sending a single attribute with the whole privacy budget $\epsilon$ results in less variance than splitting the privacy budget for all attributes, which is a common result in LDP literature~\cite{tianhao2017,Jianyu2020,Wang2021,erlingsson2020encode,bassily2017practical}.

However, in the aforementioned works~\cite{xiao2,wang2019,tianhao2017,Duchi2018,Wang2021_b} as well as in Chapters~\ref{chap:chapter7} and~\ref{chap:chapter5}, the sampling result is known by the aggregator. That is, each user samples a single attribute $j$, applies a local randomizer to $v_j$, and sends to the aggregator the tuple $y=\langle j, LDP(v_j) \rangle$ (i.e., \textit{Smp}). While one can achieve higher data utility (cf. Figs.~\ref{ch6:fig_results_syn1_syn2}-~\ref{ch6:fig_results_census}) with \textit{Smp} than splitting the privacy budget among $d$ attributes (\textit{Spl}), we argue that \textit{Smp} might be "unfair" with some users. More precisely, users whose sampled attribute is socially "more" sensitive (e.g., disease or location), might hesitate to share their data as the probability bound $e^{\epsilon}$ is "less" restrictive than $e^{\epsilon/d}$. For instance, assume that GRR is used with k=2 (HIV positive or negative) and the privacy budget is $\epsilon=ln(7) \sim 2$, the user will report the true value with probability as high as $p \sim 87\%$ (even with $\epsilon=1$, this probability is still high $p \sim 73\%$). On the other hand, if there are $d=10$ attributes (e.g., nine demographic and HIV test), with \textit{Spl}, the probability bound is now $e^{\epsilon/10}$ and $p \sim 55\%$.

Motivated by this privacy-utility trade-off between the solutions \textit{Spl} and \textit{Smp}, we proposed a solution named random sampling plus fake data (RS+FD), which generates uncertainty over the sampled attribute in the view of the aggregator. In this context, since the sampling step randomly selects an attribute with sampling probability $\beta=\frac{1}{d}$, there is an amplification effect in terms of privacy, a.k.a. amplification by sampling~\cite{Chaudhuri2006,Li2012,balle2018privacy,balle2020privacy,first_ldp}. A similar privacy amplification for sampling a random item of a single attribute has been noticed in~\cite{Wang2018} for frequent itemset mining in the LDP model too. Indeed, \textit{amplification} is an active research field on DP literature, which aims at finding ways to measure the privacy introduced by non-compositional sources of randomness, e.g., sampling~\cite{Chaudhuri2006,Li2012,balle2018privacy,balle2020privacy,first_ldp}, iteration~\cite{Feldman2018}, and shuffling~\cite{Balle2019,Erlingsson2019,erlingsson2020encode,Wang2020,li2021privacy}. 

\section{Conclusion} \label{ch6:sec_conclusion}

In this chapter, we proposed a solution, namely, RS+FD for multidimensional frequency estimates under $\epsilon$-LDP, which is generic to be used with any existing LDP mechanism developed for single-frequency estimation. More precisely, with RS+FD, the client-side has two steps: local randomization and fake data generation (cf. Fig.~\ref{ch6:fig_system_overview} and Alg.~\ref{alg:rs+fd}). First, an LDP mechanism preserves privacy for the data of the sampled attribute. Second, the fake data generator provides fake data for each $d-1$ non-sampled attribute. This way, the sanitized data is ``hidden" among fake data and, hence, the sampling result is not disclosed along with the users' report (and statistics). 

What is more, we notice that RS+FD can enjoy privacy amplification by sampling~\cite{Chaudhuri2006,Li2012,balle2018privacy,balle2020privacy,first_ldp}, detailed in Section~\ref{ch2:sub_sampling}. That is, if one randomly sample a dataset without replacement using a sampling rate $\beta < 1$, it suffices to use a privacy budget $\epsilon' \geq \epsilon$ to satisfy $\epsilon$-DP, where $\frac{e^{\epsilon'}-1}{e^{\epsilon}-1} = \frac{1}{\beta}$~\cite{Li2012}. This way, given that the sampled dataset for each attribute has non-overlapping users, i.e., each user selects an attribute with sampling probability $\beta=\frac{1}{d}$, to satisfy $\epsilon$-LDP, each user can apply an LDP mechanism with $\epsilon'=\ln{\left( d \cdot (e^{\epsilon} - 1) + 1 \right)} \geq \epsilon$.

Moreover, we integrated two state-of-the-art LDP mechanisms, namely, GRR~\cite{kairouz2016discrete} and OUE~\cite{tianhao2017}, within RS+FD to develop four protocols: RS+FD[GRR], RS+FD[OUE-z], RS+FD[OUE-r], and RS+FD[ADP]. We analyze these four protocols analytically and experimentally through a comprehensive and extensive set of experiments on both synthetic and real-world open datasets. With our experiments, we can conclude that \textbf{under the same privacy guarantee, our proposed protocols with RS+FD achieve similar or better utility (measured with the $MSE_{avg}$ metric) than using the state-of-the-art \textit{Smp} solution (see Figs.~\ref{ch6:fig_results_syn1_syn2} -- \ref{ch6:fig_results_census}).} Besides these \textbf{utility results}, RS+FD also generates \textit{uncertainty} over the sampled attribute in the view of the aggregator, which enhances \textbf{users' privacy}.


\part{Contribution: Differentially Private Machine Learning Predictions}

\chapter{Forecasting Mobility Data With Differentially Private Deep Learning} \label{chap:chapter91}

In Chapters~\ref{chap:chapter7}-\ref{chap:chapter6} we have focused and contributed on \textbf{statistical learning} with the local DP model. From this Chapter~\ref{chap:chapter91} until Chapter~\ref{chap:chapter92}, we concentrate our efforts on \textbf{differentially private machine learning}. As mentioned in Chapter~\ref{chap:chapter1}, we aim to solve real-world problems using machine learning, assuming centralized data owners (e.g., MNOs and EMS) that collect sensitive information from individuals for both billing and/or legal purposes. This way, we consider settings applying either centralized DP algorithms (Chapters~\ref{chap:chapter91} and~\ref{chap:chapter92}) or LDP algorithms (Chapters~\ref{chap:chapter8} and~\ref{chap:chapter9}) to sanitize the data on the server-side, which is $\epsilon$-DP for users. However, besides sanitizing the data, extracting meaningful predictions is also of great interest, thus, requiring a proper evaluation of the privacy-utility trade-off.

Moreover, in Chapters~\ref{chap:chapter1} and~\ref{chap:chapter4}, we have reviewed mobility reports published by OBS Flux Vision system~\cite{fluxvision1} and in Chapter~\ref{chap:chapter7} we have proposed an LDP-based CDRs processing system as a stronger alternative to ``anonymity on-the-fly", i.e., with ``sanitization on-the-fly". \textbf{In this chapter, we assume that besides generating mobility reports, MNOs (or any involved entity) could also be interested in forecasting aggregate human mobility statistics.} Therefore, in this chapter, we will assume the existence of two settings for privacy-preserving human mobility analytics using CDRs. The first scenario, \textbf{S$_1$}, considers that aggregated mobility statistics are published following the anonymity ``on-the-fly" model of MNOs CDRs processing systems (e.g., as in~\cite{fluxvision1}). The second setting, \textbf{S$_2$}, considers that besides anonymity ``on-the-fly", a centralized DP algorithm (e.g., Laplace or Gaussian mechanisms from Section~\ref{ch2:sub_dp}) is used to sanitize the aggregate mobility statistics before public release.

In other words, this corresponds to evaluating the privacy-utility trade-off of applying centralized DP algorithms to the current anonymity-based statistics. This is the core contribution of this chapter, in which we evaluate differentially private deep learning models for multivariate time-series forecasting of aggregate human mobility data. Notice that while the previous chapters considered multiple attributes, we will focus on a single attribute here, namely, the number of people per several given regions.

\section{Introduction}\label{ch91:introduction}

As reviewed in Chapters~\ref{chap:chapter1},~\ref{chap:chapter4}, and~\ref{chap:chapter7}, on analyzing mobility data, some studies have shown that humans follow particular patterns with predictability~\cite{deMontjoye2013} and, hence, \textit{users' privacy is a major concern}~\cite{deMontjoye2013,deMontjoye2018,Mir2013,app_blip,Acs2014,Zang2011,Buckee2014,Pyrgelis2017,Pyrgelis2020,Tu2018,Xu2017}. Because of these privacy issues, MNOs tend to publish aggregated mobility data~\cite{deAlarcon2021,Xu2017,Vespe2021,Tu2018,fluxvision1}, e.g., the number of users in given areas at a given timestamp, which, in other words, represents a \textbf{multivariate time series dataset}. 

However, as recent studies have shown, even aggregated mobility data (e.g., heatmaps) can be subject to membership inference attacks~\cite{Pyrgelis2017,Pyrgelis2020} and users' trajectory recovery attack~\cite{Tu2018,Xu2017}. More precisely, the later authors in~\cite{Tu2018,Xu2017} showed that their attack reaches accuracies as high as $73\% \sim 91\%$, suggesting generalization and perturbation through DP~\cite{Dwork2006,Dwork2006DP,dwork2014algorithmic} as a means to mitigate this attack. 

With these elements in mind, this chapter contributes with a comparative analysis between adding DP guarantees into two different steps of training deep learning (DL) models to \textbf{forecasting multivariate aggregated human mobility data}. On the one hand, we consider using \textbf{gradient perturbation}, which can be achieved by training DL models over original time-series data with the DP-SGD~\cite{DL_DP,tf_privacy,pytorch_privacy} algorithm. This case corresponds to collecting data following the scenario \textbf{S$_1$} mentioned at the beginning of this chapter and training a differentially private DL model. On the other hand, we consider using \textbf{input data perturbation}, i.e., training DL models with differentially private time series data. This corresponds to collecting data following the scenario \textbf{S$_2$} also mentioned at the beginning of this chapter and training any non-private DL model on it. We have briefly presented both gradient and input perturbation settings in Section~\ref{ch3:sub_DP_ML}. 

We carried out our experiments with the real-world mobility dataset collected by OBS~\cite{fluxvision1} named Paris-DB described in Section~\ref{ch3:paris_db}. In this chapter, \textbf{we aim at forecasting the future number of people at the next $30$-min interval in each of the 6 regions}. That is, given $X_{(t_1,t_{\tau})}$, the goal is to forecast $X_{(t_{\tau + 1})}$, i.e., \textbf{one-step-ahead forecasting}, which is unknown at time $\tau$. Therefore, we benchmark four state-of-the-art DL models (i.e., recurrent neural networks) with the Paris-DB, providing a first comparative evaluation on the impact of differential privacy guarantees when training DL models in both input and gradient perturbation settings. Indeed, we intend that from this study, other classical multivariate time series forecasting, ML, and privacy-preserving ML techniques can be tested and compared. We invite the interested reader to also visit the \textbf{Github page (\url{https://github.com/hharcolezi/ldp-protocols-mobility-cdrs}), in which we release the dataset and codes we used for our experiments.}

The remainder of this chapter is organized as follows. In Section~\ref{ch91:sec_results}, we present the experimental setup, our results and its discussion, and we review related work. Lastly, in Section~\ref{ch91:sec_conclusion}, we present the concluding remarks and future directions. The experiments and results in Sections~\ref{ch91:sec_results} and~\ref{ch91:sec_conclusion} were submitted as part of a full article to the Neural Computing and Applications journal.

\section{Experimental Validation} \label{ch91:sec_results}

We divide this section in the following way. First, we describe general settings for our experiments (Section~\ref{ch91:sub_gen_setup}). Next, we present the development and evaluation of non-private DL models (Section~\ref{ch91:sub_non_private_DL}). Lastly, we present the development of differentially private DL models, which include both gradient and input perturbation settings (Section~\ref{ch91:sub_private_DL}).

\subsection{General setup of experiments} \label{ch91:sub_gen_setup}

\textbf{Environment.} All algorithms were implemented in Python 3.8.8 with Keras~\cite{keras} and Tensorflow Privacy (TFP)~\cite{tf_privacy} libraries. 

\textbf{Dataset.} In this chapter, we only utilize the second period of the Paris-DB from Section~\ref{ch3:paris_db}, which has aggregated mobility data for $72$ days (from 2020-08-24 to 2020-11-04). We split the Paris-DB into exclusively divided learning (first $65$ days, i.e., $n_l=3120$ intervals of $30$-min) and testing (last $7$ days, i.e., $n_t=336$ intervals of $30$-min) sets. Table~\ref{tab:statistics_data} presents descriptive statistics about both dataset with the following measures per region (labeled as R1 - R6): min, max, mean, standard deviation (std), and median.. Fig.~\ref{fig:train_test_split} exemplifies the data separation into train and test sets for region R1.

\setlength{\tabcolsep}{5pt}
\renewcommand{\arraystretch}{1.4}
\begin{table}[!ht]
    \scriptsize
    \centering
    \caption{Descriptive statistics for the multivariate time series dataset on the number of users per coarse region.}
    \begin{tabular}{c c c c c c c}
    \hline
       \textbf{Statistic} &\textbf{R1}  &\textbf{R2}   &\textbf{R3}   &\textbf{R4}   &\textbf{R5}   &\textbf{R6}   \\\hline
         Min &   56,937 &   1,996 &   1,429 &    255 &   252 &    347 \\
        Max &  165,405 &  21,980 &  28,990 &  25,184 &  7,961 &  27,637 \\
        Mean &  116,777 &  14,307 &  16,274 &  11,758 &  4,166 &  11,559 \\
        Std &   17,947 &   2,803 &   3,915 &   3,682 &  1,450 &   5,136 \\
        Median &  121,488 &  14,808 &  16,661 &  12,134 &  4,495 &  12,542 \\\hline
    \end{tabular}
    \label{tab:statistics_data}
\end{table}

\begin{figure}[!ht]
    \centering
    \includegraphics[width=1\linewidth]{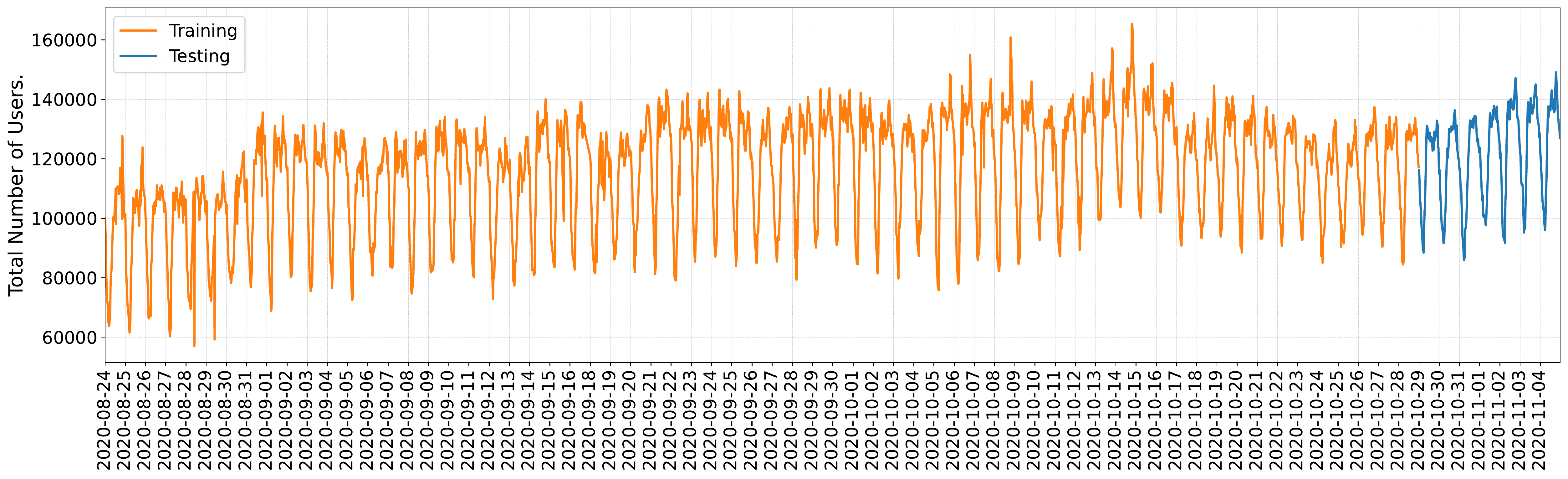}
    \caption{Example of data separation into training and testing sets for region R1.}
    \label{fig:train_test_split}
\end{figure}

\textbf{Temporal features.} We added the time of the day and the time of the week as cyclical features to help models recognizing low and high peak values of human mobility patterns.

\textbf{Forecasting methodology.} We used 6 prior time steps (i.e., lag values), which showed autocorrelation higher than 0.5 to predict a single step ahead in the future (i.e., short forecasting horizon). More specifically, the forecasting models will take into account the number of people in each region from $3$ hours to make predictions \textit{one-step-ahead} for each region in the next $30$-min interval. And in the end, we compute the performance metrics.

\textbf{Performance metrics.} All models were evaluated with standard time-series metrics, namely, RMSE and MAE, both explained in Section~\ref{ch3:sub_metrics}. RMSE was the primary metric to select the final DL models. As a multi-output scenario (i.e., $6$ regions), we present the metrics per region as well as its averaged values. In all experiments, due to randomness, we report the results of the model with the lowest RMSE over 10 runs. 

\subsection{Non-private DL forecasting models} \label{ch91:sub_non_private_DL}

\textbf{Baseline model.} We established a naive forecasting technique a.k.a. ``persistence model", which for each region, it returns the current number of people at time $t$ as the forecasted value, i.e., $\textbf{x}_{t+1}=\textbf{x}_t$. Notice that this is a quite accurate baseline since, in general, the number of people per region varies slowly by 30-min (i.e., walking people may take more time to move from one area to another).

\textbf{Methods evaluated.} To predict the number of users in each region in a multivariate time series forecasting framework, we compared the performance of four state-of-the-art DL models, i.e., recurrent neural networks: LSTM~\cite{LSTM}, GRU~\cite{GRU}, and their Bidirectional~\cite{BI_RNN} architectures, i.e., BiLSTM and BiGRU. These methods have been briefly presented in Section~\ref{ch3:dl_models}. 

\textbf{Model selection.} To optimize the hyperparameters per DL method, we used Bayesian optimization~\cite{hyperopt2013} (explained in Section~\ref{ch3:optimization}) with $100$ iterations to minimize $loss=RMSE_{avg} + RMSE_{std}$; the subscripts \textit{avg} and \textit{std} indicates the averaged and standard deviation values of the RMSE metric considering the 6 regions. For each method, we only used a single hidden layer followed by a dense layer (output), since RNNs generally perform well with a low number of hidden layers~\cite{Hewamalage2021}. So, we searched the following hyperparameters: number of neurons ($h_1$), batch size ($bs$), and learning rate ($\eta$). All models used ``relu" (rectified linear unit) as activation function, which resulted in better performance than the default ``tanh" activation in prior tests. Lastly, models were trained using the adam (adaptive moment estimation) optimizer during $100$ epochs by minimizing the MAE loss function. Table~\ref{ch91:tab_hyper_non_private} exhibits the hyperparameters' search space and the final value used per DL method.

\setlength{\tabcolsep}{5pt}
\renewcommand{\arraystretch}{1.4}
\begin{table}[!ht]
    \centering
    \scriptsize
    \caption{Search space for hyperparameters and the final configuration obtained by DL method.}
    \label{ch91:tab_hyper_non_private}
    \begin{tabular}{c c c c c c} 
    \hline
        \textbf{Hyperparameter's range} & \textbf{Step} & \textbf{LSTM} & \textbf{BiLSTM}  & \textbf{GRU}  & \textbf{BiGRU} \\ 
    \hline
         $h_1$: [25 -- 500]   & 25    &225     &500    &75    &175    \\
         $bs$: [5 -- 40]      & 5      &10     &10     &5    &5   \\
         $\eta$: [1e-5 -- 3e-3]  & --  &0.002233     &0.002303     &0.001725    &0.000289  \\
    \hline
    \end{tabular}
\end{table}

\textbf{Results and analysis.} Table~\ref{ch91:tab_results_non_private} present the performance of the developed DL models in comparison with the Baseline model based on RMSE and MAE metrics per region and the resulting mean. \textit{Notice that the metrics are in the real scale according to the number of users per region (cf. Table~\ref{tab:statistics_data}). That said, although R1 presents higher metric values, it does not necessarily mean worse results.} One solution could be normalizing the data. Besides, Fig.~\ref{fig:results_pred_non_private} illustrates for each region forecasting results for the last day of our testing set, which includes the real number of people and the predicted ones by each RNN: LSTM, GRU, BiLSTM, and BiGRU.

\setlength{\tabcolsep}{5pt}
\renewcommand{\arraystretch}{1.4}
\begin{table}[!ht]
    \centering
    \scriptsize
    \caption{Performance of the Baseline model and non-private DL models based on RMSE and MAE metrics per region and the resulting mean values.}
    \label{ch91:tab_results_non_private}
    \begin{tabular}{c c c c c c c c c} 
    \hline
    \textbf{Model} & \textbf{Metric} & \textbf{R1} & \textbf{R2} & \textbf{R3}  & \textbf{R4}  & \textbf{R5}   & \textbf{R6}  & \textbf{Mean}  \\ 
    \hline
    \multirow{2}{*}{Baseline}   & RMSE &3461.6 &  1131.8 &  1517.9 &  986.5 &  561.3 &  1362.3  &1503.6    \\
                                & MAE  &2597.5 &   839.4 &  1105.8 &  744.1 &  434.3 &   921.5  &1107.1    \\ \hline
    \multirow{2}{*}{LSTM}       & RMSE &2667.2    &1007.3    &1291.6    &887.2    &536.3    &1135.6    &1254.2    \\
                                & MAE  &2053.8    &758.1    &969.8    &662.6    &432.3    &786.0    &943.8    \\ \hline
    \multirow{2}{*}{BiLSTM}     & RMSE &2572.7    &1033.3    &1276.4    &872.7    &528.1    &1166.7    &1241.6     \\
                                & MAE  &1954.7    &781.5    &965.5    &660.8    &419.4    &808.2    &931.7    \\ \hline
    \multirow{2}{*}{GRU}        & RMSE &2539.1    &973.0     &1296.0     &953.5    &499.9    &1185.1    &1241.1    \\
                                & MAE  &1949.7    &722.8    &939.6    &740.2    &396.4    &829.1    &929.6    \\ \hline
    \multirow{2}{*}{BiGRU}      & RMSE &2560.3    &968.3    &1282.6    &832.1    &478.9    &1163.7    &\textbf{1214.3}    \\
                                & MAE  &1957.2    &717.0    &955.3    &623.0    &382.7    &807.5    &\textbf{907.1}    \\ \hline 
    \end{tabular}
\end{table}

\begin{figure}[H]
    \centering
    \includegraphics[width=0.97\linewidth]{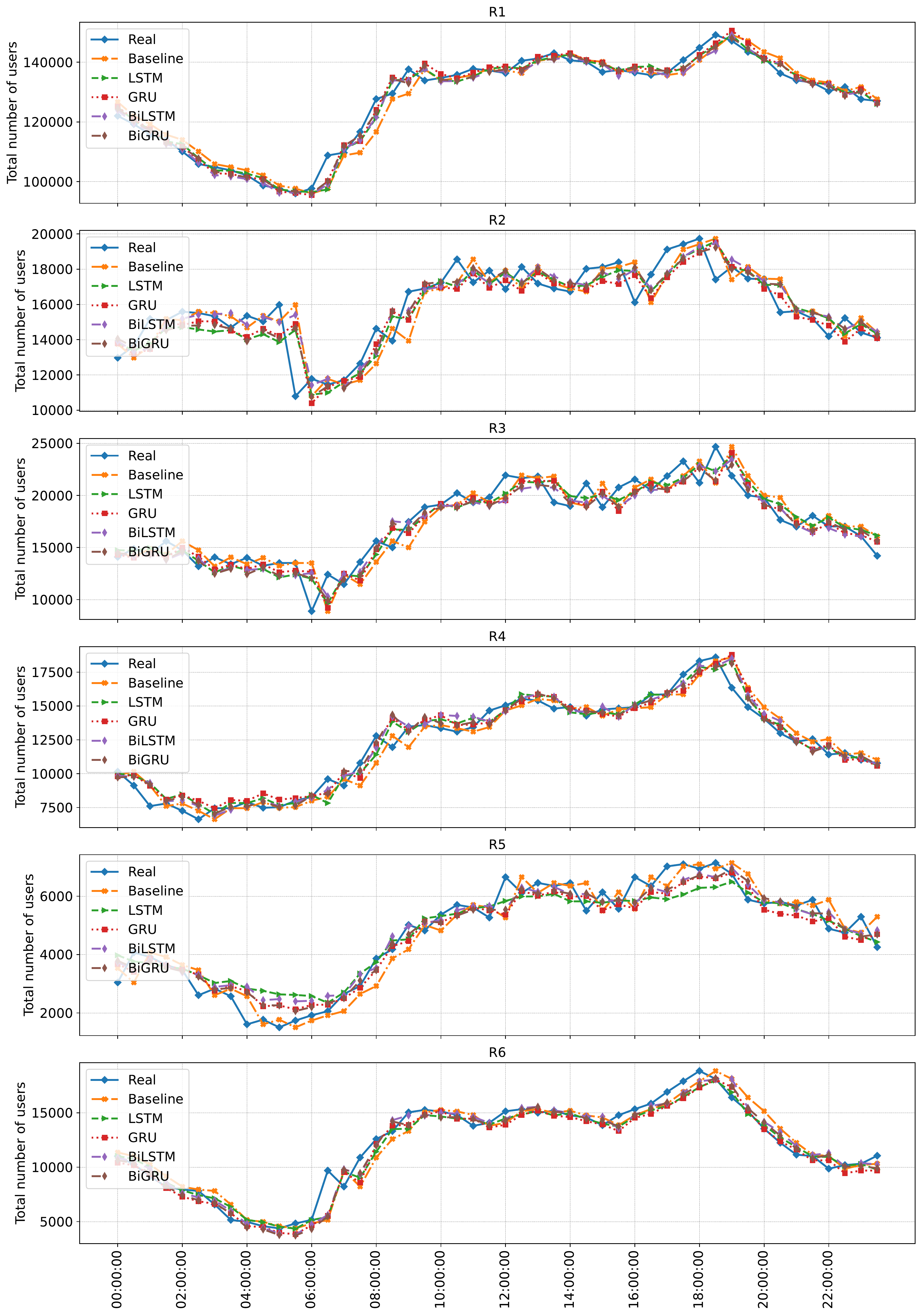}
    \caption{Multivariate time series forecast for the last day of the test set for the number of users per coarse region (R1 -- R6) by the following models: Baseline, LSTM, GRU, BiLSTM, and BiGRU.}
    \label{fig:results_pred_non_private}
\end{figure}

As one can notice, all DL models consistently outperform the Baseline model. On average, the BiGRU model outperformed all other forecasting methods, with results highlighted \textbf{in bold}. Indeed, for each region, the BiGRU consistently and considerably outperformed the Baseline model, showing the worthiness of developing DL models for this multivariate forecasting task. Similar scores were achieved by the GRU and BiLSTM models with an average RMSE around 1241. The least performing DL method in our dataset was the LSTM model. Extending the architectures, hyperparameters range, lag values (i.e., test with less or more input time steps), dropout layers, for example, could probably improve our models and change the resulting most performing technique. However, we will focus our attention on a comparative analysis of privacy-preserving DL methods in the next subsection and, thus, these possible extensions are left as future work.%

\subsection{Privacy-preserving DL forecasting models} \label{ch91:sub_private_DL}

\textbf{Methods evaluated.} We consider two privacy-preserving ML settings presented in Section~\ref{ch2:sub_dp}, namely, input perturbation (IP) and gradient perturbation (GP). Thus, we selected only the DL method that presented the smallest RMSE with original data, i.e., BiGRU (cf. Table~\ref{ch91:tab_results_non_private}). We will use BiGRU[IP] and BiGRU[GP] to indicate a BiGRU trained under input and gradient perturbation, respectively. 

For the model selection stage, \textbf{we first start with BiGRU[GP] since it allows defining a range of $\epsilon$, which is dependent on several hyperparameters of DP-SGD. For a fair comparison between both settings, we utilize the given range of $\epsilon$ to develop BiGRU[IP] models too.} Notice, however, that in both scenarios, ($\epsilon,\delta$)-DP can be ensured to each time series data sample. On the other hand, this also means that the same user may have contributed to all $n_l=3120$ training samples and, thus, in the worst case, the sequential composition theorem~\cite{dwork2014algorithmic} applies. With these elements in mind, we considered high privacy regimes ($\epsilon \ll 1$) such that the maximum $\check{ \epsilon} =  \sum_{i=1}^{n_l} \epsilon_i$ is compatible with real-world DP deployed systems~\cite{desfontaines_dp_real_world}. \textbf{This way, $\epsilon$ corresponds to the lower bound (the user appears in a single data point) and $\check{ \epsilon}$ represents the upper bound (the user appears in all data points).}

\textbf{BiGRU[GP] model selection.} In addition to standard hyperparameters $h_1$, $bs$, and $\eta$ (cf. Section~\ref{ch91:sub_non_private_DL}), we also included the TFP hyperparameters in the Bayesian optimization with $100$ iterations to minimize $loss=(RMSE_{avg} + RMSE_{std})e^{\epsilon}$; \textbf{the multiplicative factor $e^{\epsilon}$ is a penalization on high values of $\epsilon$, which varies depending on the hyperparameters used per iteration}. More specifically, given the number of training samples $n_l=3120$, we fix the following hyperparameters: the number of epochs equal $100$, $num\_microbatches=5$, $noise\_multiplier$ equal $\{35, 70, 140, 500\}$, respectively, and $\delta=10^{-7}$, which respects $\sum_{i=1}^{n_l} \delta_i < 1/n_l$~\cite{dwork2014algorithmic}. This way, we varied $h_1$, $bs$, $\eta$, and $l2\_norm\_clip$ according to Table~\ref{ch91:tab_hyper_GP}, which exhibits the hyperparameters' search space, the final value used per BiGRU[GP] model, \textbf{and the resulting privacy guarantee $\epsilon$ calculated with the \texttt{compute\_dp\_sgd\_privacy} function}~\cite{tf_privacy}, and the overall $\check{ \epsilon} =  \sum_{i=1}^{n_l} \epsilon_i$. Lastly, all BiGRU[GP] models also used ``relu" as activation function and were trained using the differentially private adam optimizer by minimizing the MAE loss function. 

\setlength{\tabcolsep}{5pt}
\renewcommand{\arraystretch}{1.4}
\begin{table}[ht]
    \scriptsize
    \centering
    \caption{Search space for standard and TFP hyperparameters, the final configuration per BiGRU[GP] model, the final privacy guarantee $\epsilon$ per time-series sample, and the maximum $\check{ \epsilon}$ following the sequential composition theorem~\cite{dwork2014algorithmic}.}
    \begin{tabular}{c c c c c}
    \hline
         \textbf{Hyperparameter} & \textbf{BiGRU[GP]$_1$} & \textbf{BiGRU[GP]$_2$} & \textbf{BiGRU[GP]$_3$}  & \textbf{BiGRU[GP]$_4$}  \\\hline
         $h_1$: [25 -- 500]                &500       &425         &275      &475 \\
         $bs$: [5 -- 40]                   &5         &5           &10        &5       \\
         $\eta$: [1e-5 -- 3e-3]            &0.002229  &0.000455     &0.000291  &0.001235    \\
         $l2\_norm\_clip$ : \{1, 1.5, 2, 2.5\}  &2.5      &2           &1      &2.5  \\
         $noise\_multiplier$ : \texttt{fixed}     &35      &70    &140       &500    \\\hline  
         \multirow{2}{*}{ \textbf{Privacy guarantee}}    &$\epsilon_1=0.0650$   &$\epsilon_2=0.0399$   &$\epsilon_3=0.0357$    &$\epsilon_4=0.0317$\\
         & $\check{ \epsilon}_1=202.8$       & $\check{ \epsilon}_2=124.488$       &  $\check{ \epsilon}_3=111.384$      & $\check{ \epsilon}_4=98.904$       \\
         \hline
    \end{tabular}
    \label{ch91:tab_hyper_GP}
\end{table}

\textbf{BiGRU[IP] model selection.} We fix $\delta=10^{-7}$ and we apply the Gaussian mechanism~\cite{dwork2014algorithmic}, by varying $\epsilon$ according to Table~\ref{ch91:tab_hyper_GP} (with their respective upper bound $\check{ \epsilon}$), \textbf{to the whole time series data, as it would be done if such system had been deployed in real life. The metrics, however, are computed in comparison with original raw time series data}. Because input perturbation allows using any post-processing techniques, we used the same model selection methodology as for non-private BiGRU models to optimize the hyperparameters for BiGRU[IP] models. The resulting values per $\epsilon=[0.0650,0.0399,0.0357,0.0317]$, respectively, are: $\textrm{BiGRU[IP]}_1 :\{h_1=200, bs=5, \eta=0.001993\}$, $\textrm{BiGRU[IP]}_2 :\{h_1=275, bs=5, \eta=0.001182\}$,  $\textrm{BiGRU[IP]}_3 :\{h_1=200, bs=10, \eta=0.001333\}$, and $\textrm{BiGRU[IP]}_4 :\{h_1=200, bs=10, \eta=0.000842\}$.

\textbf{Privacy-preserving results and analysis.} Table~\ref{ch91:tab_results_private} presents the performance of differentially private BiGRU models trained under input and gradient perturbation regarding the RMSE and MAE metrics per region and the resulting mean values. We also included in Table~\ref{ch91:tab_results_private} the utility loss of differentially private BiGRU models in comparison with non-private ones, for both RMSE and MAE averaged metrics $\mathscr{E}$, calculated as:

\begin{equation}\label{ch91:eq_acc_loss}
    \mathscr{U} = \frac{ \mathscr{E}_{DP} - \mathscr{E}_{NP}} { \mathscr{E}_{NP}} \textrm{,}
\end{equation}

in which $\mathscr{E}_{NP}$ is the result of Non-Private BiGRU (cf. averaged metric values \textbf{in bold} from Table~\ref{ch91:tab_results_non_private}) and $\mathscr{E}_{DP}$ refers to the results of either BiGRU[GP] or BiGRU[IP] models. Indeed, Eq.~\eqref{ch91:eq_acc_loss} will be positive unless the differentially private model outperforms the non-private one (which is not the case in our results). 

\setlength{\tabcolsep}{5pt}
\renewcommand{\arraystretch}{1.4}
\begin{table}[ht]
    \centering
    \scriptsize
    \caption{Performance of differentially private BiGRU models based on RMSE and MAE metrics per region and the resulting mean values. The last column $\mathscr{U}$ exhibits the utility loss of differentially private BiGRU models in comparison with non-private ones, for both RMSE and MAE averaged metrics expressed in $\%$.}
    \label{ch91:tab_results_private}
    \begin{tabular}{c c c c c c c c c c c} 
    \hline
    \textbf{$\epsilon,\check{ \epsilon}$ values}& \textbf{Model} & \textbf{Metric} & \textbf{R1} & \textbf{R2} & \textbf{R3}  & \textbf{R4}  & \textbf{R5}   & \textbf{R6}  & \textbf{Mean} & \textbf{$\mathscr{U}$} \\ 
    \hline
    \multirow{2}{*}{$\epsilon_1=0.0650$}&   \multirow{2}{*}{BiGRU[GP]$_1$}   & RMSE &2561.4    &1027.3    &1254.7  &866.7    &498.5    &1145.7    &1225.7  &0.9378 \\
                                      && MAE  &1973.4    &773.8    &925.     &644.1    &397.9    &781.5    &916.0  &0.9776   \\ \cline{2-11}
    \multirow{2}{*}{$\check{ \epsilon}_1=202.8$}&  \multirow{2}{*}{BiGRU[IP]$_1$} & RMSE &2600.9 &997.1  &1304.0  &852.7  &483.8 &1175.2  &1235.6  &1.7531  \\
                                &      & MAE  &1966.0    &737.5    &957.1    &645.4    &385.1    &821.1    &918.7  &1.2753  \\ \hline
    \multirow{2}{*}{$\epsilon_2=0.0399$}&   \multirow{2}{*}{BiGRU[GP]$_2$}   & RMSE &2600.2    &956.0     &1268.5    &841.5    &515.0  &1146.3  &\underline{1221.2} & \underline{0.5672} \\
                                      && MAE  &1978.9    &709.2    &944.4    &643.3    &417.4    &769.9    &910.5  &0.3713  \\ \cline{2-11}
    \multirow{2}{*}{$\check{ \epsilon}_2=124.488$}& \multirow{2}{*}{BiGRU[IP]$_2$}   & RMSE &2592.2  &978.4  &1251.5  &854.2   &495.6    &1158.6  &\underline{1221.8}  &\underline{0.6166}  \\
                                &      & MAE  &1986.1    &737.1    &910.9    &653.9    &393.2    &813.9    & 915.9  &0.9666  \\ \hline   
    \multirow{2}{*}{$\epsilon_3=0.0357$}&   \multirow{2}{*}{BiGRU[GP]$_3$}   & RMSE &2580.5  &990.0    &1268.5    &854.5    &504.9    &1154.3  &\textbf{1225.5}  &\textbf{0.9213}  \\
                                      && MAE  &1938.8    &753.0    &942.8    &659.7    &406.3    &773.6    &912.4  &0.5808  \\ \cline{2-11}
    \multirow{2}{*}{$\check{ \epsilon}_3=111.384$}& \multirow{2}{*}{BiGRU[IP]$_3$}   & RMSE &2587.8 &1004.7 &1262.3  &843.2    &512.8    &1186.2  &\textbf{1232.9}  &\textbf{1.5307}  \\
                                &      & MAE  &1963.1    &755.8    &957.5    &636.9    &414.6    &811.8    & 923.3  &1.7824  \\ \hline   
    \multirow{2}{*}{$\epsilon_4=0.0317$}&   \multirow{2}{*}{BiGRU[GP]$_4$}   & RMSE &2560.8    &978.3    &1322.5    &836.1   &494.4    &1195.4 &1231.3  &1.3990  \\
                                      && MAE  &1956.2    &715.1    &989.2    &633.6   &392.0    &821.6   &917.9   &1.1871  \\ \cline{2-11}
    \multirow{2}{*}{$\check{ \epsilon}_4=98.904$}& \multirow{2}{*}{BiGRU[IP]$_4$} & RMSE &2562.2    &1012.2    &1351.2    &862.9  &533.5  &1168.8  &1248.4  &2.8072 \\
                                &      & MAE  &1955.6    &756.8    &1027.6  &650.1    &423.9    &826.8      &940.2  &3.6454 \\ \hline   
    \end{tabular}
\end{table}

We remarked in our experiments that since there is a sufficient number of users per time series sample (cf. Table~\ref{tab:statistics_data}), it was still possible to make accurate forecasts in both privacy-preserving ML settings with the experimented range of ($\epsilon,\delta$)-DP. Indeed, from Table~\ref{ch91:tab_results_private}, one can notice that all differentially private BiGRU models achieved averaged RMSE lower than 1250, in which the worst result achieved by BiGRU[IP]$_4$ is just $2.8072\%$ less precise than the non-private BiGRU model, comparing the utility metric for RMSE. What is more, in both gradient and input perturbation settings, differentially private BiGRU models achieved smaller error metrics than non-private LSTM, BiLSTM, and GRU models (cf. Table~\ref{ch91:tab_results_non_private}). For instance, both BiGRU[GP]$_2$ and BiGRU[IP]$_2$ reached similar scores in comparison with the non-private BiGRU model, with a utility loss of about $0.57\%$ and $0.62\%$ (for RMSE), respectively. These results are highlighted in \underline{underlined} font, which represents our best results in terms of utility, with differentially private BiGRU models. 

Interestingly, the accuracy (measured with the RMSE metric) of differentially private BiGRU models did not necessarily decrease according to more strict $\epsilon$, i.e., lower values. One can note that results with $\epsilon_2$ and $\epsilon_3$ were more accurate than with $\epsilon_1$. This way, in terms of a satisfactory \textit{privacy-utility trade-off}, both BiGRU[GP]$_3$ ($0.92\%$ less accurate) and BiGRU[IP]$_3$ ($1.53\%$ less accurate) presented adequate metrics scores while satisfying a low value of $\epsilon$, with results highlighted \textbf{in bold}. Indeed, in the worst-case scenario, a user that was present in each data point would have leaked $\check{ \epsilon}_3=111.384$ at the end of 65 days (i.e., $\epsilon \sim 1.7$ per day), which follows real-world DP systems deployed by industry nowadays~\cite{linkedin,desfontaines_dp_real_world}.

The contribution of this research is significant for those involved in urban planning and decision-making~\cite{deMontjoye2018}, providing a solution to the human mobility multivariate forecast problem through RNNs and differentially private BiGRUs. In addition, we point out the research community to the Github page mentioned in the introduction section, in which we release the mobility dataset used in this paper for further experimentation with time series, machine learning, and privacy-preserving methods. The related literature to our work includes the generation of synthetic mobility data~\cite{Ouyang2018,Mir2013,Arcolezi2020}, the development of Markov models to infer travelers’ activity pattern~\cite{Yin2018}, and the development of privacy-preserving methods to analyze CDRs-based data~\cite{app_blip,Zang2011,Mir2013,Acs2014}. Besides, the work in~\cite{luca2020deep} surveys non-private deep learning applications to mobility datasets in general. Concerning differentially private deep learning, one can find the application of gradient perturbation-based DL models for load forecasting~\cite{UstundagSoykan2019}, an evaluation of differentially private DL models in federated learning for health stream forecasting~\cite{Imtiaz2020}, the proposal of locally differentially private DL architectures~\cite{Chamikara2020}, practical libraries for differentially private DL~\cite{tf_privacy,pytorch_privacy}, and theoretical research works~\cite{DL_DP,Shokri2015}.

Lastly, Fig.~\ref{ch91:fig_results_pred} illustrates for each region forecasting results for the last day of our testing set, which includes the real number of people and the predicted ones by the following models: Baseline, non-private BiGRU, BiGRU[GP]$_3$, and BiGRU[IP]$_3$. As one can notice, similar forecasting results were achieved by both non-private and DP-based BiGRU models, which clearly outperforms the Baseline model. Lastly, between both input and gradient perturbation settings, BiGRU[GP] models took more time to execute than BiGRU[IP] models due to DP-SGD. In terms of accuracy, BiGRU[GP] models consistently outperformed BiGRU[IP] models for the same ($\epsilon, \delta$)-DP privacy level in our experiments. Nevertheless, BiGRU[GP] is trained over non-DP time-series data, which might be subject to, e.g., data leakage~\cite{data_breaches}, membership inference attacks~\cite{Pyrgelis2017,Pyrgelis2020}, and users' trajectory recovery attacks~\cite{Tu2018,Xu2017}.

\section{Conclusion and Perspectives} \label{ch91:sec_conclusion}

In this chapter, we assumed the existence of two privacy-preserving MNOs CDRs processing system that collect and release multivariate aggregate human mobility data, described at the beginning of this chapter. However, along with collecting time-series data, extracting meaningful forecasts is also of great interest~\cite{hyndman2018forecasting}. Thus, this chapter evaluated differentially private DL models in both input and gradient perturbation settings to forecast multivariate aggregated mobility time series data.

\begin{figure}[H]
    \centering
    \includegraphics[width=0.97\linewidth]{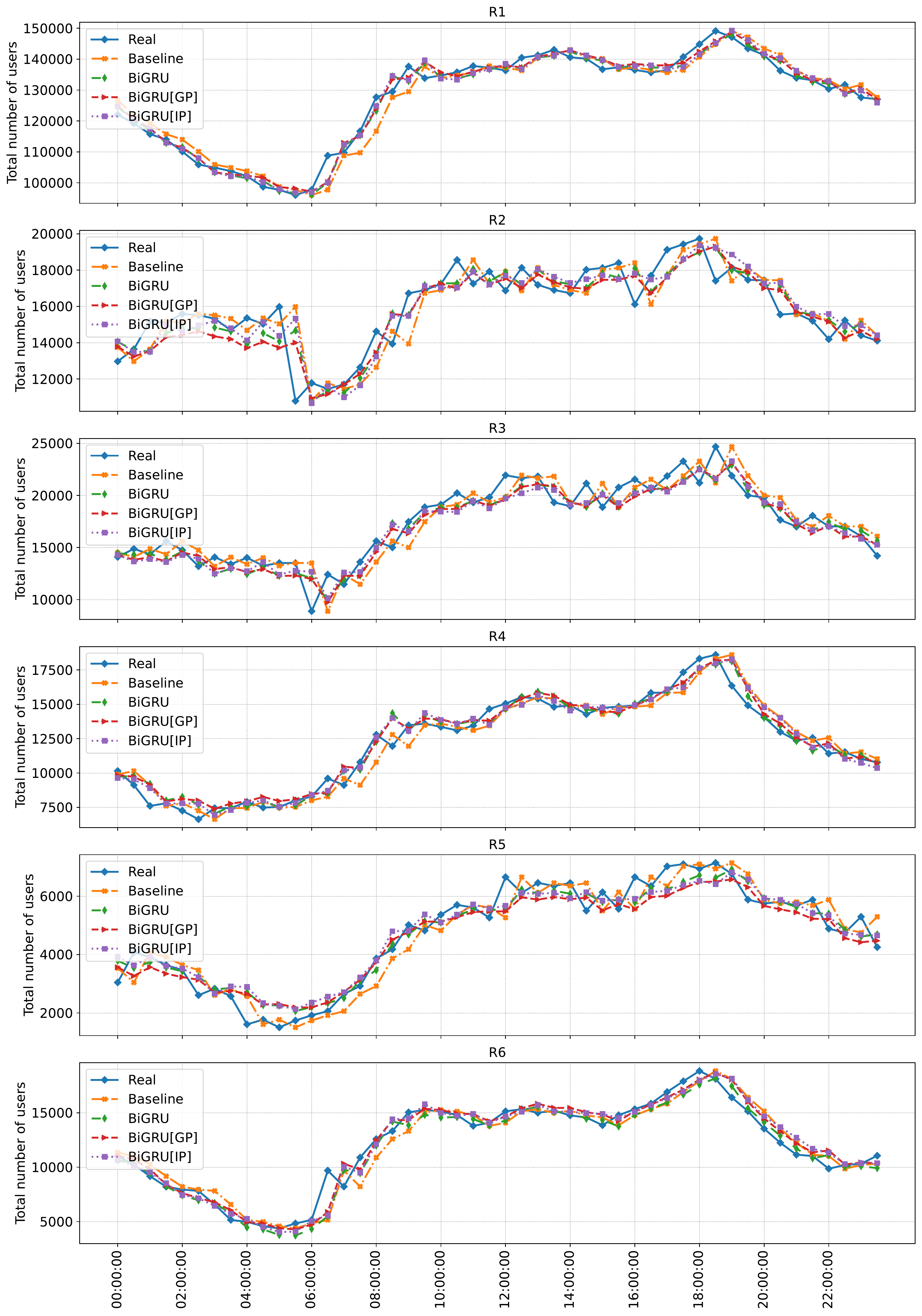}
    \caption{Multivariate time series forecast for the last day of the test set for the number of users per coarse region (R1 -- R6) by the following models: Baseline, non-private BiGRU, BiGRU[GP]$_3$, and BiGRU[IP]$_3$.}
    \label{ch91:fig_results_pred}
\end{figure}

Experiments were carried out on the dataset named Paris-DB from Section~\ref{ch3:paris_db}. First, we compared the performance of four non-private DL models (i.e., LSTM, GRU, BiLSTM, and BiGRU). Since the BiGRU model provided the highest utility, we selected it for building privacy-preserving models. Under gradient and input perturbation settings, i.e., BiGRU[GP] and BiGRU[IP], respectively, four values of $\epsilon \ll 1$ were evaluated. As shown in the results, differentially private BiGRU models achieve nearly the same performance as non-private BiGRU models, with utility loss related to the RMSE metric varying between $0.57\%$ -- $2.8\%$. 

\textbf{Thus, we conclude that it is still possible to have accurate multivariate forecasts in both privacy-preserving ML settings, favoring the gradient perturbation setting in terms of accuracy and the input perturbation setting in terms of privacy protection.} We believe that the input perturbation setting provides encouraging results for adding DP guarantees to MNOs CDRs processing systems, i.e., following the setting \textbf{S$_2$} mentioned at the beginning of this chapter. Indeed, besides being useful for forecasting tasks, DP would also add a layer of protection against, e.g., data breaches~\cite{data_breaches}, membership inference attacks~\cite{Pyrgelis2017,Pyrgelis2020}, and users' trajectory recovery attacks~\cite{Tu2018,Xu2017}. 

Some limitations and prospective directions of this chapter are described in the following. For differentially private BiGRU models, we only provided lower $\epsilon$ and upper $\check{\epsilon}$ bounds for the privacy guarantee of each sample in the time-series data. Using, however, advanced composition theorems~\cite{dwork2014algorithmic} to account for the final privacy budget for each user was out of the scope of this chapter since the Paris-DB dataset does not contain users' IDs. Besides, although the developed DL models outperform the Baseline model ($\textbf{x}_{t+1}=\textbf{x}_t$), there is plenty of room for improvements to be carried out on hyperparameters optimization (e.g., accounting for the overall privacy budget~\cite{papernot2021hyperparameter}), data scaling, the number of lag values, etc. For instance, some high-peak values were missed by both non-private and DP-based DL models (see Fig.~\ref{ch91:fig_results_pred}). In addition, we fixed the number of lagged values to 6 to predict a single step-ahead in the future (i.e., the forecasting horizon), in which the former can be tuned for performance improvement and the latter can be increased for multi-step forecasting tasks.


\chapter{Forecasting Firemen Demand by Region With LDP-Based Data} \label{chap:chapter8}

In Chapter~\ref{chap:chapter91}, we have started to evaluate the privacy-utility trade-off of differentially private machine learning models on a real-world problem concerning human mobility. From this Chapter~\ref{chap:chapter8} until Chapter~\ref{chap:chapter92}, we will focus on our second motivating project (cf. Section~\ref{ch1:motivation_objectives}), which concerns emergency medical services (EMS), in particular, using SDIS 25~\cite{sdis25} processed data by Selene Cerna. Similar to Chapter~\ref{chap:chapter91}, this chapter also focuses on multivariate time-series forecasting but is related to the number of firefighters' interventions per region (referred to as firemen demand by region throughout this chapter). While there are several examples of EMS publicly sharing their data~\cite{seattle,datagouv2,Lian2019}, we believe that more attention should be given to their victims' \textbf{privacy}. Indeed, the first question one may ask is if an intervention is a sensitive attribute. The answer is certainly yes because EMS would not have been called if the situation had not been severe enough (e.g., cardiac arrest, respiratory distress, ...). While the intention of the aforementioned EMS is laudable on publishing open-source data, there are many ways for misusing this information (e.g., discrimination in health insurance), which can jeopardize users' privacy. 

Therefore, \textbf{in collaboration with Selene Cerna}, we propose in this chapter \textbf{a methodology based on generalization and LDP, which allows EMS to properly sanitize all their data row-by-row (i.e., independently). Thus, thanks to the post-processing properties of DP~\cite{dwork2014algorithmic}, EMS could use and/or share the sanitized data with third parties to develop ML-based decision-support tools}. Indeed, our solution envisaged both \textbf{statistical learning} and \textbf{machine learning} tasks (i.e., frequency estimation and multivariate time-series forecasting of firemen demand by region). We invite the reader to refer to Chapter~\ref{chap:chapter2} for the background on LDP. Lastly, similar to Chapter~\ref{chap:chapter7}, we highlight that although we refer to our proposal as \textit{LDP-based}, this is a centralizer data owner (i.e., EMS) that applies the LDP protocol on its servers, thus, providing $\epsilon$-DP guarantees for users.

\section{Introduction} \label{ch8:introduction}

We start by recalling two recent publications of EMS French data on \texttt{data.gouv.fr}, a government site dedicated to initiatives in open data. The first concerns the 2007-2017 interventions of the Service Départemental d'Incendies et de Secours de Saône-et-Loire (SDIS 71), containing the number of interventions by type and by city~\cite{datagouv1}\footnote{Currently, this data is inaccessible on the referenced webpage.}. The second concerns the same type of data for SDIS 91 (Essonne department) for the period 2010-2018~\cite{datagouv2}. In each case, anonymization was done by aggregation: monthly for the first dataset and weekly for the second. Tables~\ref{ch8:tab_sdis71} and~\ref{ch8:tab_sdis91} exhibits five random samples of these datasets.

\setlength{\tabcolsep}{5pt}
\renewcommand{\arraystretch}{1.4}
\begin{table}[!ht]
    \scriptsize
    \centering
    \caption{Five random samples from the SDIS 71 dataset.}
    \begin{tabular}{c c c c c c c}
    \hline
     & Year\_Month &  ZIP Code &                 City &   Aid to Person  &  Fire & \ldots \\
    \hline
    1 &    2010-10 &  71093 &  LA CHAPELLE ST SAUVEUR &   1 &  0 & \ldots\\
    2 &     2012-04 &  71283 &                  MARNAY &   1 &  0 & \ldots \\
    3 &    2012-11 &  71499 &     SANVIGNES LES MINES &  11 &  2 & \ldots\\
    4  &    2013-10 &  71221 &                   GIVRY &  10 &  3 & \ldots\\
    5 &     2014-08 &  71017 &                  BALLORE &  1 &  0 & \ldots\\
    \hline
    \end{tabular}
    \label{ch8:tab_sdis71}
\end{table}

\setlength{\tabcolsep}{5pt}
\renewcommand{\arraystretch}{1.4}
\begin{table}[!ht]
    \scriptsize
    \centering
    \caption{Five random samples from the SDIS 91 dataset (cf.~\cite{datagouv2}).}
    \begin{tabular}{c c c c c c c}
    \hline
    {} &  Year &  Week &  ZIP Code &        City & Category &  Number of interventions \\
    \hline
    1  &       2018 &           24 &           91109 &  BRIERES-LES-SCELLES &          Aid to Person &       1 \\
    2  &       2018 &           39 &           91156 &        CHEPTAINVILLE &          Aid to Person &       1 \\
    3 &       2019 &           17 &           91215 &   EPINAY-SOUS-SENART &          Fire in Urban Place &       1 \\
    4 &       2019 &           28 &           91425 &            MONTLHERY &          Aid to Person &      10 \\
    5 &       2019 &           36 &           91629 &          VALPUISEAUX &          Aid to Person &       1 \\
    \hline
    \end{tabular}
    \label{ch8:tab_sdis91}
\end{table}

From Tables~\ref{ch8:tab_sdis71} and~\ref{ch8:tab_sdis91}, one can notice that the applied anonymization method is both \textit{too strong} and \textit{too weak}. \textbf{Too strong}, first of all, because carrying out one aggregation per month results in the loss of useful information by summarizing the interventions at a cloud of 120 samples (12 per year), for which only a simple linear regression remains possible: it is hard to envisage machine learning with such a dataset - this is true, to a lesser extent, for data aggregated weekly. Then, \textbf{too weak}, because this aggregation per month, or per week, was done in a blind and generalized way: if some cities have a sufficiently large number of interventions, which allows a simple temporal aggregation to achieve anonymization of the data, others conversely do not have enough. In the case of monthly aggregation, for example, there are more than 600 situations where there has been only one intervention in a city in a given month: at this level, the simple 2-anonymity~\cite{samarati1998protecting,SWEENEY2002} is no longer satisfied, and the information leakage is obvious. Such information leaks are also numerous in the case of weekly data, and \textbf{anonymization has failed for both sets of data}. For example, on analyzing the last row of Table~\ref{ch8:tab_sdis71}, we learn, for example, that in the city of Ballore (FR-71220), an intervention took place in August 2014. Considering that the city has 86 inhabitants, it would not be very difficult to find the person who received help this month. 

Moreover, a more risky case of possible data leakage concerns the Seattle Fire Department~\cite{seattle}, which \textbf{displays live EMS response information with the precise hour, location, and reason for the incident}. While the intention of publishing precise EMS data is creditable, there are many ways for (mis)using this information, which can jeopardize users' privacy. For instance, if an attacker knows that one intervention took place in front of the house of a debilitated person, they may accurately infer that this person received care and (mis)use this information for their own good.

Therefore, the objective of this chapter is to propose a methodology that allows sanitizing each interventions' data with DP guarantees while still making it possible to aggregate them and make accurate forecasts. More specifically, \textbf{our proposal also utilizes generalization}, which, first, allows generalizing the precise hour to, e.g., days, and, secondly, to generalize the space of several small cities to bigger regions. Then, besides agglomeration of cities to regions, \textbf{we also propose that each intervention's region be $\epsilon$-DP, which corresponds to applying an $\epsilon$-LDP protocol row-by-row}. This way, EMS could share the sanitized dataset containing the generalized timestamp plus the $\epsilon$-DP interventions' location data. \textbf{On the analyst side, one could, therefore, aggregate the sanitized data through frequency estimation (cf. Section~\ref{ch2:sub_ldp}) by different periods of interest}, e.g., daily, 4-days period, weekly, monthly, and so on, which would correspond to a \textbf{multivariate time-series dataset}.

In this chapter, we carried out our experiments with a real-world dataset collected by SDIS 25~\cite{sdis25} named \textbf{Interv-DB} from Section~\ref{ch3:interv_dbs}. The Interv-DB has information about $382046$ \textbf{interventions} attended by the fire brigade from 2006 to 2018 inside their department. Our experiments are separated into two parts. The first concerns \textbf{statistical learning}, i.e., to estimate the frequency of firemen demand by region in different periods using the state-of-the-art OUE~\cite{tianhao2017} protocol for single attribute frequency estimation (cf. Section~\ref{ch2:sub_ldp}). The second part is dedicated to \textbf{one-step-ahead forecasting} of firemen demand by region using daily estimated frequencies with the state-of-the-art XGBoost~\cite{XGBoost} ML technique.

The remainder of this chapter is organized as follows. Section~\ref{ch8:sec_methodology} introduces our proposed methodology to sanitize EMS interventions' location data. Section~\ref{ch8:sec_freq_est} presents experiments on frequency estimation of firemen demand by region in different aggregation periods. Section~\ref{ch8:sec_forecasting} evaluates the privacy-utility trade-off of our methodology through training differentially private ML models over the sanitized multivariate time-series data. In Section~\ref{ch8:conclusion}, we conclude this work. A preliminary version of the proposed LDP-based methodology in Section~\ref{ch8:sec_methodology} with its results was published in a full article~\cite{Arcolezi2020} in the Computers \& Security journal.

\section{Proposed LDP-Based Methodology}~\label{ch8:sec_methodology}

Fig.~\ref{ch8:fig_system_overview} illustrates the overview of the proposed privacy-preserving methodology that allows \textit{EMS to sanitize their data}. We summarize our proposal in the following. 

\begin{figure}[!htb]
	\centering
	\scriptsize
	\includegraphics[width=1\linewidth]{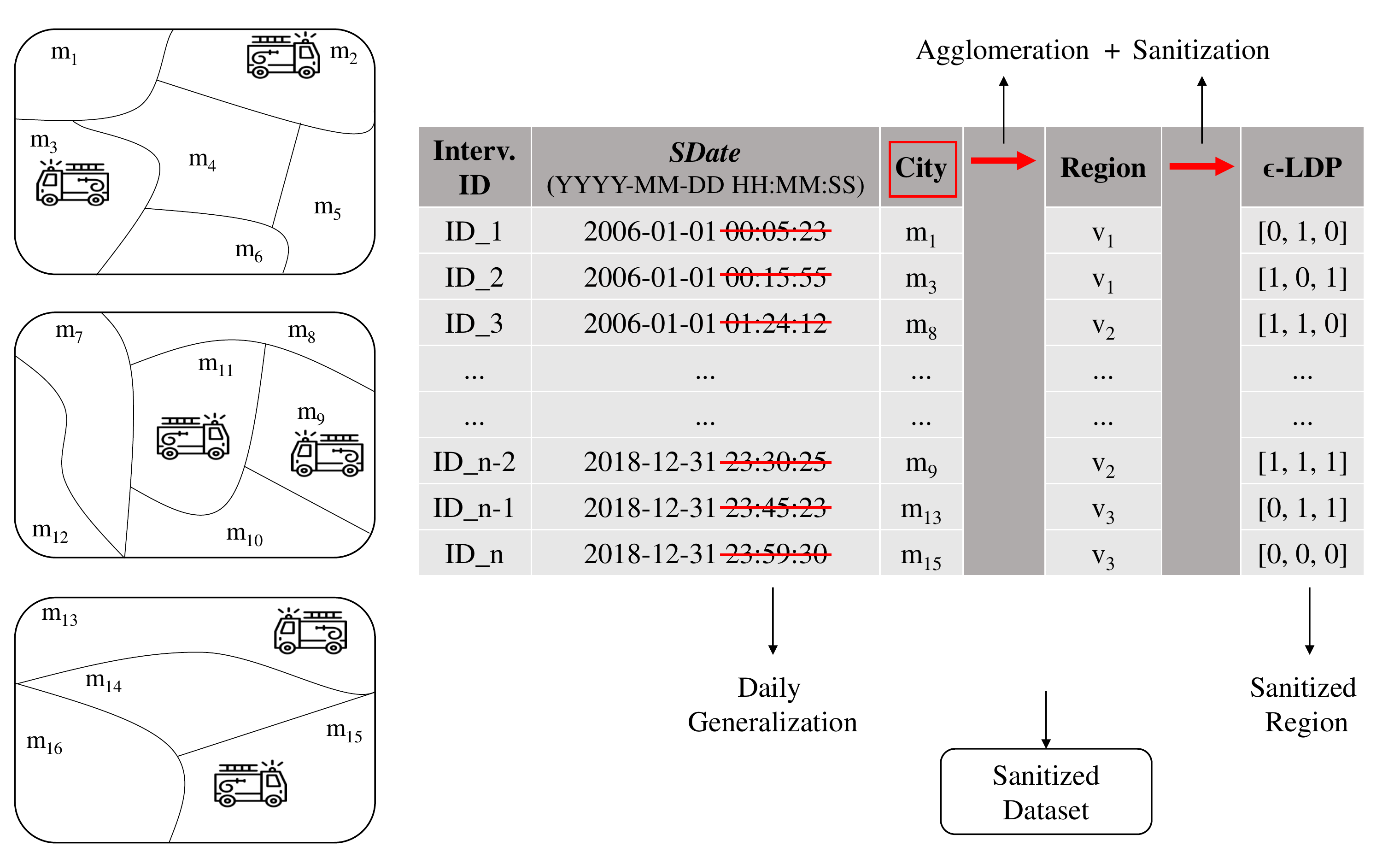}
	\caption{Overview of our LDP-based methodology to sanitize each firemen intervention independently.}%
	\label{ch8:fig_system_overview}
\end{figure}

\textbf{Sanitization of Interventions' Data (EMS Side).} From Fig.~\ref{ch8:fig_system_overview}, the first step to guarantee the privacy of each interventions' data is through \textbf{generalization} of the starting date of the intervention (i.e., \textit{SDate}) and the \textbf{agglomeration} of the location attribute. For the former, we adopt a \textbf{daily generalization} (not so strict as the datasets of Tables~\ref{ch8:tab_sdis91} and~\ref{ch8:tab_sdis71}). For the latter, we propose a strong \textbf{agglomeration of small cities to bigger regions} in order to obtain events that are sufficiently representative in number. For example, one can notice in the left side of Fig.~\ref{ch8:fig_system_overview} that a set of $M=\{m_1,m_2,...,m_{15},m_{16}\}$ small cities are grouped to a set $A=\{v_1,v_2,v_3\}$ of $3$ regions. 

In this chapter, using the data at our disposal, \textbf{$608$ cities where interventions happened in the Doubs department were generalized to $17$ regions} using the public dataset available in~\cite{comcom}. The set $A=\{v_1,v_2,...,v_{17}\}$ of 17 regions are: (1) CA du Grand Besançon, (2) CA Pays de Montbéliard Agglomeration, (3) CC Altitude 800, (4) CC de Montbenoit, (5) CC des Deux Vallées Vertes, (6) CC des Lacs et Montagnes du Haut-Doubs, (7) CC des Portes du Haut-Doubs, (8) CC du Doubs Baumois, (9) CC du Grand Pontarlier, (10) CC du Pays d'Héricourt, (11) CC du Pays de Maîche, (12) CC du Pays de Sancey-Belleherbe, (13) CC du Plateau de Frasne et du Val Rasne et du Val de Drugeon (CFD), (14) CC du Plateau de Russey, (15) CC du Val de Morteau, (16) CC du Val Marnaysien, (17) CC Loue-Lison. Fig.~\ref{ch8:fig_map_doubs} illustrates the department of Doubs with the respective cities and their agglomeration to regions.

\begin{figure}[!ht]
	\centering
	\subfloat{\includegraphics[width=0.4\linewidth]{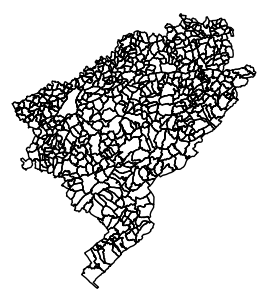}}
    \subfloat{\includegraphics[width=0.4\linewidth]{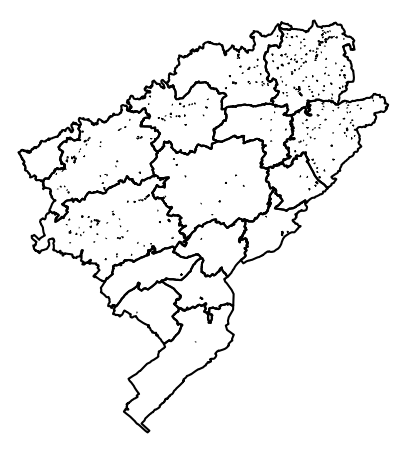}}
	\caption{cities in the department of Doubs agglomerated by regions.}
	\label{ch8:fig_map_doubs}
\end{figure}

Up to now, the generalized dataset might still have unique events for a single day in a given region. However, \textbf{to improve the level of privacy for each intervention}, we propose to apply the OUE~\cite{tianhao2017} LDP protocol row-by-row since it is independent on the number of regions. As presented in Section~\ref{ch2:sub_ldp}, given the original region $B=Encode(v)$ and the privacy budget $\epsilon$, OUE reports a sanitized bit-vector $B'$ where $\Pr[B_i'=1] =p=\frac{1}{2}$ if $B_i=1$ and $\Pr[B_i'=1] =q=\frac{1}{e^{\epsilon+1}}$ if $B_i=0$. Therefore, the final sanitized dataset (see the right side of Fig.~\ref{ch8:fig_system_overview}) would be both generalized in terms of time and space and, besides that, it would be $\epsilon$-DP, thus enhancing the privacy of each interventions' region.

\textbf{Generating Synthetic Multivariate Time Series Datasets (Analyst Side).} At this point, we assume that the analyst possess a sanitized dataset with generalized timestamp information and the $\epsilon$-DP sanitized region (see the bottom right of Fig.~\ref{ch8:fig_system_overview}). Since $\epsilon$ is a public parameter, \textbf{the analyst can define specific aggregation periods of their choice} (e.g., day, 3-days period, week, ...) and estimate the frequency $f(v_i)$ of firemen demand by region $v_i$ with $\hat{f}(v_i) = \frac{N_i - nq}{n(p - q)}$, where $N_i$ is the number of times the bit $i$ has been reported and $n$ is the number of interventions (cf. Eq.~\eqref{eq:est_pure}). In this context, a synthetic \textbf{multivariate time-series dataset} will be built with all the estimated frequencies, which is considered as a non-interactive case of DP~\cite{Zhu2017,dwork2014algorithmic}. Therefore, both \textit{statistical learning} and \textit{forecasting tasks} (following the input perturbation settings of Section~\ref{ch3:input_perturbation}) could be performed with the aggregated dataset, while preserving privacy of the individuals concerned.

\section{Frequency Estimation of Firemen Demand by Region}\label{ch8:sec_freq_est}

\subsection{Setup of experiments} \label{ch8:sub_setup}

\textbf{Environment.} All algorithms were implemented in Python 3.8.8 with NumPy 1.19.5 and Numba 0.53.1 libraries. In all experiments, we report average results over 100 runs as LDP algorithms are randomized.

\textbf{Dataset.} We applied our proposed LDP-based methodology from Section~\ref{ch8:sec_methodology} to the Interv-DB. The initial transformed dataset has both daily temporal information (the \textbf{when}) and the regions (the \textbf{where}) $382,046$ interventions took place from $2006$ until $2018$ (e.g., see the right-side of Fig.~\ref{ch8:fig_system_overview}). This transformed \textbf{original dataset} will be used for frequency estimation experiments in this section and for forecasting tasks in the next Section~\ref{ch8:sec_forecasting_results}.

\textbf{Methods evaluated.} In this chapter, we only applied the state-of-the-art OUE~\cite{tianhao2017} LDP protocol for single attribute frequency estimation.

\textbf{Evaluation and metrics.} We considered \textbf{three different aggregation periods} in our experiments. The first aggregation period we analyze is with \textbf{yearly} data ($\tau=13$ frequency estimates), which allows at the beginning of a year the fire brigade to better distribute their budget around its centers according to the firemen demand by region. Next, a \textbf{monthly} scenario ($\tau=156$ frequency estimates) is considered. And, similar to before, the fire brigade can have high-utility statistics from a third-party company to reorganize budgets and personnel each month. Lastly, a \textbf{daily} scenario ($\tau=4748$ frequency estimates) is taken into consideration such that ML algorithms could be applied in this amount of data.

For each aggregation period, similar to Chapter~\ref{chap:chapter6}, we vary the privacy parameter in a logarithmic range as $\epsilon=[\ln (2),\ln (3),...,\ln (7)]$. We use the MSE metric averaged by the number of aggregation periods $\tau$ to evaluate our results. Thus, for each time interval $t\in [1,\tau]$, we compute for each value $v_i \in A$ the estimated frequency $\hat{f}(v_i)$ and the original one $f(v_i)$ and calculate their differences. More precisely,

\begin{equation}
    MSE_{avg} = \frac{1}{\tau} \sum_{t \in [1,\tau]} \frac{1}{|A|} \sum_{v_i \in A}(f(v_i) - \hat{f}(v_i) )^2 \textrm{.}
\end{equation}

\subsection{Frequency Estimation Results} \label{ch8:sub_results_freq_est}

Fig.~\ref{ch8:fig_mse_time_scenarios} shows the relationship between $MSE_{avg}$ and $\epsilon$ for all three aggregation periods, i.e., daily, monthly, and yearly. Moreover, for the sake of illustration, Fig.~\ref{ch8:fig_ex_freq_est} exhibits the estimate frequency of firemen demand by region for the year $2013$, a month of $2017$, and a given day in 2016, with the three values for $\epsilon=[\ln(7),\ln(4),\ln(2)]$ (i.e., a low, a medium, and a high privacy guarantee). All three specific dates were chosen at random for illustration purposes.

\begin{figure}[!ht]
	\centering
	\includegraphics[width=0.85\linewidth]{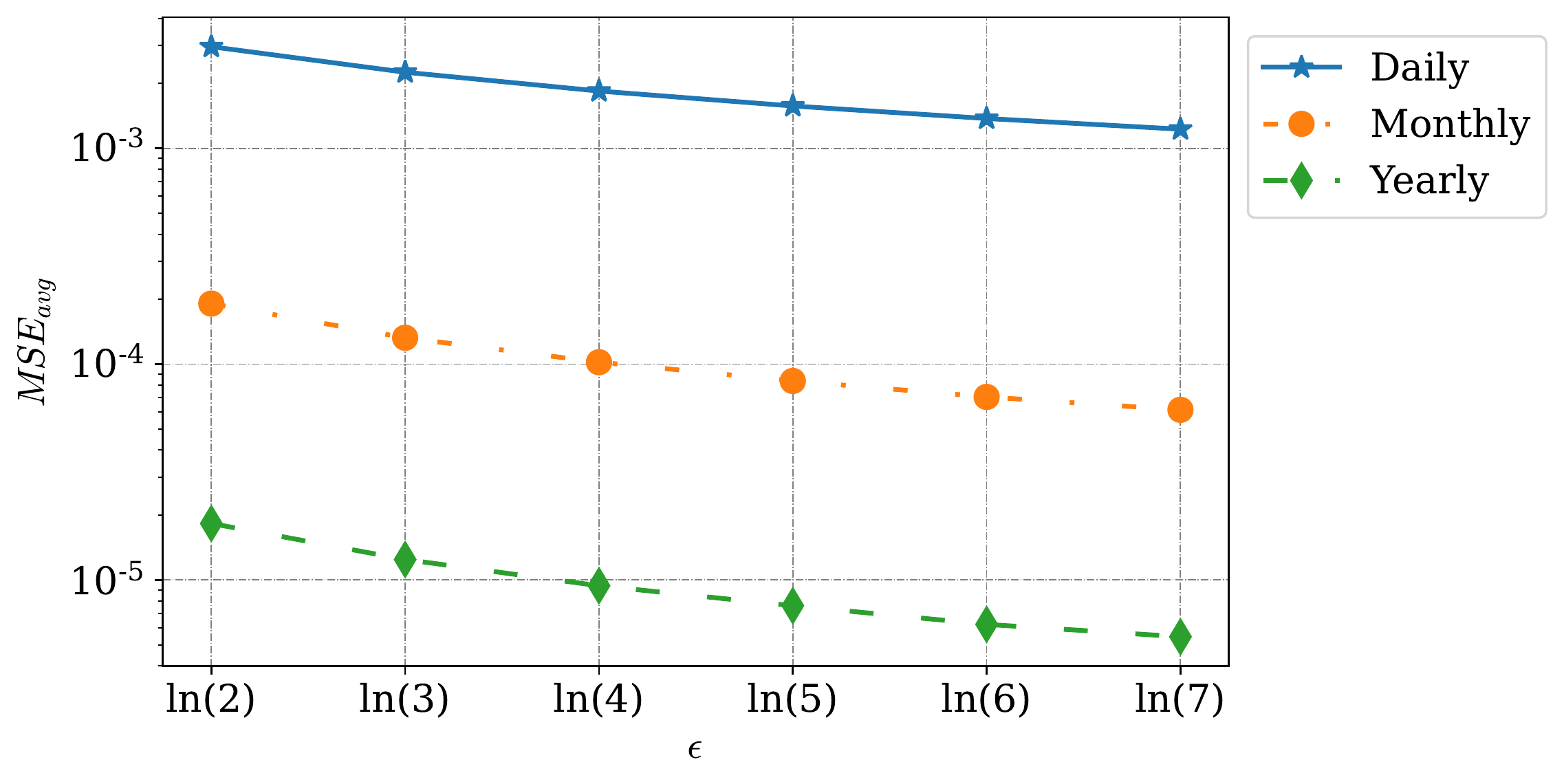}
	\caption{Analysis of $MSE_{avg}$ (y-axis) varying $\epsilon$ (x-axis) for each aggregation period: daily, monthly, and yearly.}
	\label{ch8:fig_mse_time_scenarios}
\end{figure}

\begin{figure}[!ht]
	\centering
	\subfloat{\includegraphics[width=1\linewidth]{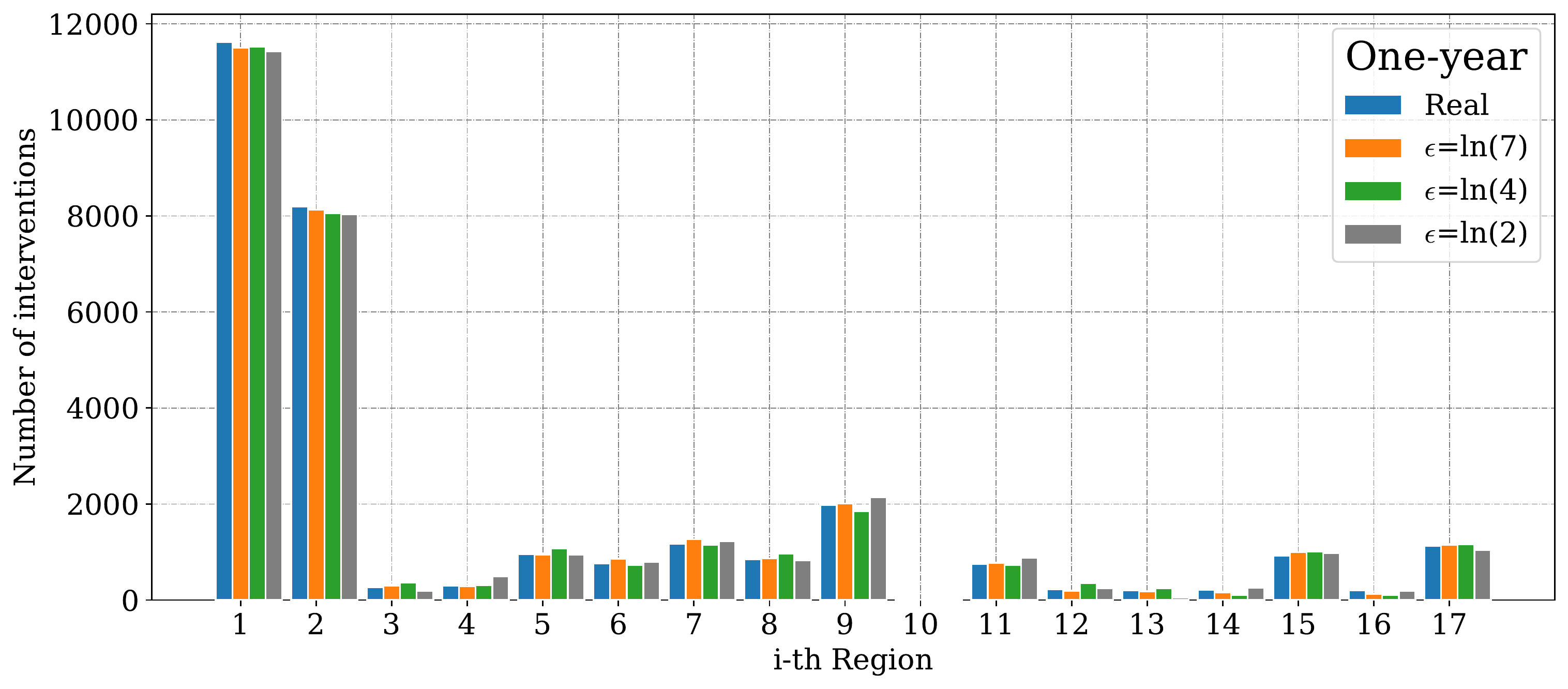}}\\
    \subfloat{\includegraphics[width=1\linewidth]{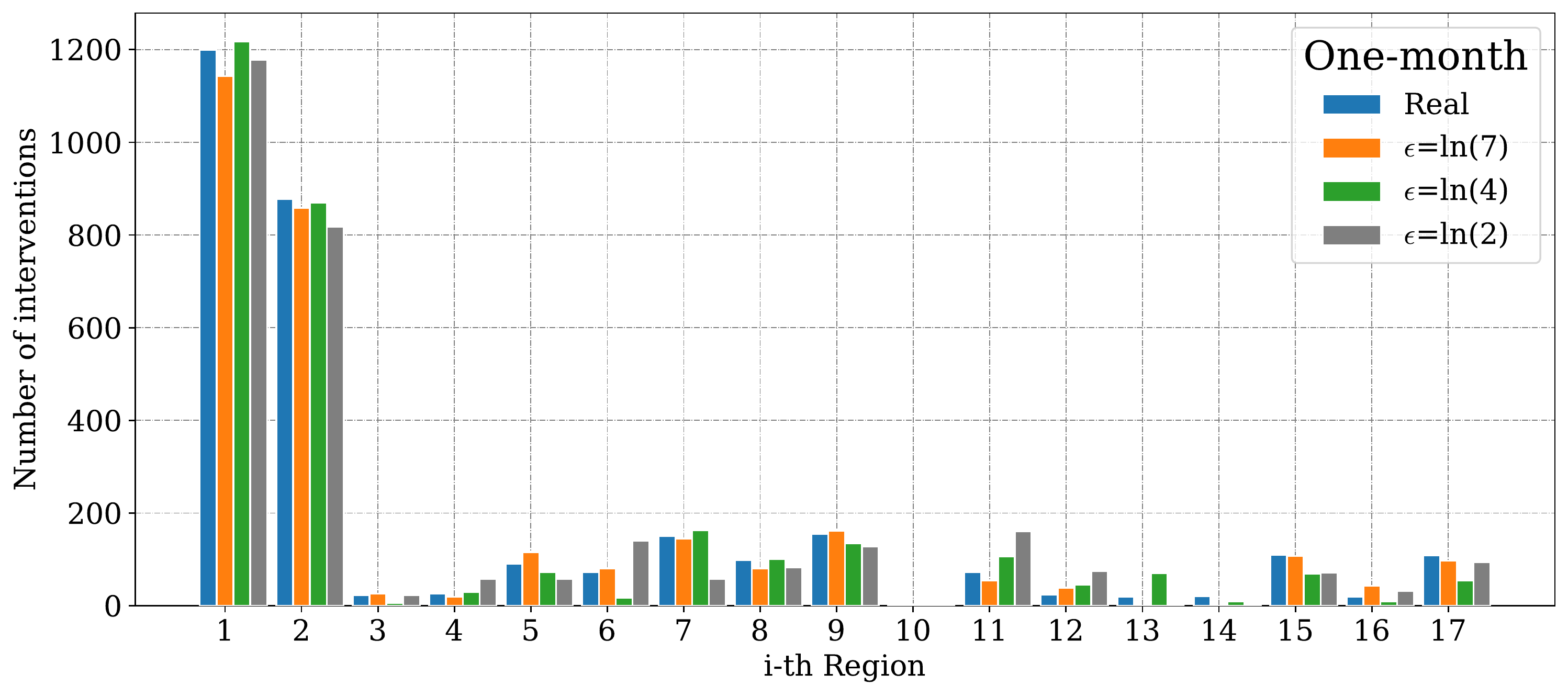}}\\
    \subfloat{\includegraphics[width=1\linewidth]{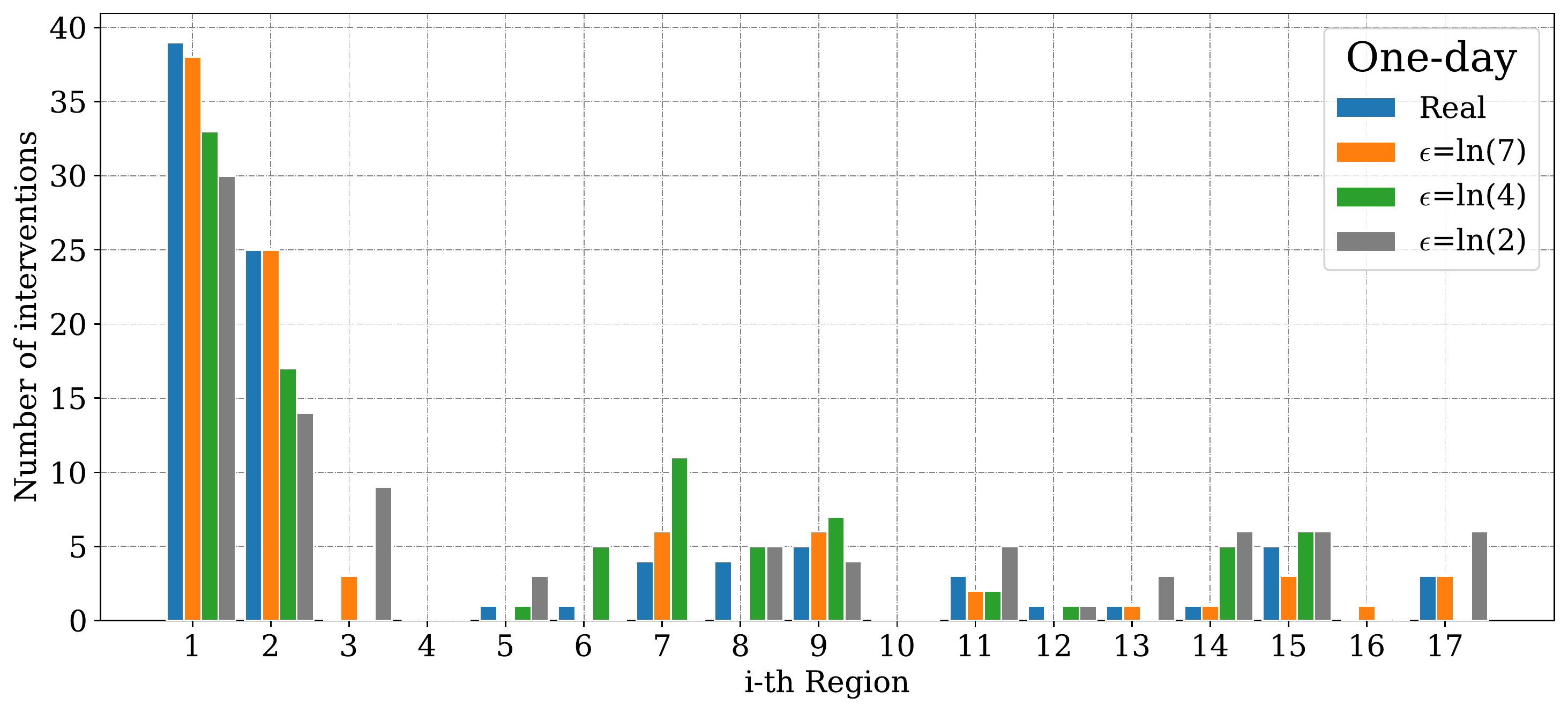}}\\
	\caption{Comparison between the original and estimated firemen demand by region for one-year, one-month, and one-day periods, respectively.}
	\label{ch8:fig_ex_freq_est}
\end{figure}

Notice that such kind of frequency estimation experiments allow evaluating the relationship between $MSE_{avg}$ versus data size (i.e., period of analysis) according to $\epsilon$ in order to find the best privacy-utility trade-off for different applications. For instance, each scenario allows the fire brigade to have a sanitized database of intervention's region where third party companies or the human resources department itself could acquire high-utility statistics. 

More specifically, as one can notice in Figures~\ref{ch8:fig_mse_time_scenarios} and~\ref{ch8:fig_ex_freq_est}, estimating the frequencies of firemen demand by region with OUE can be achieved high accuracy for different aggregation periods. Intuitively, the $MSE_{avg}$ decreases as the data size increases, with a difference of 1 order of magnitude for each aggregation scenario. In fact, this is because the variance of OUE is inversely proportional to the number of users $n$ (cf. Eq.~\eqref{eq:var_oue}). For example, for a one-year analysis, the number of interventions is at least $17333$ in 2006 (and higher the other years), while the average per day is just $47$ for the same year. For this reason, the utility of the data decreases for small aggregation periods. 

Hence, one has to balance the application of the sanitized data. For instance, if one intends to acquire statistics per year, results are very accurate with good privacy guarantees. However, if one intends to apply machine learning tasks to this data (as presented in the next section), a one-day scenario is more appropriate but with a higher estimation error. For instance, in Fig.~\ref{ch8:fig_mse_time_scenarios}, one can see the estimated frequencies for each period, where there are small estimation errors for the one-year scenario but considerable ones for both one-month and one-day scenarios.

Lastly, as also highlighted in the literature, the choice of $\epsilon$ depends on several factors (data size, the application domain) and one has to appropriately balance it considering the privacy of users and utility of data. In our case, as $608$ cities were generalized to $17$ regions, privacy could be slightly decreased (higher $\epsilon$ values) to acquire good utility for generating statistics. In the literature, common values to $\epsilon$ are within the range \(0.01-10\)~\cite{Hsu}. 

\section{Differentially Private Forecasting Firemen Demand by Region}\label{ch8:sec_forecasting}

The main purpose of this section is to evaluate the privacy-utility trade-off of training a state-of-the-art machine learning algorithm, namely XGBoost~\cite{XGBoost}, over $\epsilon$-DP estimated frequencies from Section~\ref{ch8:sec_freq_est}, to forecast the firemen demand by $17$ regions.

\subsection{Setup of Forecasting Experiments} \label{ch8:sub_setup_forecasting}

\textbf{Environment.} All algorithms were implemented in Python 3.8.8 with XGBoost~\cite{XGBoost} and Scikit-learn~\cite{scikit} libraries. In all experiments, we report average results over 10 runs as LDP algorithms are randomized (i.e., the sanitized datasets).

\textbf{Dataset.} There are $7$ datasets of frequency demand by region: the original dataset and the $6$ sanitized datasets from Section~\ref{ch8:sec_freq_est}, which guarantees $\epsilon$-DP in the range $\epsilon=[\ln (2),\ln (3),...,\ln (7)]$). More formally, each dataset $X_{(t_1,t_{\tau})}$ aggregates the number of interventions per $17$ regions and corresponding time period $t \in [1, \tau]$ of \textbf{daily intervals}. That is, $X_{(t_1,t_{\tau})} = [\langle t_1, \textbf{x}_{1}\rangle, \langle t_2, \textbf{x}_{2} \rangle, ..., \langle t_{\tau}, \textbf{x}_{\tau} \rangle ]$, where $\textbf{x}_{t}$ is a vector of size $17$ in which each position represents the number interventions per region at time $t \in[1,\tau]$. We exclusively divided our datasets into \textbf{learning (from 2006-2017)} and \textbf{testing (the year 2018)} sets. 

\textbf{Temporal features.} For both original and sanitized datasets, we added temporal features such as: year, month, day, weekday, year day, values (1 for `yes', 0 for `no') to indicate leap years, first or last day of the month, and first or last day of the year as attributes. 

\textbf{Forecasting methodology.} In this chapter, we aim at forecasting the future firemen demand by region in the \textbf{next day}. Thus, given $X_{(t_1,t_{\tau})}$, the goal is to forecast $X_{(t_{\tau + 1})}$, i.e., \textbf{one-step-ahead forecasting}, which is unknown at time $\tau$. More precisely, we only used a \textbf{single lag value}, i.e., we used the current frequency of firemen demand by region at time $t$ as an input to predict the future frequency at time $t+1$.

\textbf{Baseline model.} We established a naive forecasting technique that describes the \textbf{average} number of interventions in each day of the week per region.

\textbf{Methods evaluated.} In order to make a multi-forecast of firemen demand by region, the ``MultiOutputRegressor" from the Scikit-learn library~\cite{scikit} is applied. In this regard, one regressor per target (region) is fitted using the XGBoost~\cite{XGBoost} regressor with the parameters: max\_depth=3, learning\_rate=0.8, and n\_estimators=100. For all other parameters, we used their default values. We tuned these hyperparameters through a random search~\cite{bergstra2012random} optimization methodology with the following ranges per parameter: max\_depth=$\{1,2,3,...,12\}$, learning\_rate=[0.1,0.9], and n\_estimators=$\{50,100,150,...,1000\}$. \textbf{Seven models were built:} \textbf{One XGBoost model trained over original data} and \textbf{six XGBoost (input perturbation-based) models trained over sanitized data} considering the aforementioned $\epsilon$-DP range to predict the firemen demand by region for all days of 2018. 

\textbf{Performance metrics.} All models were evaluated with standard time-series metrics, namely, RMSE and MAE, both explained in Section~\ref{ch3:sub_metrics}. Moreover, as it is a multi-output scenario, we only present their averaged values.

\subsection{Forecasting Results}\label{ch8:sec_forecasting_results}

Fig.~\ref{ch8:fig_metrics_forecasting} illustrates the relationship between the RMSE and MAE metrics (y-axis) with $\epsilon$ (x-axis) for the Baseline model and XGBoost ones trained over original and sanitized datasets. Lastly, Fig.~\ref{ch8:fig_ex_forecasts} illustrates the best prediction results of a single day according to the $\epsilon=\ln(2)$-DP model. In Fig.~\ref{ch8:fig_ex_forecasts}, it is illustrated the original frequency of firemen demand by region in comparison with the predicted ones by XGBoost models trained with the original and sanitized datasets for a single day of 2018.

\begin{figure}[!htb]
	\centering
	\subfloat{\includegraphics[width=0.5\linewidth]{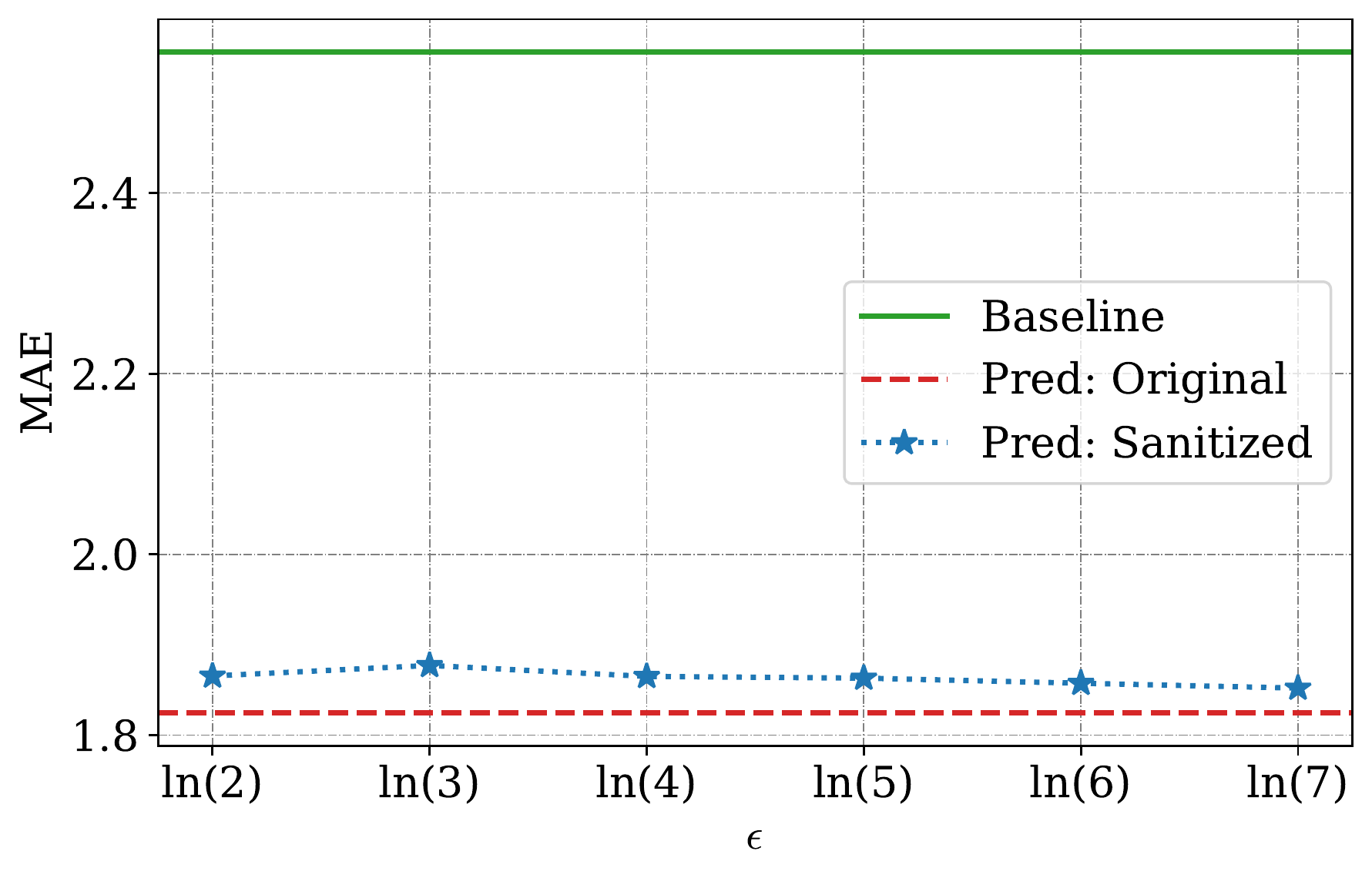}}
    \subfloat{\includegraphics[width=0.5\linewidth]{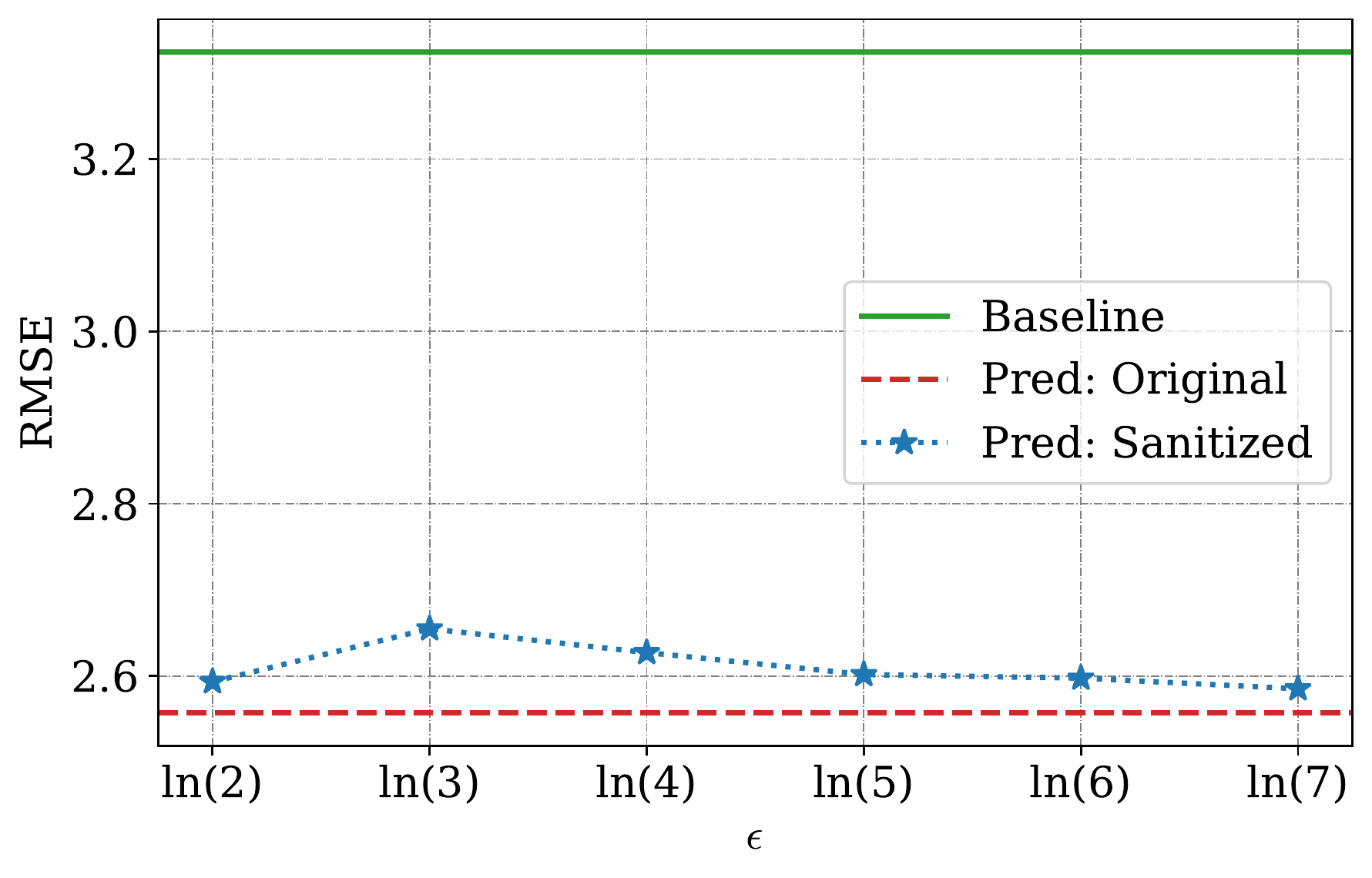}}
	\caption{MAE and RMSE metrics for the prediction models: Baseline and XGBoost trained over original and sanitized datasets.}
	\label{ch8:fig_metrics_forecasting}
\end{figure}

\begin{figure}[H]
	\centering
	\subfloat{\includegraphics[width=0.5\linewidth]{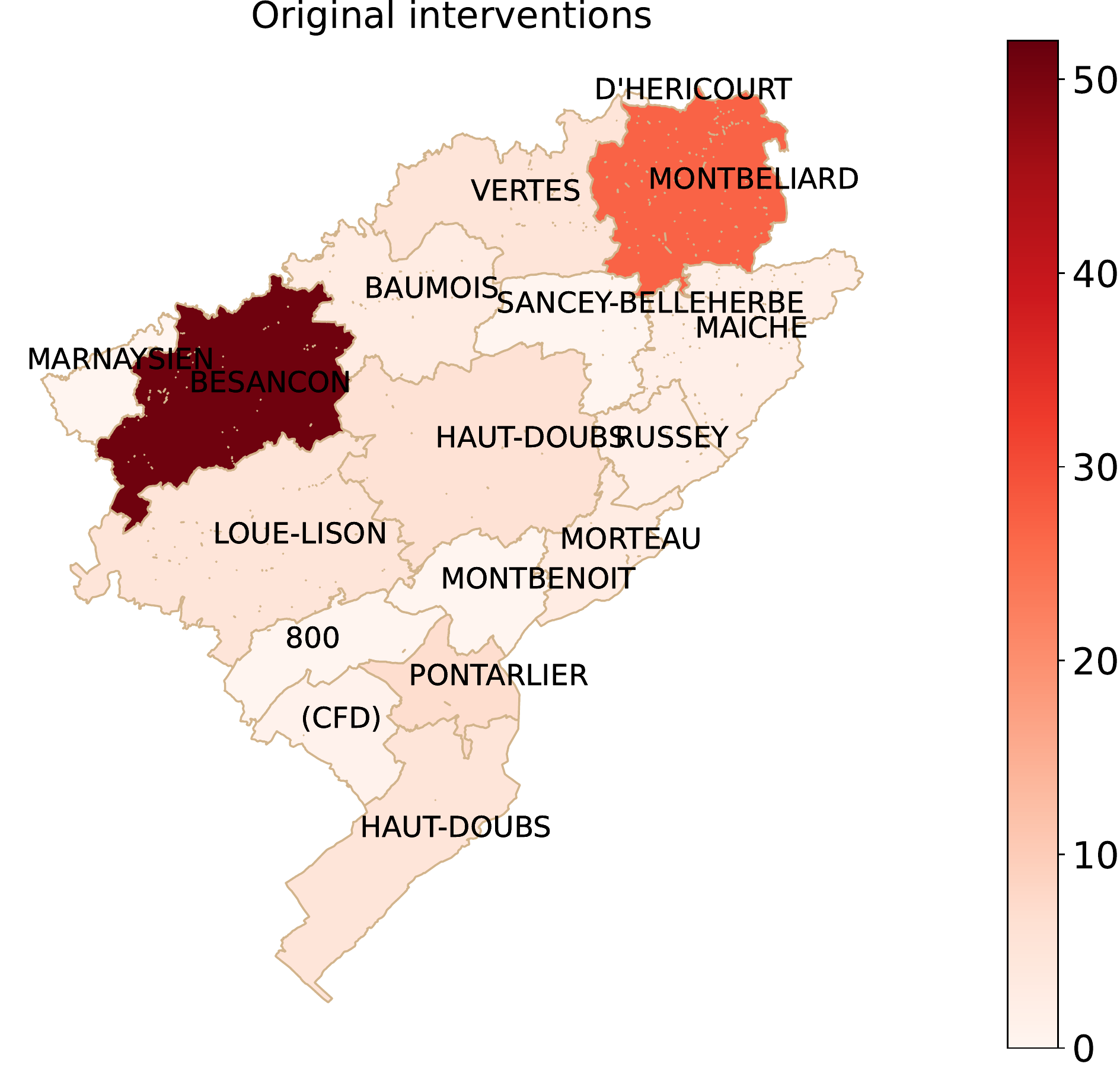}}\\
    \subfloat{\includegraphics[width=0.5\linewidth]{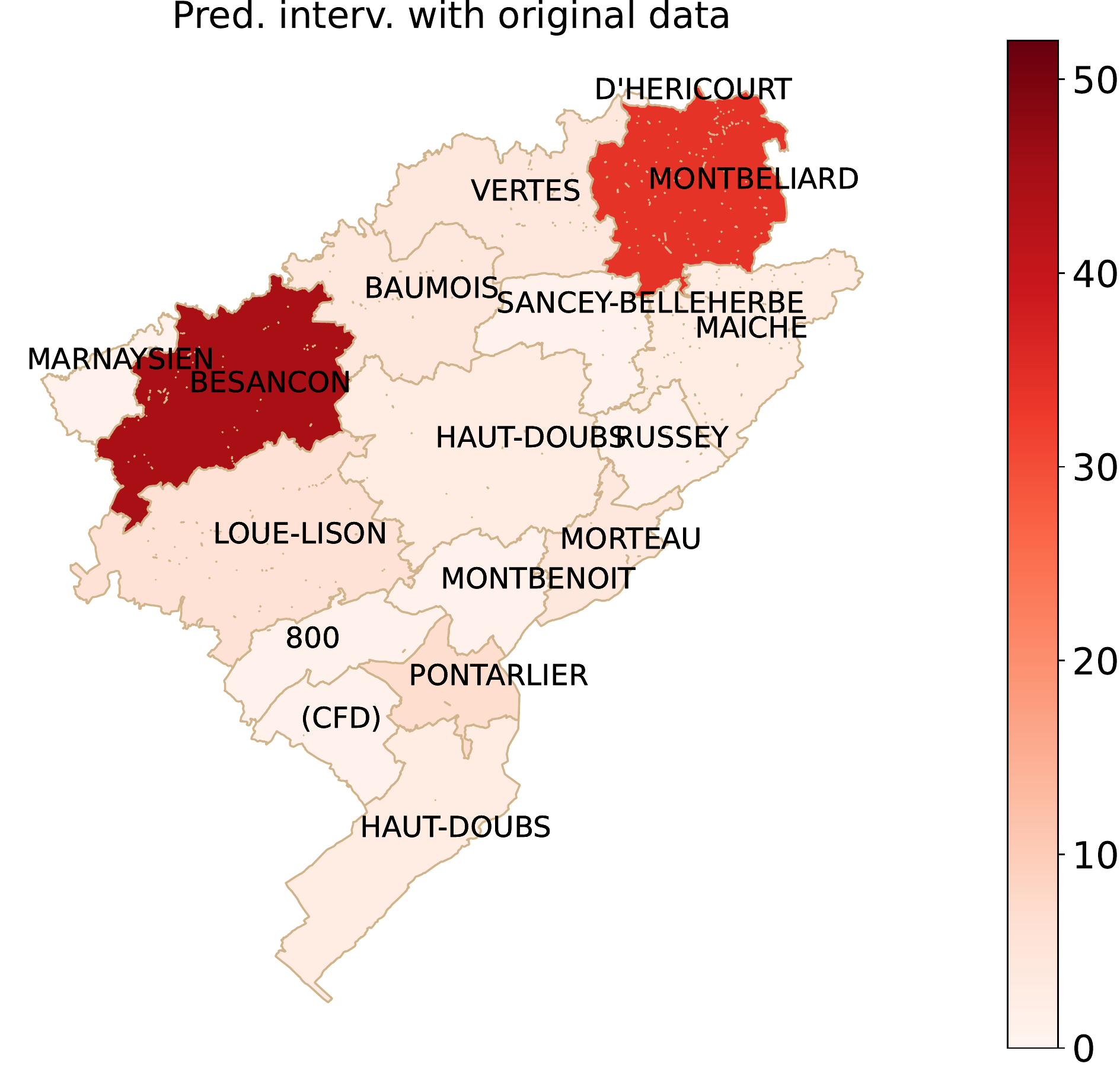}}
    \subfloat{\includegraphics[width=0.5\linewidth]{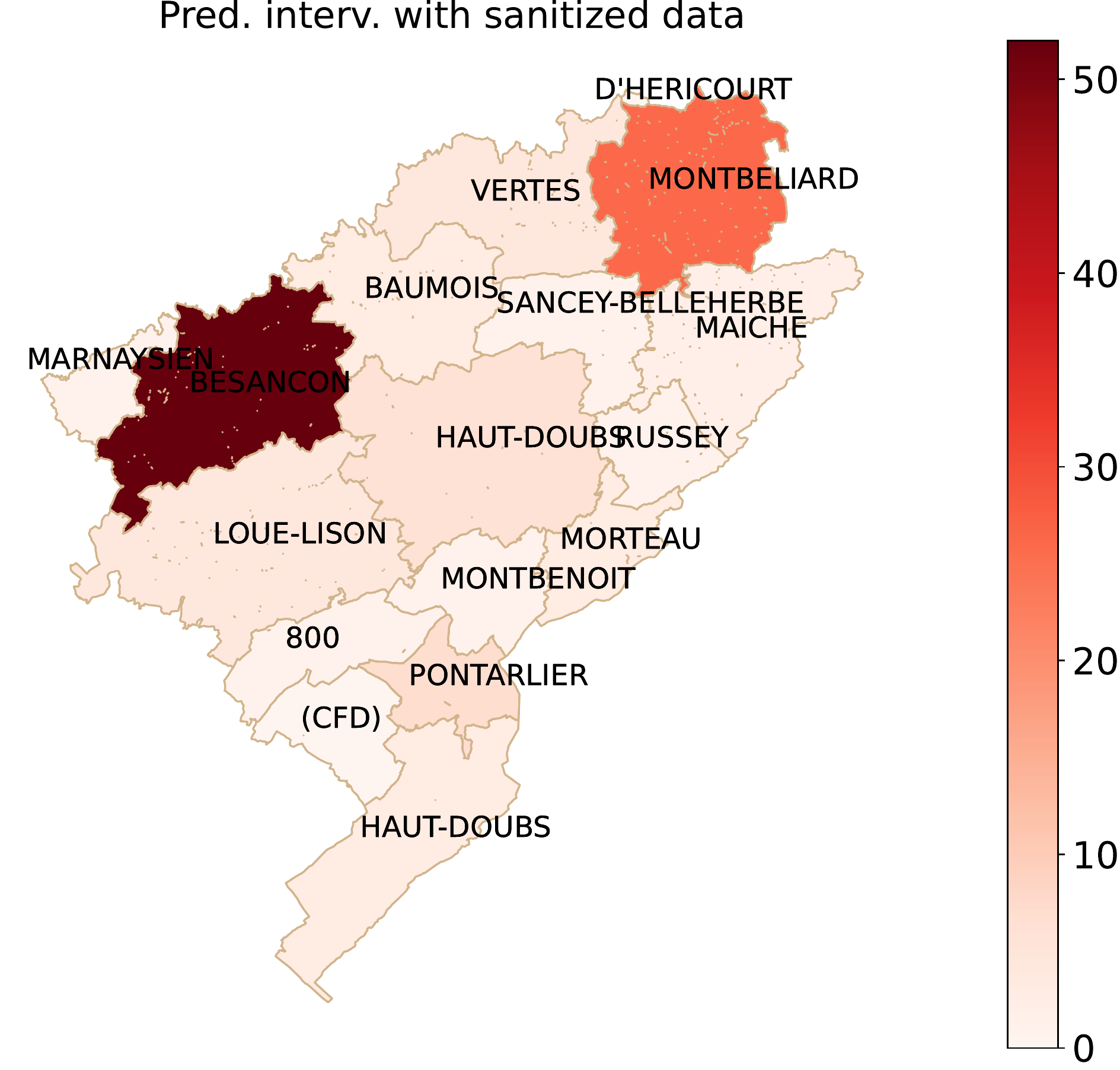}}\\
    \subfloat{\includegraphics[width=0.75\linewidth]{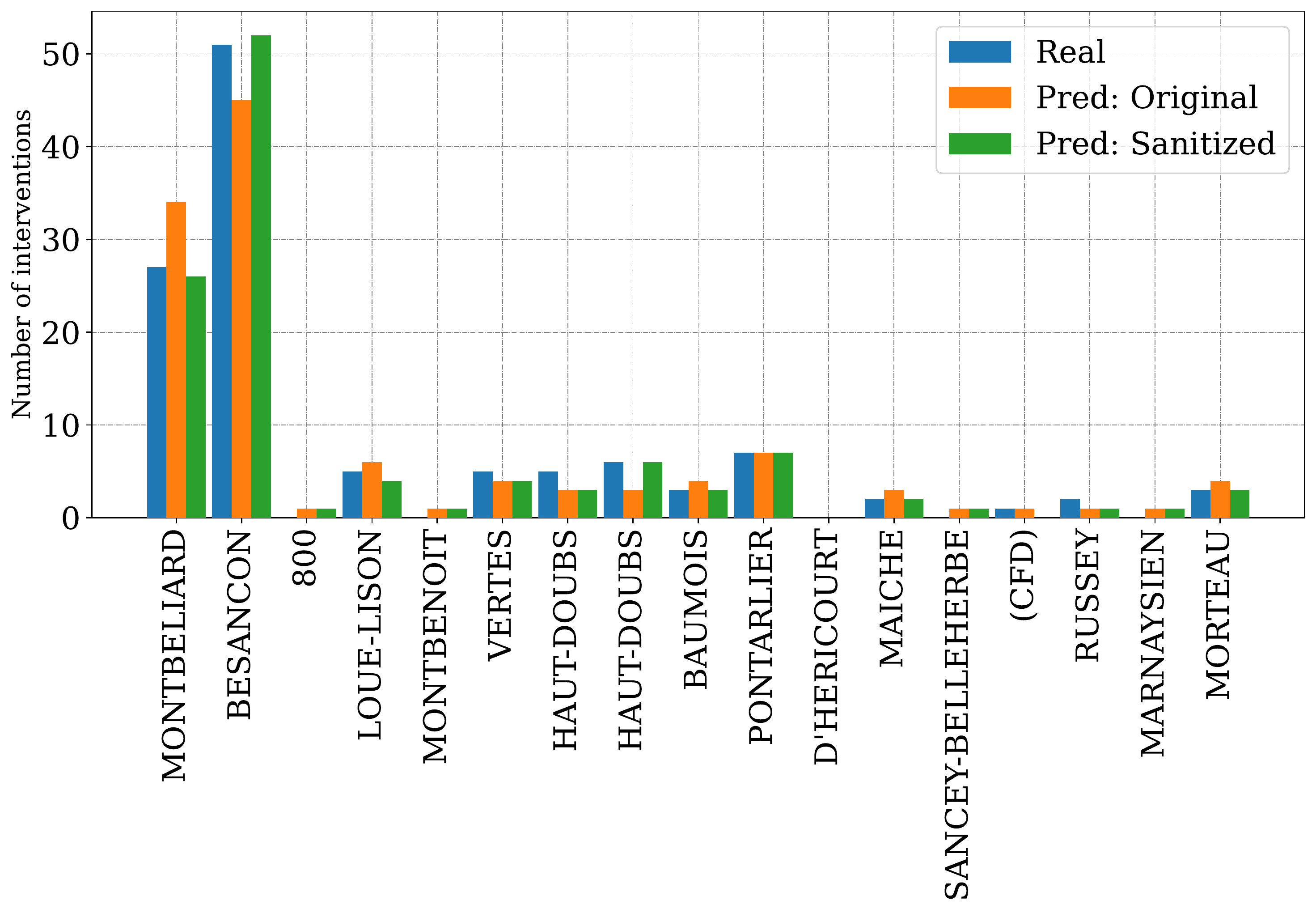}}
	\caption{Comparison of the original and predicted firemen demand by region for a single day using XGBoost trained over original data and an $\epsilon=\ln(2)$-DP one.}
	\label{ch8:fig_ex_forecasts}
\end{figure}

First, as one can notice in Fig.~\ref{ch8:fig_metrics_forecasting}, it is remarkable the improvement of the scores achieved by the XGBoost models for such complex task rather than developing a simple prediction model as the baseline (mean) assumed in this chapter. In addition, from both Figs.~\ref{ch8:fig_metrics_forecasting} and~\ref{ch8:fig_ex_forecasts}, one can notice that XGBoost models trained with sanitized data can also guarantee a good utility of the data for prediction purposes since they did not lose much utility in comparison with the model trained over original data. Indeed, this is true for the whole range of $\epsilon$ evaluated, thus, proving the effectiveness of our proposed solution, which relies on input perturbation allowing both statistical learning and forecasting tasks.

More precisely, in Fig.~\ref{ch8:fig_ex_forecasts}, it is shown for a given day of 2018 the comparison of the original and predicted firemen demand by region using the original dataset and a sanitized one with the strongest $\epsilon=\ln(2)$ tested in our experiments. As one can notice, accurate multivariate forecasts could be achieved even with a strongly sanitized dataset. With such forecasting results, the fire brigade could efficiently prepare themselves for short-, middle-, and long-term scenarios. For example, knowing that certain regions are more prospect to happen incidents, the fire brigade can better allocate the human and machinery resources as well as planning the construction of new barracks. All of these could be achieved while providing strong privacy guarantees for each intervention, using our proposed methodology.

Indeed, forecasting the operational demand is one main goal of Fire brigades (and EMS in general) to optimize their services~\cite{Grekousis2019,Couchot2019,Lin2020,Arcolezi2020,Chen2016,Couchot2019,Cerna2020_b,Pirklbauer2019,Lstm_Cerna2019,Cerna2020_boosting,EliasMallouhy2021,Guyeux2019}. For instance, in~\cite{Chen2016}, the authors identified that shorter ambulance response time is associated with a higher survival rate and predicted the demand of ambulances to allow their reallocation. Besides, in previous works of our research group~\cite{Couchot2019,Cerna2020_b,Lstm_Cerna2019,Cerna2020_boosting,EliasMallouhy2021,Guyeux2019}, several classical time-series forecasting, ML, and DL techniques have been employed to forecast the \textit{total} number of interventions considering the whole Doubs region, in different time-slots (e.g., 1 hour, 3 hours, ...). Besides, in~\cite{Couchot2019}, our group also proposed to forecast the operational demand of two main regions of Doubs and by motive, by slots of 3 hours. In that work, input perturbation was considered through applying \textit{k}-anonymity~\cite{samarati1998protecting,SWEENEY2002}, \textit{l}-diversity~\cite{Machanavajjhala2006}, and centralized DP~\cite{Dwork2006,Dwork2006DP,dwork2014algorithmic} algorithms to sanitize the original dataset. The main difference between the work in~\cite{Couchot2019} with this chapter, is that an LDP mechanism is used to sanized row-by-row independently, which also permits the data analyst to aggregate by different slots of time (cf. Section~\ref{ch8:sec_freq_est}).

\section{Conclusion} \label{ch8:conclusion}

In this chapter, we proposed a privacy-preserving methodology based on generalization and LDP, which would allow EMS to use and/or share the sanitized database for both \textbf{statistical learning} and \textbf{forecasting tasks} on the frequency of firemen demand by region. Indeed, while there are examples of EMS open data publication~\cite{datagouv1,datagouv2}, we believe that more attention should be given to the privacy of the victims concerned. For instance, EMS should not blindly generalized the number of interventions per month/week and city, as there could be many cases of unique interventions (e.g., see Tables~\ref{ch8:tab_sdis91} and~\ref{ch8:tab_sdis71}). Moreover, in a more non-private case, we argue that publishing the \textbf{precise information about the time, the location, and the reason of the emergency} (e.g., as in~\cite{seattle}), \textbf{could increase the possibility of breaching someone's privacy}.

Therefore, in our solution, we propose that both the time and the location be generalized, the former by \textit{day} and the latter by big regions (e.g., we generalize $608$ cities to $17$ regions in Fig.~\ref{ch8:fig_map_doubs}). In addition to generalization, we also propose that an $\epsilon$-LDP protocol (cf. Section~\ref{ch2:sub_ldp}) be applied to each interventions' region (i.e., row-by-row sanitization), thus, enhancing the privacy of users. In this chapter, we used the state-of-the-art OUE~\cite{tianhao2017} protocol for single attribute frequency estimation. As shown in the results of Section~\ref{ch8:sec_freq_est}, the OUE mechanism can adequately estimate the firemen demand by region with a good level of privacy guarantees for all three aggregation periods (see Fig.~\ref{ch8:fig_ex_freq_est}). 

Moreover, as shown in Section~\ref{ch8:sec_forecasting}, it is possible to forecast the future firemen demand by region with sanitized data as well as with the original data (cf. Fig~\ref{ch8:fig_metrics_forecasting}). More specifically, the work in this chapter shows that EMS data, which is sensitive but can be very useful, can be properly sanitized to avoid information leakage while remaining useful for both statistical learning (cf. Fig.~\ref{ch8:fig_ex_freq_est}) and forecasting (cf. Fig.~\ref{ch8:fig_ex_forecasts}) purposes. Lastly, while this chapter focused on aggregate information, thus, applying generalization and LDP protocols for frequency estimation, the next Chapter~\ref{chap:chapter9} investigates how to sanitize the coordinates (i.e., latitude and longitude) of the emergency's location, focusing on a different problem, i.e., predicting the response time of each ambulance.


\chapter{Preserving Emergency's Location Privacy to Predict Response Time} \label{chap:chapter9}

In Chapters~\ref{chap:chapter1}, we have reviewed our second motivating project concerning the SDIS 25 (i.e., an EMS in France) and in Chapter~\ref{chap:chapter8}, we have proposed an LDP-based methodology focusing on both \textit{statistical learning} and \textit{machine learning forecasting} on the frequency of firemen demand by region. In this Chapter~\ref{chap:chapter9} and in Chapter~\ref{chap:chapter92}, \textbf{following our collaboration with Selene Cerna}, we focus our attention on a different problem, which concerns the \textbf{response time of EMS to each emergency}. Indeed, many victims require care within adequate time (e.g., cardiac arrest) and, thus, improving response time is vital. In this context, the \textbf{location} of the emergency is a determinant factor of EMS response time since it defines, e.g., the distance between the EMS center and the emergency scene. 

With these elements in mind, we asked ourselves, is the \textbf{precise location really necessary to be used as a predictor of an ML model that predicts ambulance response times?} In fact, we still consider that EMS intend to share a sanitized version of their data, such that third parties could build decision-support tools to optimize the EMS service. So, \textbf{in collaboration with Selene Cerna}, we propose in this chapter to use the geo-indistinguishability~\cite{Andrs2013} LDP model to sanitize each emergency scene independently (i.e., row-by-row). In addition, there are many other predictors that may also be `perturbed', e.g., the calculated distance between both the EMS center and the emergency scene; the estimated travel time, the city, and so on. Thus, thanks to the post-processing properties of DP~\cite{dwork2014algorithmic}, EMS could use and/or share the sanitized data with third parties to develop ML-based decision-support tools. We invite the reader to refer to Chapter~\ref{chap:chapter2} for the background on LDP and geo-indistinguishability. Lastly, similar to Chapter~\ref{chap:chapter7} and~\ref{chap:chapter8}, we highlight that although we apply an \textit{LDP-based} mechanism, this is a centralizer data owner (i.e., EMS) that applies the geo-indistinguishability protocol on its servers, thus, providing centralized privacy guarantees for users.

\section{Introduction} \label{ch9:introduction}

Ambulance response time (ART) is a key component for evaluating pre-hospital EMS operations. ART refers to the period between the EMS notification and the moment an ambulance arrives at the emergency scene~\cite{Brger2018,Byrne2019}. In many urgent situations (e.g., cardiovascular emergencies, trauma, or respiratory distress), the victims need first-aid treatment within adequate time to increase survival rate~\cite{Do2012,Byrne2019,Holmn2020,Brger2018,Chen2016,Lee2019} and, hence, improving ART is vital.

One important factor of ART is the \textit{location} of the intervention~\cite{Nehme2016,Do2012,Lian2019,Byrne2019,aladdini2010ems}, e.g., in dense urban areas, the distance may be short, but the travel time may be longer due to traffic congestion. On the other hand, travel distance and travel time may be longer for rural areas. In other words, the location information is of great importance for the prediction of travel time and, naturally, ART~\cite{Nehme2016,aladdini2010ems}. As also mentioned in Chapter~\ref{chap:chapter8}, \textbf{the location of an emergency}, on the other hand, \textbf{is considered sensitive information} since it might identify who received assistance and for what purpose. For example, attackers with auxiliary information may correctly deduce that a weakened person activated the EMS if they know that one intervention took place in front of their residence. The attackers may then exploit this knowledge for their own benefit.

In this chapter, we propose to sanitize, independently, each emergency location data with geo-indistinguishability (GI)~\cite{Andrs2013} (cf. Section~\ref{ch2:sub_geo_ind}), which is based on the state-of-the-art DP~\cite{Dwork2006,Dwork2006DP,dwork2014algorithmic} model. Indeed, we aim to evaluate the effectiveness of several values of $\epsilon$ (i.e., the privacy budget) to sanitize emergency location data with GI and train ML-based models to predict ART. In other words, this is a practical evaluation of GI on a real-world EMS task. This way, EMS would only use and/or share sanitized data with third parties to \textit{train} and develop ML-based decision support systems, thus, protecting their victims if there are data leakages~\cite{data_breaches} or if the built ML model is subject to membership inference attacks and data reconstruction attacks~\cite{Song2017,Shokri2017,Carlini2019}. 

In our context, besides the own location, with the exact coordinates of both SDIS 25 centers and the emergency scenes, one can retrieve important features such as the distance and estimated travel time. However, if the location is sanitized via GI, many other explanatory variables (e.g., distance, travel time, city) would be `perturbed' too. As reviewed in Section~\ref{ch3:input_perturbation}, training ML models with sanitized data is also known as \textit{input perturbation}~\cite{first_ldp,ppdm}. 

We perform our experiments on the SDIS 25 preprocessed dataset named \textbf{ART-DB} from Section~\ref{ch3:art_dbs}. The ART-DB contains information about $186130$ \textbf{dispatched ambulances} from SDIS 25 centers that attended $182700$ EMS interventions from 2006 up to June 2020. To the author's knowledge, this is the first work to assess the impact of geo-indistinguishability on sanitizing the location of emergency scenes when training the ML model for such an important task.

The remainder of this chapter is organized as follows. In Section~\ref{ch9:sec_materials_methods}, we describe the sanitization of emergency scenes with GI and the experimental setup. In Section~\ref{ch9:sec_results_discussion}, we present the results of our experiments and its discussion including related work. Lastly, in Section~\ref{ch9:sec_conclusion}, we present the concluding remarks. The development, results and discussion presented in this chapter were published in a full article~\cite{Arcolezi2021_geo} in the Mathematical and Computational Applications journal.

\section{Materials and Methods} \label{ch9:sec_materials_methods}
In this section, we present the GI-based sanitization of emergency location data (Section~\ref{ch9:sub_gi_location}) and the experimental setup (Section~\ref{ch9:sub_experiments}).

\subsection{Preserving emergency location privacy with geo-indistinguishability} \label{ch9:sub_gi_location}

To preserve geo-indistinguishability of each emergency scene, we apply the polar Laplace mechanism in Alg.~\ref{alg:GI_location} presented in Chapter~\ref{chap:chapter2} to the \textit{Location} attribute of \textbf{each intervention}. The codes we developed and used for all experiments are available in a Github repository\footnote{\url{https://github.com/hharcolezi/ldp-protocols-mobility-cdrs}.}. More specifically, even if the ART-DB is per ambulance dispatch (i.e., $186130$ ambulances), we used the same sanitized value per intervention (i.e., $182700$ unique interventions). Although in~\cite{Andrs2013} the authors propose two further steps to Alg.~\ref{alg:GI_location}, i.e., discretization and truncation, both steps can be neglected in our context. This is, first, because SDIS 25 may also help other EMS outside the Doubs region, and second, we assume that any location in the continuous plane can be an emergency scene. While reporting an approximate location in the middle of a river may not have much sense in location-based services, in an emergency dataset with approximate locations, this may indicate an urgency for someone who drowned in the river, for example.

We used five different levels for the privacy budget $\epsilon=l/r$, where $l$ is the privacy level we want within a radius $r$. Table~\ref{tab:epsilon_gi} exhibits the five different levels of privacy, selected similar to the original GI paper~\cite{Andrs2013}. For the sake of illustration, Figure~\ref{fig:real_VS_anon} exhibits three maps of the Doubs region with the points of original location (left-hand plot), $\epsilon=0.005493$-GI location (middle plot), and $\epsilon=0.002747$-GI location (right-hand plot). As one can notice, with an intermediate privacy level ($l=\ln{(3)},r=400$), locations are more spread throughout the map while with a lower privacy level ($l=\ln{(3)},r=200$), locations approximate the real clusters.

\setlength{\tabcolsep}{5pt}
\renewcommand{\arraystretch}{1.4}
\begin{table}[H]
    \centering
    \scriptsize
    \begin{tabular}{|c|c|c|}
    \hline
    $\epsilon=l/r$ &    $l$ &    $r$ (meters) \\ \hline
        $0.005493$ &  $\ln{(3)}$ &    $200$ \\ \hline
        $0.002747$ &  $\ln{(3)}$ &    $400$ \\ \hline
        $0.001155$ &  $\ln{(2)}$ &    $600$ \\ \hline
        $0.000866$ &  $\ln{(2)}$ &    $800$ \\ \hline
        $0.000693$ &  $\ln{(2)}$ &    $1,000$ \\ \hline
    \end{tabular}
    \caption{Values of $\epsilon=l/r$ for sanitizing emergency location data with GI.}
    \label{tab:epsilon_gi}
\end{table}

\begin{figure}[!ht]
    \centering
    \includegraphics[width=1\linewidth]{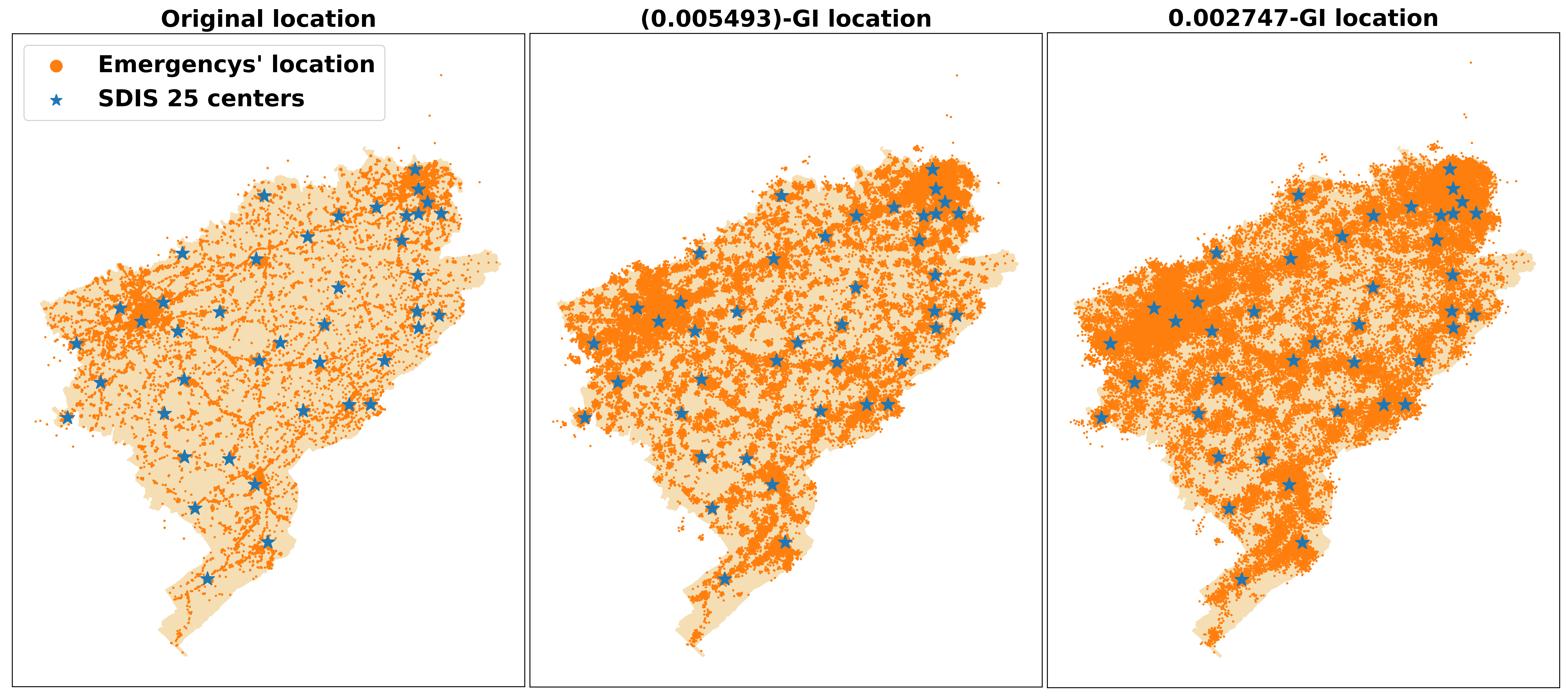}
    \caption{Emergency locations and SDIS 25 centers throughout the Doubs region: original data (left-hand plot), $\epsilon=0.005493$-GI data (middle plot), and $\epsilon=0.002747$-GI data (right-hand plot).}
    \label{fig:real_VS_anon}
\end{figure}

Moreover, with the new \textit{Location} values of each intervention, we also reassigned the city, the district, and the zone \textbf{when applicable}. In addition, we recalculated the following features associated with it: the great-circle distance~\cite{great_circle_dist}, the estimated driving distance, and estimated travel time. The latter two features were recalculated with the open source routing machine (OSRM) API~\cite{OSRM}, which only considers roads, i.e., if the obfuscated location is in the middle of a farm, the closest route estimates the driving distance and travel time until the closest road. We also highlight that if the new coordinates of the emergency scene indicate a location closer to another SDIS 25 center, even in real life, it would not imply that this center took charge of the intervention. Therefore, the \textit{center} attribute \textbf{was not} `perturbed'. 

To show the impact of the noise added to the \textit{Location} attribute, Table~\ref{ch9:tab_var_features_GI} exhibits the percentage of time that categorical attributes (zone, city, and district) were `perturbed' (i.e., reassigned); the mean and std values of the great-circle distance attribute (considering the SDIS 25 center and the emergency scene) and its Pearson correlation coefficient~\cite{correlation} with the ART variable (Corr. ART). In Table~\ref{ch9:tab_var_features_GI}, we report the mean(std) values since we repeated our experiments with 10 different seeds (i.e., DP algorithms are randomized). Although we did not include the estimated driving distance and estimated travel time from OSRM API in this analysis, in preliminary tests, we noticed that these two features follow a similar pattern as the great-circle distance attribute.

From Table~\ref{ch9:tab_var_features_GI}, one can notice that many features are perturbed due to sanitization of emergency's location with GI. With high levels of $\epsilon$ (i.e., less private), the city and the zone suffer low `perturbation'. On the other hand, district is reassigned many times as it is geographically smaller than the others. For example, in the fourth row Table~\ref{ch9:tab_var_features_GI}, when $\epsilon=0.000866$, the city is already reassigned more than $50\%$ of the time and the district about $74\%$ of the time. Moreover, one can notice that the mean and std values of the great-circle distance increase as the $\epsilon$ parameter decreases (i.e., more private). Because $\epsilon = l/r$, making $l$ smaller and/or $r$ higher, the stricter $\epsilon$ becomes, and therefore more noise is added to the original locations. Besides, the Pearson correlation coefficient between the great-circle distance with the ART variable decreases as $\epsilon$ becomes smaller. 

\setlength{\tabcolsep}{5pt}
\renewcommand{\arraystretch}{1.4}
\begin{table}[!ht]
    \centering
    \scriptsize
    \begin{tabular}{| c|c|c|c|c|c|c |}
    \hline
         \multirow{2}{*}{\textbf{Data}}    & \textbf{Zone} & \textbf{City}    & \textbf{District}    & \multicolumn{3}{c|}{\textbf{Great-circle Dist. (km)}}\\ \cline{2-7}
                     & \multicolumn{3}{c|}{`perturbation' (\%)}     & Mean  &std &  Corr. ART \\ \hline
         Original            & - & - & - &3.44    &3.72    &0.369  \\ \hline
         $\epsilon=0.005493$ &5.20(0.05) &7.68(0.06) &25.8(0.05) &3.48(1e-3) &3.72(7e-4) &0.367(2e-4) \\ \hline
         $\epsilon=0.002747$ &11.3(0.05) &17.6(0.10) &41.5(0.12) &3.57(1e-3) &3.72(1e-3) &0.362(2e-4)\\ \hline
         $\epsilon=0.001155$ &28.1(0.06) &42.3(0.10) &66.2(0.09) &4.03(3e-3) &3.74(3e-3) &0.335(5e-4)  \\ \hline
         $\epsilon=0.000866$ &35.5(0.10) &52.4(0.11) &74.0(0.11) &4.38(3e-3) &3.81(4e-3) &0.313(1e-3)  \\ \hline
         $\epsilon=0.000693$ &41.4(0.12) &60.3(0.09) &79.4(0.05) &4.77(6e-3) &3.92(5e-3) &0.288(1e-3)  \\ \hline
    \end{tabular}
    \caption{Percentage of perturbation for categorical attributes (city, zone, and district) according to $\epsilon$ and statistical properties  (mean and std values and correlation with ART) of the original and GI-based datasets for the great-circle distance attribute. Mean(std) values are reported since we repeated our experiments with 10 different seeds.}
    \label{ch9:tab_var_features_GI}
\end{table}

\subsection{Setup of Experiments} \label{ch9:sub_experiments}

Four state-of-the-art ML techniques have been used in our experiments, \textbf{to predict the scalar ART outcome} in a regression framework. More precisely, we compared the performance of two state-of-the-art ML techniques based on decision trees, which are known for their high performance (and speed) with tabular data (i.e., LGBM~\cite{XGBoost} and XGBoost~\cite{NIPS2017_6907}); a traditional and well-known deep learning (i.e., MLP~\cite{goodfellow2016deep,LeCun2015}), and a classical statistical method that can perform both variable selection and regularization (i.e., LASSO~\cite{lasso}). All these methods have been revised in Section~\ref{ch3:ML_models} 

Because in Table~\ref{ch9:tab_var_features_GI} there are low variations (i.e., small std values) on all features that depend on the sanitized location, we ran our experimental validation only once. As detailed in Section~\ref{ch3:art_dbs}, in our experiments, \textbf{each sample corresponds to one ambulance dispatch}, in which there are \textbf{temporal features} (e.g., hour, day), , \textbf{traffic data} (i.e., indicators from~\cite{bisonFute}), hourly \textbf{weather data} (e.g., temperature, pressure, ..., from~\cite{meteoFrance}), \textbf{location-based features} (latitude, longitude, district, city, and the zone), and \textbf{computed features} (e.g., the distance between the SDIS 25 center and the emergency scene, estimated travel time, estimated driving distance, where the two latter are from~\cite{OSRM}). \textbf{The scalar target variable is the ART in minutes, which is the time measured from the SDIS 25 notification to the ambulance's arrival on-scene}. 

In addition, the ART-DB was preprocessed by Selene Cerna as follows. All numerical features (e.g., temperature) were standardized using the \textit{StandardScaler} function from the Scikit-learn library~\cite{scikit}. Categorical features (e.g., center, zone, hour) were encoded using mean encoding, i.e., the mean value of the ART variable with respect to each feature (considering the training set only). \textbf{The target variable, namely ART,} was kept in its original format (minutes) since no remarkable improvement was achieved with scaling. 

With these elements in mind, we divided the ART-DB into \textbf{training (years 2006-2019)} and \textbf{testing (six months of 2020)} sets to evaluate our models. Thus, five models per ML technique (i.e., XGBoost, LGBM, MLP, and LASSO) were built to predict ART on each month of 2020 using the sanitized (training) datasets with different levels of $\epsilon$-GI location data (\textit{cf.} Table~\ref{tab:epsilon_gi}). All models were trained continuously, i.e., at the end of each month, the new known data were added to the training set after sanitization with $\epsilon$-GI. \textbf{Lastly, all models were tested with original data. On the one hand, this would prevent having in real-life a sanitized location that would compromise the EMS response time. On the other hand, each time the model is re-fitted (or retrained), the new known data should also be sanitized with $\epsilon$-GI.} In addition, for comparison purposes, we also trained and evaluated one additional model per ML technique with original data. In this chapter, the models were evaluated using the following regression metrics: RMSE, MAE, MAPE, and $R^2$, all presented in Section~\ref{ch3:sub_metrics}.

Results for each metric were calculated using data from the 6 months evaluation period. The RMSE metric was also used during the hyperparameters tuning process via Bayesian optimization (BO), explained in Section~\ref{ch3:optimization}. To this end, we used the HYPEROPT library~\cite{hyperopt2013} with $100$ iterations for each model. Table~\ref{ch9:tbl_hyperparameters} displays the range of each hyperparameter used in the BO, as well as the final configuration used to train and test the models. 

\setlength{\tabcolsep}{2.5pt}
\renewcommand{\arraystretch}{1.4}
\begin{table}[!ht]
\scriptsize
\centering
    \begin{tabular}{c c c c c c c c}
    \hline
    \multirow{2}{*}{\textbf{Model}} & \multirow{2}{*}{\textbf{Search space}} & \multicolumn{6}{c}{\textbf{Final configuration per dataset}} \\
    & & Original & $\epsilon=0.005493$ & $\epsilon=0.002747$ &$\epsilon=0.001155$ &$\epsilon=0.000866$ &$\epsilon=0.000693$\\
    \hline
    \multirow{9}{*}{XGBoost}
         & max\_depth: [1, 10]           & 9   &9   &6  &6  &9  &9 \\
         & n\_estimators: [50, 500]     &  465  &465  &130  &235  &465 &465\\
         & learning\_rate: [0.001, 0.5]    & 0.0265  & 0.0265   &0.0858  &0.0486  &0.0265  &0.0265\\
         & min\_child\_weight: [1, 10]           &5  & 5  &7  &7  &5  &5 \\
         & max\_delta\_step: [1, 11]   & 4  & 4 &3  &4  &4  &4\\
         & gamma: [0.5, 5]        & 3 & 3   &0  &2  &3  &3  \\
         & subsample: [0.5, 1]            & 0.8  & 0.8  &1   &1   &0.8  &0.8  \\
         & colsample\_bytree: [0.5, 1]            & 0.5  & 0.5  &0.5   &0.5   &0.5  &0.5  \\
         & alpha: [0, 5]            & 2  &  2 &1   &2   &2  &2  \\\hline
    
    \multirow{7}{*}{LGBM}
         & max\_depth: [1, 10]           & 7   & 8  &10  &8  &8  &6 \\
         & n\_estimators: [50, 500]     & 355   &326  &477  & 250 &80 &441\\
         & learning\_rate: [1e-4, 0.5]    & 0.0188  & 0.0098   &0.0164  &0.0285  &0.0586  &0.0300\\
         & subsample: [0.5, 1]           & 0.54066 & 0.5228  &0.6138  &0.6699  & 0.6732 &0.5812 \\
         & colsample\_bytree: [0.5, 1]   & 0.5160  & 0.5575 &0.5204  &0.6870  &0.5507  &0.5451\\
         & num\_leaves: [31, 400]        & 400 & 192   &245  & 398 &132  & 95 \\
         & reg\_alpha: [0, 5]            & 4  & 0  & 5  & 0  & 1 & 4 \\
    \hline
    
    \multirow{7}{*}{MLP}
         & Dense layers: [1, 7]                      &7  & 3 & 4    &6   &6   &6   \\
         & Number of neurons: [$2^8$, $2^{13}$]      &$2^{10}$  & $2^{12}$  & $2^{12}$   &$2^{9}$   &$2^{12}$   &$2^{9}$   \\
         & Batch size: [32, 168]                &140   & 80 & 48   &82   &70   &44   \\
         & Learning rate: [1e-5, 0.01]   &0.00265  & 0.00124 & 0.0099     &0.0099   &0.0094   &0.0077   \\
         & Optimizer: Adam          &Adam & Adam & Adam & Adam & Adam & Adam \\
         & Epochs: 100                  & 100 & 100 & 100 & 100 & 100 & 100  \\
         & Early stopping: 10                   & 10 & 10 & 10 & 10 & 10 & 10 \\
    \hline

    \multirow{1}{*}{LASSO}
         & alpha: [0.01, 2]                    & 0.0205 & 0.0307 &0.0105 &0.0100 &0.0112 &0.0107\\\hline
    \end{tabular}
    \caption{Search space for hyperparameters by ML model and the final configuration obtained for predicting ARTs per dataset.}
    \label{ch9:tbl_hyperparameters}
\end{table}

\section{Results and Discussion} \label{ch9:sec_results_discussion}

In this section, we present the results of our experimental validation (Section~\ref{ch9:sub_results}) and a general discussion (Section~\ref{ch9:sub_discussion}) including related work and limitations.

\subsection{Privacy-preserving ART prediction} \label{ch9:sub_results}

Figure~\ref{fig:results_metrics} illustrates the impact of the level of GI for each ML model to predict ART according to each metric. As one can notice in this figure, for XGBoost, LGBM, and LASSO, there were minor differences between training models with original location data or sanitized ones. On the other hand, models trained with MLP performed poorly with GI-based data. In addition, by analyzing models trained with original data, while the smaller RMSE for LASSO is about 5.65, for more complex ML-based models, RMSE is less than 5.6, achieving 5.54 with XGBoost and LGBM. In comparison with the results of existing literature, lower $R^2$ scores and similar RMSE and MAE results were achieved in~\cite{Lian2019} to predict ART while using original location data only.

Indeed, among the four tested models, LGBM and XGBoost achieve similar metric results while favoring the LGBM model. Thus, Figure~\ref{ch9:fig_pred_LGBM} illustrates the BO iterative process for LGBM models trained with original and sanitized data according to the RMSE metric (left-hand plot); and ART prediction results for 50 dispatched ambulances in 2020 out of 8,709 ones (right-hand plot) with an LGBM model trained with original data (Pred: original) and with two LGBM models trained sanitized data, i.e., with $\epsilon=0.005493$ (low privacy level) and with $\epsilon=0.000693$ (high privacy level). 

\begin{figure}[!ht]
    \centering
    \includegraphics[width=1\linewidth]{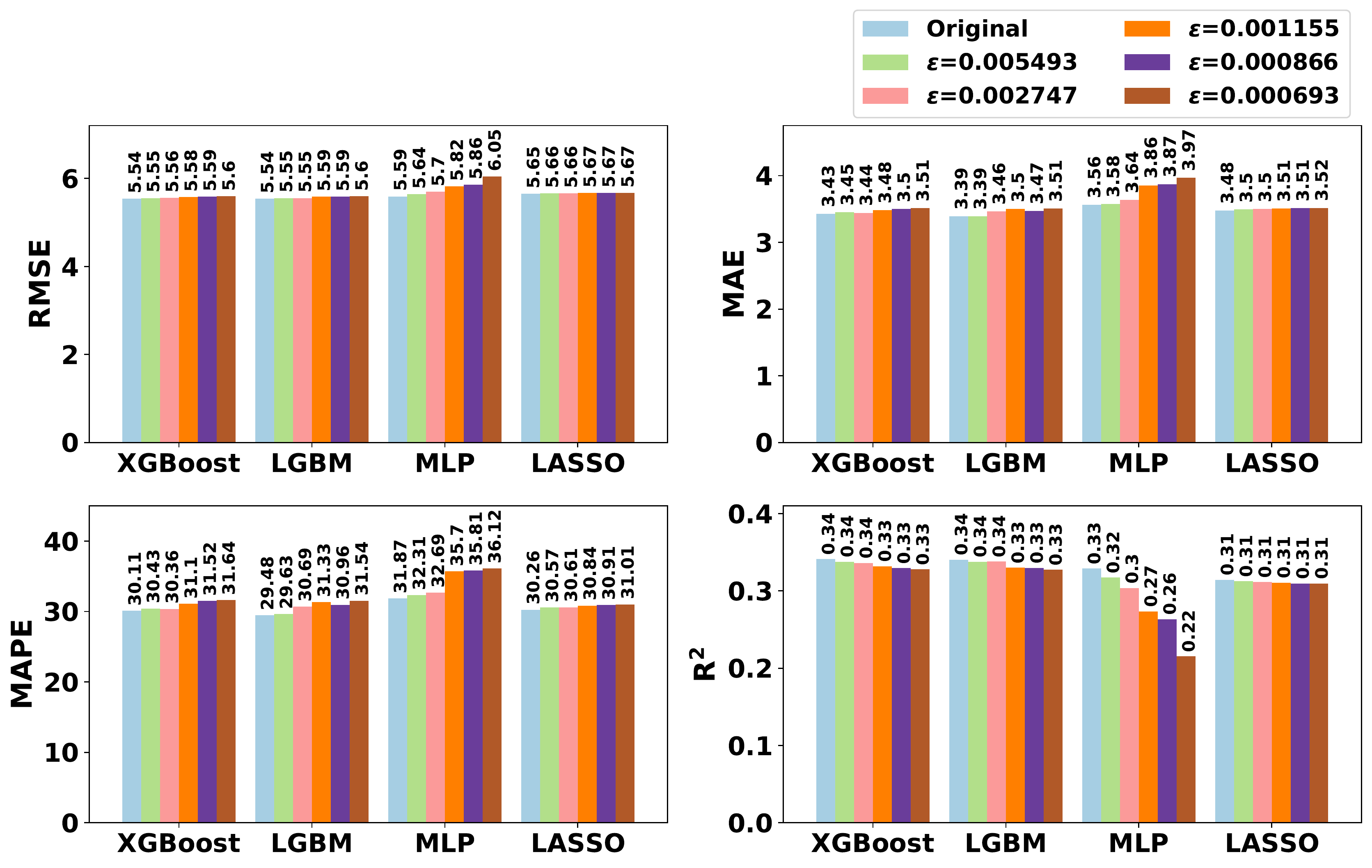}
    \caption{Impact of the level of $\epsilon$-geo-indistinguishability for each ML model to predict ART according to each metric.}
    \label{fig:results_metrics}
\end{figure}

\begin{figure}[!ht]
    \centering
    \includegraphics[width=1\linewidth]{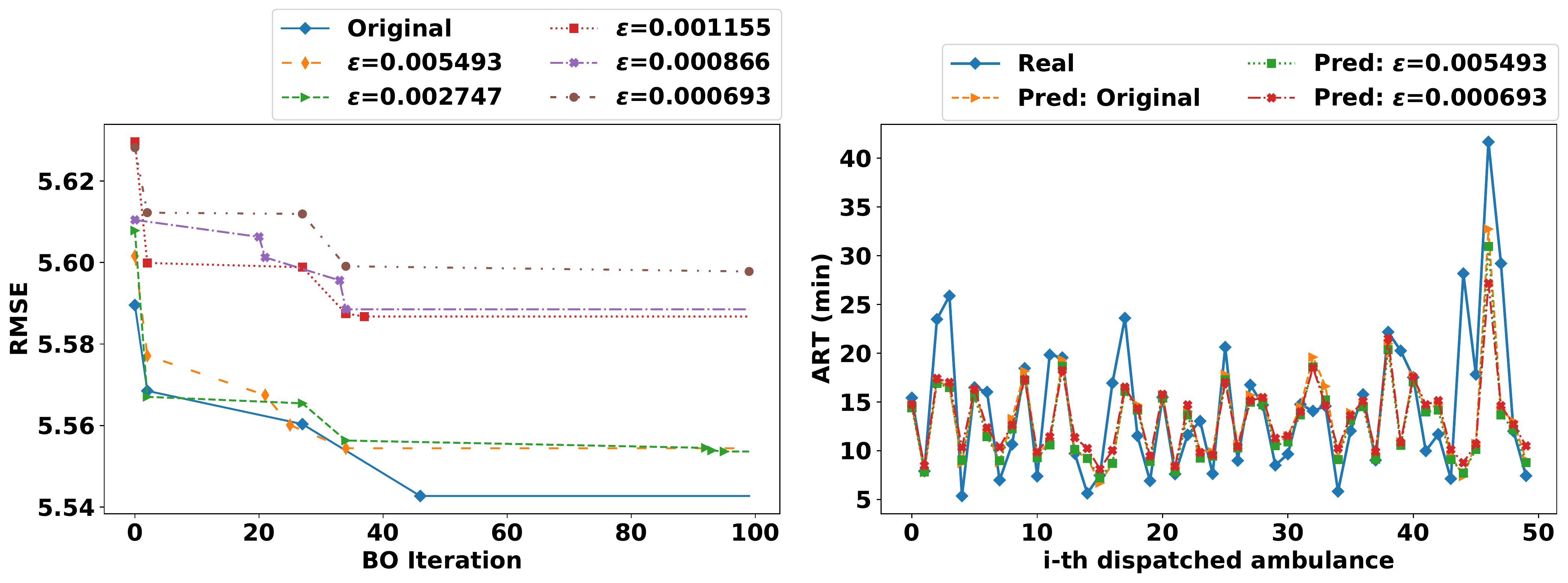}
    \caption{The left-hand plot illustrates the hyperparameters tuning process via Bayesian optimization with 100 iterations for LGBM models trained with original data and sanitized ones. The right-hand plot illustrates the prediction of ARTs with LGBM models trained with original data and with sanitized ones.}
    \label{ch9:fig_pred_LGBM}
\end{figure}

As one can notice in the left-hand plot of Figure~\ref{ch9:fig_pred_LGBM}, once data are sanitized with different levels of $\epsilon$-GI, the hyperparameters optimization via BO is also perturbed. This way, local minimums were achieved in different steps of the BO (i.e., the last marker per curve indicates the local minimum). For instance, even though $\epsilon=0.002747$ is more strict than $\epsilon=0.005493$, results were still better for the former since, in the last steps of BO, three better local minimums were found. Besides, prospective predictions were achieved with either original or sanitized data. For instance, in the right-hand plot of Figure~\ref{ch9:fig_pred_LGBM}, even for the high peak-value of ART around 40 minutes, LGBM's prediction achieved some reasonable estimation. Although several features were perturbed due to the sanitization of the emergency scene (e.g., city, zone, etc), the models could still achieve similar predictions as the model trained with original location data.

\subsection{Discussion and Related Work} \label{ch9:sub_discussion}

The medical literature has mainly focused attention on the analysis of ART~\cite{Do2012,Austin2003,Silverman2007} and its association with trauma~\cite{Pons2002,Byrne2019} and cardiac arrest~\cite{Brger2018,Lee2019,Holmn2020}, for example. To reduce ART, some works propose reallocation of ambulances~\cite{Carvalho2020,Chen2016}, operation demand forecasting~\cite{Grekousis2019,Couchot2019,Lin2020,Arcolezi2020,Chen2016,Couchot2019,Cerna2020_b,Lstm_Cerna2019,Cerna2020_boosting,EliasMallouhy2021,Guyeux2019}, travel time prediction~\cite{aladdini2010ems}, simulation models~\cite{Aboueljinane2012,Peleg2004}, and EMS response time predictions~\cite{aladdini2010ems,Lian2019}. The work in~\cite{Lian2019} propose a real-time system for predicting ARTs for the San Francisco fire department, which closely relates to our work in this chapter. The authors processed about $4.5$ million EMS calls utilizing original location data to predict ART using four ML models, namely linear regression, linear regression with elastic net regularization, decision tree regression, and random forest. However, no privacy-preserving experiment was performed because the main objective of their paper was proposing a scalable, ML-based, and real-time system for predicting ART. Besides, we also included weather data that the authors in~\cite{Lian2019} did not consider in their system, which could help to recognize high ARTs due to bad weather conditions, for example.

Because most of EMS data are personal and confidential (e.g., location, reason), there is a need for privacy-preserving techniques for processing and using these data. In this chapter, even if the intervention's \textit{reason} could be an indicator of the call urgency, we did not consider this sensitive attribute in our data analysis nor privacy-preserving prediction models. This is because, for SDIS 25, the ARTs limits are defined by the zone~\cite{Cerna2020}. Additionally, we also did not include the victims' personal data (e.g., gender, age) in our predictions or analysis since, during the calls, the operator may not acquire such information, e.g., when a third party activates the SDIS 25 for unidentified victims. This way, we focused our attention on the \textbf{location privacy of each intervention}.

Indeed, location privacy is an emerging and active research topic in the literature~\cite{Primault2019,elsalamouny_gambs,Shokri2011,Chatzikokolakis2017,Andrs2013,Yang2021} as publicly exposing users' location raises major privacy issues. To address this location privacy issue, in this chapter, we sanitized each emergency location using the state-of-the-art GI~\cite{Andrs2013} model. As highlighted in~\cite{Andrs2013}, attackers in LSBs may have side information about the user's reported location, e.g., knowing that the user is probably visiting the Eiffel Tower instead of swimming in the Seine river. However, this does not apply in our context because someone may have drowned and EMS had to intervene. Similarly, even for the dataset with intermediate (and high) privacy in which locations are spread out in the Doubs region (\textit{cf.} map with $0.005493$-GI location in Figure~\ref{fig:real_VS_anon}), someone may have been lost in the forest and EMS would have to interfere. For these reasons, using (or sharing datasets with) approximate emergency locations (e.g., sanitized with GI) is a prospective direction since many locations are possible emergency scenes. Indeed, we are not interested in hiding the emergency's location completely since some approximate information is required in order to retrieve other features (e.g., city, zone, estimated distance) to use for predicting ART.

With the differentially private input perturbation setting adopted in this chapter, data are protected from data leakage and are more difficult to reconstruct, for example. For instance, the authors in~\cite{kang2020input,Fukuchi2017} investigate how input perturbation through applying controlled Gaussian noise on data samples can guarantee $(\epsilon,\delta)$-DP on the final ML model. This means, since ML models are trained with perturbed data, there is a perturbation on the gradient and on the final parameters of the model too.

In this chapter, rather than Gaussian noise, the emergency scenes were sanitized with Alg.~\ref{alg:GI_location} explained in Chapter~\ref{chap:chapter2}, i.e., adding two-dimensional Laplacian noise centered at the exact user location $x \in \mathbb{R}^2$. In addition, this sanitization also perturb other associated and calculated features such as: city, district, zone (e.g., urban or not), great-circle distance, estimated driving distance, and estimated travel time (\textit{cf.} Table~\ref{ch9:tab_var_features_GI}). As well as the optimization of hyperparameters, i.e., once data are differentially private, one can apply any function on it and, therefore, we also noticed perturbation on the BO procedure. Yet, as shown in the results, prospective ART predictions were achieved with either original or sanitized data. What is more, even with a high level of sanitization ($\epsilon=0.000693$) there was an adequate privacy-utility trade-off. According to~\cite{lewis1982}, if the mean absolute percentage error (i.e., MAPE) is greater than 20\% and less than 50\%, the forecast is reasonable, which is the results we have in this chapter with MAPE around 30\%. 

\section{Conclusion} \label{ch9:sec_conclusion}

In this chapter, we aimed to predict the response time that each center equipped with ambulances had to an event, which could be used as an intelligent decision-support system to dynamically select the center to deploy ambulances. However, we also took into consideration that the emergency locations are sensitive data, requiring proper sanitization. Therefore, this interdisciplinary work aimed to evaluate the effectiveness of predicting ARTs with ML models trained over sanitized location data with different levels of $\epsilon$-geo-indistinguishability. 

As shown in the results, the sanitization of location data and the perturbation of its associated features (e.g., city, distance) had certain impact on data utility (see Table~\ref{ch9:tab_var_features_GI}) but no considerable effect on predicting ART (see Fig.~\ref{ch9:fig_pred_LGBM}). With these findings, EMS may prefer using and/or sharing sanitized datasets to avoid possible data leakages, membership inference attacks, or data reconstruction attacks, for example~\cite{data_breaches,Song2017,Shokri2017,Carlini2019}. \textbf{In conclusion, while predicting ART might allow EMS to save more lives, we notice that it is also possible to do so while preserving the victims' location privacy.}

Lastly, on the one hand, this chapter focused on \textbf{response time} taking into consideration the \textbf{recommended times} SDIS 25 ambulances should arrive on-scene (e.g., for Z1 the ART should be $\leq 10$ minutes)~\cite{Cerna2020}. The next Chapter~\ref{chap:chapter92} proposes a privacy-preserving solution to \textbf{response time} taking into consideration the urgency level of the intervention through predicting if each victim will die.


\chapter{Privacy-Preserving Prediction of Victim's Mortality} \label{chap:chapter92}

In Chapter~\ref{chap:chapter9}, we have started to focus on EMS \textbf{response time} to emergencies. In this last chapter of contribution, we continue in this direction from another perspective: \textbf{Although SDIS 25 ARTs depend mainly on the zone~\cite{Cerna2020}, is there a way to recognize or of being aware that an emergency will require \textit{priority attention}?} To answer this question, \textbf{with Selene Cerna}, we proposed a methodology based on ML techniques to predict the victims' mortality \textbf{using data gathered from the start of the emergency call until the SDIS 25 is notified}. Within this interval of interest, there are data about the call processing times, operators’ and victims’ personal data; the location of the emergency, and so on. In other words, there are two entities we will be concerned with, namely, \textbf{call center operators} and \textbf{victims} regarding \textbf{privacy}. Similar to Chapters~\ref{chap:chapter8} and~\ref{chap:chapter9}, we still consider that EMS intend to share an anonymized version of their data, such that third parties could build decision-support tools to optimize the EMS service. However, \textbf{differently of a single sensitive attribute} (i.e., \textit{only location} in Chapters~\ref{chap:chapter8} and~\ref{chap:chapter9}), \textbf{there are several personal attributes concerning victims and operators}. Therefore, \textbf{in this chapter}, we evaluated the \textbf{privacy-utility trade-off} of ML models trained over anonymized data using either the \textit{k}-anonymity model (cf. Section~\ref{ch2:sub_k_anon}) or of a differentially private algorithm~\cite{Bild2018} that produces truthful data output. \textbf{Throughout this chapter, we slightly abuse of our notation and use the terms \textit{anonymized/anonymization} for both \textit{k-anonymity} and DP (instead of anonymized/sanitization) guarantees} (cf. Section~\ref{ch2:introduction}). We invite the reader to refer to Chapter~\ref{chap:chapter2} for the background on both \textit{k}-anonymity~\cite{samarati1998protecting,SWEENEY2002} and DP~\cite{Dwork2006,Dwork2006DP,dwork2014algorithmic} models.

\section{Introduction} \label{sec:introduction}

As reviewed in Chapters~\ref{chap:chapter1},~\ref{chap:chapter3},~\ref{chap:chapter8}, and~\ref{chap:chapter9}, EMS are a key component of healthcare systems around the world. An important measurement of their quality is their response time, which is measured from the time the EMS is notified to the time an ambulance arrives at the emergency scene (cf. Chapter~\ref{chap:chapter9}). In fact, shorter ambulance response times are potential contributors to higher survival rates~\cite{Holmn2020,Byrne2019,Brger2018,Chen2016,Pons2002,Lee2019} since every second is a matter of life. For instance, \textbf{the response time also depends on how and by whom the call is processed} in the EMS center~\cite{Penn2016}. For this reason, there is a need to optimize these services and take advantage of plenty of data gathered throughout the years in hospitals and EMS.

In this chapter, we consider as \textit{interval of interest} the period comprising the time where the SDIS 25 call center's phone starts to ring until some center(s) is notified to handle the intervention or the call ends. With all accessible data within this interval (e.g., victims and operators data, call processing times, ...), \textbf{the purpose of this chapter is to evaluate the privacy-utility trade-off of training ML models over anonymized data to predict the victims' mortality}. Therefore, there are two entities we are concerned with, namely, \textit{call center operators} and \textit{victims} with regard to privacy. 

For instance, with the raw dataset containing direct identifiers (e.g.,  names), one straightforward question as: ``\textit{Is there any operator linked with an increased ratio of victims' death}?" can be easily computed, which compromises the operators' privacy and can lead to social and/or economical damages. Similarly, one can easily access the reason for the intervention (e.g., cardiac arrest) and use this information to jeopardize the victims' privacy through discrimination in health insurance, for example. Besides, as reviewed in Section~\ref{ch2:sub_k_anon}, even by excluding direct identifiers, both victims' and operators' identities are still at risk of being retrieved~\cite{sweeney2015only}. Indeed, attributes such as gender, age, and ZIP code (a.k.a. quasi-identifiers -- QIDs) can be combined with public data to reidentify individuals~\cite{samarati1998protecting,SWEENEY2002}. 

For instance, on analyzing the Vic\_Mort-DB from Section~\ref{ch3:calls_vic_ope_dbs}, considering \textit{victims}, by combining three available QIDs (gender, age, and city), one can find about $22000$ cases with the trivial $k=1$-anonymity level~\cite{samarati1998protecting,SWEENEY2002}. This means, in some cities with low population density, it would not be difficult to find out the person who needed help by knowing their gender and age. Similarly, combining four QIDs considering \textit{operators} (gender, age, grade/career, and seniority) leads to a similar output with many unique rows. One exception is that there is a set of operators, and each row represents an \textit{event} of who treated the emergency call. This reinforces the need for applying privacy-preserving techniques to protect the users' privacy.

Therefore, in this chapter, we assessed the effectiveness of anonymizing the Vic\_Mort-DB from Section~\ref{ch3:calls_vic_ope_dbs} with two state-of-the-art privacy techniques, namely, \textit{k}-anonymity~\cite{samarati1998protecting,SWEENEY2002} and DP~\cite{Dwork2006,Dwork2006DP,dwork2014algorithmic} before training any ML model to predict the victims' mortality. The Vic\_Mort-DB has information about $177883$ \textbf{victims} that the SDIS 25 attended from January 2015 to December 2020. To the author's knowledge, this is the first work to assess the impact of privacy-preserving techniques on predicting the victims' mortality. Indeed, while these predictions may allow SDIS 25 (or EMS in general) to save more lives, we notice that it is also possible to do so with anonymized datasets, which preserves both victims' and operators' privacy. 

The remainder of this chapter is organized as follows. In Section~\ref{ch92:sec_results}, we present the experimental setup, our results, discussion, and we review related work. Lastly, in Section~\ref{ch92:sec_conclusion}, we present the concluding remarks. The proposed privacy-preserving methodology to predict the victims' mortality (and their transportation to health facilities not approached here) developed with Selene Cerna, part of the results/discussion of Section~\ref{ch92:sec_results} were accepted as a full article~\cite{Arcolezi2021_vic_mortransp} in the Transactions on Industrial Informatics journal. 

\section{Experimental Validation} \label{ch92:sec_results}

We divide this section in the following way. First, we describe general settings for our experiments (Section~\ref{ch92:sub_gen_setup}). Next, we present the development and evaluation of privacy-preserving ML models (Section~\ref{ch92:sub_private_ML}). Lastly, we discuss our work and review related work (Section~\ref{ch92:sec_disc_rel_work}).

\subsection{General setup of experiments} \label{ch92:sub_gen_setup}

\textbf{Environment.} All algorithms were implemented in Python 3.8.8 with XGBoost~\cite{XGBoost} and Scikit-learn~\cite{scikit} libraries. The anonymization methods were implemented with the ARX\footnote{\url{https://arx.deidentifier.org/}} tool~\cite{Prasser2015}.

\textbf{ML model evaluated.} With Selene Cerna, two ML models and two DL models have been compared with the original data, i.e., with Vic\_Mort-DB. The most performing method was XGBoost. Therefore, \textbf{only XGBoost} will be used in this chapter to evaluate its privacy-utility trade-off of being trained over anonymized data.

\textbf{Dataset.} We utilize the Vic\_Mort-DB from Section~\ref{ch3:calls_vic_ope_dbs} divided into exclusively \textbf{learning} (from 2015-2019 with $n_l=149321$ victims) and \textbf{testing} (the year 2020 with $n_t=28562$ victims) sets.

\textbf{Privacy models evaluated.} We compared the effectiveness of both \textit{k}-anonymity~\cite{samarati1998protecting,SWEENEY2002} and DP~\cite{Dwork2006,Dwork2006DP,dwork2014algorithmic} models, both presented in Chapter~\ref{chap:chapter2}. The differentially private model of ARX was proposed in~\cite{Bild2018}, namely, \textit{SafePub}, which produces truthful data output. More precisely, DP is ensured by sampling, in which the sampling probability depends on $\epsilon$, and data are released in a generalized form that also satisfies \textit{k}-anonymity (where \textit{k} depends on $\epsilon$ and $\delta$). Also, we highlight that \textit{both privacy models were applied only in the learning set} and, hence, the \textit{testing set was transformed} using the final generalization hierarchies. 

In our experiments, we vary the $\epsilon$ parameter in the range $\epsilon=[0.2,0.4,0.6,0.8,1.0]$ and we fix $\delta=10^{-6} \ll 1/n_l$. \textit{With these parameters, the differentially private learning sets} were sub-sampled from $n_l=149321$ to $n_l^*=[24621,45038,61841,76418,88663]$ samples and, \textit{besides DP guarantees}, \textit{k}-anonymity is also satisfied with $\textbf{k}=[62,62,65,70,74]$, respectively. Thus, \textit{for a fair comparison between the two privacy models}, we also set $\textbf{k}=[62,62,65,70,74]$ when applying the \textit{k}-anonymity model.

\textbf{Generalization approach.} The following (generalization) transformations were considered to anonymize each information concerning the victim (Vic.) and operator (Ope.):

\begin{itemize}
    \item \textit{Age (Vic. and Ope.)} by intervals of growing amplitude: 10, 20, 40, 80, total suppression (*);
    \item \textit{Gender (Vic. and Ope.)} by total suppression (*);
    \item \textit{Vic. City ID} by masking (five) digits: 2222*, 222**, 22***, 2****, total suppression (*);
    \item \textit{Ope. Seniority (in days)} by intervals of growing amplitude (about 6 months): 180, 360, 720, ..., total suppression (*);
    \item \textit{Ope. Grade} by total suppression (*).
\end{itemize}

\textbf{Experimental evaluation.} \textbf{Eleven} models were built. \textbf{One} XGBoost model trained over original data, \textbf{five} XGBoost (input perturbation-based) models trained over DP data considering the aforementioned $\epsilon$-DP range, and \textbf{five} XGBoost (input perturbation-based) models trained over \textit{k}-anonymous data considering the aforementioned $\textbf{k}$ range. To optimize XGBoost hyperparameters, we applied Bayesian optimization~\cite{hyperopt2013} (explained in Section~\ref{ch3:optimization}) with $100$ iterations, with the following specification: n\_estimators [50-1000], learning\_rate [0.001-0.5], max\_depth [1-20], colsample\_by\_tree [0.2-1], and scale\_pos\_weight [20-60]. For all other parameters, we used their default values. 

\textbf{Performance metrics.} All XGBoost models were evaluated with standard binary classification metrics, namely, ACC (accuracy) and MF1 (macro f1-score), both explained in Section~\ref{ch3:sub_metrics}. The MF1 metric was also used as the objective function for the Bayesian optimization.

\subsection{Privacy-Preserving Binary Classification of Victims' Mortality} \label{ch92:sub_private_ML}

Fig.~\ref{ch92:fig_metrics} illustrates the relationship between the MF1 and ACC metrics (y-axis) with $\epsilon$ (x-axis) for XGBoost models trained over original and anonymized datasets (i.e., differentially private and \textit{k}-anonymous). For each value of $\epsilon$, the corresponding \textit{k}-anonymity guarantee is $\textbf{k}=[62,62,65,70,74]$, respectively.

\begin{figure}[!ht]
    \centering
    \includegraphics[width=1\linewidth]{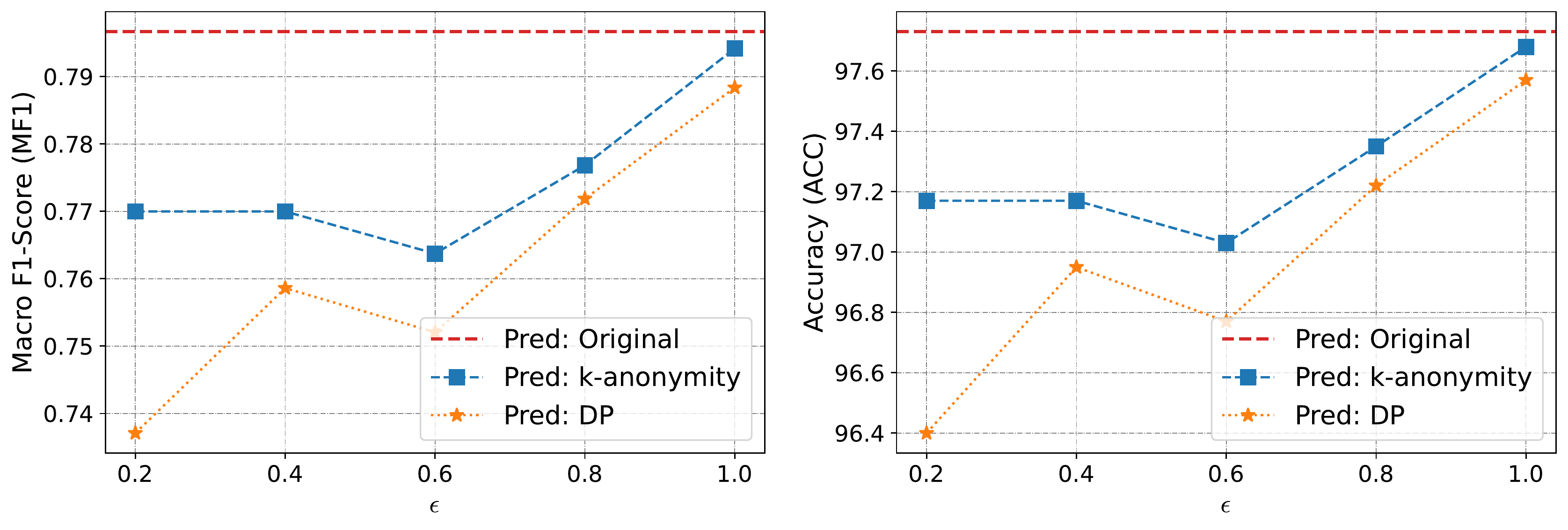}
    \caption{MF1 and ACC metrics (y-axis) for XGBoost models trained over original, differentially private, and \textit{k}-anonymous datasets. For each value of $\epsilon$ (x-axis), the corresponding \textit{k}-anonymity guarantee is $\textbf{k}=[62,62,65,70,74]$, respectively.}
    \label{ch92:fig_metrics}
\end{figure}

One can notice from Fig.~\ref{ch92:fig_metrics} that XGBoost models trained with anonymized data can also guarantee a good privacy-utility trade-off for the binary classification of victims' mortality. Overall, the results with the \textit{k}-anonymous datasets are still close to the results with original data while the results with DP decreased more. This could be due to DP applying both sub-sampling of the learning set as well as the generalization and/or suppression of QIDs to also satisfy \textit{k}-anonymity.

On the other hand, with the commonly used $\epsilon=1$ privacy guarantee~\cite{dwork2014algorithmic,Bild2018} that also satisfies $\textit{k}=74$-anonymity, the MF1 and ACC scores of both XGBoost models trained over DP and \textit{k}-anonymous data had no considerable loss of utility. Indeed, selecting $\epsilon=1$ has also been suggested in~\cite{Bild2018} as a good parameterization value. Thus, considering $\epsilon=1$ and $k=74$, Table~\ref{tab:generalization_QIDs} exhibits the final generalization approach for each QID we considered of each entity (Victim -- Vic. and Operator -- Ope.) and privacy model (\textit{k}-anonymity and DP). The symbol $*$ in Table~\ref{tab:generalization_QIDs} indicates full suppression for an attribute or masking of a digit (for Vic. City). On the one hand, the transformed/suppressed features limit the data analysis one can carry on, e.g., to find correlation between features. On the other hand, as one can notice from Fig.~\ref{ch92:fig_metrics}, although some features (the QIDs) suffered transformation and/or suppression, XGBoost models were still able to classify victims' mortality as good as with the original dataset, while providing privacy guarantees for both victims and operators. 

\setlength{\tabcolsep}{5pt}
\renewcommand{\arraystretch}{1.4}
\begin{table}[!ht]
    \centering
    \scriptsize
    \caption{Final generalization hierarchy for each QID of each entity, namely, victim and call center operator.}
    \label{tab:generalization_QIDs}
    \begin{tabular}{c c c} \hline
        \multirow{2}{*}{Attribute} & \multicolumn{2}{c}{Final Generalization Hierarchy} \\ \cline{2-3}
        & \textit{k}-anonymity & Differential Privacy \\ \hline
         Vic. Age &[0, 40[, [40, 80[, [80, 101[      &  [0, 20[, [20, 40[, ..., [80, 101[     \\ 
         Vic. Gender & Feminine, Masculine, Not registered      &  Feminine, Masculine, Not registered, *      \\ 
         Vic. City & 21***, 22***, 23***     &  212**, 222**, 233**...     \\ \hline
         Ope. Age & *     &  [22, 32[, [32, 42[, ..., [52, 62[     \\ 
         Ope. Gender & Feminine, Masculine     &  Feminine, Masculine, *     \\ 
         Ope. Seniority (in days) & [0, 2880[, [2880, 5760[, ..., [8640, 10204[     &   *    \\ 
         Ope. Grade & *     &   *    \\ \hline
    \end{tabular}
\end{table}

These results suggest that some patterns were still kept even with the transformed features. So, Fig.~\ref{ch92:fig_feature_importance} illustrates 10 features with the highest impact considering the most performing XGBoost model trained over original data, $\epsilon=1$-DP, and $\textit{k}=74$-anonymity, considering the type ``Gain" feature importance algorithm. This algorithm is based on the relative contribution of each feature to improve the accuracy in the division of a branch. In Fig.~\ref{ch92:fig_feature_importance}, the following prefixes are used: ``PROBA" for probability, ``INT" for intervention, and ``VIC" for victim, which corresponds to the features of the Vic\_Mort-DB from Section~\ref{ch3:calls_vic_ope_dbs}.

From Fig.~\ref{ch92:fig_feature_importance}, we can notice that the calculated variables from probabilities (PROBA\_MORT\_MOT and PROBA\_MORT\_AGE) and the type of intervention (type, subtype, and motive) have a great impact on the creation of the models. Besides, the victims' age and gender showed more importance than the alert diffusion time (INT\_D\_DIFF\_ALERT) and duration of the call. Lastly, in our experiments, operators' personal data did not show much importance for any XGBoost model, in this way, for upcoming works, we consider not using such predictors as there would be a need for preserving their privacy. 

\begin{figure}[!ht]
    \centering
    \includegraphics[width=0.9\linewidth]{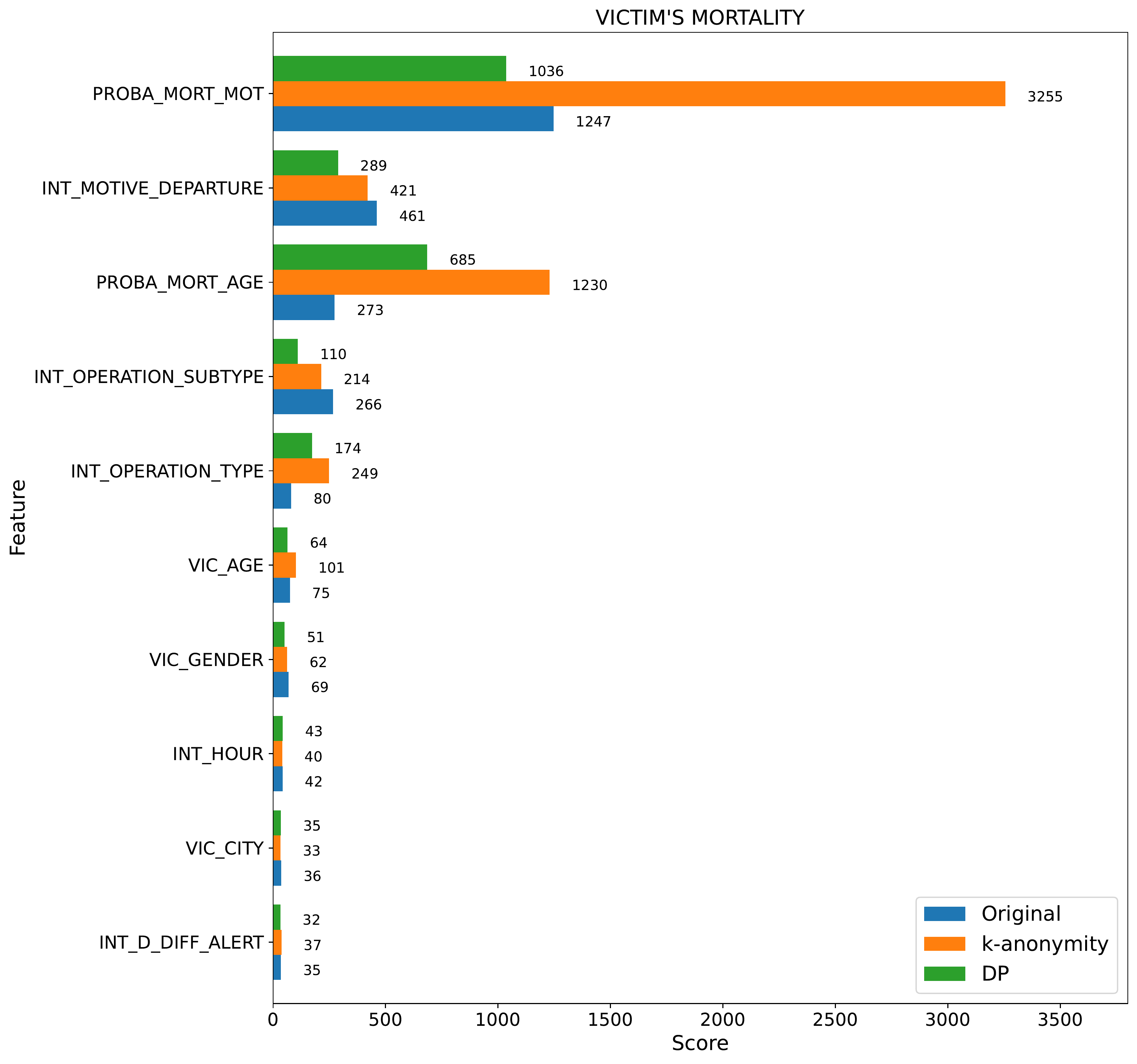}
    \caption{Feature importance from the XGBoost models trained over original, $\epsilon=1$-DP, and $\textit{k}=74$-anonymity, considering the type ``Gain" as score and the first 10 variables.} 
    \label{ch92:fig_feature_importance}
\end{figure}

\subsection{Discussion and Related Work} \label{ch92:sec_disc_rel_work}

As reviewed in recent survey works~\cite{ShafafMalek2019,Tang2021,Stewart2021}, several decision-support systems based on ML techniques have been proposed for application in emergency medicine. Indeed, in the context of this chapter, for EMS, there are many interests in using ML methods for tasks such as: identifying possible medical conditions before arrival on emergency departments~\cite{Kang2020}, to predict ambulances' demand to allow their reallocation~\cite{Chen2016}, to predict operation demand~\cite{Grekousis2019,Couchot2019,Lin2020,Arcolezi2020,Chen2016,Couchot2019,Cerna2020_b,Lstm_Cerna2019,Cerna2020_boosting,EliasMallouhy2021,Guyeux2019}, to predict ambulance response time~\cite{Arcolezi2021,Lian2019}, to predict the ambulances' turnaround time in hospitals~\cite{Cerna2021}, to predict clinical outcomes~\cite{Kwon2019}, to early identify clinical conditions on emergency calls~\cite{Blomberg2019}, to recognize and predict service disruptions~\cite{Cerna2020}, and so on. However, to our knowledge, we are the first group investigating \textbf{privacy-preserving ML solutions to EMS}.

Indeed, although the collection of medical data allows investigations to propose improved ML-based decision-support tools, on the other hand, there is a problem with the disclosure of personal and sensitive information. In the privacy-preserving data mining literature, there are few alternatives, e.g., objective perturbation~\cite{chaudhuri2011differentially}, gradient perturbation~\cite{DL_DP,pytorch_privacy,tf_privacy}, and input data perturbation~\cite{first_ldp,ppdm}, that can help to mitigate these problems. This chapter also adopted the \textit{input perturbation} setting (cf. Section~\ref{ch3:input_perturbation}) because it allows using any ML and post-processing techniques in contrast with gradient~\cite{DL_DP} or objective perturbation~\cite{chaudhuri2011differentially}. Furthermore, input perturbation is consistent with real-world applications in which EMS would only utilize and/or share anonymized data with third parties to train and improve ML-based decision support systems. This way, because each sample in the dataset is anonymized, data are protected from data leakage and are more difficult to reconstruct when the ML model receives attacks~\cite{Song2017,Shokri2017,Carlini2019}, for example.

\section{Conclusion} \label{ch92:sec_conclusion}

In this chapter, we aimed to predict the victims' mortality using data collected from the emergency call until an SDIS 25 center is notified about the intervention. More precisely, with all data available within this time interval (e.g., call processing times, operators’ and victims’ personal data, location, etc), \textbf{we sought to predict if victims will die}. This way, the SDIS 25 (or EMS in general) can quickly dispatch ambulances depending on the level of urgency. However, we also take into consideration both \textit{victims'} and \textit{call center operators'} \textit{privacy} when training the ML models. Therefore, this interdisciplinary work aimed to evaluate the effectiveness of predicting victims' mortality with XGBoost models trained over differentially private and \textit{k}-anonymous data with different levels of $\epsilon$ and \textit{k}. 

As shown in the results, even with anonymized datasets ($\textit{k}=74$-anonymous and $\epsilon=1$-differentially-private), mortality could be predicted with accuracy as high as $97\%$ with MF1 scores of about $79\%$. These results showed (again) the potential of privacy-preserving ML solutions for EMS, which can be used as a decision-support tool to early identify mortality while preserving the users' privacy and, thus, help EMS to save more lives. As a result of these findings, EMS may consider utilizing and/or sharing anonymised datasets to prevent data leakages, membership inference attacks, and data reconstruction attacks~\cite{data_breaches,Song2017,Shokri2017,Carlini2019}.

Lastly, some limitations of this chapter are described in the following. First, on anonymizing the datasets, there is a clear difference in the type of privacy we provided for each \textit{entity}. On the one hand, because \textit{victims} were unique in our dataset, DP and \textit{k}-anonymity provided \textit{user-level}~\cite{dwork2014algorithmic} privacy. On the other hand, there is a unique set of \textit{operators} that treated many emergency calls and, thus, DP and \textit{k}-anonymity provided \textit{event-level}~\cite{dwork2014algorithmic} privacy. Also, we considered an ideal case where the information of all victims in the testing set was acquired during the call. However, this may not always occur in real life, e.g., when someone activates EMS for unidentified victims.


\part{Conclusion \& Perspectives}

\chapter{Conclusion \& Perspectives} \label{chap:conclusion}

\section{General Conclusion}

In this manuscript, we approached several aspects of privacy-preserving data collection and publishing, as well as privacy-preserving machine learning. This manuscript is separated into four parts. In the \textbf{first part}, we introduced the context, the motivating projects, and the objectives. The \textbf{second part} started presenting the data privacy and ML techniques our works depend on. We finished the second part by presenting the datasets we experiment on. 

The \textbf{third part} contains our contributions to privacy-preserving statistical learning, mainly with the LDP model. In the first chapter of the third part, i.e., Chapter~\ref{chap:chapter4}, we proposed an approach to infer and recreate synthetic data that provides a precise mobility scenario based on one-week statistical data of \textit{unions of consecutive days} made available by~\cite{fluxvision1}. The generated and open dataset (\url{https://github.com/hharcolezi/OpenMSFIMU}) named MS-FIMU can be used to evaluate new privacy-preserving techniques as well as ML tasks. For instance, in Chapters~\ref{chap:chapter5} and~\ref{chap:chapter6}, the MS-FIMU dataset has been used to evaluate the effectiveness of our proposed LDP protocols for multidimensional and longitudinal frequency estimates. The MS-FIMU dataset has also been used in Chapter~\ref{chap:chapter7}, in which \textit{we proposed an LDP-based CDRs processing system to publish multidimensional mobility reports throughout time}. We also prove in Chapter~\ref{chap:chapter7} that for collecting multidimensional data with GRR~\cite{kairouz2016discrete} the utility loss sending a single attribute with $\epsilon$-LDP (i.e., \textit{Smp} solution) is lower than splitting the privacy budget over the number of attributes. This proof extends to two other protocols named SUE~\cite{rappor} and OUE~\cite{tianhao2017}, as shown in Chapter~\ref{chap:chapter5}.

We then abstracted the problem of Chapter~\ref{chap:chapter7} and thus, in Chapter~\ref{chap:chapter5}, we focused on \textit{improving the utility of LDP protocols for longitudinal and multidimensional frequency estimates}. Indeed, the combination of both ``multi" settings (i.e., numerous attributes and longitudinal data collection) presents several problems, for which this manuscript provides the first solution called ALLOMFREE under $\epsilon$-LDP. Under the same privacy guarantee, our studies revealed that ALLOMFREE consistently and significantly outperforms the state-of-the-art protocols, namely, L-SUE (a.k.a. Basic-RAPPOR~\cite{rappor}) and L-OUE (i.e., OUE~\cite{tianhao2017} with \textit{memoization}), with an average accuracy increase ranging from $10\%$ up to $55\%$.

We start the last chapter of the third part, i.e., Chapter~\ref{chap:chapter6}, by arguing that the state-of-the-art \textit{Smp} solution for multidimensional frequency estimates might be ``unfair" with some users since the reported value uses the whole privacy budget $\epsilon$, which is what is currently accepted. \textit{We thus propose RS+FD, which may be utilized with any current LDP protocol designed for single-frequency estimation}. More precisely, with RS+FD, the client-side has two steps: local randomization and fake data generation (cf. Fig.~\ref{ch6:fig_system_overview} and Alg.~\ref{alg:rs+fd}). With our experiments, we concluded that \textit{under the same privacy guarantee, our proposed protocols with RS+FD achieve similar or smaller estimation error than using the state-of-the-art \textit{Smp} solution while enhancing all users' privacy.}

The \textbf{fourth and last part} comprises our contributions to differentially private machine learning predictions. Indeed, \textit{the main goal of this fourth part was to evaluate the privacy-utility trade-off of training ML and DL models over differentially private data} (a.k.a. input perturbation~\cite{ppdm,first_ldp}). So, in Chapter~\ref{chap:chapter91}, we developed and assessed the performance of DL models on two privacy-preserving ML settings, namely, input and gradient perturbation. For the former, we applied the Gaussian mechanism~\cite{dwork2014algorithmic} to each sample before training any DL model, and for the latter, we trained DL models with the DP-SGD~\cite{DL_DP,tf_privacy,pytorch_privacy} algorithm. Both settings were compared on a multivariate time series forecasting task of aggregate human mobility data. \textit{We concluded that it is still possible to have accurate multivariate forecasts in both privacy-preserving ML settings}. \textit{In terms of accuracy} (measured with the RMSE metric), \textit{the gradient perturbation setting surpasses input perturbation}. However, \textit{input perturbation provides higher privacy protection than gradient perturbation} as it might also protect the aggregated mobility data against known threats (e.g., data breaches~\cite{data_breaches}, membership inference attacks~\cite{Pyrgelis2017,Pyrgelis2020}, and trajectory recovery attack~\cite{Tu2018,Xu2017}).

Next, we started to focus on our second motivating project with a collaboration with Selene Cerna, which concerns the SDIS 25 (i.e., an EMS in France). \textit{Our assumption is that EMS intends to deploy decision-support systems to optimize their services but only shares sanitized data with the development team (i.e., third parties)}. So, in Chapter~\ref{chap:chapter8}, \textit{we proposed an LDP-based methodology to allow EMS to properly sanitize interventions' data}. With several experiments in frequency estimation and input perturbation-based ML forecasting, we concluded that interventions data can be properly sanitized to avoid leakage of information while remaining useful for both statistical learning (cf. Fig.~\ref{ch8:fig_ex_freq_est}) and forecasting (cf. Fig.~\ref{ch8:fig_ex_forecasts}) purposes. 

Moreover, in our two last contribution chapters, i.e., Chapter~\ref{chap:chapter9} and~\ref{chap:chapter92}, \textit{we evaluated the privacy-utility trade-off of solutions based on ML and DP} with a focus on \textit{optimizing EMS response time to emergencies}. More precisely, in Chapter~\ref{chap:chapter9}, we proposed to evaluate the effectiveness of several values of $\epsilon$ (i.e., the privacy budget) to sanitize emergency location data with geo-indistinguishability~\cite{Andrs2013} and train ML-based models to predict ambulance response times. We concluded that the sanitization of location data with geo-indistinguishability and the perturbation of its associated features (e.g., city, distance) had a certain impact on data utility (see Table~\ref{ch9:tab_var_features_GI}) but no considerable effect on predicting ARTs (see Fig.~\ref{ch9:fig_pred_LGBM}). Finally, in Chapter~\ref{chap:chapter92}, we concentrate our attention on each \textit{victim} by using several sensitive attributes of both victims and call center operators (cf. Section~\ref{ch3:calls_vic_ope_dbs}). Our objective was to use data collected within the time of the emergency call until an SDIS 25 center is notified about the intervention to predict the victims' mortality. Once more, we focused on anonymizing/sanitizing the dataset and, thus, we assessed the effectiveness of both \textit{k}-anonymity~\cite{samarati1998protecting,SWEENEY2002} and DP~\cite{Dwork2006,Dwork2006DP,dwork2014algorithmic} models. That chapter concludes that even with anonymized datasets, victims' mortality could be predicted with high accuracy and macro f1-scores, which could be used as a decision-support tool by EMS to early identify high urgent situations.

\section{Perspectives}

The research fields in privacy and privacy-preserving ML are broad and promising. For instance, it is possible and interesting to investigate the following topics in the \textbf{short term}:

\begin{itemize}
    \item To integrate the proposed LDP protocols from Chapters~\ref{chap:chapter5} and~\ref{chap:chapter6} into the LDP-based CDRs processing system of Chapter~\ref{chap:chapter7} in order to improve the privacy of MNOs clients.

    \item To investigate how to combine the optimal longitudinal LDP protocols from Chapter~\ref{chap:chapter5} (i.e., L-GRR and L-OSUE) and our proposed RS+FD solution from Chapter~\ref{chap:chapter6} is also a planned and indicated direction.

    \item For both Chapters~\ref{chap:chapter91} and~\ref{chap:chapter8}, for future work, we suggest and intend to investigate a more complex DL architecture to improve the results of DL/ML models proposed in this manuscript for their respective multivariate time series forecasting task. 
    
    \item Regarding the work on Chapter~\ref{chap:chapter9}, the intended future works are to extend the analysis and predictions to different operation times of EMS such as the pre-travel delay (i.e., gathering personnel and ambulances) and travel times (e.g., from the center to the emergency scene, from the emergency scene to hospitals), \textbf{while respecting users' privacy}. 

\end{itemize}

In addition, for the \textbf{long term}, we list below some perspectives of the works in this manuscript:

\begin{itemize}
    \item To extend the proposed LDP-based CDRs processing system of Chapter~\ref{chap:chapter7} to the shuffle DP model~\cite{Balle2019,Erlingsson2019,erlingsson2020encode,Wang2020,li2021privacy}, which could provide strong privacy guarantees as well as accurate multidimensional frequency estimates (i.e., mobility reports throughout time).
    
    \item To investigate our proposed RS+FD solution from Chapter~\ref{chap:chapter6} on generating synthetic data from $\epsilon$-LDP multidimensional frequency estimates for classification/regression tasks (e.g., as in~\cite{Cyphers2017}) in two perspectives: \textbf{performance} and \textbf{privacy-protection} (e.g., against membership inference attacks). 
    
    \item To study if given a reported tuple $\textbf{y}$ one can state which attribute value is ``fake" or not by seeing the estimated frequencies reported with our RS+FD solution from Chapter~\ref{chap:chapter6}. Indeed, we suggest investigating this phenomenon in both single-time collection and longitudinal studies (i.e., throughout time). 
    
    \item Concerning multivariate time-series forecasting (i.e., Chapters~\ref{chap:chapter91} and~\ref{chap:chapter8}), investigating the data leakage through membership inference attacks of both differentially private input and gradient perturbation settings is also a prospective and intended direction.

    \item Some future work for Chapter~\ref{chap:chapter92} would be to investigate a uniform notion of privacy for both entities (i.e., a set of operators linked to many unique victims). In addition, we intend to evaluate privacy-preserving ML models with randomly excluded data from victims (i.e., sex and age) since these data might not be acquired during the calls, in order to assess the models' robustness. Besides, another prospective direction would be working with the \textit{text observations} registered by operators during calls, which could be treated with natural language processing techniques, while preserving the privacy of the individuals concerned (e.g.,~\cite{qu2021privacy,yu2021differentially}).
\end{itemize}

\chapter{Publications}

During the period of this thesis, the author has published the following papers and resources. The superscript $^*$ highlights equal contribution for co-first authors \textbf{in bold}.

\section*{JOURNAL PAPERS}
\begin{itemize}
    \item \textbf{$^*$Arcolezi, H. H.}, \textbf{$^*$Cerna, S.}, Couchot, J.-F, Guyeux, C., \& Makhoul, A. (2021) Privacy-Preserving Prediction of Victim's Mortality and Their Need for Transportation to Health Facilities. \textbf{IEEE Transactions on Industrial Informatics}, Early Access~\cite{Arcolezi2021_vic_mortransp}.
    
    \item \textbf{Arcolezi, H. H.}, Cerna, S., Guyeux, C., \& Couchot, J.-F. (2021). Preserving Geo-Indistinguishability of the Emergency Scene to Predict Ambulance Response Time. \textbf{Mathematical and Computational Applications}, 26(3), 56~\cite{Arcolezi2021_geo}.
    
    \item \textbf{Arcolezi, H. H.}, Couchot, J.-F., Cerna, S., Guyeux, C., Royer, G., Al Bouna, B., \& Xiao, X. (2020). Forecasting the Number of Firefighters Interventions per Region with Local-Differential-Privacy-Based Data. \textbf{Computers \& Security}, 96, 101888~\cite{Arcolezi2020}.
\end{itemize}

\section*{CONFERENCE PAPERS}
\begin{itemize}

\item \textbf{Arcolezi, H. H.}, Couchot, J.-F., Al Bouna, B., \& Xiao, X. (2021). Random Sampling Plus Fake Data: Multidimensional Frequency Estimates With Local Differential Privacy. In Proceedings of the 30th ACM \textit{International Conference on Information and Knowledge Management} (CIKM ’21), November, Virtual Event, QLD, Australia~\cite{Arcolezi2021_rs+fd}.

\item \textbf{Arcolezi, H. H.}, Couchot, J.-F., Al Bouna, B., \& Xiao, X. (2020). Longitudinal Collection and Analysis of Mobile Phone Data with Local Differential Privacy. In Proceedings of the 15th IFIP \textit{International Summer School on Privacy and Identity Management}, September, 40-57. Springer, Cham~\cite{Arcolezi2021}.

\item \textbf{Arcolezi, H. H.}, Couchot, J.-F., Baala, O., Contet, J.-M., Al Bouna, B., \& Xiao, X. (2020). Mobility modeling through mobile data: generating an optimized and open dataset respecting privacy. In Proceedings of the 16th \textit{International Wireless Communications and Mobile Computing} (IWCMC), June, 1689–1694~\cite{ms_fimu}.

\end{itemize}

\section*{SUBMITTED PAPERS}

\begin{itemize}
    
    \item \textbf{Arcolezi, H. H.}, Couchot, J.-F., Al Bouna, B., \& Xiao, X. Improving the Utility of Locally Differentially Private Protocols for Longitudinal and Multidimensional Frequency Estimates. \textbf{Digital Communications and Networks}. Submitted in August 2021~\cite{Arcolezi2021_allomfree}.
    
    \item \textbf{Arcolezi, H. H.}, Couchot, J.-F., Renaud, D., Al Bouna, B., \& Xiao, X. Differentially Private Multivariate Time Series Forecasting of Aggregated Human Mobility With Deep Learning: Input or Gradient Perturbation? \textbf{Neural Computing and Applications}. Submitted in September 2021.

\end{itemize}

\section*{CO-AUTHORED PAPERS}

Furthermore, the author also participated as a co-author in the following published papers.

\begin{itemize}

    \item Cerna, S., \textbf{Arcolezi, H. H.}, Guyeux, C., Royer-Fey, G., \& Chevallier, C. (2021). Machine learning-based forecasting of firemen ambulances’ turnaround time in hospitals, considering the COVID-19 impact. \textbf{Applied Soft Computing}, 109, 107561~\cite{Cerna2021}.
    
    \item Cisneros, L. L., \textbf{Arcolezi, H. H.}, Cerna, S., Brandão, J.L., Santos, G.C., Navarro, T.P., \& Carvalho, A.A. (2021). Machine Learning Algorithms to Predict In-Hospital Mortality in Patients with Diabetic Foot Ulceration. In Proceedings of the \textit{XXIII Congresso da Sociedade Brasileira de Diabetes}.
    
    \item Cerna, S., Guyeux, C., \textbf{Arcolezi, H. H.}, Couturier, R., \& Royer, G. (2020). A comparison of LSTM and XGBoost for predicting firemen interventions. In Proceedings of the 8th \textit{World Conference on Information Systems and Technologies} (WorldCIST), April, 424–434~\cite{Cerna2020_b}.
    
    \item Cerna, S., Guyeux, C., \textbf{Arcolezi, H. H.}, \& Royer, G. (2020). Boosting Methods for Predicting Firemen Interventions. In Proceedings of the 11th \textit{International Conference on Information and Communication Systems} (ICICS), 001–006~\cite{Cerna2020_boosting}.

\end{itemize}

\section*{RESOURCES \& CODES}

The generated MS-FIMU dataset of Chapter~\ref{chap:chapter4} is fully available on the following GitHub page: 

$\bullet$ \url{https://github.com/hharcolezi/OpenMSFIMU}.

Lastly, the author also maintains a list of DP and LDP experiments of the work carried out in Chapters~\ref{chap:chapter7},~\ref{chap:chapter5},~\ref{chap:chapter6},~\ref{chap:chapter91}, and~\ref{chap:chapter9} on the following GitHub page: 

$\bullet$ \url{https://github.com/hharcolezi/ldp-protocols-mobility-cdrs}.

\backmatter
 
 
 
 
 
 
 
 
 
\bibliographystyle{phdthesisnum}
 
\bibliography{biblio.bib}
 
 
\listoffigures
 
\listoftables
 



 
\end{document}